\documentclass{scrartcl}


\usepackage[T1]{fontenc}
\usepackage[utf8]{inputenc}
\usepackage{amsmath, amsthm, amssymb}
\usepackage{dsfont}
\usepackage{graphicx}
\usepackage[format=plain]{caption}
\usepackage{capt-of}
\usepackage{subcaption}
\usepackage{bm}
\usepackage[alphabetic]{amsrefs}

\newcommand{\norm}[1]{\left\| #1 \right\|}  
\newcommand{\scprd}[1]{\left\langle #1 \right\rangle}  
\newcommand{\ddt}[1]{\frac{d} {dt} #1}   
\newcommand{\N}{\mathbb{N}}  
\newcommand{\Z}{\mathbb{Z}}
\newcommand{\Q}{\mathbb{Q}}
\newcommand{\R}{\mathbb{R}}
\newcommand{\C}{\mathbb{C}}

\newcommand{\ran}{\operatorname{ran}}
\newcommand{\rk}{\operatorname{rk}}

\newcommand{\spn}{\operatorname{span}} 
\newcommand{\dist}{\operatorname{dist}}
\newcommand{\dom}{\operatorname{dom}}

\DeclareMathOperator*{\esssup}{ess-sup}  
\newcommand{\eps}{\varepsilon}
\renewcommand{\phi}{\varphi}
\newcommand{\ul}{\underline}
\newcommand{\ol}{\overline}
\newcommand{\tr}{\operatorname{tr}}

\renewcommand{\Re}{\operatorname{Re}}

\numberwithin{equation}{section}

\newtheorem{thm}{Theorem}[section]
\newtheorem{cor}[thm]{Corollary}
\newtheorem{prop}[thm]{Proposition}
\newtheorem{lm}[thm]{Lemma}
\newtheorem{cond}[thm]{Condition}

\theoremstyle{definition} \newtheorem{ex}[thm]{Example}
\theoremstyle{definition}



\title{Adiabatic theorems for general linear operators with time-independent domains}

\author{Jochen Schmid\\  
\small Institut f\"ur Mathematik, Universit\"at W\"urzburg, 97074 W\"urzburg, Germany\\
\small jochen.schmid@mathematik.uni-wuerzburg.de}   

\date{}

\begin{document}

\maketitle

\begin{abstract}
\small{ \noindent 
We establish adiabatic theorems with and without spectral gap condition for general -- typically dissipative -- 
linear operators $A(t): D(A(t)) \subset X \to X$ with time-independent domains $D(A(t)) = D$ in some Banach space $X$. Compared to the previously known adiabatic theorems -- especially those without spectral gap condition -- we do not require the considered spectral values $\lambda(t)$ of $A(t)$ to be (weakly) semisimple. We also impose only fairly weak regularity conditions. 
Applications are given to slowly time-varying open quantum systems and to adiabatic switching processes. 
}
\end{abstract}

{ \small \noindent \emph{Subject classification (2010) and key words:} 34E15, 34G10, 35Q41, 47D06, 81Q12, 81S22 
\\
Adiabatic theorems for general linear operators, dissipative operators, time-independent domains, non-semisimple spectral values, spectral gap, open quantum systems, adiabatic switching
}

\section{Introduction} \label{sect: intro}

%



Adiabatic theory -- or, more precisely, time-adiabatic theory for linear operators with time-independent domains -- 
is concerned with slowly time-varying systems described by 
evolution equations
\begin{align} \label{eq: awp, adtheorie 0}
x' = A(\eps s) x \quad (s \in [s_0,1/\eps]) \quad \text{and} \quad x(s_0) = y, 
\end{align}
where $A(t): D(A(t)) \subset X \to X$ for $t \in [0,1]$ is a densely defined closed linear operator with time-independent domain $D(A(t)) = D$ in a Banach space $X$ and where $\eps \in (0,\infty)$ is some (small) slowness parameter.
Smaller and smaller values of $\eps$ mean that $A(\eps s)$ depends more and more slowly on time $s$ or, in other words, that the typical time 
where $A(\eps \,.\,)$ varies 
appreciably gets larger and larger. 
%
Such slowly time-varying systems arise, for instance, when an electric or magnetic potential is slowly switched on or in approximate molecular dynamics (in the context of the Born--Oppenheimer approximation).
%
It is common and convenient in adiabatic theory to rescale time as $t = \eps s$ and to consider the equivalent rescaled evolution equations 
\begin{align} \label{eq: awp, adtheorie}
x' = \frac 1 \eps A(t) x \quad (t \in [t_0,1]) \quad \text{and} \quad x(t_0) = y
\end{align}
with initial times $t_0 \in (0,1]$ and initial values $y \in D$. 
It is further assumed 
that these evolution equations are well-posed, that is, for every initial time $t_0 \in (0,1]$ and every initial value $y \in D$ the initial value problem~\eqref{eq: awp, adtheorie} has a unique classical solution $x_{\eps}(\,.\,,t_0,y)$ and $x_{\eps}(\,.\,,t_0,y)$ 
continuously depends on $t_0$ and $y$. A bit more concisely and conveniently, the well-posedness of~\eqref{eq: awp, adtheorie} can be characterized by the existence of a unique so-called evolution system $U_{\eps}$ for $\frac 1 \eps A$ on $D$, that is, a two-parameter family of bounded solution operators $U_{\eps}(t,t_0)$ in $X$ determined by $U_{\eps}(t,t_0)y = x_{\eps}(t,t_0,y)$ for $y \in D$ and $t_0 \le t$. 
\smallskip


Adiabatic theory is further concerned with curves of spectral values $\lambda(t) \in \sigma(A(t))$, mostly eigenvalues, 
of the operators $A(t)$. In the classical special case of skew-adjoint operators $A(t)$ (that is, operators of the form $1/i$ times a self-adjoint operator $A_0(t)$), these spectral values $\lambda(t) = 1/i \, \lambda_0(t)$ could correspond to the ground-state energy $\lambda_0(t)$ of $A_0(t)$, for instance.
If $\lambda(t)$ is isolated in the spectrum $\sigma(A(t))$ of $A(t)$ for every $t \in [0,1]$, one speaks of a spectral gap. And such a spectral gap, in turn, is called uniform or non-uniform depending on whether or not 
\begin{align} \label{eq: ausglage und -frage, glm sl}
\inf_{t \in [0,1]} \dist \big( \lambda(t), \sigma(A(t))\setminus \{\lambda(t)\}  \big) > 0.
\end{align}
Some typical spectral situations are illustrated below for the special case of skew-adjoint operators $A(t)$: the spectrum $\sigma(A(t))$ is plotted on the vertical axis $i \R$ against the horizontal $t$-axis and the red line represents the considered spectral values $\lambda(t)$. In the first two figures, we have a spectral gap which is uniform in the first and non-uniform in the second picture. And the third figure depicts a situation without spectral gap.

\vfill
\begin{figure}[htbp]%
\centering
\begin{subfigure}[b]{0.3\textwidth}
\centering
\includegraphics[width=\columnwidth]{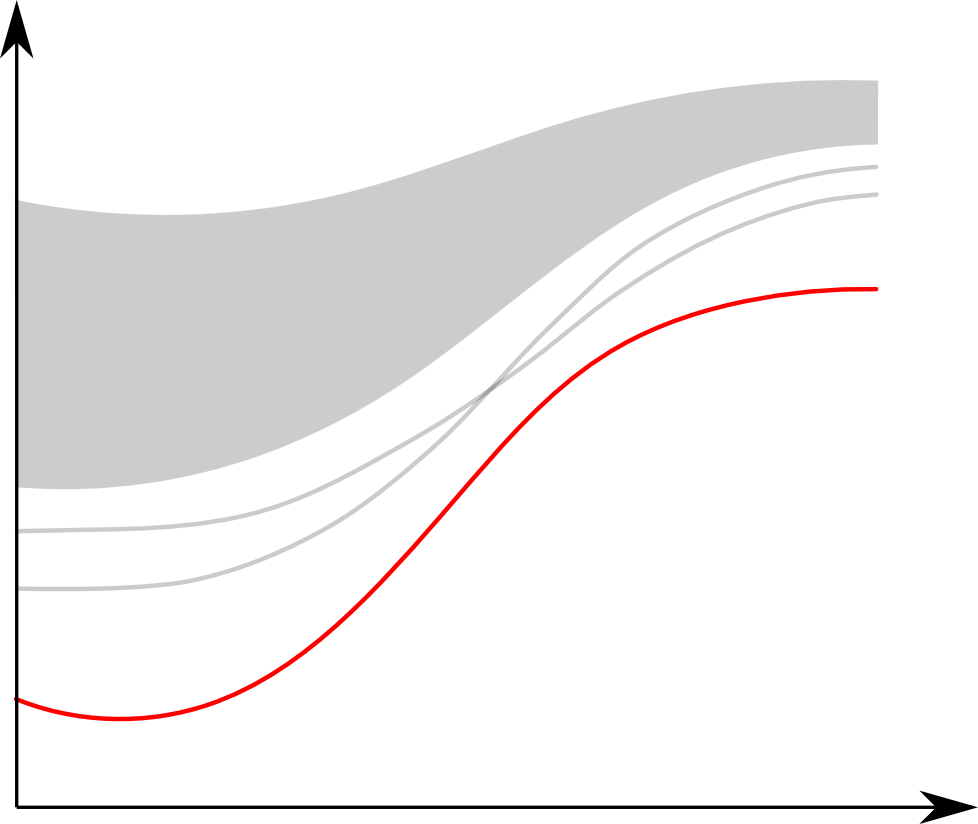}
\end{subfigure}
\hfill
\begin{subfigure}[b]{0.3\textwidth}
\centering
\includegraphics[width=\columnwidth]{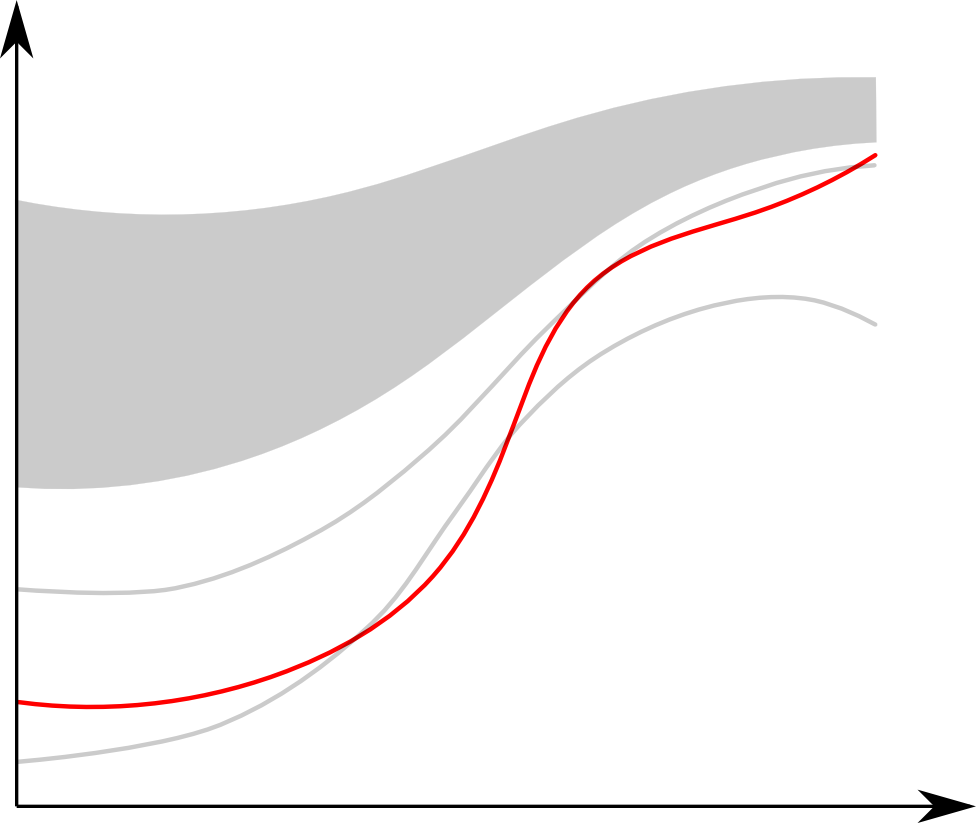}
\end{subfigure}
\hfill
\begin{subfigure}[b]{0.3\textwidth}
\centering
\includegraphics[width=\columnwidth]{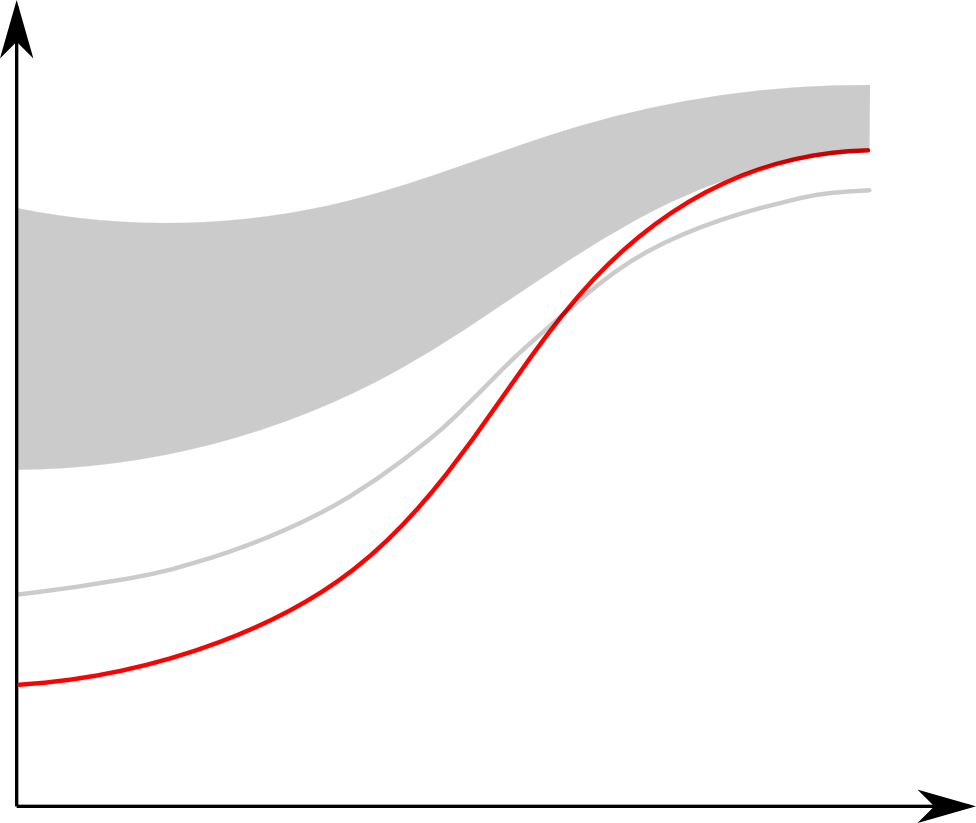}
\end{subfigure}
\end{figure}

%


What adiabatic theory is interested in is how certain distinguished solutions to~\eqref{eq: awp, adtheorie} behave in the singular limit where the slowness parameter $\eps$ tends to $0$. In more specific terms, the basic goal of adiabatic theory can be described -- for skew-adjoint and then for general operators -- as follows. 
In the special case of skew-adjoint operators $A(t)$, one wants to show that for small values of $\eps$ and every $t$ the solution operator $U_{\eps}(t,0)$ 
takes eigenvectors of $A(0)$ corresponding to $\lambda(0)$ into eigenvectors of $A(t)$ corresponding to $\lambda(t)$ -- up to small errors in $\eps$. Shorter and more precisely, one wants to show that 
\begin{align}  \label{eq: aussage des adsatzes, schiefsa}
(1-P(t)) U_{\eps}(t,0) P(0) \longrightarrow 0 \qquad (\eps \searrow 0)
\end{align}
for all $t \in [0,1]$, where $P(t)$ for (almost) every $t$ is the canonical spectral projection of $A(t)$ corresponding to $\lambda(t)$. 
It is defined via the spectral measure $P^{A(t)}$ of $A(t)$, namely $P(t) = P^{A(t)}(\{\lambda(t)\})$, and it is the orthogonal projection yielding the decomposition of $X$ into $P(t)X = \ker(A(t)-\lambda(t))$ and $(1-P(t))X = \ol{\ran}(A(t)-\lambda(t))$.
In the case of general operators $A(t)$, one again wants to show that
\begin{align}  \label{eq: aussage des adsatzes}
(1-P(t)) U_{\eps}(t,0) P(0) \longrightarrow 0 \qquad (\eps \searrow 0)
\end{align}
for all $t \in [0,1]$, where now $P(t)$ for (almost) every $t$ is a suitable general 
spectral projection of $A(t)$ corresponding to $\lambda(t)$. 
In the case with spectral gap, suitable 
spectral projections are 
the so-called associated projections, which yield the decomposition
\begin{align} \label{eq: zerl, sl}
P(t)X = \ker(A(t)-\lambda(t))^{m(t)} \quad \text{and} \quad (1-P(t))X = \ran(A(t)-\lambda(t))^{m(t)}
\end{align}
for some $m(t) \in \N$ provided $\lambda(t)$ is a pole of $(\,.\,-A(t))^{-1}$.
In the case without spectral gap, suitable 
spectral projections are 
the so-called weakly associated projections, which yield the decomposition
\begin{align}  \label{eq: zerl, ohne sl}
P(t)X = \ker(A(t)-\lambda(t))^{m(t)} \quad \text{and} \quad (1-P(t))X = \ol{\ran}(A(t)-\lambda(t))^{m(t)}
\end{align}
for some $m(t) \in \N$.
An adiabatic theorem is now simply a theorem that gives conditions on $A(t)$, $\lambda(t)$, $P(t)$ under which the convergence~\eqref{eq: aussage des adsatzes} holds true. A bit more precisely, 
such a theorem should be termed a linear time-adiabatic theorem in contradistinction to the various space-adiabatic theorems and nonlinear adiabatic theorems from the literature (see~\cite{Teufel03} and \cite{Sparber16}, \cite{GangGrech17}, \cite{FrankGang17}, 
for instance). Yet, space-adiabatic theory and nonlinear adiabatic theory will not be touched upon in this paper at all 
and so there is no danger of confusion in our slightly imprecise terminology. Also, adiabatic theory for resonances~\cite{AbouSalemFroehlich07}, \cite{ElgartHagedorn11}, \cite{KelerTeufel12} will not be treated here. 
Sometimes, we will distinguish quantitative and qualitative adiabatic theorems depending on whether they 
give information on the rate of convergence in~\eqref{eq: aussage des adsatzes} or not. 
\smallskip


Adiabatic theory has a long history going back to the first days of quantum theory and many authors have contributed to it since then. 
In the first decades after 1928, all adiabatic theorems were exclusively concerned with skew-adjoint operators $A(t)$ and until 1998 they all required a spectral gap condition. See, for instance, \cite{BornFock28}, \cite{Kato50}, \cite{Lenard59}, \cite{Garrido64}, \cite{Sancho66}, \cite{Nenciu80}, \cite{AvronSeilerYaffe87}, \cite{JoyePfister91}, \cite{JoyePfister93}, \cite{Nenciu93} for the case with spectral gap and \cite{AvronElgart99}, \cite{Bornemann98}, \cite{Teufel01} \cite{FishmanSoffer16}, for instance, for the case without spectral gap.
In the last decade, various adiabatic theorems for more general operators $A(t)$ have been established and again, just like in the special case of skew-adjoint operators, the case with spectral gap has been treated first. A major motivation for these general adiabatic theorems 
comes from applications to open quantum systems which, unlike closed quantum systems, cannot be described by skew-adjoint operators anymore. 
See, for instance, \cite{NenciuRasche92}, \cite{Joye07}, \cite{AbouSalem07}, \cite{HansonJoyePautratRaquepas17} for the case with spectral gap and \cite{AvronGraf12}, \cite{dipl}, \cite{JaksicPillet14}, for instance, for the case without spectral gap.
A detailed historical overview can be found in~\cite{diss}.
%
So far, however, almost all adiabatic theorems with spectral gap condition, except those from~\cite{NenciuRasche92} and \cite{Joye07}, and all adiabatic theorems without spectral gap condition require the considered spectral values $\lambda(t)$ to be semisimple (case with spectral gap) or weakly semisimple (case without spectral gap), that is, the decomposition~\eqref{eq: zerl, sl} or~\eqref{eq: zerl, ohne sl} holds with $m(t) = 1$. 
It is clear that the spectral values of a general linear operator -- as opposed to a skew-adjoint operator -- will generally fail to be (weakly) semisimple. 
\smallskip

In this paper, we therefore extend and develop further the existing adiabatic theory accordingly: 
we establish adiabatic theorems -- with and especially without spectral gap condition -- for general linear operators $A(t): D \subset X \to X$ with time-independent domain $D(A(t)) = D$ and with spectral values $\lambda(t)$ that are no longer required to be (weakly) semisimple. 
Additionally, the required regularity conditions on $A(t)$, $\lambda(t)$, $P(t)$ from our adiabatic theorems are fairly mild. 
We will apply our adiabatic theorems without spectral gap to 
slowly time-varying open quantum systems described by weakly dephasing generators $A(t)$ and to adiabatic switching processes described by skew-adjoint operators $A(t) = A_0 + \kappa(t)V$ with a switching function $\kappa$. In particular, we generalize the classic 
Gell-Mann and Low theorem to not necessarily isolated eigenvalues. 
In more detail, the contents and contributions of this paper can be described as follows.
\smallskip


In Section~\ref{sect: vorber} we provide the most important preliminaries needed for our adiabatic theorems. 
Sections~\ref{sect: sobolev reg} and~\ref{sect: wohlg und zeitentw} provide the preliminaries related to our regularity assumptions and to well-posedness. At first reading one may well confine oneself to Section~\ref{sect: wohlg und zeitentw} where the concept of well-posedness of non-autonomous linear evolution equations is defined by way of evolution systems and where a fundamental criterion for well-posedness due to Kato is recalled. Section~\ref{sect: sobolev reg} can be skipped at first reading because the less common notions of $W^{m,1}_*$-regularity and $(M,0)$-stability of operator-valued functions introduced there can, at any occurrence, be replaced by the simpler notions of $m$ times strong continuous differentiability and contraction semigroup generators, respectively. 
Section~\ref{sect: spectral op} collects some basic facts about spectral operators and their spectral theory for the convenience of the reader.
\smallskip

In Section~\ref{sect: spectral relatedness}, in turn, we introduce suitable spectral projections for general linear operators, namely the associated and the weakly associated projections, and discuss their central properties. In particular, we discuss the decompositions~\eqref{eq: zerl, sl} and~\eqref{eq: zerl, ohne sl} 
as well as existence and uniqueness issues. While in the case with spectral gap existence and uniqueness of associated projections is for granted, existence of weakly associated projections is unfortunately 
not for granted in the case without spectral gap (but, at least, existence of such a projection  already implies uniqueness). We therefore present criteria for the existence of weakly associated projections, particularly in the case of spectral operators.
\smallskip

Section~\ref{sect: def sl} properly defines uniform and non-uniform spectral gaps 
and introduces the closely related intuitive notion of a set-valued map $\sigma(\,.\,)$ falling into $\sigma(A(\,.\,))\setminus\sigma(\,.\,)$. In addition, continuity of set-valued maps is explained.
In Section~\ref{sect: ad zeitentw} we introduce the basic concept of adiabatic evolution systems, that is, evolution systems $V$ that for a given family of projections $P(t)$ exactly follow the subspaces $P(t)X$ and $(1-P(t))X$ in the sense that 
\begin{align} \label{eq: def ad zeitentw, einl}
V(t,t_0)P(t_0) = P(t)V(t,t_0)
\end{align}
for all $t_0 \le t$. We also identify circumstances under which an adiabatic theorem holds true already on trivial grounds. 
And finally, in Section~\ref{sect: vorber q.d.s.} we provide the preliminaries on generators -- especially (weakly) dephasing generators -- of quantum dynamical semigroups needed for our application to open quantum systems.
\smallskip

In Section~\ref{sect: adsatz mit glm sl} and~\ref{sect: adsatz mit nichtglm sl} we prove our adiabatic theorems with uniform and non-uniform spectral gap condition 
which generalize in a quite simple way the adiabatic theorem of Abou Salem from~\cite{AbouSalem07}. 
In simplified form, our theorems (combined) 
can be formulated as follows (with $I := [0,1]$). See~\cite{Schmid15qmath}. 
If $A(t): D \subset X \to X$ for every $t \in I$ generates a contraction semigroup, 
if $\lambda(t)$ for every $t \in I$ is a spectral value of $A(t)$ and $\lambda(\,.\,)$ falls into $\sigma(A(\,.\,)) \setminus \{\lambda(\,.\,)\}$ at only countably many points which, in turn, accumulate at only finitely many points, 
and if $P(t)$ for every $t \in I \setminus N$ is associated with $A(t)$ and $\lambda(t)$, 
where $N$ denotes the set of those points where $\lambda(\,.\,)$ falls into $\sigma(A(\,.\,)) \setminus \{\lambda(\,.\,)\}$, then -- under suitable regularity assumptions -- one has:
\begin{align} \label{eq: adsatz mit sl, einl}
\sup_{t \in I} \norm{U_{\eps}(t,0)-V_{\eps}(t,0)} = O(\eps) \quad \text{or} \quad 
\sup_{t \in I} \norm{U_{\eps}(t,0)-V_{\eps}(t,0)} = o(1) 
\end{align}
as $\eps \searrow 0$, depending on whether $N = \emptyset$ (uniform spectral gap) or $N \ne \emptyset$ (non-uniform spectral gap). In the above relation, $U_{\eps}$ and $V_{\eps}$ denote the evolution system for $\frac 1 \eps A$ and $\frac 1 \eps A + [P',P]$, respectively. Since $V_{\eps}$ is adiabatic w.r.t.~$P$ in the sense of~\eqref{eq: def ad zeitentw, einl} for every $\eps$, one in particular has  the convergence~\eqref{eq: aussage des adsatzes}. 
Actually, we prove a slightly more general version of the above theorems where at any occurrence the singleton $\{\lambda(t)\}$ is replaced by a general compact subset $\sigma(t)$ of $\sigma(A(t))$. 
In Section~\ref{sect: bsp, adsaetze mit sl} we discuss, among other things, the special case of the above theorem 
where the spectral values $\lambda(t)$ are poles of $(\,.\,-A(t))^{-1}$. It turns out that this special case is particularly enlightening with regard to the proof of our adiabatic theorems without spectral gap condition. We also present an example showing that the contraction semigroup generator assumption on $A(t)$ cannot be essentially weakened.  
\smallskip

In Section~\ref{sect: qual adsatz ohne sl} and~\ref{sect: quant adsatz ohne sl} we establish our (qualitative and quantitative) adiabatic theorems without spectral gap condition. With these theorems, 
we generalize the respective adiabatic theorems of Avron, Fraas, Graf, Grech from~\cite{AvronGraf12} and of Schmid from~\cite{dipl}, which cover the case of weakly semisimple eigenvalues. 
Section~\ref{sect: qual adsatz ohne sl} contains a qualitative adiabatic theorem which, in simplified form, can be formulated 
as follows (with $I := [0,1]$). See~\cite{Schmid15qmath}. 
If $A(t): D \subset X \to X$ for every $t \in I$ generates a contraction semigroup, 
if $\lambda(t)$ for every $t \in I$ is an eigenvalue of $A(t)$ such that $\lambda(t) + \delta e^{i \vartheta(t)} \in \rho(A(t))$ for every $\delta \in (0,\delta_0]$, 
and if $P(t)$ 
is weakly associated with $A(t)$ and $\lambda(t)$ for almost every $t \in I$ and of finite rank and the reduced resolvent estimate
\begin{align} \label{eq: resolvabsch, einl}
\norm{ \big( \lambda(t)+\delta e^{i \vartheta(t)} - A(t) \big)^{-1} (1-P(t)) } \le \frac{M_0}{\delta} \qquad (\delta \in (0,\delta_0]),
\end{align}
is satisfied, 
then -- under suitable regularity assumptions -- one has the convergence $\sup_{t \in I} \norm{(1-P(t)) U_{\eps}(t,0) P(0)} \longrightarrow 0$ as $\eps \searrow 0$. If, in addition, $X$ is reflexive, then one even has 
\begin{align} \label{eq: adsatz ohne sl, einl}
\sup_{t \in I} \norm{U_{\eps}(t,0)-V_{\eps}(t,0)} \longrightarrow 0 \qquad (\eps \searrow 0),
\end{align}
where $U_{\eps}$ and $V_{\eps}$ as before denote the evolution system for $\frac 1 \eps A$ and $\frac 1 \eps A + [P',P]$, respectively.
An important step in the proof of this theorem is 
to find bounded operators $B(t)$ that approximately solve the commutator equation 
\begin{align} \label{eq: comm eq, einl}
B(t)A(t)-A(t)B(t) \subset [P'(t),P(t)]
\end{align}
up to a suitable controllable error. In the case with spectral gap, this commutator equation has an exact solution (which is used in Section~\ref{sect: adsaetze mit sl}) and, by recasting this exact solution appropriately, we can guess an at least approximate solution to~\eqref{eq: comm eq, einl} in the case without spectral gap.  
%
%
As has already been pointed out above, the existence of a 
projection $P(t)$ weakly associated with $A(t)$ and $\lambda(t)$ is not for granted in the situation of the above theorem without spectral gap. We therefore identify a relatively large class of spectral operators $A(t)$ and corresponding eigenvalues $\lambda(t)$ for which weakly associated projections do exist and for which, moreover, the reduced resolvent estimate~\eqref{eq: resolvabsch, einl} holds true.
Additionally, we extend the above adiabatic theorem 
to the case of several eigenvalue curves $\lambda_1, \dots, \lambda_r$. It seems that this extension is new even in the special case of skew-adjoint operators $A(t)$. 
Section~\ref{sect: quant adsatz ohne sl} contains some quantitative refinements of the qualitative adiabatic theorem above. In particular, it contains a quantitative adiabatic theorem for scalar-type spectral operators $A(t)$ whose spectral measures $P^{A(t)}$ are H\"older continuous in $t$ around $\lambda(t)$ in some sense, and our bound on the rate of convergence in~\eqref{eq: adsatz ohne sl, einl} improves 
the respective bound from~\cite{AvronElgart99} and~\cite{Teufel01}. 
%
In Section~\ref{sect: bsp, adsaetze ohne sl} we present some examples illustrating the generality of our theorems and the necessity of some of their regularity assumptions. In particular, we show that adiabatic theory is typically 
uninteresting for multiplication operators $A(t) = M_{f_t}$.  
\smallskip

In Section~\ref{sect: anwendung q.d.s.} we apply our qualitative adiabatic theorem without spectral gap condition to slowly time-varying open quantum systems described by weakly dephasing generators $A(t)$ of quantum dynamical semigroups on the Schatten class $X = S^p(\mathfrak{h})$ on a Hilbert space $\mathfrak{h}$ with $p \in (1,\infty)$. Such operators are of the form
\begin{align} \label{eq:Lindblad form}
A(t) \rho := Z_0(t)(\rho) + \sum_{j \in J} B_j(t) \rho B_j(t)^* - 1/2 \{ B_j(t)^*B_j(t), \rho \} 
\qquad (\rho \in D(Z_0(t))) 
\end{align}
with $Z_0(t)$ being the generator of the semigroup on $S^p(\mathfrak{h})$ defined by $e^{Z_0(t)\tau}(\rho) :=  e^{-i H(t) \tau} \rho \, e^{iH(t) \tau}$, where $H(t): D(H(t)) \subset \mathfrak{h} \to \mathfrak{h}$ is a self-adjoint operator and $B_j(t)$ for every $j$ in the arbitrary index set $J$ is a bounded opertor in $\mathfrak{h}$ such that  
\begin{align} \label{eq:weak dephasingness, def}
\sum_{j\in J} B_j(t) B_j(t)^* = \sum_{j\in J} B_j(t)^* B_j(t) < \infty
\end{align}
for every $t \in [0,1]$. We thereby generalize a result from~\cite{AvronGraf12} where the case of 
dephasing -- instead of weakly dephasing -- generators $A(t)$ with bounded operators $H(t)$ is considered. A dephasing generator is an operator of the form~\eqref{eq:Lindblad form} such that the finiteness condition from~\eqref{eq:weak dephasingness, def} is satisfied and such that $B_j(t), B_j(t)^*$ belong to the double commutant of 
\begin{align}
\mathcal{A} := \big\{ f(H(t)): f \text{ bounded measurable function } \sigma(H(t)) \to \C \big\}
\end{align}
for every $t \in [0,1]$ and $j \in J$.
In Section~\ref{sec: ad switching} we apply our qualitative adiabatic theorem without spectral gap condition -- in the version for several eigenvalues -- to adiabatic switching processes described by skew-adjoint operators of the form $A(t) = A_0 + \kappa(t)V$ with a switching function $\kappa$. In doing so, 
we generalize the Gell-Mann and Low theorems from~\cite{Gell-MannLow51}, \cite{NenciuRasche89}, \cite{Panati10} to the case of not necessarily isolated eigenvalues. In particular, we obtain Gell-Mann and Low formulas of the following two types:
\begin{itemize}
\item a formula 
that relates the eigenstates of the perturbed system described by $A_0+V$ to the eigenstates of the unperturbed system described by $A_0$
\item a formula 
that expresses the change of energy during the switching process in terms of the evolution system $U_{\eps}$ for $\frac{1}{\eps}A$.
\end{itemize}
\smallskip


In the entire paper, we will use the following notational conventions. $X$, $Y$, $Z$ will denote Banach spaces over $\C$, $\mathfrak{h}$ will denote 
a Hilbert space over $\C$, 
and $\norm{\,.\,}_{X,Y}$ will stand for the operator norm on $L(X,Y)$, the space of bounded linear operators from $X$ to $Y$. If $X=Y$, we will usually simply write $\norm{\,.\,}$ for $\norm{\,.\,}_{X,X}$.
Also, we abbreviate 
\begin{align*}
I := [0,1] \qquad \text{and} \qquad \Delta := \{(s,t) \in I^2: s \le t\}
\end{align*}
and for evolution systems $U$ defined on $\Delta$ we will write $U(t) := U(t,0)$ for brevity. 
Whenever a family of linear operators $A(t): D \subset X \to X$ with time-independent domain $D$ is given, then $U_{\eps}$ will denote 
the evolution system for $\frac{1}{\eps} A$ on $D$ provided it exists.
And finally, in our examples $I_d := \{1,\dots,d\}$ for $d \in \N$ and $I_{\infty}:= \N$. 

\section{Some preliminaries} \label{sect: vorber}

\subsection{Sobolev-regularity of operator-valued functions and stability}  \label{sect: sobolev reg}

In this section we introduce a notion of Sobolev regularity and a notion of stability for operator valued-functions that will be used in all our adiabatic theorems below.
\smallskip

%
%
%
We begin by defining 
the notion of $W^{m,p}_*$-\emph{regularity} for $m \in \N$ and $p \in [1,\infty) \cup \{\infty\}$ which is inspired by the introduction of Kato's work~\cite{Kato85}. An operator-valued function $J \ni t \mapsto A(t) \in L(X,Y)$ on a 
compact interval $J$ is said to belong to $W^{0,p}_*(J, L(X,Y)) = L^{p}_*(J,L(X,Y))$ if and only if $t \mapsto A(t)$ is strongly measurable and $t \mapsto \norm{A(t)}$ has a $p$-integrable majorant.
And $t \mapsto A(t)$ is said to belong to $W^{1,p}_*(J, L(X,Y))$ if and only if there is a $B \in L^{p}_*(J,L(X,Y))$ (called a $W^{1,p}_*$-\emph{derivative} of $A$) such that for some (and hence every) $t_0 \in J$ 
\begin{align} \label{eq: def W1,p-ableitung}
A(t)x = A(t_0)x + \int_{t_0}^t B(\tau)x \,d\tau \text{\, for all } t \in J \text{ and } x \in X.
\end{align}
$W^{m,p}_*(J,L(X,Y))$ for arbitrary $m \in \N$ is defined recursively, of course.
\smallskip

In contrast to the usual vector-valued Sobolev spaces $W^{m,p}(J,Z)$, the $W^{m,p}_*(J,L(X,Y))$-spaces by definition, consist of functions (of operators) rather than equivalence classes of such functions.  
It is obvious from the characterization of $W^{1,p}(J,Y)$ by way of indefinite integrals 
that, if $t \mapsto A(t)$ is in $W^{1,p}_*(J, L(X,Y))$, then $t \mapsto A(t)x$ is (the continuous representative of an element) in $W^{1,p}(J,Y)$. 
It is also obvious that 
\begin{align}
W^{1,\infty}_*(J,L(X,Y)) \subset W^{1,p}_*(J,L(X,Y)) \subset W^{1,1}_*(J,L(X,Y))
\end{align}
and that $W^{1,1}_*$- and $W^{1,\infty}_*$-regularity imply absolute continuity or Lipschitz continuity w.r.t.~the norm topology, respectively. It should be noticed however that the converse implication is not true: for example, $t \mapsto A(t)$ with
\begin{align*}
A(t)g := f(t) g \quad (g \in C(I,\C)) \qquad (f(t) := (t-\,.\,) \chi_{[0,t]}(\,.\,) \in C(I,\C))
\end{align*} 
is Lipschitz continuous from $I$ to $L(X,Y)$ ($X = Y := C(I,\C)$), but not $W^{1,\infty}_*$-regular because $t \mapsto A(t) g$ is non-differentiable at every $t \in (0,1)$ for $g := 1$ (Example~1.2.8 of~\cite{ArendtBatty}). 
%
%
A simple and important criterion for $W^{1,\infty}_*$-regularity is furnished 
by the following proposition.

\begin{prop} \label{prop: WOT-stet db impl W^{1,infty}-reg}
Suppose $J \ni t \mapsto A(t) \in L(X,Y)$ is continuously differentiable w.r.t.~the strong or weak operator topology, where $J$ is a compact interval. 
Then $t \mapsto A(t)$ is in $W^{1,\infty}_*(J, L(X,Y))$.
\end{prop}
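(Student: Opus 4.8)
The plan is to show that the candidate $W^{1,\infty}_*$-derivative is simply the (strong- or weak-operator) derivative $A'$, and that the fundamental theorem of calculus~\eqref{eq: def W1,p-ableitung} holds for it after checking the requisite integrability. So first I would fix an arbitrary $t_0 \in J$ and, for each $x \in X$, consider the vector-valued function $t \mapsto A(t)x$. By hypothesis this is continuously differentiable from $J$ to $Y$ with respect to the norm topology (if $A$ is strongly $C^1$) or with respect to the weak topology of $Y$ (if $A$ is only weakly $C^1$); in the latter case, since $J$ is a compact interval and weak differentiability together with weak continuity of the derivative on an interval forces the difference quotients to converge in norm by a standard argument (integrate the weak derivative against functionals and use uniform boundedness), one again obtains that $t \mapsto A(t)x$ is in $W^{1,\infty}(J,Y)$ with derivative $A'(t)x$. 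In either case the fundamental theorem of calculus gives
\begin{align*}
A(t)x = A(t_0)x + \int_{t_0}^t A'(\tau)x \, d\tau \qquad (t \in J,\ x \in X).
\end{align*}

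Next I would verify that $t \mapsto A'(t)$ lies in $L^{\infty}_*(J,L(X,Y))$, i.e.\ that it is strongly measurable and that $t \mapsto \norm{A'(t)}$ is essentially bounded. Strong measurability of $A'$ follows because for each $x$ the map $t \mapsto A'(t)x$ is a pointwise limit of difference quotients of the continuous map $t \mapsto A(t)x$, hence (strongly, or weakly and then by Pettis) measurable, and it is in fact continuous in the strong case; the weak case again reduces to separability/Pettis considerations on the closed subspace generated by the relevant ranges. The essential (indeed uniform) boundedness of $t \mapsto \norm{A'(t)}$ is where the main work sits: for each fixed $t$, $A'(t)$ is a bounded operator by assumption, and the family $\{A'(t) : t \in J\}$ is pointwise bounded because $t \mapsto A'(t)x$ is continuous (strong case) or weakly continuous hence pointwise bounded (weak case) on the compact interval $J$; by the uniform boundedness principle, $\sup_{t \in J}\norm{A'(t)} < \infty$. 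This supplies the required $L^{\infty}$-majorant, and in fact shows $A'$ is a $W^{1,\infty}_*$-derivative of $A$, so that $t \mapsto A(t)$ belongs to $W^{1,\infty}_*(J,L(X,Y))$.

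The main obstacle, and the only genuinely delicate point, is the weak-operator-topology case: one must pass from "$t \mapsto A(t)x$ is weakly $C^1$ for every $x$" to "$t \mapsto A(t)x$ is in $W^{1,\infty}(J,Y)$ in the norm sense with derivative $A'(t)x$", and separately ensure $A'$ is strongly measurable rather than merely scalarly measurable. Both are handled by the standard device of testing against $y^* \in Y^*$: for each $y^*$, the scalar function $t \mapsto \scprd{y^*, A(t)x}$ is $C^1$ with derivative $t \mapsto \scprd{y^*, A'(t)x}$, so $\scprd{y^*, A(t)x - A(t_0)x} = \int_{t_0}^t \scprd{y^*, A'(\tau)x}\,d\tau$; since this holds for all $y^*$ and the integrand is bounded (by the uniform bound above) and weakly continuous, the Bochner/Pettis integral $\int_{t_0}^t A'(\tau)x\,d\tau$ exists in $Y$ and equals $A(t)x - A(t_0)x$, giving~\eqref{eq: def W1,p-ableitung}. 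The measurability of $A'$ then follows from weak continuity plus the uniform norm bound via Pettis' measurability theorem after restricting to a suitable separable subspace. Everything else is routine bookkeeping.
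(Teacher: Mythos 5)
Your proof is correct and follows the same route as the paper: strong measurability of $t\mapsto A'(t)x$ via Pettis' theorem (almost separable valuedness coming from weak continuity on the compact interval), the uniform boundedness principle for the $L^\infty$-bound on $\norm{A'(t)}$, and Hahn--Banach to pass from the scalar fundamental theorem of calculus to the Bochner-integral identity~\eqref{eq: def W1,p-ableitung}. One minor caveat: the parenthetical claim in your first paragraph that the difference quotients of $t\mapsto A(t)x$ converge in norm in the weak-operator case is not clearly justified (weak continuity of the derivative does not obviously yield pointwise norm differentiability), but since you never use it and the third paragraph supplies the correct Pettis/Bochner argument, the proof stands.
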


\begin{proof}
 It is well-known that a weakly continuous map $J \to Y$ is almost separably valued, whence $t \mapsto A'(t)x$ is measurable 
for every $x \in X$ by Pettis' characterization of measurability (Theorem~1.1.1 of~\cite{ArendtBatty}). With 
the Hahn--Banach theorem the conclusion readily follows. 
\end{proof}

It follows from Lebesgue's differentiation theorem that $W^{1,p}_*$-derivatives are essentially unique, more precisely: if $t \mapsto A(t)$ is in $W^{1,p}_*(J,L(X,Y))$ for a $p \in [1,\infty) \cup \{\infty\}$ and $B_1$, $B_2$ are two $W^{1,p}_*$-derivatives of $A$, then one has for every $x \in X$ that $B_1(t)x = B_2(t)x$ for almost every $t \in J$. It should be emphasized that this last condition does \emph{not} imply that $B_1(t) = B_2(t)$ for almost every $t \in J$. (Indeed, take $J:= [0,1]$, $X := \ell^2(J)$ and define 
\begin{align*}
A(t):= 0 \text{\,  as well as \,} B_1(t)x := \scprd{ e_t, x} e_t \text{\, and \,} B_2(t)x := 0 
\end{align*}
for $t \in J$ and $x \in X$, where $e_t(s) := \delta_{s \, t}$.
Then, for every $x \in X$, $B_1(t)x$ is different from $0$ for at most countably many $t \in J$, and it follows that $B_1$ and $B_2$ both are $W^{1,\infty}_*$-derivatives of $A$, but $B_1(t) \ne B_2(t)$ for every $t \in J$.) 
\smallskip

A very important property of the $W^{1,p}_*$-spaces is that $W^{1,p}_*$-regularity carries over to products and inverses. 
It is used implicitly in~\cite{Dorroh75} for $p = 1$ and noted explicitly in the introduction of~\cite{Kato85} for $p= \infty$ and for separable spaces. A proof for general exponents $p$ and spaces can be found in~\cite{diss} (Lemma~2.1.2).

\begin{lm} \label{lm: prod- und inversenregel} 
Suppose that $J = [a,b]$ is compact 
and $p \in [1,\infty) \cup \{\infty\}$.
\begin{itemize}
\item[(i)] If $t \mapsto A(t)$ is in $W^{1,p}_*(J,L(X,Y))$ and $t \mapsto B(t)$ is in $W^{1,p}_*(J,L(Y,Z))$, then $t \mapsto B(t)A(t)$ is in $W^{1,p}_*(J,L(X,Z))$ and $t \mapsto B'(t)A(t) + B(t)A'(t)$ is a $W^{1,p}_*$-derivative of $B A$ for every $W^{1,p}_*$-derivative $A'$, $B'$ of $A$ or $B$, respectively. 
\item[(ii)] If $t \mapsto A(t)$ is in $W^{1,p}_*(J,L(X,Y))$ and $A(t)$ is bijective onto $Y$ 
for every $t \in J$, 
then $t \mapsto A(t)^{-1}$ is in $W^{1,p}_*(J,L(Y,X))$ and $t \mapsto - A(t)^{-1} A'(t) A(t)^{-1}$ is a $W^{1,p}_*$-derivative of $A^{-1}$ for every $W^{1,p}_*$-derivative $A'$ of $A$.
\end{itemize}
\end{lm}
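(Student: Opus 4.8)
The plan is to prove both parts by reducing everything to the pointwise level, where the classical product and quotient rules for Banach-space-valued $W^{1,p}$-functions are available, and then upgrading the pointwise statements back to the operator-valued $W^{1,p}_*$-setting. For part (i), fix $x \in X$ and an arbitrary continuous linear functional $z^* \in Z^*$. Since $t \mapsto A(t)$ is in $W^{1,p}_*(J,L(X,Y))$, the orbit $t \mapsto A(t)x$ lies in $W^{1,p}(J,Y)$ with derivative $t \mapsto A'(t)x$ (for any fixed $W^{1,p}_*$-derivative $A'$), by the remark following~\eqref{eq: def W1,p-ableitung}. The function $t \mapsto B(t)$ being in $W^{1,p}_*(J,L(Y,Z))$ means in particular that it is $W^{1,1}_*$-regular, hence absolutely continuous in the operator norm and pointwise bounded. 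First I would show that $t \mapsto B(t)A(t)x$ is absolutely continuous: writing $B(t)A(t)x - B(s)A(s)x = (B(t)-B(s))A(t)x + B(s)(A(t)x - A(s)x)$, the first term is controlled by the absolute continuity of $B$ in norm times the uniform bound on $\norm{A(t)x}$, and the second by $\sup_t\norm{B(t)}$ times the absolute continuity of $t \mapsto A(t)x$. Then I would compute the a.e.\ derivative: at any point $t$ where both $s\mapsto B(s)y$ (for $y = A(t)x$) and $s \mapsto A(s)x$ are differentiable --- which is a.e.\ $t$ --- the standard difference-quotient argument gives $\frac{d}{dt}(B(t)A(t)x) = B'(t)A(t)x + B(t)A'(t)x$, where $B'(t)A(t)x$ is obtained from $\lim_{h\to 0}\frac{1}{h}(B(t+h)-B(t))A(t)x$ together with the uniform-boundedness-plus-differentiability of $B$ applied at the fixed vector $A(t)x$. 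Since $t\mapsto B(t)A(t)x$ is absolutely continuous with this a.e.\ derivative, it is the indefinite integral of $t \mapsto B'(t)A(t)x + B(t)A'(t)x$, which is exactly~\eqref{eq: def W1,p-ableitung} for the product $BA$. It remains to check that $t \mapsto B'(t)A(t) + B(t)A'(t)$ is itself in $L^p_*(J,L(X,Z))$: strong measurability follows since it is a pointwise (strong) limit of difference quotients of the strongly measurable functions $BA$, or more directly from Lemma~\ref{lm: prod- und inversenregel}(i) applied "one factor at a time" --- a product of a bounded-measurable and an $L^p_*$ function --- and the $p$-integrable majorant is $(\esssup_J\norm{B'})\cdot(\sup_J\norm{A}) + (\sup_J\norm{B})\cdot(\text{$p$-integrable majorant of }\norm{A'})$; here I use that $W^{1,p}_*$-regularity of $A$ forces $\norm{A}$ bounded (it is an indefinite integral) and likewise for $B$.

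For part (ii), I would deduce it from (i) by the usual algebraic trick. Assuming $A(t): X \to Y$ is bijective for every $t$, the open mapping theorem gives $A(t)^{-1} \in L(Y,X)$ pointwise. The first task is to show $t \mapsto A(t)^{-1}$ is (norm-)continuous, which follows from norm-continuity of $t\mapsto A(t)$ (a consequence of $W^{1,1}_*$-regularity) together with the Neumann-series perturbation estimate $\norm{A(t)^{-1}-A(s)^{-1}} \le \norm{A(s)^{-1}}^2\norm{A(t)-A(s)}/(1-\norm{A(s)^{-1}}\norm{A(t)-A(s)})$ for $s$ near $t$; in particular $t \mapsto \norm{A(t)^{-1}}$ is continuous on the compact interval $J$ and hence bounded, say by $C$. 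Now fix $y \in Y$. From the identity $A(t)^{-1}y - A(s)^{-1}y = -A(t)^{-1}(A(t)-A(s))A(s)^{-1}y$ I would first get absolute continuity of $t \mapsto A(t)^{-1}y$ (bound by $C^2 \norm{y}$ times the absolute continuity modulus of $A$ in norm), and then, at a.e.\ $t$ where $s\mapsto A(s)x$ is differentiable for $x := A(t)^{-1}y$, pass to the limit to obtain $\frac{d}{dt}(A(t)^{-1}y) = -A(t)^{-1}A'(t)A(t)^{-1}y$, using continuity of $s\mapsto A(s)^{-1}$ in norm to handle the outer factor. Integrating recovers~\eqref{eq: def W1,p-ableitung} for $A^{-1}$, and $t \mapsto -A(t)^{-1}A'(t)A(t)^{-1}$ is in $L^p_*(J,L(Y,X))$ because it is a product of two norm-bounded continuous (hence $L^\infty_*$) factors with the $L^p_*$-function $A'$, with $p$-integrable majorant $C^2$ times a majorant of $\norm{A'}$. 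Finally, the extension from $m = 1$ to general $m$ is a routine induction: if $A, B$ are $W^{m,p}_*$, apply the $W^{m-1,p}_*$-version of the rule to the $W^{m-1,p}_*$-derivative $B'A + BA'$, noting that $B'A$ and $BA'$ are $W^{m-1,p}_*$ by the inductive hypothesis and that $W^{m-1,p}_*$ is closed under sums; similarly for inverses.

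\textbf{Main obstacle.} The genuinely delicate point is passing from "$s \mapsto B(s)y$ is differentiable at $t$ for a fixed $y$" to the statement about $\frac{d}{dt}(B(t)A(t)x)$: the difference quotient $\frac{1}{h}(B(t+h)A(t+h)x - B(t)A(t)x)$ must be split so that $B$ is evaluated only on difference quotients of a genuinely \emph{fixed} vector. The clean way is to write it as $\frac{1}{h}(B(t+h)-B(t))\,(A(t)x) + B(t+h)\,\frac{1}{h}(A(t+h)x - A(t)x)$; the first summand converges to $B'(t)A(t)x$ by differentiability of $s \mapsto B(s)(A(t)x)$ at $s=t$ (here $A(t)x$ is fixed!), and the second converges to $B(t)A'(t)x$ using $\norm{B(t+h) - B(t)} \to 0$ together with boundedness of the difference quotients of $s\mapsto A(s)x$. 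This works at every $t$ that is simultaneously a Lebesgue point for $s \mapsto B'(s)(A(t)x)$ and a differentiability point of $s \mapsto A(s)x$ --- and one must be slightly careful that the first condition depends on $t$ through the vector $A(t)x$; but since for \emph{each} fixed vector $y$ the orbit $s\mapsto B(s)y$ is differentiable a.e., and we only ever invoke this at the single point $s=t$ with $y = A(t)x$, a standard "good points" argument (or simply invoking that $s \mapsto B(s)A(t)x$ is in $W^{1,p}$ with derivative $s \mapsto B'(s)A(t)x$, then using that $A$ is differentiable a.e.) closes the gap. All remaining verifications --- measurability of the candidate derivatives and construction of the $p$-integrable majorants --- are routine given that $W^{1,p}_*$-functions are norm-bounded.
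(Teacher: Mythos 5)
You correctly isolate the genuinely delicate point --- the null set on which $s\mapsto B(s)y$ fails to be differentiable depends on $y$, and you need differentiability at the specific point $s=t$ with $y=A(t)x$, which itself depends on $t$ --- but the resolution you offer does not actually close it. Neither ``a standard good points argument'' nor the parenthetical (``$s\mapsto B(s)A(t)x$ is in $W^{1,p}$ with derivative $s\mapsto B'(s)A(t)x$, then use that $A$ is differentiable a.e.'') shows that the self-referential set $\{t\in J: s\mapsto B(s)A(t)x \text{ is not differentiable at } s=t\}$ is null: the exceptional set in the $W^{1,p}$-statement still varies with the fixed vector $A(t)x$, and nothing in your writing controls its measure. (The rest of the framework is fine; in particular, your implicit use of the fact that an absolutely continuous, a.e.\ differentiable Banach-space-valued function with integrable derivative is the indefinite integral of that derivative is legitimate, by composing with functionals.)

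The gap is closeable, in at least two ways. The first names the good points explicitly: since
\begin{align*}
\frac{1}{h}\big(B(t+h)-B(t)\big)A(t)x
= \frac{1}{h}\int_t^{t+h} B'(\tau)\big(A(t)x - A(\tau)x\big)\,d\tau
+ \frac{1}{h}\int_t^{t+h} B'(\tau)A(\tau)x\,d\tau,
\end{align*}
the first integral tends to $0$ at every Lebesgue point of $\norm{B'(\cdot)}$, using norm-continuity of $\tau\mapsto A(\tau)x$, while the second tends to $B'(t)A(t)x$ at every Lebesgue point of the \emph{fixed} integrable $Z$-valued function $\tau\mapsto B'(\tau)A(\tau)x$; both are conditions on $t$ alone, with no self-reference. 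The second, cleaner, route sidesteps pointwise differentiability entirely: substitute $A(s)x = A(t_0)x + \int_{t_0}^s A'(\tau)x\,d\tau$ into $\int_{t_0}^t B'(s)A(s)x\,ds$, apply Fubini on the resulting triangle, and regroup with $\int_{t_0}^t B(s)A'(s)x\,ds$; everything telescopes to $B(t)A(t)x - B(t_0)A(t_0)x$, which is exactly~\eqref{eq: def W1,p-ableitung} for $BA$. The same self-referential difficulty recurs in part~(ii), where you need $s\mapsto A(s)x$ differentiable at $s=t$ for $x = A(t)^{-1}y$, and the same fixes apply. The remaining steps --- measurability and the $p$-integrable majorants for the candidate derivatives, norm-continuity of $A(\cdot)^{-1}$ via Neumann series, the induction to higher $m$ --- are sound.
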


We now move on to briefly recall from~\cite{Kato70} or~\cite{Pazy} the concept of stable families of operators. A family $A$ of linear operators $A(t): D(A(t)) \subset X \to X$ (where $t \in J$) is called \emph{$(M,\omega)$-stable} (for some $M \in [1,\infty)$ and $\omega \in \R$) if and only if $A(t)$ generates a strongly continuous semigroup on $X$ for every $t \in J$ and 
\begin{align}
\norm{   e^{A(t_n) s_n}  \, \dotsm \, e^{A(t_1) s_1}   } \le M e^{\omega (s_1 + \, \dotsb \, + s_n) } 
\end{align} 
for all $s_1, \dots, s_n \in [0,\infty)$ and $t_1, \dots, t_n \in J$ satisfying $t_1 \le \dotsb \le t_n$ with arbitrary~$n \in \N$. 
Alternatively, $(M,\omega)$-stability could be defined via 
the resolvents of the $A(t)$ (Proposition~3.3 of~\cite{Kato70}) or certain monotonic families of norms 
(Proposition~1.3 of~\cite{Nickel00}). 
\smallskip

Clearly, a family $A$ of linear operators in $X$ is $(1,0)$-stable if and only if each member $A(t)$ of the family generates a contraction semigroup on $X$. 
In the particular case of operators $A(t)$ having the simple form $\lambda(t) + \alpha(t) N$ in $X = \ell^p(I_d)$, stability of $A$ 
can be easily characterized in terms of the following condition.

\begin{cond} \label{cond: baustein mit nicht-halbeinfachem ew}
$N \ne 0$ is a nilpotent operator in $X := \ell^p(I_d)$ (with $p \in [1,\infty)$ and $d \in \N$), $\lambda(t) \in \C$ and $\alpha(t) \in [0,\infty)$ for all $t \in I$, and there is an $r_0 > 0$ such that 
\begin{align*}
-\Re \lambda(t) = |\Re \lambda(t) | \ge r_0 \alpha(t) \qquad (t \in I).
\end{align*}
\end{cond}

\begin{lm} \label{lm: char (M,0)-stab für einfaches A}
Suppose that $N \ne 0$ is a nilpotent operator in $X := \ell^p(I_d)$ with $p \in [1,\infty)$ and $d \in \N$ and that $A(t) = \lambda(t) + \alpha(t) N$ for every $t \in I$, where $\lambda(t) \in \C$ and $\alpha(t) \in [0,\infty)$. Then $A$ is $(M,0)$-stable for some $M \in [1,\infty)$ if and only if 
Condition~\ref{cond: baustein mit nicht-halbeinfachem ew} is satisfied.
\end{lm}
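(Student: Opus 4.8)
The plan is to compute the semigroup $e^{A(t)\tau}$ explicitly, observe that because of the Jordan-block structure of $N$ the growth bound is governed by a competition between the exponential decay coming from $\Re\lambda(t)$ and the polynomial growth of order $m-1$ (where $m$ is the nilpotency index of $N$) coming from $\alpha(t)N$, and then translate the $(M,0)$-stability condition into a uniform bound on finite products $e^{A(t_n)s_n}\cdots e^{A(t_1)s_1}$. Since $\lambda(t)$ commutes with $\alpha(t)N$, one has $e^{A(t)\tau} = e^{\lambda(t)\tau}\sum_{k=0}^{m-1}\frac{(\alpha(t)\tau)^k}{k!}N^k$, so $\norm{e^{A(t)\tau}} \le e^{\Re\lambda(t)\tau}\,p(\alpha(t)\tau)$ for a fixed polynomial $p$ of degree $m-1$ depending only on $N$ (and with matching lower bounds of the same form, since $N\ne 0$).

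For the ``if'' direction I would assume Condition~\ref{cond: baustein mit nicht-halbeinfachem ew}, so $-\Re\lambda(t) \ge r_0\alpha(t)$ for all $t$. Then each factor satisfies $\norm{e^{A(t_j)s_j}} \le e^{\Re\lambda(t_j)s_j}p(\alpha(t_j)s_j) \le e^{-r_0\alpha(t_j)s_j}p(\alpha(t_j)s_j) \le M_0$, where $M_0 := \sup_{u\ge 0} e^{-r_0 u}p(u) < \infty$ because $p$ is a polynomial and $r_0>0$. That alone only gives $\norm{e^{A(t_n)s_n}\cdots e^{A(t_1)s_1}} \le M_0^n$, which is not uniform in $n$, so I need to be more careful: I would instead bound the whole product at once by using that $e^{\lambda(t_j)s_j}$ are scalars and can be pulled out, giving $\norm{e^{A(t_n)s_n}\cdots e^{A(t_1)s_1}} \le \exp\big(\sum_j\Re\lambda(t_j)s_j\big)\cdot\norm{\prod_j\big(\sum_k \tfrac{(\alpha(t_j)s_j)^k}{k!}N^k\big)}$. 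The product of the matrix polynomials in $N$ is again a polynomial in $N$ of degree $\le m-1$ whose coefficients are controlled by $\prod_j p(\alpha(t_j)s_j)$ (or, more sharply, by $\exp\big(\sum_j\alpha(t_j)s_j\big)$ times a constant, using $N^k=0$ for $k\ge m$ together with the multinomial expansion). Combining with $\sum_j\Re\lambda(t_j)s_j \le -r_0\sum_j\alpha(t_j)s_j$ yields a bound of the form $C\exp\big((1-r_0)\sum_j\alpha(t_j)s_j\big)$ or, after a cleaner accounting, $C\,\sup_{u\ge0}e^{-r_0 u}u^{m-1}/(m-1)!\cdot(\text{something bounded})$; the point is that the exponential decay beats the polynomial uniformly in $n$, giving a finite $M$ independent of $n$, $s_j$, $t_j$. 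This uniform-in-$n$ bookkeeping is the main obstacle and the step that needs the most care.

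For the ``only if'' direction I would argue by contraposition: suppose Condition~\ref{cond: baustein mit nicht-halbeinfachem ew} fails. If $\Re\lambda(t)>0$ for some $t$, then even the single semigroup $e^{A(t)\tau}$ grows exponentially, so $A$ is not $(M,0)$-stable for any $M$. Otherwise $\Re\lambda(t)\le0$ everywhere but the inequality $-\Re\lambda(t)\ge r_0\alpha(t)$ fails for every $r_0>0$, i.e.\ $\sup_t \alpha(t)/(-\Re\lambda(t)) = \infty$ (with the convention that the ratio is $+\infty$ when $\Re\lambda(t)=0<\alpha(t)$); pick $t$ with $\alpha(t)/(-\Re\lambda(t))$ arbitrarily large, and taking $n=1$ with $s_1=\tau$ we get $\norm{e^{A(t)\tau}} \ge e^{\Re\lambda(t)\tau}\cdot c\,(\alpha(t)\tau)^{m-1}$ for the leading coefficient $c>0$ of the lower bound. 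Optimizing over $\tau$ (choosing $\tau \sim (m-1)/(-\Re\lambda(t))$) makes this of order $c'(\alpha(t)/(-\Re\lambda(t)))^{m-1}$, which is unbounded as $t$ ranges suitably; hence no uniform $M$ exists and $A$ is not $(M,0)$-stable. (When $\Re\lambda(t)=0$ and $\alpha(t)>0$ one lets $\tau\to\infty$ directly.) Putting the two directions together establishes the claimed equivalence.
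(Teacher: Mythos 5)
Your proposal is correct in substance. In the ``if'' direction it is essentially the paper's argument: the decisive observation (which you reach via the ``multinomial expansion'' remark) is that all factors $e^{\alpha(t_j)s_j N}$ are exponentials of scalar multiples of the \emph{same} $N$, so they commute and the product collapses to $e^{\sum_j\Re\lambda(t_j)s_j}\,e^{Nu}$ with $u:=\sum_j\alpha(t_j)s_j$; one then bounds $e^{-r_0 u}\,\|e^{Nu}\|\le\sup_{u\ge0}e^{-r_0 u}p(u)<\infty$ with $p$ the degree-$(m-1)$ polynomial majorant of $\|e^{Nu}\|$. The paper packages this by fixing $M_{r_0}$ with $\|e^{Nu}\|\le M_{r_0}e^{r_0 u}$ at the outset. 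One correction: the intermediate bound $C\exp\big((1-r_0)\sum_j\alpha(t_j)s_j\big)$ you write down is actually useless (it is unbounded whenever $r_0<1$, and the Condition imposes no lower bound on $r_0$), so it should be deleted outright in favor of the ``cleaner accounting'' version you state afterwards.

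The ``only if'' direction is where you genuinely diverge from the paper. You argue by contraposition, lower-bounding the single-factor norm $\|e^{A(t)\tau}\|$ via $\|e^{Nu}\|\gtrsim u^{m-1}$ for large $u$ and optimizing $\tau\sim(m-1)/(-\Re\lambda(t))$, handling the edge cases $\Re\lambda(t)>0$ and $\Re\lambda(t)=0<\alpha(t)$ separately. The paper instead works directly with the resolvent characterization of $(M,0)$-stability: it notes that stability forces $\sigma(A(t))\subset\{\Re z\le 0\}$ and $\Re\lambda(t)<0$ whenever $\alpha(t)\ne 0$, passes to the equally stable $\tilde A(t)=\Re\lambda(t)+\alpha(t)N$, and evaluates $|\Re\lambda(t)|\,\big(|\Re\lambda(t)|-\tilde A(t)\big)^{-1}$ at the basis vector $e_2$ to read off $\alpha(t)/(4|\Re\lambda(t)|)\le M$, i.e.\ $r_0=1/(4M)$ works. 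What the paper's route buys is an explicit constant with a finite exact computation; your route is conceptually more direct (semigroup growth rather than resolvent) but requires you to quantify the asymptotic lower bound on $\|e^{Nu}\|$ and to justify the choice of $\tau$ a bit more carefully. Both are valid.
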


\begin{proof}
Suppose first that $A$ is $(M, 0)$-stable for some $M \in [1,\infty)$ and assume, without loss of generality, that $N = \operatorname{diag}(J_1, \dots, J_m)$ is in Jordan normal form with decreasingly ordered Jordan block matrices $J_1, \dots, J_m$ 
We then show that $-\Re \lambda(t) = | \Re \lambda(t) | \ge \frac{1}{4 M} \, \alpha(t)$ for every $t \in I$. 
It is clear by the $(M,0)$-stability of $A$ that $\lambda(t) \in \sigma(A(t)) \subset \{ \Re z \le 0 \}$ for every $t \in I$ and that the family $\tilde{A}$ with $\tilde{A}(t) := \Re \lambda(t) + \alpha(t) N$ is $(M, 0)$-stable as well. 
If $\alpha(t) = 0$ for some $t$, then the desired inequality 
is trivial. If $\alpha(t) \ne 0$ for some $t$, then $\Re \lambda(t) < 0$ by the $(M,0)$-stability of $A$ and therefore we get from
\begin{align*}
(\lambda - \tilde{A}(t))^{-1} e_2 = (\frac{\alpha(t)}{(\lambda-\Re \lambda(t))^2}, \frac{1}{\lambda-\Re \lambda(t)}, 0, 0, \dots) \qquad (\lambda \in (0,\infty))
\end{align*}
with the particular choice $\lambda := |\Re \lambda(t)|$ and from the $(M,0)$-stability of $\tilde{A}$ that
\begin{align*}
\frac{\alpha(t)}{4 \, | \Re \lambda(t) |}    \le  \norm{ |\Re \lambda(t)| \, \big( |\Re \lambda(t)| - \tilde{A}(t) \big)^{-1} \, e_2 } \le M,
\end{align*}
as desired.
Suppose conversely that there is an $r_0 > 0$ such that $-\Re \lambda(t) = |\Re \lambda(t)| \ge r_0 \alpha(t)$ for every $t \in I$. Then, for $M = M_{r_0} \in [1,\infty)$ chosen such that $\norm{ e^{N s} } \le M e^{r_0 \, s}$ for all $s \in [0,\infty)$, 
we obtain
\begin{align*}
\norm{ e^{A(t_n)s_n} \dotsb e^{A(t_1)s_1} } 
= e^{\Re \lambda(t_n)s_n} \dotsb e^{\Re \lambda(t_1)s_1} \, \norm{ e^{N ( \alpha(t_n)s_n + \dotsb + \alpha(t_1)s_1 )} }
\le M
\end{align*}
for all $s_1, \dots, s_n \in [0,\infty)$ and all $t_1, \dots, t_n \in I$ satisfying $t_1 \le \dotsb \le t_n$ (with arbitrary~$n \in \N$), as desired.
\end{proof}

With this lemma, it is simple to produce examples of $(M,0)$-stable families that fail to be $(1,0)$-stable. 
Choose, for instance, $A(t) := -t/3 +t^2 N$ in $X := \ell^p(I_d)$ with $p \in [1,\infty)$ and $d \ge 2$ and with $N$ being the standard $d$ by $d$ Jordan block (with ones on the upper diagonal and zeros everywhere else). 
\smallskip

When it comes to estimating perturbed evolution systems in Section~\ref{sect: adsaetze mit sl} and~\ref{sect: adsaetze ohne sl}, 
the following well-known criterion for stability (Proposition~3.5 of~\cite{Kato70}) 
will always be used. 

\begin{lm} \label{lm:stoersatz (M,omega)-stab}
If $A$ is an $(M,\omega)$-stable family of linear operators $A(t): D(A(t)) \subset X \to X$ for $t \in J$, $B(t)$ is a bounded operator in $X$ for $t \in J$ and $b := \sup_{t \in J} \norm{ B(t) }$ is finite, then $A + B$ is $(M, \omega + M b)$-stable
\end{lm}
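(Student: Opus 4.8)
The plan is to establish, by induction on the number $n$ of factors, the semigroup-product form of $(M,\omega+Mb)$-stability for $A+B$, using the Duhamel formula for bounded perturbations together with Gronwall's inequality. The point to respect throughout is that one must \emph{not} estimate the occurring semigroup products factor by factor --- that would produce a constant $M^n$ instead of the single admissible constant $M$ --- but keep the ``unperturbed'' blocks grouped into ordered products of the $e^{A(t)s}$ and invoke the full product form of $(M,\omega)$-stability for them.

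Since each $A(t)$ generates a strongly continuous semigroup and each $B(t)$ is bounded, the classical bounded perturbation theorem shows that $A(t)+B(t)$ (with domain $D(A(t))$) generates a strongly continuous semigroup $S_t(s):=e^{(A(t)+B(t))s}$; writing also $T_t(s):=e^{A(t)s}$, one has the Duhamel identity
\begin{align*}
S_t(s) = T_t(s) + \int_0^s T_t(s-\tau)\,B(t)\,S_t(\tau)\,d\tau \qquad (s \ge 0).
\end{align*}
From the $(M,\omega)$-stability of $A$ I would use only that $\norm{T_{t_n}(s_n)\cdots T_{t_1}(s_1)} \le Me^{\omega(s_1+\cdots+s_n)}$ whenever $t_1 \le \cdots \le t_n$ and $s_1,\dots,s_n \ge 0$ (the case $n=1$ being the single-factor bound $\norm{T_t(s)} \le Me^{\omega s}$).

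For the inductive step, fix $t_1 \le \cdots \le t_n$ in $J$ and $s_1,\dots,s_n \ge 0$, and set $\sigma := s_1+\cdots+s_n$, $\sigma_0 := s_1+\cdots+s_{n-1}$. Expanding the Duhamel identity for the factors $S_{t_{n-1}}(s_{n-1}),\dots,S_{t_1}(s_1)$ one at a time, while keeping $T_{t_n}(s_n)$ fixed on the left, one writes $T_{t_n}(s_n)S_{t_{n-1}}(s_{n-1})\cdots S_{t_1}(s_1)$ as the unperturbed ordered product $T_{t_n}(s_n)\cdots T_{t_1}(s_1)$ plus $n-1$ integral terms, the $j$-th being
\begin{align*}
\int_0^{s_j} T_{t_n}(s_n)\cdots T_{t_{j+1}}(s_{j+1})\,T_{t_j}(s_j-\tau)\;B(t_j)\;S_{t_j}(\tau)\,S_{t_{j-1}}(s_{j-1})\cdots S_{t_1}(s_1)\,d\tau.
\end{align*}
In this term the part to the left of $B(t_j)$ is an unperturbed ordered product, so its norm is $\le Me^{\omega(s_n+\cdots+s_{j+1}+s_j-\tau)}$; the part to the right is a product of $j \le n-1$ perturbed factors with increasing times, so its norm is $\le Me^{(\omega+Mb)(\tau+s_{j-1}+\cdots+s_1)}$ by the induction hypothesis; and $\norm{B(t_j)}\le b$. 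Doing the elementary $\tau$-integral, the $j$-th term is bounded by $Me^{\omega\sigma}\big(e^{Mb(s_1+\cdots+s_j)}-e^{Mb(s_1+\cdots+s_{j-1})}\big)$, so the sum over $j$ telescopes and one obtains
\begin{align*}
\norm{T_{t_n}(s_n)\,S_{t_{n-1}}(s_{n-1})\cdots S_{t_1}(s_1)} \le Me^{\omega\sigma}\,e^{Mb\sigma_0}.
\end{align*}
It is exactly this telescoping that keeps the number $n-1$ of summands out of the constant.

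Finally, applying the Duhamel identity once more to the outermost factor $S_{t_n}$ gives, for $0 \le s \le s_n$,
\begin{align*}
S_{t_n}(s)\,S_{t_{n-1}}(s_{n-1})\cdots S_{t_1}(s_1)
&= T_{t_n}(s)\,S_{t_{n-1}}(s_{n-1})\cdots S_{t_1}(s_1) \\
&\quad + \int_0^s T_{t_n}(s-\tau)\,B(t_n)\,S_{t_n}(\tau)\,S_{t_{n-1}}(s_{n-1})\cdots S_{t_1}(s_1)\,d\tau,
\end{align*}
so, using the previous bound and $\norm{T_{t_n}(s-\tau)}\le Me^{\omega(s-\tau)}$, the function $\varphi(s):=\norm{S_{t_n}(s)\,S_{t_{n-1}}(s_{n-1})\cdots S_{t_1}(s_1)}$ satisfies $e^{-\omega s}\varphi(s) \le Me^{(\omega+Mb)\sigma_0} + Mb\int_0^s e^{-\omega\tau}\varphi(\tau)\,d\tau$ on $[0,s_n]$. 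Gronwall's inequality then yields $\varphi(s_n) \le Me^{(\omega+Mb)(\sigma_0+s_n)} = Me^{(\omega+Mb)\sigma}$, which is the asserted estimate and closes the induction (the base case $n=1$ being the same argument with empty peeling sum). The only genuine obstacle I anticipate is precisely this bookkeeping: one must resist splitting the semigroup products and instead combine the product form of $(M,\omega)$-stability for the unperturbed blocks with the telescoping of the Duhamel expansion so as to keep every constant independent of $n$.
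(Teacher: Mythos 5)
Your proof is correct. Note first that the paper does not actually prove this lemma; it simply cites Proposition~3.5 of Kato's 1970 paper on hyperbolic evolution equations, so there is no proof in the paper to compare against. Your argument is a valid one, and your strategic point is exactly right: one must keep the unperturbed blocks $e^{A(t_n)s_n}\dotsm e^{A(t_1)s_1}$ intact and apply the full product form of $(M,\omega)$-stability to them, since estimating factor by factor would produce $M^n$. I verified the details: the one-level Duhamel peeling of $T_{t_n}(s_n)S_{t_{n-1}}(s_{n-1})\dotsm S_{t_1}(s_1)$ produces exactly the $n-1$ integral terms you write down; in the $j$th term the block to the left of $B(t_j)$ is an ordered unperturbed product with times $t_j\le\dotsb\le t_n$, the block to the right has $j<n$ perturbed factors with ordered times, the $\tau$-integral evaluates to the telescoping difference you claim, and summing gives $Me^{\omega\sigma}e^{Mb\sigma_0}$; finally the Gronwall step on $\varphi(s)=e^{-\omega s}\|S_{t_n}(s)S_{t_{n-1}}(s_{n-1})\dotsm S_{t_1}(s_1)\|$ is legitimate for any sign of $\omega$ and closes the induction, with $n=1$ handled by the empty-sum degenerate case. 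The standard textbook proof (in Kato and in Pazy's Theorem~5.2.3) instead expands \emph{every} factor $e^{(A(t_j)+B(t_j))s_j}$ in a full Dyson series in powers of $B(t_j)$ and sums over all insertion patterns, collecting $(Mb)^k\sigma^k/k!$ for $k$ total insertions. Your inductive peeling plus Gronwall is a slightly more elementary route to the same estimate: it avoids the combinatorial bookkeeping of the multi-index sum at the cost of an induction and a Gronwall step, while resting on precisely the same key observation about the unperturbed blocks.
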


%
%
In our examples, the following lemma will be important. 

\begin{lm} \label{lm: (M,w)-stabilität und ähnl.trf.}
Suppose $A_0$ is an $(M_0,\omega_0)$-stable family of 
operators $A_0(t): D(A_0(t)) \subset X \to X$ for $t \in J$ and $R(t): X \to X$ for every $t \in J$ is a bijective bounded operator such that $t \mapsto R(t)$ is in $W^{1,\infty}_*(J, L(X))$. Then the family $A$ with $A(t) := R(t)^{-1} A_0(t) R(t)$ is $(M,\omega)$-stable for some $M \in [1,\infty)$ and $\omega = \omega_0$.
\end{lm}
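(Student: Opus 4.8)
The idea is that the similarity transformation lifts to the semigroups. First I would record the (standard) fact that, since $R(t)$ is bijective and bounded, hence a topological isomorphism of $X$, the operator $A(t)=R(t)^{-1}A_0(t)R(t)$ with domain $R(t)^{-1}D(A_0(t))$ generates the $C_0$-semigroup $s\mapsto R(t)^{-1}e^{A_0(t)s}R(t)$. Consequently, for $t_1\le\dots\le t_n$ in $J:=[a,b]$ and $s_1,\dots,s_n\ge 0$,
\[
e^{A(t_n)s_n}\cdots e^{A(t_1)s_1}=R(t_n)^{-1}\bigl(S_n\,Q_{n-1}\,S_{n-1}\cdots Q_1\,S_1\bigr)R(t_1),
\]
where $S_k:=e^{A_0(t_k)s_k}$ and the ``gluing'' operators are $Q_k:=R(t_{k+1})R(t_k)^{-1}$. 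Were the $Q_k$ equal to the identity, the bracket would be a plain product of semigroup operators at increasing times and $(M_0,\omega_0)$-stability of $A_0$ would finish the proof at once. \emph{The main obstacle} is precisely that this is not so: bounding each $Q_k$ crudely by $\|R(t_{k+1})\|\,\|R(t_k)^{-1}\|$ produces a constant growing exponentially in $n$, which is useless.

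To get around this I would use the regularity hypothesis quantitatively. Since $R$ is in $W^{1,\infty}_*(J,L(X))$ it is Lipschitz continuous in operator norm, say with constant $L$, and $c:=\sup_{t\in J}\|R(t)\|<\infty$; by Lemma~\ref{lm: prod- und inversenregel}(ii) also $R^{-1}$ is in $W^{1,\infty}_*(J,L(X))$, so $c':=\sup_{t\in J}\|R(t)^{-1}\|<\infty$. Writing $Q_k-\id=(R(t_{k+1})-R(t_k))R(t_k)^{-1}$ then gives $\|Q_k-\id\|\le Lc'(t_{k+1}-t_k)$, i.e.\ the $Q_k$ are uniformly close to the identity with a defect that is summable along the partition.

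The heart of the argument is estimating $W:=S_n Q_{n-1}S_{n-1}\cdots Q_1 S_1$. I would substitute $Q_k=\id+(Q_k-\id)$ in each slot and multiply out into $2^{n-1}$ terms indexed by subsets $T\subseteq\{1,\dots,n-1\}$ (the slots where the summand $(Q_k-\id)$ is chosen). In the term attached to $T$, each factor $\id$ ($k\notin T$) merges $S_{k+1}$ and $S_k$, so the $S$'s group into consecutive blocks $e^{A_0(t_j)s_j}\cdots e^{A_0(t_i)s_i}$ with $t_i\le\dots\le t_j$, each of norm $\le M_0 e^{\omega_0(s_i+\dots+s_j)}$ by $(M_0,\omega_0)$-stability; a $T$ with $|T|=m$ yields $m+1$ such blocks whose index sets partition $\{1,\dots,n\}$, interleaved with the $m$ factors $(Q_k-\id)$, $k\in T$, so the term has norm $\le M_0^{m+1}e^{\omega_0(s_1+\dots+s_n)}\prod_{k\in T}\|Q_k-\id\|$. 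Summing over all $T$ collapses into a product, and $\prod_{k=1}^{n-1}(1+M_0\|Q_k-\id\|)\le\exp\bigl(M_0 Lc'\sum_{k=1}^{n-1}(t_{k+1}-t_k)\bigr)\le e^{M_0 Lc'(b-a)}$ gives $\|W\|\le M_0 e^{M_0 Lc'(b-a)}e^{\omega_0(s_1+\dots+s_n)}$, a bound independent of $n$.

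Putting the pieces together, $\|e^{A(t_n)s_n}\cdots e^{A(t_1)s_1}\|\le\|R(t_n)^{-1}\|\,\|W\|\,\|R(t_1)\|\le cc'M_0 e^{M_0 Lc'(b-a)}e^{\omega_0(s_1+\dots+s_n)}$, so $A$ is $(M,\omega_0)$-stable with $M:=cc'M_0 e^{M_0 Lc'(b-a)}$ (which is $\ge 1$ since $cc'\ge 1$, $M_0\ge 1$ and the exponent is nonnegative); strong continuity of the $e^{A(t)\cdot}$ and the generator identification were already noted at the start. I expect the only genuine difficulty to be the $n$-blowup of the gluing operators, resolved by the combinatorial expansion above — trading the telescoping product of the $Q_k$ against the honest stability estimates for the merged blocks and exploiting that the gaps $t_{k+1}-t_k$ sum to at most $b-a$. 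One could alternatively run the same scheme through the resolvent characterization of stability (Proposition~3.3 of~\cite{Kato70}), but the identical obstacle and remedy would reappear.
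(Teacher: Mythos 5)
Your proof is correct, but it takes a genuinely different route from the paper's. The paper deduces the result from Nickel's characterization of stability via monotonic families of norms (Proposition~1.3 of \cite{Nickel00}): it defines $\norm{x}_t := d\, e^{-M_0 c t}\, \norm{R(t)x}_{0\,t}$ with $c := \esssup_t \norm{R'(t)R(t)^{-1}}$ and with $\norm{\cdot}_{0\,t}$ the norms associated to $A_0$, and then verifies — adapting the argument of Kisy\'nski — that these norms satisfy the conditions (a), (b), (c) of that proposition for the family $A$. That argument is short but leans on the Nickel/Kisy\'nski machinery as a black box. You instead work directly with the semigroup-product definition of $(M,\omega)$-stability: you conjugate to get the gluing operators $Q_k = R(t_{k+1})R(t_k)^{-1}$, exploit the Lipschitz bound $\norm{Q_k - \id} \le L c'(t_{k+1}-t_k)$ furnished by $W^{1,\infty}_*$-regularity (and Lemma~\ref{lm: prod- und inversenregel}~(ii) for the uniform bound on $R(\cdot)^{-1}$), and then run the combinatorial expansion $Q_k = \id + (Q_k-\id)$ — the standard telescoping/Gronwall trick also underlying Lemma~\ref{lm:stoersatz (M,omega)-stab} — noting that the block products between consecutive defects are honest increasing-time products to which $(M_0,\omega_0)$-stability of $A_0$ applies, and that the defects sum to at most $Lc'(b-a)$ along any partition. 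This is longer but self-contained and makes the mechanism of the estimate explicit. All steps — the generator identification under similarity, the identity $Q_k - \id = (R(t_{k+1})-R(t_k))R(t_k)^{-1}$, the norm bound $M_0^{|T|+1}e^{\omega_0\sum s_i}\prod_{k\in T}\norm{Q_k-\id}$ for each term, the collapse of the sum over $T$ into a product, and the final bound $M = cc'M_0 e^{M_0 Lc'(b-a)} \ge 1$ — check out.
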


\begin{proof}
We may assume that $\omega_0 = 0$, since $(\tilde{M},\tilde{\omega})$-stability of a family $\tilde{A}$ is equivalent to the $(\tilde{M},0)$-stability of $\tilde{A}-\tilde{\omega}$.
Set $\norm{x}_t := d \, e^{-M_0 c t} \, \norm{ R(t) x }_{0 \, t}$ for $x \in X$ and $t \in J$, where 
\begin{align*}
c := \esssup_{t \in J} \norm{ R'(t) R(t)^{-1} } \quad \text{and} \quad d := \sup_{t \in J}  e^{M_0 c t} \norm{ R(t)^{-1} }
\end{align*}  
and the $\norm{\,.\,}_{0 \, t}$ are norms on $X$ associated with $A_0$ according to Proposition~1.3 of~\cite{Nickel00}. It then easily follows -- in a similar way as in the proof of Theorem~4.2 of~\cite{Kisynski63} -- 
that the norms $\norm{\,.\,}_t$ satisfy the conditions (a), (b), (c) of Proposition~1.3 in~\cite{Nickel00} for the family $A$ with a certain $M \in [1,\infty)$ 
and therefore $A$ is $(M,0)$-stable, as desired.
\end{proof}

\subsection{Well-posedness and evolution systems}  \label{sect: wohlg und zeitentw}

In this section, we recall from~\cite{EngelNagel} the concepts of well-posedness and (solving) evolution systems for non-autonomous linear evolution equations 
\begin{align} \label{eq: ivp, vorber}
x' = A(t)x  \quad (t \in [s,b]) \quad \text{and} \quad  x(s) = y
\end{align}
with densely defined linear operators $A(t): D \subset X \to X$  ($t \in [a,b]$) and initial values $y \in D$ at initial times $s \in [a,b)$. We also recall a fundamental criterion for well-posedness due to Kato which is constantly used in this paper.
\smallskip

Well-posedness of evolution equations~\eqref{eq: ivp, vorber} means, of course, something like unique (classical) solvability with continuous dependence of the initial data. 
In precise terms, the initial value problems~\eqref{eq: ivp, vorber} for $A$ are called \emph{well-posed on (the space) $D$} if and only if there exists a \emph{(solving) evolution system for $A$ on (the space) $D$}. 
Such an evolution system for $A$ on $D$ is, by definition, 
a family $U$ of bounded operators $U(t,s)$ in $X$ for $(s,t) \in \Delta_J := \{ (s,t) \in J^2: s \le t \}$ such that 
\begin{itemize}
\item [(i)] for every $s \in [a,b)$ and $y \in D$, the map $[s,b] \ni t \mapsto U(t,s)y$ is a continuously differentiable solution to the initial value problem~\eqref{eq: ivp, vorber}, that is,  
a continuously differentiable map $x: [s,b] \to X$ such that $x(t) \in D$ and $x'(t) = A(t) x(t)$ for all $t \in [s,b]$ and $x(s) = y$,
\item[(ii)] $U(t,s) U(s,r) = U(t,r)$ for all $(r,s), (s,t) \in \Delta_J$ 
and $\Delta_J \ni (s,t) \mapsto U(t,s)x$ is continuous for all $x \in X$.
\end{itemize}
%
\smallskip

If, for a given family $A$ of densely defined operators $A(t): D \subset X \to X$, there exists any solving evolution system, then it is already unique. In order to see this we need the following simple lemma, which will always be used when the difference of two evolution systems has to be dealt with. 

\begin{lm} \label{lm: zeitentw rechtsseit db}
Suppose $A(t): D \subset X \to X$ for every $t \in J$ is a densely defined linear operator such that $t \mapsto A(t)x$ is continuous for $x \in D$ and suppose further that $U$ is an evolution system for $A$ on $D$. 
Then, for every $x \in D$, the map $[a,t] \ni s \mapsto U(t,s)x$ is continuously differentiable with derivative $s \mapsto -U(t,s) A(s) x$. 
\end{lm}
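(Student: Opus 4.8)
The plan is to show that $s \mapsto U(t,s)x$ is right-differentiable with the claimed derivative, then to upgrade this to genuine $C^1$-differentiability by a standard argument: a continuous function on an interval whose right-derivative exists everywhere and is itself continuous is actually continuously differentiable (with the right-derivative as its derivative). So the real content is the right-derivative computation and the continuity of $s \mapsto -U(t,s)A(s)x$.

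First I would fix $t \in J$ and $x \in D$, and for $s \in [a,t)$ and small $h > 0$ with $s + h \le t$ consider the difference quotient $\frac{1}{h}\big( U(t,s+h)x - U(t,s)x \big)$. Using the evolution property (ii), write $U(t,s)x = U(t,s+h)U(s+h,s)x$, so that
\begin{align*}
\frac{U(t,s+h)x - U(t,s)x}{h} = U(t,s+h)\,\frac{x - U(s+h,s)x}{h}.
\end{align*}
Now $\frac{1}{h}(U(s+h,s)x - x) \to A(s)x$ as $h \searrow 0$, because by property (i) the map $r \mapsto U(r,s)x$ is a classical solution of~\eqref{eq: ivp, vorber} on $[s,b]$, hence differentiable at $r = s$ from the right with derivative $A(s)x$. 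Since $U(t,s+h)$ is uniformly bounded for $h$ in a neighbourhood of $0$ — this follows from the strong continuity in (ii) together with the uniform boundedness principle, or one simply invokes the boundedness of $\{U(t,r): r \in [a,t]\}$ in operator norm which again follows from strong continuity on the compact set $\Delta_J$ and Banach--Steinhaus — and since $U(t,\cdot)x$ is strongly continuous so that $U(t,s+h)z \to U(t,s)z$ for the fixed vectors $z$ involved, a standard $3\varepsilon$-argument (split $U(t,s+h)\frac{x - U(s+h,s)x}{h} - (-U(t,s)A(s)x)$ into $U(t,s+h)$ applied to $\big(\frac{x-U(s+h,s)x}{h} + A(s)x\big)$ plus $\big(U(t,s+h) - U(t,s)\big)(-A(s)x)$) shows the difference quotient converges to $-U(t,s)A(s)x$ as $h \searrow 0$. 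The left-sided difference quotient at an interior point $s \in (a,t]$ is handled analogously, writing $U(t,s-h)x = U(t,s)U(s,s-h)x$ and using that $r \mapsto U(r,s-h)x$ solves the evolution equation, so its derivative at $r = s$ equals $A(s)U(s,s-h)x$; one then uses strong continuity of $U$ and continuity of $r \mapsto A(r)x$ on $D$ to pass to the limit. This gives right- and left-differentiability with a common value $-U(t,s)A(s)x$, i.e.\ differentiability.

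Continuity of $s \mapsto -U(t,s)A(s)x$ is immediate: $s \mapsto A(s)x$ is continuous by hypothesis (with values in $X$), $s \mapsto U(t,s)$ is strongly continuous on $\Delta_J$ and locally bounded in operator norm, and the composition $(s \mapsto U(t,s)A(s)x)$ of a strongly continuous locally bounded operator family with a continuous $X$-valued curve is continuous. Hence $s \mapsto U(t,s)x$ is continuously differentiable on $[a,t]$ with derivative $s \mapsto -U(t,s)A(s)x$, which is the assertion. The main obstacle is purely technical bookkeeping in the limit passage — making sure the operator norms $\norm{U(t,s+h)}$ stay bounded as $h \searrow 0$ so that applying these operators to a null sequence still yields a null sequence; this is where local boundedness of $\{U(t,r)\}_r$ in operator norm (via Banach--Steinhaus on the compact set $\Delta_J$, using the strong continuity from (ii)) is the key ingredient, and it should be stated explicitly.
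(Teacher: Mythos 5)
Your right-derivative computation is exactly the paper's argument: one writes $U(t,s)x = U(t,s+h)U(s+h,s)x$, rearranges the difference quotient to $-U(t,s+h)\,\frac{U(s+h,s)x - x}{h}$, and passes to the limit using local boundedness in operator norm and strong continuity. You also correctly observe (in your plan) that right-differentiability together with continuity of the right derivative already yields $C^1$; this is precisely Corollary~2.1.2 of Pazy, which is the lemma the paper invokes. Given that, the explicit left-derivative argument you add afterwards is superfluous.

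That is fortunate, because the left-derivative argument as you wrote it has a genuine gap. You write $U(t,s-h)x = U(t,s)U(s,s-h)x$, so the relevant quantity is $\frac{U(s,s-h)x - x}{h}$, and you appeal to the fact that $r \mapsto U(r,s-h)x$ has derivative $A(s)U(s,s-h)x$ at $r = s$. But that is the derivative at a single point, not the difference quotient over the whole interval $[s-h,s]$. To connect them you would need the integral representation $U(s,s-h)x - x = \int_{s-h}^{s} A(r)U(r,s-h)x\,dr$ together with some uniformity of $A(r)U(r,\sigma)x$ as both $r$ and $\sigma$ approach $s$; the definition of an evolution system only gives continuity of $r \mapsto A(r)U(r,\sigma)x$ for each fixed $\sigma$, not joint continuity in $(r,\sigma)$. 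This is exactly the difficulty the right-derivative-plus-Pazy route is designed to sidestep. Drop the left-derivative paragraph and rely on the route you sketched first; then your proof coincides with the paper's.
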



\begin{proof}
Since $U(t,s)U(s,r) = U(t,r)$ for $(r,s), (s,t) \in \Delta_J$ and since $\Delta_J \ni (s,t) \mapsto U(t,s)$ is strongly continuous, we obtain for every $s_0 \in [a,t)$ and $x \in D$ that
\begin{align*}
\frac{ U(t,s_0+h)x - U(t,s_0)x }{h} &= -U(t,s_0+h) \frac{ U(s_0+h,s_0)x - x }{h} \\
&\longrightarrow -U(t,s_0)A(s_0)x 
\end{align*}
as $h \searrow 0$. So, the map $[a,t] \ni s \mapsto U(t,s)x$ is right differentiable with right derivative $s \mapsto -U(t,s) A(s) x$. Since this right derivative is continuous, the asserted continuous differentiability of $[a,t] \ni s \mapsto U(t,s)x$ for $x \in D$ follows by Corollary~2.1.2 of~\cite{Pazy}.
\end{proof}

\begin{cor} \label{cor: zeitentw eind} 
Suppose $A(t): D \subset X \to X$ for every $t \in J$ is a densely defined linear operator. If $U$ and $V$ are two evolution systems for $A$ on $D$, then $U = V$.
\end{cor}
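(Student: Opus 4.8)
The plan is to use the standard ``moving mid-point'' argument: fix endpoints $(s,t) \in \Delta_J$ and an initial value $y \in D$, and show that the function $w(r) := U(t,r) V(r,s) y$ is constant on $[s,t]$. The initial condition in part~(i) of the definition forces $U(r,r) = V(r,r) = \id$ on $D$, and hence on $X$ by density and boundedness; so constancy of $w$ yields $U(t,s)y = w(s) = w(t) = V(t,s)y$. Since $D$ is dense and $U(t,s),V(t,s)$ are bounded, this gives $U(t,s) = V(t,s)$, i.e.\ $U = V$.

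First I would record that $w$ is continuous on $[s,t]$: the map $r \mapsto V(r,s)y$ is continuously differentiable by part~(i), and $r \mapsto U(t,r)$ is strongly continuous by part~(ii) and, being strongly continuous on a compact interval, is uniformly bounded there by the uniform boundedness principle; hence $r \mapsto U(t,r)V(r,s)y$ is continuous. Next I would show that $w$ has vanishing right derivative at every $r \in [s,t)$. Splitting the difference quotient as
\begin{align*}
\frac{w(r+h) - w(r)}{h} = U(t,r+h)\,\frac{V(r+h,s)y - V(r,s)y}{h} + \frac{U(t,r+h) - U(t,r)}{h}\,V(r,s)y,
\end{align*}
the first term converges to $U(t,r)A(r)V(r,s)y$ as $h \searrow 0$ (using that $r \mapsto V(r,s)y$ is differentiable with derivative $A(r)V(r,s)y$, together with the strong continuity and local uniform boundedness of $U(t,\cdot)$), while the second term converges to $-U(t,r)A(r)V(r,s)y$ because $V(r,s)y \in D$ and, by the computation in the proof of Lemma~\ref{lm: zeitentw rechtsseit db}, $\lim_{h \searrow 0} h^{-1}\big(U(t,r+h)x - U(t,r)x\big) = -U(t,r)A(r)x$ for every $x \in D$. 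The two contributions cancel, so the right derivative $w'_+(r)$ vanishes on $[s,t)$. A continuous function on $[s,t]$ with identically vanishing right derivative on $[s,t)$ is constant (Corollary~2.1.2 of~\cite{Pazy}), which finishes the argument.

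One point deserving attention: Lemma~\ref{lm: zeitentw rechtsseit db} is stated under the hypothesis that $t \mapsto A(t)x$ is continuous for $x \in D$, whereas the present corollary makes no such assumption. However, only the \emph{right}-differentiability half of that lemma is invoked here, and its proof — the identity $h^{-1}\big(U(t,s_0+h)x - U(t,s_0)x\big) = -U(t,s_0+h)\,h^{-1}\big(U(s_0+h,s_0)x - x\big)$ combined with $h^{-1}\big(U(s_0+h,s_0)x - x\big) \to A(s_0)x$ — uses only $x \in D$ and the strong continuity and local uniform boundedness of $U(t,\cdot)$. The main (very mild) obstacle is therefore bookkeeping: justifying the passage to the limit in the product $U(t,r+h)z_h$ with $z_h \to A(r)V(r,s)y$, which is immediate once the bound $\sup_{r \in [a,t]}\norm{U(t,r)} < \infty$ is in place.
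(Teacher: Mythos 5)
Your proof follows exactly the route of the paper: apply the moving-midpoint argument to $w(\tau) := U(t,\tau)V(\tau,s)y$, show $w$ is continuous with vanishing right derivative via the computation in Lemma~\ref{lm: zeitentw rechtsseit db}, conclude constancy by Corollary~2.1.2 of~\cite{Pazy}, and finish by the density of $D$. Your remark that only the right-differentiability half of that lemma is actually invoked — so that the hypothesis ``$t \mapsto A(t)x$ continuous'' plays no role here, even though it does not appear in the corollary's statement — is a correct and welcome clarification of a point the paper leaves tacit.
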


\begin{proof}
If $U$ and $V$ are two evolution systems for $A$ on the space $D$, then for every $(s,t) \in \Delta_J$ with $s < t$ and $y \in D$ the map $[s,t] \ni \tau \mapsto U(t,\tau)V(\tau,s)y$ is continuous and right differentiable with vanishing right derivative by virtue of Lemma~\ref{lm: zeitentw rechtsseit db}. With the help of Corollary~2.1.2 of~\cite{Pazy} 
it then follows that
\begin{align*}
V(t,s)y - U(t,s)y = U(t,\tau)V(\tau,s)y \big|_{\tau=s}^{\tau=t} = 0,
\end{align*}
which by the density of $D$ in $X$ implies $U(\,.\,,s) = V(\,.\,,s)$. Since $s$ was arbitrary in $[a,b)$ we obtain $U = V$, as desired.
\end{proof}


\begin{cond} \label{cond: reg 1}
$A(t): D \subset X \to X$ for every $t \in I$ is a densely defined closed 
linear operator such that $A$ is $(M,\omega)$-stable for some $M \in [1, \infty)$ and $\omega \in \R$ and such that $t \mapsto A(t)$ is in $W^{1,1}_*(I,L(Y,X))$, where $Y$ is the space $D$ endowed with the graph norm of $A(0)$. 
\end{cond}

%
It follows from a classic theorem of Kato (Theorem~1 of~\cite{Kato73}) that Condition~\ref{cond: reg 1} guarantees well-posedness of~\eqref{eq: ivp, vorber} on $D$ as well as the bound
\begin{align*}
\norm{U(t,s)} \le M e^{\omega (t-s)} \qquad ((s,t) \in \Delta)
\end{align*}
for the evolution system $U$ for $A$ on $D$. 
%
%
%
%
Also, Condition~\ref{cond: reg 1} is essentially everything we have to require of $A$ in our adiabatic theorems: 
indeed, we have only to add the requirement that $\omega = 0$ 
to arrive at the assumptions on $A$ of these theorems. 
In most adiabatic theorems in the literature -- for example those of ~\cite{AvronSeilerYaffe87}, \cite{AvronElgart99}, \cite{Teufel01}, \cite{Teufel03}, \cite{AbouSalemFroehlich05}, 
\cite{AbouSalem07} or~\cite{AvronGraf12} -- by contrast, the assumptions on $A$ rest upon Yosida's theorem (Theorem~XIV.4.1 of~\cite{Yosida}): 
in these 
theorems it is required of $A$ that each $A(t)$ generate a contraction semigroup on $X$ and that an appropriate translate $A-z_0$ of $A$ satisfy the rather 
involved hypotheses of Yosida's theorem (or -- for example in the case of~\cite{AvronSeilerYaffe87} or~\cite{AvronGraf12} -- more convenient strengthenings thereof). 
It is shown in \cite{SchmidGriesemer14} that this is the case 
if and only if $A(t)-z_0$, for every $t \in I$, is a boundedly invertible generator of a contraction semigroup on $X$ and 
\begin{align*}
t \mapsto A(t)x \text{ is continuously differentiable for all } x \in D.
\end{align*}
In particular, it follows (Proposition~\ref{prop: WOT-stet db impl W^{1,infty}-reg}) 
that the regularity conditions on $A$ of the adiabatic theorems presented here are more general than the respective assumptions of the previously known adiabatic theorems. 


\subsection{Spectral operators}  \label{sect: spectral op}


In this section we recall from~\cite{DunfordSchwartz} some basic facts about spectral operators and their spectral theory that will be needed in the sequel. 
\smallskip
 
We begin with the definition of spectral measures. A \emph{spectral measure $P$ on $(\C, \mathcal{B}_{\C},X)$} is a map from $\mathcal{B}_{\C}$ to the set of bounded projections on $X$ such that
\begin{itemize}
\item[(i)] $P(\emptyset) = 0$ and $P(\C) = 1$,
\item[(ii)] $P(E\cap F) = P(E)P(F)$ for all $E,F \in \mathcal{B}_{\C}$,
\item[(iii)] $P(\cup_{n=1}^{\infty}E_n)x = \sum_{n=1}^{\infty} P(E_n)x$ for all $x \in X$ and all pairwise disjoint sets $E_n \in \mathcal{B}_{\C}$.
\end{itemize}
If, in addition, $X = H$ is Hilbert space and 
$P(E)$ is an orthogonal projection for every $E \in \mathcal{B}_{\C}$, then we call $P$ an \emph{orthogonal spectral measure on $(\C, \mathcal{B}_{\C},X)$}. 
Sometimes, we will also use the alternative notation $P_{E} := P(E)$. 
\smallskip

A densely defined closed operator $A: D(A) \subset X \to X$ is called a \emph{spectral operator} if and only if there exists a spectral measure $P$ on $(\C, \mathcal{B}_{\C}, X)$ such that 
\begin{align*}
P(E)A \subset AP(E) \quad \text{and} \quad \sigma(A|_{P(E)D(A)}) \subset \ol{E}
\end{align*}
for every $E \in \mathcal{B}_{\C}$ and such that $P(E)D(A) = P(E)X$ for every bounded $E \in \mathcal{B}_{\C}$. 
Such a spectral measure $P$ is called a \emph{spectral measure for $A$} or 
a \emph{resolution of the identity for $A$}. 
It can be shown (Corollary~XV.3.8 and Theorem~XVIII.1.5 of~\cite{DunfordSchwartz}) that for a given spectral operator $A$ there exists only one spectral measure (called the \emph{spectral measure of $A$} and often denoted by $P^{A}$).
\smallskip


A simple consequence of the definition is that, 
for every $E \in \mathcal{B}_{\C}$, the restriction $A|_{P^{A}(E)D(A)}$ of a spectral opertor $A$ is a spectral operator as well with spectral measure 
given by
\begin{align}
P^{A|_{P^{A}(E)D(A)}}(F) = P^{A}(F)|_{P^{A}(E)X} = P^{A}(F \cap E)|_{P^{A}(E)X} \quad (F \in \mathcal{B}_{\C}).
\end{align}
In particular, if the set $E$ is bounded, then the operator $A|_{P^{A}(E)D(A)} = A|_{P^{A}(E)X}$ is bounded.
It is also easy to see 
that $P^{A}(E) = 0$ for every $E \in \mathcal{B}_{\C}$ with $E \subset \C \setminus \sigma(A)$. In particular, $P^{A}(\sigma(A)) = 1$, and 
if $\sigma(A)$ is bounded, then the operator $A = A P^{A}(\sigma(A))$ is bounded as well.
And finally, if $E \in \mathcal{B}_{\C}$ is an isolated subset of $\sigma(A)$, then 
\begin{align} \label{eq: bem 3 zur def spektralop}
\sigma(A|_{P^{A}(E)D(A)}) = E \quad \text{and} \quad \sigma(A|_{(1-P^{A}(E))D(A)}) = \sigma(A) \setminus E. 
\end{align}

Important special classes of spectral operators are given by the spectral operators of scalar or finite type, 
respectively. An operator $A: D(A) \subset X \to X$ 
is called 
\begin{itemize}
\item[(i)] \emph{spectral operator of scalar type} if and only if $A = \int z \,dP(z)$ for some spectral measure $P$ on $(\C, \mathcal{B}_{\C}, X)$,
\item[(ii)] \emph{spectral operator of finite type} if and only if $A = S + N$ for some bounded spectral operator $S$ of scalar type and some nilpotent operator $N$ with $SN = NS$. 
\end{itemize}

See, for instance, Chapter~XVIII.1 of \cite{DunfordSchwartz} for the definition and central properties of spectral integrals $\int f(z) \,dP(z)$ w.r.t.~a general -- not necessarily orthogonal -- spectral measure $P$. 
Simple examples of spectral operators of scalar type are, of course, the normal operators on a Hilbert space. 
In fact, every spectral operator $A$ of scalar type on a Hilbert space $X = H$ is essentially (up to similarity transformation) a normal operator (by Theorem~1 of~\cite{Wermer54}). 
Simple examples of spectral operators of finite type are the operators on finite-dimensional spaces (Jordan normal form theorem!).
See Chapter~XV.11 and XV.12 and Chapter~XIX and XX of~\cite{DunfordSchwartz} for more interesting -- differential operator -- examples of spectral operators. See also~\cite{GesztesyTkachenko09} where it is shown that the generic one-dimensional periodic Schrödinger operator is spectral of scalar type (Remark~8.7).  
\smallskip

It can be shown 
that spectral operators of scalar or finite type really are spectral operators: 
for every spectral measure $P$ on $(\C, \mathcal{B}_{\C}, X)$, the operator $\int z \,dP(z)$ is spectral with spectral measure $P$ (Lemma~XVIII.2.13 of~\cite{DunfordSchwartz}); 
and for every bounded spectral operator $S$ of scalar type and every nilpotent operator $N$ with $SN = NS$, the operator $S+N$ is bounded spectral with spectral measure $P^{S}$. 
%
In fact, one has the following sufficient condition for an operator to be spectral (Corollary~XVIII.1.4 and Theorem~XVIII.2.28 of~\cite{DunfordSchwartz}), which is also necessary in the case of bounded operators (Theorem~XV.4.5). 

\begin{thm}  \label{thm: char beschr spektralop}
\begin{itemize}
\item[(i)] If $A = S+N$ for a spectral operator $S$ of scalar type and some quasinilpotent operator $N$ with $SN \supset NS$, then $A$ is a spectral operator with spectral measure $P^S$.
\item[(ii)] If $A$ 
is a bounded spectral operator, then $A = S + N$ for some bounded spectral operator $S$ of scalar type and some quasinilpotent operator $N$ with $SN = NS$. 
Additionally, $S$ and $N$ with the above properties are uniquely determined by $A$, namely $S = \int z \,dP^{A}(z)$ and $N = A-S$.
\end{itemize}
\end{thm}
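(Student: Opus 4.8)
The statement to prove is Theorem~\ref{thm: char beschr spektralop}, the Dunford characterization of bounded spectral operators together with the sufficient ``$S + $ quasinilpotent'' criterion. Since the paper explicitly attributes both parts to \cite{DunfordSchwartz} (Corollary~XVIII.1.4, Theorem~XVIII.2.28 and Theorem~XV.4.5), the natural approach is to reduce everything to the already-established fact that $\int z\,dP(z)$ is a spectral operator with spectral measure $P$ (Lemma~XVIII.2.13, quoted just above) and to the uniqueness of a spectral measure for a given spectral operator (Corollary~XV.3.8 / Theorem~XVIII.1.5, also quoted above). I would not attempt to reprove the deep Dunford theory from scratch; instead I would assemble the pieces.

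For part~(i): given $A = S + N$ with $S = \int z\,dP^S(z)$ of scalar type and $N$ quasinilpotent with $SN \supset NS$, set $P := P^S$. I would verify the three defining properties of a spectral operator for $A$ with this $P$. First, $P(E) S \subset S P(E)$ holds since $P(E)$ commutes with $\int z\,dP^S(z)$, and $P(E)N \subset N P(E)$ follows because $N$ commutes with $S$ (hence with its spectral measure $P^S$ --- this commutation-lifting is the one nontrivial input, and it is exactly the content of the Dunford results being cited, so I would invoke them). Thus $P(E) A \subset A P(E)$. Second, on $P(E)X$ we have $A|_{P(E)X} = S|_{P(E)X} + N|_{P(E)X}$ where $S|_{P(E)X}$ has spectrum in $\overline{E}$ and $N|_{P(E)X}$ is quasinilpotent and commutes with $S|_{P(E)X}$; a commuting quasinilpotent perturbation does not enlarge the spectrum (spectral mapping / holomorphic functional calculus argument, or directly: $\sigma(S+N) \subset \sigma(S)$ when $[S,N]=0$ and $N$ is quasinilpotent), so $\sigma(A|_{P(E)D(A)}) \subset \overline{E}$. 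Third, $P(E)D(A) = P(E)X$ for bounded $E$ is inherited from the scalar-type operator $S$ (for bounded $A$ it is automatic). This gives $A$ spectral with $P^A = P^S$.

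For part~(ii): let $A$ be a bounded spectral operator with (unique) spectral measure $P^A$. Define $S := \int z\,dP^A(z)$ --- this is a bounded spectral operator of scalar type by Lemma~XVIII.2.13 --- and set $N := A - S$. One must show $N$ is quasinilpotent and $SN = NS$. Commutation: both $A$ and $S$ have the same spectral measure $P^A$ and both commute with every $P^A(E)$; I would argue that $A - S$ therefore commutes with all spectral projections, and then, by the structure theory (the ``canonical reduction'' of a bounded spectral operator over a partition of $\sigma(A)$ into small pieces), that $N = A - S$ is bounded by the diameter of an arbitrarily fine partition of $\sigma(A)$ on each spectral subspace, forcing $\sigma(N|_{P^A(E)X})$ to shrink to a point, i.e. $N$ quasinilpotent. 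Uniqueness of the decomposition: if $A = S' + N'$ is another such decomposition, then by part~(i) $P^{S'} = P^A$, hence $S' = \int z\,dP^{S'}(z) = \int z\,dP^A(z) = S$ and $N' = A - S' = N$.

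\textbf{Main obstacle.} The genuinely hard step --- the one that the cited Dunford theorems do the real work on --- is establishing that a (quasi)nilpotent operator commuting with $S$ automatically commutes with the \emph{spectral measure} $P^S$, and the dual fact in part~(ii) that $N = A - S$ is quasinilpotent. Both rest on the deep reduction theory of spectral operators (fineness-of-partition estimates on spectral subspaces, plus the uniqueness of the resolution of the identity). In a paper like this I would simply cite \cite{DunfordSchwartz} for these, as the author does, rather than reconstruct them; everything else is bookkeeping with the defining axioms of a spectral operator and the elementary fact $\sigma(S+N)\subset\sigma(S)$ for commuting quasinilpotent $N$.
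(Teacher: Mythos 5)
Your proposal is correct and matches the paper's treatment: the paper gives no proof of this theorem at all, but simply cites Corollary~XVIII.1.4, Theorem~XVIII.2.28 and Theorem~XV.4.5 of Dunford--Schwartz, exactly the results you defer to for the two genuinely hard steps (lifting commutation with $S$ to commutation with $P^S$, and the fine-partition argument for quasinilpotence of $A - \int z\,dP^A(z)$). The surrounding bookkeeping you supply — $\sigma(S+N)\subset\sigma(S)$ for a commuting quasinilpotent $N$, inheritance of $P(E)D(A)=P(E)X$, and the uniqueness argument via part~(i) plus uniqueness of the resolution of the identity — is sound.
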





At last, some facts from the spectral theory of bounded spectral operators (Theorem~XV.8.2, Theorem~XV.8.3 and Theorem~XV.8.6 of~\cite{DunfordSchwartz}). See~\cite{diss} (Proposition~3.1.4) for a simple proof.

\begin{prop}  \label{prop: spektrth beschr spektralop}
Suppose $A$ is a bounded spectral operator on $X$ (with spectral measure $P^{A}$) and $\lambda \in \sigma(A)$. 
\begin{itemize}
\item[(i)] If $\lambda \in \sigma_p(A)$, then $P^{A}(\{\lambda\}) \ne 0$.
\item[(ii)] If $P^{A}(\{\lambda\}) = 0$, then $\lambda \in \sigma_c(A)$.
\end{itemize} 
If, in particular, $A$ is of finite type, then $\sigma_r(A) = \emptyset$ and for every $\lambda \in \sigma(A)$ one has:
$\lambda \in \sigma_p(A)$ iff $P^{A}(\{\lambda\}) \ne 0$ and $\lambda \in \sigma_c(A)$ iff $P^{A}(\{\lambda\}) = 0$.
\end{prop}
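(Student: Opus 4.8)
The plan is to reduce everything to properties of the spectral measure $P^A$ and the decomposition $X = P^A(\{\lambda\})X \oplus (1-P^A(\{\lambda\}))X$, exploiting that $A$ restricts to a spectral operator on each summand with the associated parts of $P^A$, as recorded after the definition of spectral operators. First I would prove~(i): if $\lambda \in \sigma_p(A)$, pick $x \ne 0$ with $Ax = \lambda x$. Write $x = P^A(\{\lambda\})x + (1-P^A(\{\lambda\}))x =: x_1 + x_2$. Both summands lie in reducing subspaces for $A$, so $Ax_1 = \lambda x_1$ and $Ax_2 = \lambda x_2$. But on $(1-P^A(\{\lambda\}))X$ the restriction $A_2 := A|_{(1-P^A(\{\lambda\}))X}$ has spectral measure $E \mapsto P^A(E \setminus \{\lambda\})|_{(1-P^A(\{\lambda\}))X}$, in particular $P^{A_2}(\{\lambda\}) = 0$; since $A_2$ is a bounded spectral operator, $\lambda - A_2$ is injective on its domain because $\lambda x_2 = A_2 x_2$ would force $x_2 \in \ran P^{A_2}(\{\lambda\}) = \{0\}$ (this is the standard fact that eigenvectors of a bounded spectral operator for $\lambda$ lie in $\ran P^A(\{\lambda\})$, which one gets by integrating: $(\lambda - A_2) = \int(\lambda - z)\,dP^{A_2}(z)$ is injective since the integrand vanishes only on $\{\lambda\}$, a $P^{A_2}$-null set). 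Hence $x_2 = 0$, so $x = x_1 \in \ran P^A(\{\lambda\})$ and $x \ne 0$ forces $P^A(\{\lambda\}) \ne 0$.

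Next, (ii): if $P^A(\{\lambda\}) = 0$ then in particular $\lambda \notin \sigma_p(A)$ by the contrapositive of~(i). It remains to exclude $\lambda \in \sigma_r(A)$, i.e.\ to show $\ran(\lambda - A)$ is dense. The natural route is to observe that $P^A(\{\lambda\}) = 0$ means $\lambda$ is ``not carried'' by the spectral measure, so one can approximate: for small $\delta$, $P^A(\overline{B_\delta(\lambda)}) \to P^A(\{\lambda\}) = 0$ strongly as $\delta \searrow 0$ (by $\sigma$-additivity / continuity from above of $P^A$ applied to $x$). On the complementary part $(1-P^A(\overline{B_\delta(\lambda)}))X$ the operator $\lambda - A$ is boundedly invertible since $\lambda$ is at distance $\ge \delta$ from the spectrum of that restriction. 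Given any $y \in X$ and $\eps > 0$, choose $\delta$ with $\|P^A(\overline{B_\delta(\lambda)})y\| < \eps$; then $y' := (1-P^A(\overline{B_\delta(\lambda)}))y$ lies in $\ran(\lambda - A)$ and $\|y - y'\| < \eps$. Thus $\ran(\lambda-A)$ is dense, so $\lambda \in \sigma_c(A)$.

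For the final ``in particular'' clause, assume $A = S + N$ is of finite type with $S = \int z\,dP^A(z)$ scalar-type, $N$ nilpotent, $SN = NS$, $P^{A} = P^{S}$. That $\sigma_r(A) = \emptyset$: for $\lambda \in \sigma(A)$ either $P^A(\{\lambda\}) \ne 0$, and then one checks $\lambda \in \sigma_p(A)$ directly — on the finite-rank (the nilpotent part makes the reduced operator a Jordan-type block) invariant subspace $P^A(\{\lambda\})X$ one has $S|_{P^A(\{\lambda\})X} = \lambda$, so $A|_{P^A(\{\lambda\})X} = \lambda + N|_{P^A(\{\lambda\})X}$ with $N|$ nilpotent, hence $\lambda - A$ has nontrivial kernel there — or $P^A(\{\lambda\}) = 0$, and then $\lambda \in \sigma_c(A)$ by~(ii). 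In the first case $\lambda \in \sigma_p(A)$, in the second $\lambda \in \sigma_c(A)$, so $\sigma_r(A) = \emptyset$ and the stated equivalences follow (combining with~(i) for the reverse direction ``$P^A(\{\lambda\}) \ne 0 \Rightarrow \lambda \in \sigma_p(A)$'', and noting $\sigma_p$ and $\sigma_c$ are disjoint so $\lambda \in \sigma_c(A) \Rightarrow \lambda \notin \sigma_p(A) \Rightarrow P^A(\{\lambda\}) = 0$ again by~(i)).

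The main obstacle I expect is the clean handling of the ``eigenvectors live in $\ran P^A(\{\lambda\})$'' and ``density of range when $P^A(\{\lambda\}) = 0$'' arguments for a \emph{non-orthogonal} spectral measure: one cannot invoke orthogonal-projection geometry, and must instead argue through the reduction of $A$ to the summands and the injectivity/surjectivity of spectral integrals $\int(\lambda - z)\,dP(z)$. The strong continuity of $\delta \mapsto P^A(\overline{B_\delta(\lambda)})$ and uniform boundedness of the resolvents of the restrictions (which needs a uniform bound on $\|P^A(\overline{B_\delta(\lambda)})\|$, available since spectral measures on Banach spaces are uniformly bounded) are the technical points to be careful about; everything else is bookkeeping with the decomposition~\eqref{eq: bem 3 zur def spektralop} and Theorem~\ref{thm: char beschr spektralop}.
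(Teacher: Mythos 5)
Your arguments for part~(ii) and for the finite-type clause are sound, but part~(i) contains a genuine gap. You reduce~(i) to showing that $\lambda - A_2$ is injective when $P^{A_2}(\{\lambda\}) = 0$, and you justify this via the representation $\lambda - A_2 = \int(\lambda-z)\,dP^{A_2}(z)$ together with the observation that the integrand vanishes only on a $P^{A_2}$-null set. That representation holds only when $A_2$ is of \emph{scalar type}; a general bounded spectral operator decomposes (Theorem~\ref{thm: char beschr spektralop}) as $A_2 = S_2 + N_2$ with $S_2 = \int z\,dP^{A_2}(z)$ and $N_2$ a possibly nonzero quasinilpotent operator commuting with $S_2$, so in fact $\lambda - A_2 = \int(\lambda-z)\,dP^{A_2}(z) - N_2$ and your formula silently drops the $-N_2$ term. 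Worse, the ``standard fact'' you invoke --- that eigenvectors of a bounded spectral operator for $\lambda$ lie in $\ran P^{A}(\{\lambda\})$ --- is precisely the content of part~(i) applied to $A_2$, so as written the step is circular.

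The repair is to run for~(i) the same $\ol{B}_\delta(\lambda)$-exclusion argument you correctly deploy for~(ii). If $Ax = \lambda x$ with $x \ne 0$, set $x_\delta := (1-P^A(\ol{B}_\delta(\lambda)))x$; since by the definition of spectral operators $\sigma\big(A|_{(1-P^A(\ol{B}_\delta(\lambda)))X}\big) \subset \{z : |z-\lambda| \ge \delta\}$, the point $\lambda$ lies in the resolvent set of that restriction, and $A x_\delta = \lambda x_\delta$ then forces $x_\delta = 0$ for every $\delta > 0$. Letting $\delta \searrow 0$ and using $\sigma$-additivity of $P^A$ gives $(1-P^A(\{\lambda\}))x = 0$, hence $x = P^A(\{\lambda\})x \ne 0$ and $P^A(\{\lambda\}) \ne 0$. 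With that change the proposal is complete. Two minor remarks: in the finite-type clause the space $P^A(\{\lambda\})X$ need not be finite-dimensional, but your argument works anyway since a nilpotent operator on any nonzero Banach space has nontrivial kernel; and you should note that the nilpotent part $N$ commutes with $P^A(\{\lambda\})$ so that its restriction is well-defined. The paper itself supplies no proof of this proposition (it only cites Dunford--Schwartz~XV.8.2/8.3/8.6 and the author's thesis), so there is no in-paper route to compare against, but the exclusion-of-a-neighborhood technique you use is the standard one.
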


\subsection{Spectral projections for general linear operators}  \label{sect: spectral relatedness}

In this section we introduce suitable 
notions of spectral projections for general linear operators, namely the notion of associated projections (which is completely canonical) 
and the notion of weakly associated projections (which --~for non-normal, or at least, non-spectral operators -- 
is not 
canonical). 
%
%
%
Suppose $A: D(A) \subset X \to X$ is a densely defined closed linear operator with $\rho(A) \ne \emptyset$, $\sigma \ne \emptyset$ is a compact isolated subset of $\sigma(A)$, $\lambda$ a not necessarily isolated spectral value of $A$, and $P$ a bounded projection in $X$.
We then say, following~\cite{TaylorLay80}, that 
\emph{$P$~is associated with $A$ and $\sigma$} if and only if 
$P$ commutes with $A$, 
$P D(A) = P X$ and
\begin{align*}
\sigma(A|_{PD(A)}) = \sigma  \text{ \, whereas \, }  \sigma(A|_{(1-P)D(A)}) = \sigma(A) \setminus \sigma.
\end{align*}
We say that \emph{$P$~is weakly associated with $A$ and $\lambda$} if and only if 
$P$ commutes with $A$, $P D(A) = P X$ and 
\begin{gather*}
A|_{PD(A)} - \lambda \text{\: is nilpotent whereas \:} A|_{(1-P)D(A)} - \lambda \text{\: is injective and} \\
\text{has dense range in } (1-P)X.
\end{gather*}
If above the order of nilpotence is at most $m$, we will often, 
more precisely, speak of $P$ as being \emph{weakly associated with $A$ and $\lambda$ of order $m$}. (It should be noticed that the above definition allows weakly associated projections to be zero, which however will be not relevant in our adiabatic theorems below.) 
Also, we call $\lambda$ a \emph{weakly semisimple eigenvalue of $A$} if and only if $\lambda$ is an eigenvalue and there is a projection $P$ weakly associated with $A$ and $\lambda$ of order $1$. In this context, 
recall that $\lambda$ is called a \emph{semisimple eigenvalue of $A$} if and only if it is a pole of the resolvent map $(\,.\,-A)^{-1}$ of order $1$ (which is then automatically an eigenvalue by~\eqref{eq: zerl von X, lambda isoliert} below). Also, a semisimple eigenvalue is called \emph{simple} if and only if its geometric 
multiplicity is $1$.

\subsubsection{Central facts about associatedness and weak associatedness}

We now state 
some central facts about 
associatedness and weak associatedness, concerning the question of existence and uniqueness of (weakly) associated projections (for given operators $A$ and spectral values $\lambda$) and the question of describing (in terms of $A$ and $\lambda$) the 
subspaces into which a (weakly) associated projection decomposes the base space $X$. 
We will 
use these facts again and again and they play an important role in our adiabatic theorems. 
It should be pointed out that these facts are completely well-known in the case of associatedness, but seem to be new in the case of weak associatedness. 

\begin{thm} \label{thm: central properties associatedness}
Suppose $A: D(A) \subset X \to X$ is a densely defined closed linear operator with $\rho(A) \ne \emptyset$ 
and $\emptyset \ne \sigma \subset \sigma(A)$ is compact. 
If $\sigma$ is 
isolated in $\sigma(A)$, then there exists a unique projection $P$ associated with $A$ and $\sigma$, 
namely 
\begin{align*}  
P := \frac{1}{2 \pi i} \int_{\gamma} (z-A)^{-1} \, dz,
\end{align*}
where $\gamma$ is a cycle in $\rho(A)$ with indices $\operatorname{n}(\gamma, \sigma) = 1$ and $\operatorname{n}(\gamma, \sigma(A) \setminus \sigma) = 0$.
If $P$ is associated with $A$ and $\sigma = \{ \lambda \}$ and $\lambda$ is a pole of $(\,.\,-A)^{-1}$ 
of order $m$, then
\begin{align}  \label{eq: zerl von X, lambda isoliert}
PX = \ker(A-\lambda)^k \quad \text{and} \quad (1-P)X = \ran(A-\lambda)^k 
\end{align}
for all $k \in \N$ with $k \ge m$.
\end{thm}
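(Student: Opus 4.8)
The plan is to prove the two assertions of Theorem~\ref{thm: central properties associatedness} by the standard Riesz--Dunford functional calculus for the isolated spectral set $\sigma$, and then to specialize to a pole of order $m$.

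\textbf{Existence of an associated projection.} First I would pick, using that $\sigma$ is compact and isolated in $\sigma(A)$ and that $\rho(A)\neq\emptyset$, a cycle $\gamma$ in $\rho(A)$ separating $\sigma$ from $\sigma(A)\setminus\sigma$, i.e.\ with $\operatorname{n}(\gamma,\sigma)=1$ and $\operatorname{n}(\gamma,\sigma(A)\setminus\sigma)=0$, and define $P:=\frac{1}{2\pi i}\int_\gamma (z-A)^{-1}\,dz$. Then I would check the four defining properties of associatedness. That $P$ is a bounded projection and commutes with $A$ follows from the resolvent identity and Cauchy's theorem (the usual computation $P^2=P$ via two nested cycles, and $P(z_0-A)^{-1}=(z_0-A)^{-1}P$ for $z_0\in\rho(A)$, which yields $PA\subset AP$ on $D(A)$, hence $PD(A)\subset D(A)$ and commutation). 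That $PD(A)=PX$: since $\sigma$ is bounded, $PX$ is contained in $D(A)$ because $\int_\gamma z(z-A)^{-1}\,dz$ is a bounded operator and $A$ restricted to $PX$ agrees with it; more carefully, one uses closedness of $A$ together with $(z-A)^{-1}x\in D(A)$ to push $A$ through the integral. Finally the spectral splitting $\sigma(A|_{PD(A)})=\sigma$ and $\sigma(A|_{(1-P)D(A)})=\sigma(A)\setminus\sigma$ is exactly~\eqref{eq: bem 3 zur def spektralop} applied to the isolated set $E=\sigma$ (or it can be proved directly from the functional calculus: the resolvent of $A|_{PD(A)}$ at a point outside $\sigma$ is given by a contour integral that converges because the contour can be taken not to wind around $\sigma$). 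This establishes that the $P$ above is associated with $A$ and $\sigma$.

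\textbf{Uniqueness.} Suppose $Q$ is any projection associated with $A$ and $\sigma$. Then $X=QX\oplus(1-Q)X$ reduces $A$, with $\sigma(A|_{QD(A)})=\sigma$ bounded (so $A|_{QX}$ is bounded with spectrum $\sigma$) and $\sigma(A|_{(1-Q)D(A)})=\sigma(A)\setminus\sigma$, which does not meet $\sigma$. Since $\gamma$ winds once around $\sigma$ and not at all around $\sigma(A)\setminus\sigma$, the resolvent $(z-A)^{-1}$ decomposes along this splitting as $(z-A|_{QX})^{-1}\oplus(z-A|_{(1-Q)X})^{-1}$ for $z\in\gamma$, and integrating over $\gamma$ gives $\frac{1}{2\pi i}\int_\gamma(z-A|_{QX})^{-1}\,dz=1_{QX}$ (Cauchy's integral formula for the bounded operator $A|_{QX}$ whose spectrum $\sigma$ is enclosed once) and $\frac{1}{2\pi i}\int_\gamma(z-A|_{(1-Q)X})^{-1}\,dz=0$ (the contour encloses none of $\sigma(A|_{(1-Q)X})$). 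Hence $P=Q$, which proves uniqueness.

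\textbf{The pole case.} Now assume $P$ is associated with $A$ and $\sigma=\{\lambda\}$ and $\lambda$ is a pole of $(\,.\,-A)^{-1}$ of order $m$. Then $A_0:=A|_{PX}$ is a bounded operator with $\sigma(A_0)=\{\lambda\}$, and the Laurent expansion of $(z-A)^{-1}$ around $\lambda$ has principal part of degree $m$; restricting, the operator $(z-A_0)^{-1}$ has a pole of order $m$ at $\lambda$, which for a bounded operator with one-point spectrum forces $(A_0-\lambda)^m=0$ but $(A_0-\lambda)^{m-1}\neq 0$ — i.e.\ $A_0-\lambda$ is nilpotent of index exactly $m$. (The implication ``pole of order $m$ of the resolvent $\Leftrightarrow$ nilpotency index $m$'' for bounded operators with singleton spectrum is the classical computation with $(z-A_0)^{-1}=\sum_{k\ge1}(A_0-\lambda)^{k-1}(z-\lambda)^{-k}$, valid since the Neumann-type series terminates.) Therefore for every $k\ge m$ we have $\ker(A_0-\lambda)^k=PX$, hence $PX\subset\ker(A-\lambda)^k$; conversely, since $A_1:=A|_{(1-P)X}$ has $\lambda\in\rho(A_1)$, the operator $(A_1-\lambda)^k$ is boundedly invertible on $(1-P)X$, so $\ker(A-\lambda)^k$ meets $(1-P)X$ only in $0$ and $\ran(A-\lambda)^k\supset(1-P)X$; combining with $PX\subset\ker$ and $\ran\supset$ on the $P$-part being $0$ (as $(A_0-\lambda)^k=0$), the decomposition $X=PX\oplus(1-P)X$ yields $\ker(A-\lambda)^k=PX$ and $\ran(A-\lambda)^k=(1-P)X$, which is~\eqref{eq: zerl von X, lambda isoliert}.

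\textbf{Main obstacle.} The part requiring the most care is not any single computation but the handling of unboundedness: verifying $PD(A)=PX$ and that $A$ commutes with the contour integral of the resolvent, and proving the spectral splitting $\sigma(A|_{(1-P)D(A)})=\sigma(A)\setminus\sigma$ cleanly when $A$ is unbounded — here one must argue that the resolvent of the restriction $A|_{(1-P)X}$ at points of $\sigma$ is supplied by a convergent contour integral, rather than appeal to compactness of the whole spectrum. I expect to dispatch this by invoking~\eqref{eq: bem 3 zur def spektralop} (stated earlier for isolated subsets) together with the Riesz projection formula, so that the unbounded bookkeeping is absorbed into facts already recorded in the excerpt.
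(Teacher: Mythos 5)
Your overall route is the classical Riesz--Dunford one, and that is exactly where the paper itself points: its ``proof'' is only a citation of \cite{dipl}, \cite{GohbergGoldbergKaashoek} and Theorem~5.8-A of \cite{Taylor58}, which carry out the same contour-integral construction you sketch. Your uniqueness argument (decompose $(z-A)^{-1}$ along $QX \oplus (1-Q)X$, apply Cauchy's integral formula to the bounded restriction $A|_{QX}$ with spectrum $\sigma$ and Cauchy's theorem to the $(1-Q)$-part, using $\operatorname{n}(\gamma,\sigma(A)\setminus\sigma)=0$) and your pole-case argument (the Laurent series of $(z-A|_{PX})^{-1}$ for a bounded operator with singleton spectrum terminates at order $m$, giving nilpotency of index $m$ on $PX$, while $\lambda\in\rho(A|_{(1-P)D(A)})$ gives injectivity and surjectivity of $(A|_{(1-P)D(A)}-\lambda)^k$, whence \eqref{eq: zerl von X, lambda isoliert}) are both sound.

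There is, however, one genuine gap: you dispose of the crucial spectral splitting $\sigma(A|_{PD(A)})=\sigma$ and $\sigma(A|_{(1-P)D(A)})=\sigma(A)\setminus\sigma$ by invoking \eqref{eq: bem 3 zur def spektralop}, and in your ``main obstacle'' paragraph you explicitly plan to absorb the unbounded bookkeeping into that display. But \eqref{eq: bem 3 zur def spektralop} is a statement about \emph{spectral operators} and their spectral measure $P^{A}$; the operator $A$ in this theorem is an arbitrary densely defined closed operator with $\rho(A)\neq\emptyset$, which need not be spectral, and the Riesz projection $P$ is not given as a value $P^A(E)$ of any spectral measure, so that fact simply does not apply here. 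Your parenthetical fallback only covers the easy inclusion $\sigma(A|_{PX})\subset\sigma$; the part that actually requires work for unbounded $A$ is to show that every point $z_0$ with $\operatorname{n}(\gamma,z_0)=1$ (in particular every point of $\sigma$) belongs to $\rho\big(A|_{(1-P)D(A)}\big)$ -- e.g.\ by verifying that $\frac{1}{2\pi i}\int_{\gamma}(z-z_0)^{-1}(z-A)^{-1}\,dz$ inverts $z_0-A$ on $(1-P)X$ -- and then to upgrade the two inclusions to equalities via $\sigma(A)=\sigma(A|_{PD(A)})\cup\sigma(A|_{(1-P)D(A)})$. That is precisely the content of the results in \cite{Taylor58} and \cite{GohbergGoldbergKaashoek} the paper cites, and it has to be argued (or cited) as such rather than via the spectral-operator identity.
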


\begin{proof}
See, for instance, \cite{dipl} (Theorem~2.14 and Proposition~2.15) or~\cite{GohbergGoldbergKaashoek} for detailed proofs of the existence and uniqueness statement and Theorem~5.8-A of~\cite{Taylor58} for a proof of~\eqref{eq: zerl von X, lambda isoliert}.
\end{proof}

\begin{thm} \label{thm: typ mögl für PX und (1-P)X}
Suppose $A: D(A) \subset X \to X$ is a densely defined closed linear operator with $\rho(A) \ne \emptyset$ 
and $\lambda \in \sigma(A)$. 
If $\lambda$ is non-isolated in $\sigma(A)$, then in general there exists no projection $P$ weakly associated with $A$ and $\lambda$, but if such a projection exists it is already unique. 
If $P$ is weakly associated with $A$ and $\lambda$ of order $m$, then
\begin{align}  \label{eq: zerl von X, lambda nicht isoliert}
PX = \ker(A-\lambda)^k \quad \text{and} \quad (1-P)X = \overline{ \ran}(A-\lambda)^k 
\end{align}
for all $k \in \N$ with $k \ge m$.
\end{thm}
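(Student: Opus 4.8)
The plan is to prove the three assertions in the following order: first the decomposition \eqref{eq: zerl von X, lambda nicht isoliert}, then uniqueness (which will be an immediate consequence), and finally the non-existence claim by means of a concrete counterexample.

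So let $P$ be weakly associated with $A$ and $\lambda$ of order $m$ and fix $k \in \N$ with $k \ge m$. Since $P$ commutes with $A$, an easy induction shows that $P$ maps $D(A^k) = D((A-\lambda)^k)$ into itself and commutes with $(A-\lambda)^k$ there, so that both $\ker(A-\lambda)^k$ and $\ran(A-\lambda)^k$ respect the splitting $X = PX \oplus (1-P)X$. On $PX = PD(A)$ the operator $A-\lambda$ restricts to $A|_{PD(A)}-\lambda$, which is nilpotent of order $\le m \le k$, so $(A-\lambda)^k$ vanishes on $PX$; hence $PX \subseteq \ker(A-\lambda)^k$ and $\ran(A-\lambda)^k = \ran\big((A|_{(1-P)D(A)}-\lambda)^k\big) \subseteq (1-P)X$. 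Conversely, if $x \in \ker(A-\lambda)^k$, then $(1-P)x$ lies in the kernel of $(A|_{(1-P)D(A)}-\lambda)^k$; since $A|_{(1-P)D(A)}-\lambda$ is injective, so is its $k$-th power, whence $(1-P)x = 0$ and $x = Px \in PX$. Therefore $PX = \ker(A-\lambda)^k$.

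The one non-trivial point is that $\ol{\ran}\big((A|_{(1-P)D(A)}-\lambda)^k\big) = (1-P)X$. Write $Z := (1-P)X$ and $B := A|_{(1-P)D(A)}-\lambda$; then $B$ is closed on $Z$ (a restriction of the closed operator $A-\lambda$ to the closed $A$-invariant subspace $Z$), is injective with dense range by hypothesis, and has $\rho(B) \ne \emptyset$, since any $\mu \in \rho(A)$ yields $\mu-\lambda \in \rho(B)$ (the resolvent $(\mu-A)^{-1}$ commutes with $P$). It thus suffices to prove the following \emph{lemma}: a closed operator $B$ on a Banach space with $\rho(B) \ne \emptyset$ and dense range has dense range of every power $B^k$. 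If $0 \in \rho(B)$ this is clear; otherwise choose $\mu_0 \in \rho(B)\setminus\{0\}$ and set $R := (\mu_0-B)^{-1}$ and $T := \mu_0 R - 1$. From $BR = \mu_0 R - 1 = T$, the commutativity of $B$ and $R$, and $D(B^k) = \ran(R^k)$ one gets $B^k R^k = (BR)^k = T^k$, so $\ran(B^k) = \ran(T^k)$. Now $T$ is a bounded operator with dense range (indeed $\ran T = \ran B$), and for any bounded $T$ with dense range and any dense $D \subseteq Z$ one has $Z = \ol{\ran T} = \ol{T(Z)} = \ol{T(\ol{D})} \subseteq \ol{T(D)}$, so $T$ sends dense sets to dense sets; iterating this with $D = \ran(T^{k-1})$ shows $\ran(T^k)$ is dense. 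This finishes the proof of \eqref{eq: zerl von X, lambda nicht isoliert}.

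Uniqueness then follows at once: if $P_1, P_2$ are weakly associated with $A$ and $\lambda$, of orders $m_1, m_2$, then \eqref{eq: zerl von X, lambda nicht isoliert} with $k := \max\{m_1,m_2\}$ gives $\ran P_1 = \ker(A-\lambda)^k = \ran P_2$ and $\ker P_1 = \ol{\ran}(A-\lambda)^k = \ker P_2$, and two bounded projections sharing range and kernel are equal. For the non-existence assertion it is enough to give one example: take $X := \ell^2(\N)$, $A := S$ the right shift $S(x_1,x_2,\dots) := (0,x_1,x_2,\dots)$, and $\lambda := 0$; then $\sigma(S) = \{z \in \C : |z|\le 1\}$, so $0$ is non-isolated, and $\rho(S) \ne \emptyset$. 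If $P$ were weakly associated with $S$ and $0$, then on the $S$-invariant subspace $PX$ the operator $S|_{PX}$ would be nilpotent and, being also injective (as $S$ is), would vanish, forcing $PX \subseteq \ker S = \{0\}$, i.e.\ $P = 0$; but for $P = 0$ weak associatedness requires $\ran S = \{x \in \ell^2(\N): x_1 = 0\}$ to be dense, which it is not. Hence no weakly associated projection exists. The only genuinely delicate step in all of this is the density statement $\ol{\ran}(A-\lambda)^k = (1-P)X$ --- equivalently, that dense range is inherited by powers of $B$ --- and the resolvent reduction to the bounded operator $T$ is what makes it elementary.
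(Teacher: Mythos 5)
Your proof is correct, and the overall architecture (establish the decomposition, read off uniqueness, then exhibit a counterexample) matches the paper's. The kernel half $PX = \ker(A-\lambda)^k$ is handled the same way in both. Where you diverge is on the density half. The paper proves $N := (1-P)X \subseteq \ol{\ran}\,(A-\lambda)^k$ by a direct induction over $k$, writing $\ran A|_{(1-P)D(A)} = A(z_0-A)^{-1}N$ and using boundedness of $A(z_0-A)^{-1}$ to push the induction hypothesis $N \subseteq \ol{\ran}\,A^k$ up one power. You instead extract the cleaner stand-alone lemma that a closed operator $B$ with $\rho(B)\ne\emptyset$ and dense range has dense range of every power, and prove it via the identity $B^kR^k = (BR)^k = T^k$ with $T = \mu_0R-1$ bounded of dense range; this packages the same ``resolvent makes things bounded'' idea a bit more transparently and is reusable, at the mild cost of the (unproven but standard) facts that $D(B^k)=\ran R^k$ and that $B,R$ commute. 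Your non-existence example also differs: you take the right shift $S_+$ on $\ell^2(\N)$ and contradict the dense-range requirement for $P=0$ directly, whereas the paper takes the left shift $S_-$ (or $\operatorname{diag}(0,S_+)$) and appeals to the fact that $\ker(A-\lambda)^k$ (resp.\ $\ol{\ran}(A-\lambda)^k$) never stabilizes, contradicting the already-proved decomposition formula. Both counterexamples are valid; yours is marginally more elementary since it needs only the definition, not the decomposition. In short: same skeleton, different but equally sound flesh on the two technical points.
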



\begin{proof}
We first show that a projection $P$ weakly associated with $A$ and $\lambda$ decomposes the space $X$ according to~\eqref{eq: zerl von X, lambda nicht isoliert}. 
%
So, let $P$ be weakly associated with $A$ and $\lambda$. We may clearly assume that $\lambda = 0$ 
because $P$, being weakly associated with $A$ and $\lambda$, is also weakly associated with $A-\lambda$ and $0$.
Set $M := PX$ and $N := (1-P)X$. 
We first show that $M = \ker A^k$ for all $k \ge m$. Since $A|_{PX} = A|_{PD}$ is nilpotent of order $m$, $A^k|_{PX} = ( A|_{PX} )^k = 0$ and hence $M = PX \subset \ker A^k$ for all $k \ge m$. And since $A|_{(1-P)D(A)}$ is injective, 
\begin{align*}
A^k|_{(1-P)D(A^k)} = ( A|_{(1-P)D(A)} )^k
\end{align*}
is injective as well and hence $\ker A^k \subset PX = M$ for all $k \in \N$.
We now show that $N = \overline{\ran}\, A^k$ for all $k \ge m$. As $PX = \ker A^k$ for $k \ge m$, we have
\begin{align*}
\ran A^k = A^k PD(A^k) + A^k (1-P)D(A^k) = (1-P) A^k D(A^k) \subset (1-P)X = N
\end{align*} 
and therefore $\overline{\ran}\, A^k \subset N$ for all $k \ge m$. It remains to show that the reverse inclusion $N \subset \overline{\ran}\, A^k$ holds true for all $k \in \N$ and this will be done by induction over $k$. Since $A|_{(1-P) D(A)}$ has dense range in $(1-P)X = N$, the desired inclusion is clearly satisfied for $k = 1$. Suppose now that $N \subset \overline{\ran}\, A^k$ is satisfied for some arbitrary $k \in \N$. Since 
\begin{align*}
\ran A|_{(1-P)D(A)} = A (1-P)D(A) = A (z_0 - A)^{-1} N
\end{align*}
and since $A (z_0-A)^{-1}$ is a bounded operator for every $z_0 \in \rho(A)$, it then follows by the induction hypothesis that $A (z_0-A)^{-1} N \subset \overline{\ran}\, A^{k+1}$ and hence 
\begin{align*}
N = \overline{\ran}\, A|_{(1-P)D(A)}  \subset \overline{\ran}\, A^{k+1},
\end{align*}
which concludes the induction and hence the proof of~\eqref{eq: zerl von X, lambda nicht isoliert}.
\smallskip

With~\eqref{eq: zerl von X, lambda nicht isoliert} at hand, we can now easily show the uniqueness and existence statements. 
%
Indeed, if $P$ and $Q$ are two projections weakly associated with $A$ and $\lambda$ of order $m$ and~$n$ respectively, then 
\begin{gather*}
PX = \ker(A-\lambda)^m = \ker(A-\lambda)^n = QX, \\ (1-P)X = \overline{\ran}(A-\lambda)^m = \overline{\ran}(A-\lambda)^n = (1-Q)X
\end{gather*}
by virtue of~\eqref{eq: zerl von X, lambda nicht isoliert} and therefore $P = Q$.
%
In order to see the existence statement, choose $A := S_-$ on $X:=\ell^2(\N)$ and $\lambda := 0$ ($S_-$ the left shift operator on $\ell^2(\N)$) or alternatively $A := \operatorname{diag}(0,S_+)$ on $X := \ell^2(\N) \times \ell^2(\N)$ and $\lambda := 0$ ($S_+$ the right shift operator on $\ell^2(\N)$). It is then elementary to check that 
\begin{align*}
\ker(A-\lambda)^k \subsetneq \ker(A-\lambda)^{k+1}  \quad \text{or} \quad \ol{\ran}(A-\lambda)^k \supsetneq \ol{\ran}(A-\lambda)^{k+1}
\end{align*}
for all $k \in \N$, respectively. In other words: the subspaces $\ker(A-\lambda)^k$ and $\ol{\ran}(A-\lambda)^k$ do not stop growing or shrinking, respectively. 
So, by virtue of~\eqref{eq: zerl von X, lambda nicht isoliert}, there cannot exist a projection weakly associated with $A$ and $\lambda$. 
(See also~\eqref{eq: finite type vor wesentl im krit für schw assoz, A spektral} for an example where $A$ is a spectral operator. 
Another 
class of examples for the possible non-existence of weakly associated projections can be found at the beginning of Section~\ref{sect: anwendung q.d.s.}).
\end{proof}

We make some remarks which discuss certain converses of the above two theorems as well as the relation 
of associatedness and weak associatedness (and of semisimplicity and weak semisimplicity) in the case of an isolated spectral value. 
\smallskip

1. It has been shown in the theorems above that associated and weakly associated projections $P$ of a densely defined operator $A: D(A)\subset X \to X$ and certain 
spectral values $\lambda \in \sigma(A)$ yield decompositions of the space $X$ into the closed subspaces given in~\eqref{eq: zerl von X, lambda isoliert} and~\eqref{eq: zerl von X, lambda nicht isoliert}. 
Conversely, such decompositions of $X$ also yield associated and weakly associated projections: 
let $A: D(A)\subset X \to X$ be a densely defined operator with $\rho(A) \ne \emptyset$ and $\lambda \in \sigma(A)$. 
\begin{itemize}
\item[(i)] If $P$ is a bounded projection such that
\begin{align}
PX = \ker (A-\lambda)^m \quad \text{and} \quad (1-P)X = \ran(A-\lambda)^m
\end{align} 
for some $m \in \N$, then $\lambda$ is isolated in $\sigma(A)$ 
and $P$ is 
associated with $A$ and $\lambda$, and furthermore, $\lambda$ is a pole of $(\,.\,-A)^{-1}$ of order less than or equal to $m$.
\item[(ii)] If $P$ is a bounded projection such that $PA \subset AP$ and
\begin{align} \label{eq: zerl impliziert schwach assoz}
PX = \ker (A-\lambda)^m \quad \text{and} \quad (1-P)X = \ol{\ran}(A-\lambda)^m
\end{align} 
for some $m \in \N$, 
then $P$ is weakly associated with $A$ and $\lambda$ of order less than or equal to $m$. 
\end{itemize}
(See, for instance, Theorem~5.8-D of~\cite{Taylor58} for the proof of~(i) -- the proof of~(ii) is not difficult. In case $m=1$ in~\eqref{eq: zerl impliziert schwach assoz}, the assumption $PA \subset AP$ is automatically satisfied.)
\smallskip

%
2. In the case of isolated spectral values $\lambda$ of operators $A$ as above, we have two notions of generalized spectral projections (associated and weakly associated projections) 
and so the question arises how these two notions are related. 
If $\lambda$ is a pole of $(\,.\,-A)^{-1}$, then associatedness and weak associatedness -- as well as semisimplicity and weak semisimplicity -- coincide: a projection $P$ is then associated with $A$ and $\lambda$ if and only if it is weakly associated with $A$ and $\lambda$. (Combine the preceding remark with the above theorems to see this equivalence.) 
If, however, $\lambda$ is 
an essential singularity of $(\,.\,-A)^{-1}$, then associatedness and weak associatedness have nothing to do with each other: 
a projection $P$ associated with $A$ and $\lambda$ can then not possibly be weakly associated with $A$ and $\lambda$, and vice versa. (Indeed, if a projection $P$ is both associated and weakly associated with $A$ and $\lambda$ of order $m$, then 
\begin{align*}
z \mapsto (z-A)^{-1} &= (z-A)^{-1}P + (z-A)^{-1}(1-P) \\
&= \sum_{k=0}^{m-1} \frac{ (A|_{PD(A)}-\lambda)^k }{ (z-\lambda)^{k+1} } \, P 
+ \big( z-A|_{(1-P)D(A)} \big)^{-1} (1-P) 
\end{align*}   
has a pole of order $m$ at $\lambda$.) 
A specific example of an operator $A$ (on $X = L^2(I) \times L^2(I)$), where $\lambda=0$ is an essential singularity of the resolvent and not only an associated projection $P_1$ but also a weakly associated projection $P_2$ exists, is given by 
\begin{align} \label{eq: bsp assoz und schw assoz proj ex, obwohl A nicht vom endl typ}
A := \operatorname{diag}(0,V) \quad \text{with} \quad (V f)(t) := \int_0^t f(s)\,ds \quad (f \in L^2(I)).
\end{align} 
%
%

3. If $A$ is an operator as above with distinct spectral values $\lambda \ne \mu$ and if $P$ is weakly associated with $A$ and $\lambda$ and $Q$ is weakly associated with $A$ and $\mu$, then 
\begin{align} \label{eq: PQ=0=QP für schw ass proj}
PQ = 0 = QP.
\end{align} 
An analogous statement for associated projections is well-known and easy to see, but we will not need that in the sequel. 
%
(In order to see~\eqref{eq: PQ=0=QP für schw ass proj}, 
notice that 
\begin{align} \label{eq: PQ=0=QP für schw ass proj, bew}
\sigma(A|_{PD(A)}) \subset \{\lambda\} \quad \text{and} \quad QX = \ker(A-\mu)^m
\end{align} 
by the definition of weak associatedness and the above theorem. 
If now $x \in QX$, then 
\begin{align*}
(A|_{PD(A)} - \mu)^m Px = P (A-\mu)^m x = P (A-\mu)^m Qx = 0
\end{align*}
by virtue of~(\ref{eq: PQ=0=QP für schw ass proj, bew}.b) and therefore $Px = 0$ by virtue of~(\ref{eq: PQ=0=QP für schw ass proj, bew}.a) and $\mu \ne \lambda$. We have thus shown $PQ = 0$ and the other equality follows by symmetry.) 
\smallskip

\subsubsection{Criteria for the existence of weakly associated projections} \label{sect: criteria for the existence of weakly associated projections}

We have seen in the theorem above that for given operators $A$ and spectral values $\lambda$, there will in general exist no projection weakly associated with $A$ and $\lambda$. It is therefore important 
to have criteria for the existence of weakly associated projections. 
\smallskip

%
%
%
In the case of spectral operators $A$ one has the following convenient criterion for the existence of weakly associated projections. In particular, this criterion applies 
if $A$ is a bounded spectral operator of finite type or if $A$ is an unbounded spectral operator of scalar type (for example, skew-adjoint or, more generally, normal).


\begin{prop} \label{prop: krit ex schw assoz proj, A spektral}
Suppose that $A: D(A) \subset X \to X$ is a spectral operator with spectral measure $P^{A}$ and $\lambda \in \sigma(A)$ such that for some bounded neighborhood $\sigma$ of $\lambda$ the bounded spectral operator $A|_{P^{A}(\sigma)X}$ is of finite type. Then there exists a (unique) projection $P$ weakly associated with $A$ and $\lambda$ and it is given by $P = P^{A}(\{\lambda\})$. 
%
\end{prop}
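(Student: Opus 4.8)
The plan is to reduce everything to the already-established theory of spectral operators. Write $P := P^{A}(\{\lambda\})$, which is a bounded projection commuting with $A$ and satisfying $P D(A) = PX$ by the very definition of a spectral operator (applied to the bounded Borel set $\{\lambda\}$). It thus suffices to verify the two structural requirements in the definition of weak associatedness: that $A|_{PD(A)} - \lambda$ is nilpotent, and that $A|_{(1-P)D(A)} - \lambda$ is injective with dense range in $(1-P)X$.

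The key device is to pass to the bounded spectral operator $A_\sigma := A|_{P^{A}(\sigma)X}$ of finite type, whose spectral measure is $F \mapsto P^{A}(F \cap \sigma)$. By Theorem~\ref{thm: char beschr spektralop}(ii), $A_\sigma = S + N$ with $S = \int z \, dP^{A_\sigma}(z)$ scalar type and $N$ quasinilpotent (in fact nilpotent, since $A_\sigma$ is of finite type), $SN = NS$. Now $P = P^{A_\sigma}(\{\lambda\})$ commutes with both $S$ and $N$ and reduces $A_\sigma$. On $PX$: the scalar part $S|_{PX} = \int z \, dP^{A_\sigma}(z)|_{PX} = \lambda \, \id_{PX}$ because the spectral measure is concentrated at $\{\lambda\}$ there, so $A|_{PD(A)} - \lambda = A_\sigma|_{PX} - \lambda = N|_{PX}$, which is nilpotent — this is the first requirement, and it also pins down the order $m$ as the nilpotence order of $N|_{PX}$.

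For the second requirement I would split the complementary space as $(1-P)X = \big((P^{A}(\sigma) - P)X\big) \oplus \big((1 - P^{A}(\sigma))X\big)$, both summands reducing $A$. On $(1 - P^{A}(\sigma))X$ the operator $A - \lambda$ is boundedly invertible: indeed $\lambda \notin \sigma\big(A|_{(1-P^{A}(\sigma))D(A)}\big)$ since $\sigma$ is a neighborhood of $\lambda$ (use the restriction-spectrum property of spectral operators together with the fact that $P^{A}(E) = 0$ for $E \subset \C \setminus \sigma(A)$; here one argues that the spectrum of the restriction to $(1-P^{A}(\sigma))X$ avoids the interior of $\sigma$). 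On the finite-type block $(P^{A}(\sigma) - P)X = P^{A}(\sigma \setminus \{\lambda\})X$, the scalar part $S$ restricted there has spectral measure supported on $\sigma \setminus \{\lambda\}$, hence $\lambda$ is not an eigenvalue of $S$ restricted there; combined with the fact that $N$ is nilpotent and commutes, one checks $A - \lambda = (S - \lambda) + N$ is injective with dense range on this block — for instance by invoking Proposition~\ref{prop: spektrth beschr spektralop} (on the finite-type block, $\sigma_r = \emptyset$ and a non-eigenvalue lies in $\sigma_c$, i.e. has dense range), and then handling the nilpotent perturbation via the commuting triangular structure. Assembling the two blocks gives injectivity and dense range of $A|_{(1-P)D(A)} - \lambda$, completing the verification. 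Uniqueness is then automatic by Theorem~\ref{thm: typ mögl für PX und (1-P)X}.

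The main obstacle I expect is the dense-range claim on the block $P^{A}(\sigma \setminus \{\lambda\})X$: scalar-type (hence essentially normal, by Wermer's theorem) behaviour gives dense range for $S - \lambda$ there, but one must then show that adding the commuting nilpotent $N$ does not destroy density of the range. This should follow from writing $A - \lambda$ in the block-triangular form induced by $\ker N \subset \ker N^2 \subset \cdots$ and running an induction like the one in the proof of Theorem~\ref{thm: typ mögl für PX und (1-P)X}, using that $A(z_0 - A)^{-1}$ is bounded to push closures through; but getting the bookkeeping right across the direct-sum decomposition of $(1-P)X$ is the delicate part.
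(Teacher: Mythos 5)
Your approach matches the paper's: pass to the finite-type block $A_\sigma = S + N$, use the scalar/nilpotent decomposition to get nilpotence of $A_{\{\lambda\}} - \lambda$, then split $(1-P)X$ into the piece inside $\sigma$ (minus $\{\lambda\}$) and the piece outside $\sigma$, with invertibility on the outer piece and Proposition~\ref{prop: spektrth beschr spektralop} on the inner piece. Your ``main obstacle,'' however, is illusory: you do not need to treat $S - \lambda$ and $N$ separately and then ``handle the nilpotent perturbation.'' The restriction $A_{\sigma\setminus\{\lambda\}} = A|_{P^A(\sigma\setminus\{\lambda\})X}$ is itself a bounded spectral operator (a restriction of the finite-type operator $A_\sigma$ to a reducing subspace), with $P^{A_{\sigma\setminus\{\lambda\}}}(\{\lambda\}) = 0$; Proposition~\ref{prop: spektrth beschr spektralop}~(ii) then applies directly to the full operator $A_{\sigma\setminus\{\lambda\}}$, not just to its scalar part, and gives $\lambda \in \sigma_c(A_{\sigma\setminus\{\lambda\}})$ in one stroke. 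No triangular induction over $\ker N^k$ is needed. The one genuine omission in your sketch is the case distinction that the paper makes explicit: Proposition~\ref{prop: spektrth beschr spektralop} presupposes $\lambda \in \sigma(A_{\sigma\setminus\{\lambda\}})$, which holds iff $\lambda$ is non-isolated in $\sigma(A)$. In the isolated case $\lambda \in \rho(A_{\sigma\setminus\{\lambda\}})$, so you get bijectivity directly rather than $\sigma_c$-membership; your phrase ``a non-eigenvalue lies in $\sigma_c$'' is thus not quite accurate as stated, though both alternatives happily deliver injectivity with dense range. Fixing those two points recovers the paper's proof essentially verbatim.
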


\begin{proof}
We 
often abbreviate $A_E := A|_{P^{A}(E)D(A)}$ 
for $E \in \mathcal{B}_{\C}$. 
It is clear from the definition of spectral operators that $P^{A}(\{\lambda\})$ commutes with $A$ and that $P^{A}(\{\lambda\})D(A) = P^{A}(\{\lambda\})X$, so that we have only to establish the nilpotence, injectivity, and dense range condition from the definition of weak associatedness.
\smallskip
 
As a first step we show that $A|_{P^{A}(\{\lambda\})X} - \lambda = A_{\{\lambda\}}-\lambda$ is nilpotent. 
Since $A_{\sigma}$ is a bounded spectral operator of finite type, we have
$A_{\sigma} = S + N$
with $S = \int z \,dP^{A_{\sigma}}(z)$ and a nilpotent operator $N$ (Theorem~\ref{thm: char beschr spektralop}). So, 
\begin{align*}
A_{\{\lambda\}} = S|_{P^{A}_{\{\lambda\}}X} + N|_{P^{A}_{\{\lambda\}}X} = \lambda + N|_{P^{A}_{\{\lambda\}}X}
\end{align*}
and therefore $A_{\{\lambda\}}-\lambda$ 
is nilpotent, as desired.
\smallskip

As a second step we show that $A|_{(1-P^{A}(\{\lambda\}))D(A)} - \lambda = A_{\sigma(A) \setminus \{\lambda\}}-\lambda$ is injective with dense range in $(1-P^{A}(\{\lambda\}))X = P^{A}(\sigma(A)\setminus \{\lambda\})X$.
In order to do so, we have to treat the case where $\lambda$ is isolated in $\sigma(A)$ and the case where $\lambda$ is non-isolated in $\sigma(A)$ separately. 
Suppose first that $\lambda$ is isolated in $\sigma(A)$. Then 
\begin{align*}
\sigma(A_{\sigma(A)\setminus \{\lambda\}}) \subset \ol{ \sigma(A)\setminus \{\lambda\} } = \sigma(A)\setminus \{\lambda\}
\end{align*} 
(because $\lambda$ is isolated in $\sigma(A)$)
and therefore $A_{\sigma(A) \setminus \{\lambda\}}-\lambda: D(A_{\sigma(A) \setminus \{\lambda\}}) \subset P^{A}(\sigma(A)\setminus \{\lambda\})X \to P^{A}(\sigma(A)\setminus \{\lambda\})X$ is bijective. 
In particular, it is injective with dense range in $P^{A}(\sigma(A)\setminus \{\lambda\})X$, as desired.
Suppose now that $\lambda$ is non-isolated in $\sigma(A)$. Then $A_{\sigma \setminus \{\lambda\}}$ is a bounded spectral operator with $\lambda \in \sigma(A_{\sigma \setminus \{\lambda\}})$ (because $\lambda$ is non-isolated in $\sigma(A_{\sigma})$)
and with $P^{A_{\sigma \setminus \{\lambda\}}}(\{\lambda\}) = P^{A}(\{\lambda\})|_{P^{A}( \sigma \setminus \{\lambda\} )X} = 0$. 
So, we have $\lambda \in \sigma_c(A_{\sigma \setminus \{\lambda\}})$ (Proposition~\ref{prop: spektrth beschr spektralop}) or, in other words, 
\begin{align} \label{eq: 1, krit ex schw assoz proj, A spektral}
A_{\sigma \setminus \{\lambda\}}-\lambda \text{ is injective with dense range in } P^{A}(\sigma \setminus \{\lambda\})X.
\end{align}
We also have $\sigma(A_{\sigma(A)\setminus \sigma}) \subset \ol{ \sigma(A)\setminus \sigma } \subset \C \setminus \{\lambda\}$ (because $\sigma$ is a neighborhood of $\lambda$) and therefore $\lambda \in \rho(A_{\sigma(A)\setminus \sigma})$ or, in other words,
\begin{align} \label{eq: 2, krit ex schw assoz proj, A spektral}
A_{\sigma(A) \setminus \sigma}-\lambda: D(A_{\sigma(A) \setminus \sigma}) \subset P^{A}(\sigma(A)\setminus \sigma)X \to P^{A}(\sigma(A)\setminus \sigma)X \text{ is bijective.}
\end{align}
Combining now~\eqref{eq: 1, krit ex schw assoz proj, A spektral} and~\eqref{eq: 2, krit ex schw assoz proj, A spektral} and 
using 
that the direct sum decomposition $P^{A}(\sigma(A)\setminus \{\lambda\})X = P^{A}(\sigma(A)\setminus \sigma)X \oplus P^{A}(\sigma \setminus \{\lambda\})X$
yields 
a corresponding decomposition of the operator $A_{\sigma(A) \setminus \{\lambda\}}$, we easily conclude that $A_{\sigma(A) \setminus \{\lambda\}}-\lambda$ is injective with dense range in $P^{A}(\sigma(A)\setminus \{\lambda\})X$, as desired.
\end{proof}

We point out that the finite-type assumption of the above proposition is essential. Indeed, the operator $A$ on $X := C(I) \times C(I)$ defined by
\begin{align}  \label{eq: finite type vor wesentl im krit für schw assoz, A spektral}
A := \operatorname{diag}(0,V) \quad \text{with} \quad (V f)(t) := \int_0^t f(s)\,ds \quad (f \in C(I))
\end{align}
is quasinilpotent and hence bounded spectral (Theorem~\ref{thm: char beschr spektralop}), but there exists no projection weakly associated with $A$ and $\lambda = 0$. 
(In order to see this, notice that $0 \in \sigma_r(V)$. So, if a weakly associated projection $P$ existed, we would have 
\begin{align*}
PX 
= \ker \operatorname{diag}(0,V^m) = C(I) \times 0 
\quad \text{and} \quad 
(1-P)X 
\subset \ol{\ran}\, \operatorname{diag}(0,V) \subsetneq 0 \times C(I)
\end{align*} 
for some $m \in \N$ by virtue of Theorem~\ref{thm: typ mögl für PX und (1-P)X} and so $PX + (1-P)X \subsetneq C(I) \times C(I) = X$. Contradiction!)
Compare with the operator $A$ from~\eqref{eq: bsp assoz und schw assoz proj ex, obwohl A nicht vom endl typ}, which violates the finite type assumption as well, but nonetheless does have 
a weakly associated projection.
\smallskip


In the case of generators $A$ of bounded semigroups and spectral value $\lambda \in i \R$, one has another criterion for the existence of weakly associated projections, which is due to Avron, Fraas, Graf, Grech (Lemma~14 of~\cite{AvronGraf12}) and will be used in 
Section~\ref{sect: anwendung q.d.s.}. 

\begin{prop} \label{prop: 2nd criterion ex of w ass proj}
Suppose $A: D(A) \subset X \to X$ is the generator of a bounded semigroup on a reflexive space $X$ and $\lambda \in \sigma(A) \cap i \R$ such that the subspace
\begin{align*}
\ker (A-\lambda) + \ol{\ran} (A-\lambda) 
\end{align*}
is closed in $X$. Then there exists a (unique) projection weakly associated with $A$ and $\lambda$.
\end{prop}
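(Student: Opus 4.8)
The plan is to reduce to the case $\lambda = 0$ and then to identify $P$ as the ergodic (mean) projection of the bounded semigroup. Since $\lambda \in i\R$, the operator $A-\lambda$ still generates a bounded $C_0$-semigroup $(T(t))_{t\ge 0}$ on $X$ (indeed $\norm{e^{(A-\lambda)t}} = \norm{e^{At}}$), and, as noted in the proof of Theorem~\ref{thm: typ mögl für PX und (1-P)X}, $P$ is weakly associated with $A$ and $\lambda$ if and only if it is weakly associated with $A-\lambda$ and $0$; so I assume $\lambda = 0$ from now on. Uniqueness of a weakly associated projection is already granted by Theorem~\ref{thm: typ mögl für PX und (1-P)X}, so it suffices to exhibit one. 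The natural candidate is the projection $P$ onto $\ker A$ along $\ol{\ran A}$, but for this to make sense I first have to establish the ergodic decomposition $X = \ker A \oplus \ol{\ran A}$.

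For the decomposition I would use the Cesàro means $C(r)x := \frac{1}{r}\int_0^r T(s)x\,ds$, which are uniformly bounded because $(T(t))$ is bounded. One checks at once that $C(r)x = x$ for $x \in \ker A$ and that $C(r)(Ay) = \frac{1}{r}\bigl(T(r)y - y\bigr) \to 0$ as $r \to \infty$ for $y \in D(A)$, hence $C(r)x \to 0$ for every $x \in \ol{\ran A}$; consequently $\ker A \cap \ol{\ran A} = \{0\}$. The nontrivial point is that $\ker A + \ol{\ran A}$ exhausts $X$, and this is where the closedness hypothesis enters: if $\ker A + \ol{\ran A}$ were a proper subspace, then, being closed by assumption, it would be annihilated by some $\varphi \in X^*\setminus\{0\}$; from $\varphi \perp \ran A$ one gets $\varphi \in \ker A^*$, and from $\varphi \perp \ker A$ together with the duality identity $(\ker A)^\perp = \ol{\ran A^*}$ (valid for the densely defined closed operator $A$ on the reflexive space $X$, since $A^{**}=A$) one gets $\varphi \in \ol{\ran A^*}$. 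Since $X$ is reflexive, the adjoint semigroup $(T(t)^*)$ is again a bounded $C_0$-semigroup, with generator $A^*$, so the very same Cesàro-mean argument applied on $X^*$ yields $\ker A^* \cap \ol{\ran A^*} = \{0\}$, forcing $\varphi = 0$ --- a contradiction. (Alternatively, one could simply quote the mean ergodic theorem for bounded $C_0$-semigroups on reflexive spaces, which gives the decomposition directly; the closedness hypothesis is then seen to be automatically satisfied.)

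Once $X = \ker A \oplus \ol{\ran A}$ with both summands closed, the projection $P$ onto $\ker A$ along $\ol{\ran A}$ is bounded (closed graph theorem), and I would finish by verifying the four requirements in the definition of weak associatedness. One has $PX = \ker A \subset D(A)$, so $PD(A) = PX$; for $x \in D(A)$ both $APx = 0$ (as $Px \in \ker A$) and $PAx = 0$ (as $Ax \in \ran A \subset (1-P)X$), so $PA \subset AP$; the restriction $A|_{PD(A)}$ is the zero operator, hence nilpotent (of order $1$); and $(1-P)D(A) = D(A)\cap\ol{\ran A}$, on which $A$ is injective (its kernel lies in $\ker A \cap \ol{\ran A} = \{0\}$) with range equal to $\ran A$, which is dense in $(1-P)X = \ol{\ran A}$. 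Thus $P$ is weakly associated with $A$ and $0$, of order $1$, as required.

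The main obstacle is exactly the surjectivity of $\ker A + \ol{\ran A} \hookrightarrow X$; everything else is routine bookkeeping. The one place there that requires care is the passage to the adjoint --- namely that on a reflexive space the adjoint semigroup remains strongly continuous with generator $A^*$, and that $(\ker A)^\perp = \ol{\ran A^*}$ --- which is what allows the closedness hypothesis to do its work.
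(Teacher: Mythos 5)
Your argument is correct, and it is worth noting that the paper itself gives no proof of Proposition~\ref{prop: 2nd criterion ex of w ass proj} at all --- it is attributed to Avron, Fraas, Graf, Grech and the reader is referred to Lemma~14 of~\cite{AvronGraf12}. So there is no in-paper proof to compare against; what you have written is a self-contained argument.

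The proof itself is sound. The reduction to $\lambda=0$, the Ces\`aro-mean computation giving $\ker A\cap\ol{\ran}\,A=\{0\}$, the duality step $(\ker A)^{\perp}=\ol{\ran}\,A^{*}$ (which is where reflexivity and $A^{**}=A$ are genuinely used), and the verification of the four clauses of weak associatedness are all correct, and the boundedness of $P$ via the closed graph theorem is stated properly. Your parenthetical remark deserves emphasis: on a reflexive space the mean ergodic theorem for bounded $C_{0}$-semigroups already gives $X=\ker A\oplus\ol{\ran}\,A$ unconditionally, so under the stated hypotheses the closedness assumption on $\ker(A-\lambda)+\ol{\ran}(A-\lambda)$ is in fact automatic; it only becomes a genuine restriction if one drops reflexivity --- but then, as the paper points out right after the proposition (and as your adjoint-semigroup step makes clear), the conclusion can fail. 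One small cosmetic point: in the definition you are checking against, $P$ is allowed to be the zero projection (the paper notes this explicitly), so you do not need $\lambda$ to be an eigenvalue for the conclusion; your argument already handles this correctly since it never assumes $\ker A\neq\{0\}$.
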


We point out that the assumption that $X$ be reflexive is essential here. (See Example~5 or~6 of~\cite{AvronGraf12} or the example at the beginning of Section~\ref{sect: anwendung q.d.s.}.)


\subsubsection{Weak associatedness carries over to the dual operators}

We close this section on spectral projections by noting that in reflexive spaces weak associatedness carries over to the dual operators -- provided that some core condition is satisfied, which is the case for semigroup generators, for instance (Proposition~II.1.8 of~\cite{EngelNagel}).
Associatedness carries over to dual operators as well 
(Section~III.6.6 of~\cite{KatoPerturbation80}) 
-- but this will not be needed in the sequel.

\begin{prop} \label{prop: schwache assoziiertheit, dual}
Suppose $A: D(A) \subset X \to X$ is a densely defined closed linear operator in the reflexive space $X$ such that $\rho(A) \ne \emptyset$ and $D(A^k)$ is a core for $A$ for all $k \in \N$. If $P$ is weakly associated with $A$ and $\lambda \in \sigma(A)$ of order $m$, then $P^*$ is weakly associated with $A^*$ and $\lambda$ of order $m$. 
\end{prop}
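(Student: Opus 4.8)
The plan is to translate each of the three defining properties of weak associatedness for $A$ and $\lambda$ of order $m$ --- namely that $P$ commutes with $A$, that $PD(A) = PX$, and that $A|_{PD(A)}-\lambda$ is nilpotent of order $m$ while $A|_{(1-P)D(A)}-\lambda$ is injective with dense range in $(1-P)X$ --- into the corresponding statement for $A^*$ and $P^*$. Throughout I would use the standard duality facts in reflexive spaces: $(A^*)^* = A$, $P^{**} = P$, $(1-P)^* = 1-P^*$, and, crucially, that taking adjoints interchanges kernels and annihilators of ranges, i.e. $\ker(A^*-\lambda) = \overline{\ran}(A-\lambda)^\perp$ and $\overline{\ran}(A^*-\lambda) = \ker(A-\lambda)^\perp$ (the latter using reflexivity so that the pre-annihilator coincides with the annihilator). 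I would also invoke Theorem~\ref{thm: typ mögl für PX und (1-P)X}, which tells me $PX = \ker(A-\lambda)^k$ and $(1-P)X = \overline{\ran}(A-\lambda)^k$ for all $k \ge m$; the strategy is to identify $P^*X^*$ and $(1-P^*)X^*$ with $\ker(A^*-\lambda)^k$ and $\overline{\ran}(A^*-\lambda)^k$ and then apply remark~2(ii) after that theorem (the converse statement: a commuting bounded projection giving the kernel/closed-range decomposition is weakly associated of order $\le m$), together with a matching lower bound on the nilpotence order.

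Concretely I would proceed as follows. First, the algebraic facts: since $P$ commutes with $A$ (meaning $PA \subset AP$), one gets $A^*P^* \subset P^*A^*$, i.e. $P^*$ commutes with $A^*$; and $PD(A) = PX$ dualizes, via reflexivity and the core hypothesis (which guarantees $(A|_{PD(A)})^*$ and $(A^*)|_{P^*D(A^*)}$ behave correctly, cf. Proposition~II.1.8 of~\cite{EngelNagel}), to $P^*D(A^*) = P^*X^*$. Second, the spectral decomposition: from $PX = \ker(A-\lambda)^m$ I would take annihilators to get $(1-P^*)X^* = (PX)^\perp = \ker(A-\lambda)^{m\,\perp} = \overline{\ran}(A^*-\lambda)^m$, using that $(A-\lambda)^m$ has adjoint $(A^*-\lambda)^m$ (here the core condition on $D(A^k)$ is exactly what is needed to compute the adjoint of the $k$-th power as the $k$-th power of the adjoint) and the reflexive-space identity relating $\ker(B)^\perp$ and $\overline{\ran}(B^*)$. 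Dually, from $(1-P)X = \overline{\ran}(A-\lambda)^m$ I would get $P^*X^* = \overline{\ran}(A-\lambda)^{m\,\perp} = \ker(A^*-\lambda)^m$. Third, combine: $P^*$ is a bounded projection commuting with $A^*$ with $P^*X^* = \ker(A^*-\lambda)^m$ and $(1-P^*)X^* = \overline{\ran}(A^*-\lambda)^m$, so by remark~2(ii) it is weakly associated with $A^*$ and $\lambda$ of order at most $m$. Finally, to pin the order down to exactly $m$: if $P^*$ were weakly associated of order $m' < m$, then by Theorem~\ref{thm: typ mögl für PX und (1-P)X} applied to $A^*$ we would have $\ker(A^*-\lambda)^{m'} = P^*X^* = \ker(A^*-\lambda)^m$; taking annihilators back gives $\overline{\ran}(A-\lambda)^{m'} = \overline{\ran}(A-\lambda)^m$, and since $(1-P)X = \overline{\ran}(A-\lambda)^m$ this forces $(A|_{(1-P)D(A)}-\lambda)$ to already... more directly, $P$ would then be weakly associated with $A$ of order $\le m'$, contradicting that $A|_{PD(A)}-\lambda$ is nilpotent of order exactly $m$ (its $m'$-th power would vanish on $PX = PD(A)$). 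Hence the order is exactly $m$.

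I expect the main obstacle to be the careful handling of adjoints of unbounded operators and their powers: the identities $\ker(A^*-\lambda) = \overline{\ran}(A-\lambda)^\perp$ and especially $(A-\lambda)^{k*} = (A^*-\lambda)^k$ require the density of $D(A)$ and, for the powers, precisely the hypothesis that $D(A^k)$ is a core for $A^k$ (or at least the stated core condition), so I would want to state cleanly why $((A-\lambda)^m)^* = (A^*-\lambda)^m$ as a closed operator, and why $\overline{\ran}((A-\lambda)^m)^\perp = \ker((A^*-\lambda)^m)$ and (by reflexivity) $\ker((A-\lambda)^m)^\perp = \overline{\ran}((A^*-\lambda)^m)$ hold without any extra range-closedness assumption. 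A minor secondary point is verifying that $P^*D(A^*)=P^*X^*$; this is where the core condition and reflexivity enter to ensure $P^*$ maps $X^*$ into $D(A^*)$, and I would deduce it from $P^*$ commuting with $A^*$ together with the boundedness of $A^*|_{P^*D(A^*)}$ (which follows from the nilpotence, since $A^*|_{P^*X^*}-\lambda$ is nilpotent hence bounded on the closed subspace $P^*X^*$, forcing $P^*X^* \subset D(A^*)$). Everything else is routine linear-algebraic bookkeeping with the projections.
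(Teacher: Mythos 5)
Your overall strategy matches the paper's exactly: show $P^*$ commutes with $A^*$, identify $P^*X^*$ and $(1-P^*)X^*$ as $\ker(A^*-\lambda)^m$ and $\ol{\ran}(A^*-\lambda)^m$ by passing to annihilators, and then invoke the converse part of the remark following Theorem~\ref{thm: typ mögl für PX und (1-P)X}. You also correctly identify the crucial technical ingredient, the identity $(A^k)^* = (A^*)^k$, as the point where the core hypothesis enters.

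However, that identity is precisely where the real work lies, and you leave it as a gap. The paper devotes the bulk of its proof to a careful two-layer induction establishing $(A^k)^* = (A^*)^k$ from the mere hypothesis that $D(A^k)$ is a core for $A$ (not for $A^k$). This is genuinely delicate: the inclusion $(A^*)^k \subset (A^k)^*$ is easy, but the reverse inclusion $D((A^{k+1})^*) \subset D((A^*)^{k+1})$ requires bootstrapping through $x^* \in D((A^l)^*)$ for $l = 1, \dots, k$ by clever manipulations with powers of the resolvent $(A^*-z_0)^{-1} = ((A-z_0)^{-1})^*$. Your phrase "requires ... precisely the hypothesis that $D(A^k)$ is a core for $A^k$" actually misstates the given hypothesis (which is a core for $A$, a weaker requirement), and merely saying you "would want to state cleanly why" this holds does not supply the argument. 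Without it, the step $(\ker(A-\lambda)^m)^\perp = \ol{\ran}(A^*-\lambda)^m$ is unjustified.

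One further small issue: your closing step pinning the nilpotence order to exactly $m$ is both unnecessary and based on a misreading of the paper's convention. The paper explicitly defines "weakly associated of order $m$" to mean the nilpotence order is \emph{at most} $m$, so the converse remark (which yields order $\le m$) already gives everything the proposition claims, and your attempted contradiction argument has no force -- there is nothing in the hypotheses that excludes $A|_{PD(A)}-\lambda$ being nilpotent of some order $m' < m$. This does not break your proof, but it is a distraction.
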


\begin{proof}
We begin by showing -- by induction over $k \in \N$ -- the preparatory statement that
\begin{align} \label{eq: indbeh 1}
(A^k)^* = (A^*)^k 
\end{align}
for all $k \in \N$, which might also be of independent interest (notice that $D(A^k)$ 
being a core for $A$ is dense in $X$, so that $(A^k)^*$ is really well-defined).
Clearly, \eqref{eq: indbeh 1} is true for $k = 1$ and, assuming that it is true for some arbitrary $k \in \N$, we now show that $(A^{k+1})^* = (A^*)^{k+1}$ holds true as well.
It is easy to see that $(A^*)^{k+1} \subset (A^{k+1})^*$ and it remains to see that $D((A^{k+1})^*) \subset D((A^*)^{k+1})$. So let $x^* \in D((A^{k+1})^*)$. We show that
\begin{align} \label{eq: zwbeh für k-indschritt}
x^* \in D((A^k)^*) \quad \text{and} \quad (A^k)^* x^* \in D(A^*),
\end{align}
from which it then follows -- by the induction hypothesis -- that $x^* \in D((A^*)^{k+1})$ as desired.
In order to prove that $x^* \in D((A^k)^*)$ we show that 
\begin{align*} 
x^* \in D((A^l)^*)
\end{align*}
for all $l \in \{ 1, \dots, k \}$ -- by induction over $l \in \{ 1, \dots, k \}$ and by working with suitable powers of $(A^*-z_0)^{-1} = ((A-z_0)^{-1})^*$, 
where $z_0$ is an arbitrary point of $\rho(A^*) = \rho(A) \ne \emptyset$ (Theorem~III.5.30 of~\cite{KatoPerturbation80}). 
In the base step of the 
induction, notice that for all $y \in D(A)$
\begin{align*}
&\big\langle (A^*-z_0)^{-k} (A^{k+1})^* x^*, y \big\rangle = \big \langle x^*, A^{k+1} (A-z_0)^{-k} y \big \rangle \\
&\qquad \qquad \qquad = \big \langle x^*, (A-z_0) y \big \rangle + \sum_{i = 0}^k \binom{k+1}{i} z_0^{k+1-i} \big \langle (A^*-z_0)^{-k+i} x^*, y \big \rangle, 
\end{align*}
from which it follows that $x^* \in D((A-z_0)^*) = D(A^*)$. 
In the inductive step, 
assume that $x^* \in D(A^*), \dots, D((A^l)^*)$ for some arbitrary $l \in \{1, \dots, k-1 \}$. Since for all $y \in D(A^{l+1})$
\begin{align*}
&\big \langle (A^*-z_0)^{-(k-l)} (A^{k+1})^* x^*, y \big \rangle = \big \langle x^*, A^{k+1} (A-z_0)^{-(k-l)} y \big \rangle \\
&\qquad \qquad = \big \langle x^*, (A-z_0)^{l+1} y \big \rangle + \sum_{i=k-l+1}^{k} \binom{k+1}{i} z_0^{k+1-i} \big \langle x^*, (A-z_0)^{-(k-l)+i} y \big
\rangle \\ 
&\qquad \qquad \quad + \sum_{i=0}^{k-l} \binom{k+1}{i} z_0^{k+1-i} \big \langle (A^*-z_0)^{-(k-l)+i} x^*,  y \big \rangle,
\end{align*}
it follows by the induction hypothesis of the $l$-induction and by applying the binomial formula to $(A-z_0)^{-(k-l)+i} y$ for $i \in \{ k-l+1, \dots, k+1 \}$ that $x^* \in D((A^{l+1})^*)$.
So the $l$-induction is finished and it remains to show that $(A^k)^* x^* \in D(A^*)$. Since $D(A^{k+1})$ by assumption is a core for $A$, there is for every $y \in D(A)$ a sequence $(y_n)$ in $D(A^{k+1})$ such that 
\begin{align*}
\big \langle (A^k)^* x^*, A y \big \rangle = \lim_{n \to \infty} \big \langle (A^k)^* x^*, A y_n \big \rangle = \lim_{n \to \infty} \big \langle x^*, A^{k+1} y_n \big \rangle 
= \big \langle (A^{k+1})^* x^*, y \big \rangle.
\end{align*}
It follows that $(A^k)^* x^* \in D(A^*)$ and this yields -- together with the induction hypothesis of the $k$-induction -- that $x^* \in D((A^*)^{k+1})$, which finally ends the proof the preparatory statement~\eqref{eq: indbeh 1}.
\smallskip

After this preparation we can now move on to the main part of the proof where we assume, without loss of generality, that $\lambda = 0$ and exploit 
the first remark after Theorem~\ref{thm: typ mögl für PX und (1-P)X} to show that $P^*$ is weakly $m$-associated with $A^*$ and $\lambda = 0$.
$A^*$ is densely defined (due to the reflexivity of $X$ (Theorem~III.5.29 of~\cite{KatoPerturbation80})) 
with $\rho(A^*) = \rho(A) \ne \emptyset$ (Theorem~III.5.30 of~\cite{KatoPerturbation80}) and 
\begin{align*}
P^* A^* \subset (AP)^* \subset (PA)^* = A^* P^*
\end{align*}
because $AP \supset PA$.
Since $(A^m)^* = (A^*)^m$ by~\eqref{eq: indbeh 1} and since $PX = \ker A^m$ and $(1-P)X = \overline{ \ran}\, A^m $ (by Theorem~\ref{thm: typ mögl für PX und (1-P)X}), we further have 
\begin{gather*}
P^*X^* = \ker(1-P)^* = ((1-P)X)^{\perp} = (\overline{\ran}\, A^m)^{\perp} = \ker(A^m)^* = \ker (A^*)^m \\
\text{and} \\
(1-P^*)X^* = \ker P^* = (P X)^{\perp} = (\ker A^m)^{\perp} = ( \ker (A^m)^{**} )_{\perp} = \overline{\ran} (A^m)^*  = \overline{\ran} (A^*)^m ,
\end{gather*}
where in the fourth equality of the second line the closedness of $A^m$ (following from $\rho(A) \ne \emptyset$) and the 
reflexivity of $X$ have been used. (In the above relations, we denote by $U^{\perp} := \{ x^* \in Z^*: \scprd{x^*, U^*} = 0 \}$ and $V_{\perp} := \{ x \in Z: \scprd{V, x} = 0 \}$ the annihilators of subsets $U$ and $V$ of a normed space $Z$ and its dual $Z^*$,  respectively.)
It is now clear from the first remark after Theorem~\ref{thm: typ mögl für PX und (1-P)X} that $P^*$ is weakly $m$-associated with $A^*$ and $\lambda = 0$ 
and we are done.
\end{proof}

\subsection{Spectral gaps and continuity of set-valued maps}  \label{sect: def sl}

We continue by properly defining what exactly we mean by uniform and non-uniform spectral gaps. 
Suppose that $A(t): D(A(t)) \subset X \to X$, for every $t$ in some compact interval $J$, is a densely defined closed linear operator and that $\sigma(t)$ is a compact subset of $\sigma(A(t))$ for every $t \in J$. We then speak of a \emph{spectral gap for $A$ and $\sigma$} if and only if $\sigma(t)$ is isolated in $\sigma(A(t))$ for every $t \in J$. Such a spectral gap for $A$ and $\sigma$ is called \emph{uniform} if and only if $\sigma(\,.\,)$ is even uniformly isolated in $\sigma(A(\,.\,))$ in the sense that 
\begin{align*}
\inf_{t\in J} \dist(\sigma(t), \sigma(A(t)) \setminus \sigma(t)) > 0.
\end{align*}
Also, we say that \emph{$\sigma(\,.\,)$ falls into $\sigma(A(\,.\,)) \setminus \sigma(\,.\,)$ at the point $t_0 \in J$} if and only if there is a sequence $(t_n)$ in $J$ 
converging to $t_0$ such that
\begin{align*}
\dist(\sigma(t_n), \sigma(A(t_n)) \setminus \sigma(t_n)) \longrightarrow 0 \quad (n \to \infty).
\end{align*} 
It is clear that the set of points at which $\sigma(\,.\,)$ falls into $\sigma(A(\,.\,)) \setminus \sigma(\,.\,)$ is closed. 
And by the compactness of $J$ it follows that 
a spectral gap for $A$ and $\sigma$ is uniform if and only if $\sigma(\,.\,)$ at no point falls into $\sigma(A(\,.\,)) \setminus \sigma(\,.\,)$. 
%
And finally, the set-valued map $t \mapsto \sigma(t)$ is called \emph{continuous} if and only if it is \emph{upper and lower semicontinuous} in the sense of Section~IV.3 of~\cite{KatoPerturbation80}, that is, for every $t_0 \in J$ and every $\eps >0$ there is neighborhood $J_{t_0}$ of $t_0$ in $J$ 
such that for every $t \in J_{t_0}$ 
\begin{align*}
\sigma(t) \subset B_{\eps}(\sigma(t_0)) \qquad \text{and} \qquad \sigma(t_0) \subset B_{\eps}(\sigma(t)).
\end{align*}

\subsection{Adiabatic evolutions and a trivial adiabatic theorem}  \label{sect: ad zeitentw} 

As has been explained in Section~\ref{sect: intro}, the principal 
goal of adiabatic theory is to establish the convergence~\eqref{eq: aussage des adsatzes} or, in other words, to show that the evolution systems $U_{\eps}$ for $\frac 1 \eps A$ are, in some sense, approximately adiabatic w.r.t.~$P$ as $\eps \searrow 0$. We say that an evolution system for a family $A$ of linear operators $A(t): D \subset X \to X$ is \emph{adiabatic w.r.t.~a family $P$ of bounded projections $P(t)$ in $X$} if and only if $U(t,s)$ for every $(s,t) \in \Delta$ exactly intertwines $P(s)$ with $P(t)$, that is,  
\begin{align} \label{eq: def adiab zeitentw} 
P(t) U(t,s) = U(t,s) P(s) 
\end{align}
for every $(s,t) \in \Delta$.
Since the pioneering work~\cite{Kato50} of Kato, the basic strategy in proving the convergence~\eqref{eq: aussage des adsatzes} has been to show that
\begin{align}
U_{\eps}(t)-V_{\eps}(t) \longrightarrow 0 \quad (\eps \searrow 0)
\end{align}
for every $t \in I$, where the $V_{\eps}$ are suitable comparison evolution systems that are adiabatic 
w.r.t.~the family $P$ of spectral projections $P(t)$ 
corresponding to $A(t)$ and $\sigma(t)$. 
A simple 
way of obtaining 
adiabatic evolutions w.r.t.~some given family $P$ 
(independently observed by Kato in~\cite{Kato50} and Daleckii--Krein in~\cite{DaleckiiKrein50}) is described in the following proposition.

\begin{prop}[Kato, Daleckii--Krein] \label{prop: intertwining relation}
Suppose $A(t): D \subset X \to X$ for every $t \in I$ is a densely defined closed linear operator and $P(t)$ a bounded projection in $X$ such that $P(t)A(t) \subset A(t)P(t)$ for every $t \in I$ and $t \mapsto P(t)$ is strongly continuously differentiable. If the evolution system $V_{\eps}$ for $\frac 1 \eps A + [P',P]$ exists on $D$ for every $\eps \in (0,\infty)$, then $V_{\eps}$ is adiabatic w.r.t.~$P$ for every $\eps \in (0,\infty)$.
\end{prop}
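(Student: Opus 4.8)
The plan is to show directly that the function $s \mapsto V_\eps(t,s) P(s) V_\eps(s,t_0)$ is constant on $[t_0,t]$, for fixed $(t_0,t) \in \Delta$; evaluating at $s = t_0$ and $s = t$ then yields $P(t_0) V_\eps(t,t_0) = V_\eps(t,t_0) P(t_0)$ (after composing suitably with $V_\eps(t_0,t)$ — more cleanly, I would consider $\Phi(s) := V_\eps(t,s)\,[P(t) V_\eps(s,t_0) - V_\eps(s,t_0) P(t_0)]$ and show $\Phi \equiv 0$, but the bracket-conjugation version is the most transparent). To differentiate such expressions I need the differentiability of $s \mapsto V_\eps(t,s) x$, which is supplied by Lemma~\ref{lm: zeitentw rechtsseit db} (with derivative $-V_\eps(t,s) B(s) x$, where $B(s) := \frac 1 \eps A(s) + [P'(s),P(s)]$ is the generator of $V_\eps$), and the differentiability of $s \mapsto V_\eps(s,t_0) y$ for $y \in D$, which is part of the definition of an evolution system (with derivative $B(s) V_\eps(s,t_0) y$). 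Note $P(s)$ maps $D$ into $D$ since $P(s) D = P(s) X \supset$ is consistent with $P(s) A(s) \subset A(s) P(s)$; in fact $P(s) D \subset D$ follows from $P(s) D(A(s)) = P(s) X$ together with the fact that elements of $P(s)X$ lie in $D(A(s))$.

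First I would fix $y \in D$ and set $g(s) := V_\eps(t,s) P(s) V_\eps(s,t_0) y$ for $s \in [t_0,t]$. Since $V_\eps(s,t_0) y \in D$, since $P(s)$ preserves $D$, and since $s \mapsto P(s) x$ is (strongly) $C^1$, the product rule gives that $g$ is right-differentiable (indeed differentiable) with
\begin{align*}
g'(s) = -V_\eps(t,s) B(s) P(s) V_\eps(s,t_0) y + V_\eps(t,s) P'(s) V_\eps(s,t_0) y + V_\eps(t,s) P(s) B(s) V_\eps(s,t_0) y.
\end{align*}
The point is now a pure commutator computation: $P(s) B(s) - B(s) P(s)$ applied to any vector in $D$. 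Writing $B = \frac1\eps A + [P',P]$ and using $P A \subset A P$ on $D$, one gets $\frac1\eps(P A - A P) = 0$ on $D$, and $P[P',P] - [P',P]P = P P' P - P P' P^2 - (P' P - P P' P) P$. Using $P^2 = P$ and $P P' P = 0$ (which follows by differentiating $P^2 = P$: $P' P + P P' = P'$, hence $P P' P = P' P - P P' P^2$… more directly, $P P' P = (P'P + PP' - P')P$-type identity gives $PP'P = 0$), this collapses to $P B - B P = P' P - P P' = \ldots = -P'$ — the relevant identity being exactly $[P',P] \cdot$ chosen so that $P B - B P = P'$ on the appropriate subspace, so that the three terms in $g'(s)$ cancel. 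I would carry out this short algebra carefully and conclude $g'(s) = 0$.

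Hence $g$ is constant on $[t_0,t]$, and since its right-derivative is continuous (being a composition of strongly continuous maps), $g(t) = g(t_0)$, i.e.
\begin{align*}
V_\eps(t,t) P(t) V_\eps(t,t_0) y = V_\eps(t,t_0) P(t_0) V_\eps(t_0,t_0) y,
\end{align*}
that is $P(t) V_\eps(t,t_0) y = V_\eps(t,t_0) P(t_0) y$. Since $D$ is dense in $X$ and both sides are bounded operators in $y$, the identity~\eqref{eq: def adiab zeitentw} extends to all of $X$, for every $(t_0,t) \in \Delta$ and every $\eps \in (0,\infty)$.

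The only genuinely delicate point is the commutator bookkeeping: one must be sure that $P B - B P = P'$ holds on $D$ (using $PA \subset AP$, $P^2 = P$, and the consequence $PP'P = 0$ of differentiating $P^2=P$), and one must justify that all the operators involved are applied only to vectors where they make sense — in particular that $V_\eps(s,t_0)y \in D$, that $P(s)$ maps $D$ to $D$, and that the generator $B(s)$ of $V_\eps$ on $D$ really is $\frac1\eps A(s) + [P'(s),P(s)]$ with domain $D$ (which is exactly the hypothesis). Everything else — the product rule for strongly differentiable operator/vector-valued functions and the "constant if right-derivative vanishes and is continuous" principle (Corollary~2.1.2 of~\cite{Pazy}, already used above) — is routine. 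I do not anticipate needing reflexivity or any spectral structure here; this is a purely algebraic-plus-soft-analysis argument.
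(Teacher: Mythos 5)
Your proof is correct and follows essentially the same route as the paper's: you show that $\tau \mapsto V_{\eps}(t,\tau)P(\tau)V_{\eps}(\tau,s)x$ is constant by a right-derivative computation that uses $P(\tau)A(\tau) \subset A(\tau)P(\tau)$ together with $P(\tau)P'(\tau)P(\tau) = 0$ (from $P' = P'P + PP'$), then invoke the ``vanishing continuous right-derivative implies constant'' principle (Corollary~2.1.2 of~\cite{Pazy}). The only nit is a sign slip in the sketched algebra: the identity you need on $D$ is $PB - BP = -P'$ (not $P'$), since $P[P',P] - [P',P]P = 2PP'P - PP' - P'P = -(PP'+P'P) = -P'$, and this is exactly what makes the three terms in $g'(s)$ cancel.
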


\begin{proof}
Choose an arbitrary $(s,t) \in \Delta$ with $s \ne t$. 
It then follows by the proof of Lemma~\ref{lm: zeitentw rechtsseit db} 
that, for every $x \in D$, the map 
\begin{align} \label{eq: prop: intertwining relation, 1}
[s,t] \ni \tau \mapsto V_{\eps}(t,\tau) P(\tau) V_{\eps}(\tau,s)x
\end{align}
is continuous and right differentiable. Since $P(\tau)$ commutes with $A(\tau)$ 
and 
\begin{align} \label{eq: PP'P=0}
P(\tau)P'(\tau)P(\tau) = 0 
\end{align}
for every $\tau \in I$ (apply $P$ from the left and the right to the identity $P' = (PP)' = P'P+PP'$), 
it further follows that the right derivative of~\eqref{eq: prop: intertwining relation, 1} is identically $0$. So, 
\begin{align*}
P(t)V_{\eps}(t,s)x - V_{\eps}(t,s)P(s)x = V_{\eps}(t,\tau) P(\tau) V_{\eps}(\tau,s)x \big|_{\tau=s}^{\tau=t} = 0
\end{align*}
by virtue of Corollary~2.1.2 of~\cite{Pazy}, as desired. 
\end{proof}

We now briefly discuss two situations where the conclusion of the adiabatic theorem is already trivially true.

\begin{prop} \label{prop: triv adsatz}
Suppose $A(t): D \subset X \to X$ for every $t \in I$ is a densely defined closed linear operator and $P(t)$ is a bounded projection in $X$ such that the evolution system $U_{\eps}$ for $\frac{1}{\eps} A$ exists on $D$ for every $\eps \in (0,\infty)$ and such that $P(t)A(t) \subset A(t)P(t)$ 
for every $t \in I$ and $t \mapsto P(t)$ is strongly continuously differentiable.
\begin{itemize}
\item[(i)] If $P' = 0$, then $U_{\eps}$ is adiabatic w.r.t.~$P$ for every $\eps \in (0,\infty)$ (in particular, the convergence~\eqref{eq: aussage des adsatzes} holds trivially), 
and the reverse implication is also true.

\item[(ii)] If there are $\gamma \in (0,\infty)$ and $M \in [1,\infty)$ such that for all $(s,t) \in \Delta$ and $\eps \in (0,\infty)$
\begin{align}  \label{eq: zweiter triv adsatz}
\norm{U_{\eps}(t,s)} \le M e^{-\frac \gamma \eps (t-s)},
\end{align}
then $\sup_{t \in I} \norm{U_{\eps}(t)-V_{\eps}(t)} = O(\eps)$ as $\eps \searrow 0$,
whenever the evolution system $V_{\eps}$ for $\frac 1 \eps A + [P',P]$ 
exists on $D$ for every $\eps \in (0,\infty)$.
\end{itemize}
\end{prop}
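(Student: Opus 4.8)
The plan is to handle the two assertions separately; both are short. For part~(i), I would first record the purely algebraic equivalence $P' = 0 \iff [P',P] = 0$: one direction is trivial, and for the other, $[P',P] = 0$ says $P'P = PP'$, which together with $P'P + PP' = P'$ (differentiate $P = P^2$) and $PP'P = 0$ (apply $P$ on both sides to the latter identity) forces $PP' = PP'P = 0$, hence $P'P = 0$ and $P' = 0$. Given this, if $P' = 0$ then $\tfrac1\eps A + [P',P] = \tfrac1\eps A$, so $V_\eps = U_\eps$, and Proposition~\ref{prop: intertwining relation} shows $U_\eps = V_\eps$ is adiabatic w.r.t.\ $P$ for every $\eps$; in particular $(1-P(t))U_\eps(t,0)P(0) = 0$, so~\eqref{eq: aussage des adsatzes} holds trivially. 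For the converse, assume $U_\eps$ is adiabatic w.r.t.\ $P$ for some $\eps$, fix $s_0 \in I$ and $x \in D$, set $w := P(s_0)x$ (and afterwards $w := (1-P(s_0))x$) and $u(\tau) := U_\eps(\tau,s_0)w$. Then $u$ is a classical solution of $u' = \tfrac1\eps A(\cdot)u$ which, by the intertwining relation, stays in $P(\tau)X$ (resp.\ in $\ker P(\tau)$); differentiating the identity $u(\tau) = P(\tau)u(\tau)$ (resp.\ $0 = P(\tau)u(\tau)$) by the product rule for strongly $C^1$ operator families and using $P(\tau)A(\tau)u(\tau) = A(\tau)P(\tau)u(\tau)$ gives $P'(\tau)u(\tau) \equiv 0$, hence $P'(s_0)w = 0$. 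So the bounded operator $P'(s_0)$ annihilates $P(s_0)D$ and $(1-P(s_0))D$, whose linear span is dense in $X$, and therefore $P'(s_0) = 0$.

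For part~(ii), I would use the variation-of-parameters identity between the two evolution systems. Exactly as in the proof of Lemma~\ref{lm: zeitentw rechtsseit db}, but keeping the surviving commutator term, the map $[0,t] \ni \tau \mapsto U_\eps(t,\tau)V_\eps(\tau,0)y$ is continuous and right differentiable with continuous right derivative $U_\eps(t,\tau)[P'(\tau),P(\tau)]V_\eps(\tau,0)y$, so by Corollary~2.1.2 of~\cite{Pazy}
\begin{align*}
V_\eps(t,0)y - U_\eps(t,0)y = \int_0^t U_\eps(t,\tau)\,[P'(\tau),P(\tau)]\,V_\eps(\tau,0)y \, d\tau \qquad (y \in D).
\end{align*}
From $\norm{U_\eps(t,\tau)} \le M$ (a consequence of~\eqref{eq: zweiter triv adsatz}) and $b := \sup_{\tau \in I}\norm{[P'(\tau),P(\tau)]} < \infty$, a Grönwall estimate applied to this identity bounds $\norm{V_\eps(\tau,0)}$ by an $\eps$-independent constant $M'$, uniformly in $\tau \in I$ and $\eps \in (0,\infty)$. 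Feeding this back in, now exploiting the full decay $\norm{U_\eps(t,\tau)} \le M e^{-\frac\gamma\eps(t-\tau)}$ and hence $\int_0^t \norm{U_\eps(t,\tau)}\,d\tau \le \eps/\gamma$, one obtains $\norm{V_\eps(t,0)y - U_\eps(t,0)y} \le (M b M'/\gamma)\,\eps\,\norm{y}$ for all $y \in D$, $t \in I$, $\eps \in (0,\infty)$; density of $D$ in $X$ then gives $\sup_{t\in I}\norm{U_\eps(t) - V_\eps(t)} = O(\eps)$ as $\eps \searrow 0$.

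The point that really needs attention — rather than a genuine obstacle, this being the ``trivial'' adiabatic theorem — is the uniformity in $\eps$ in part~(ii): the exponential decay of $U_\eps$ does \emph{not} carry over to $V_\eps$, so the bound on $\norm{V_\eps}$ has to be extracted separately from the crude estimate $\norm{U_\eps} \le M$ via Grönwall, and only afterwards combined with the decay of $U_\eps$ inside the Duhamel integral to produce the $O(\eps)$ rate. The remaining ingredients — the product-rule and right-differentiability manipulations with operator-valued functions and evolution systems — are of exactly the same routine type as in the proofs of Lemma~\ref{lm: zeitentw rechtsseit db} and Proposition~\ref{prop: intertwining relation}.
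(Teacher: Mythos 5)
Your proof is correct and follows essentially the same route as the paper. In part~(i) you differentiate the intertwining relation (applied to $u(\tau)=U_\eps(\tau,s_0)w$ for $w$ in $P(s_0)D$ and $(1-P(s_0))D$), which is a mild reformulation of the paper's suggestion to differentiate $P(t)U_\eps(t,s)=U_\eps(t,s)P(s)$ in $s$; in part~(ii) you use the identical Duhamel identity and Gr\"onwall argument, only arranging the estimates slightly differently (bounding $\norm{V_\eps}$ by an $\eps$-independent constant via the crude bound $\norm{U_\eps}\le M$, then reinserting the exponential decay of $U_\eps$ inside the integral, instead of the paper's bound $\norm{V_\eps(s)}\le Me^{(-\gamma/\eps+Mc)s}$), and both yield the same $O(\eps)$ rate.
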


\begin{proof}
(i) See, for instance, Section~IV.3.2 of~\cite{Krein} for the reverse implication (differentiate the adiabaticity relation~\eqref{eq: def adiab zeitentw} with respect to the variable $s$) -- the other implication is obvious from Proposition~\ref{prop: intertwining relation}.
\smallskip

(ii) Since for $x \in D$ one has (by Corollary~2.12 of~\cite{Pazy} and by the proof of Lemma~\ref{lm: zeitentw rechtsseit db}) 
\begin{align*}
V_{\eps}(t)x - U_{\eps}(t)x = U_{\eps}(t,s)V_{\eps}(s)x \big|_{s=0}^{s=t} = \int_0^t U_{\eps}(t,s) [P'(s),P(s)] V_{\eps}(s)x \, ds
\end{align*}
for every $t \in I$ and $\eps \in (0,\infty)$, it follows 
by~\eqref{eq: zweiter triv adsatz} and a Gronwall argument that 
\begin{align} \label{eq: zweiter triv adsatz, 2}
\norm{V_{\eps}(s)} \le M e^{(-\gamma /\eps + Mc)s} \qquad (s \in I),
\end{align}
where $c := \sup_{s\in I} \norm{[P'(s),P(s)]}$. 
Combining now~\eqref{eq: zweiter triv adsatz} and~\eqref{eq: zweiter triv adsatz, 2}, we obtain
\begin{align*}
\norm{ U_{\eps}(t)-V_{\eps}(t) } \le M^2 c \, e^{Mc} \,\, t \, e^{-\frac \gamma \eps  t} 
\end{align*}
for all $t \in I$ and $\eps \in (0,\infty)$, and from this the desired conclusion is obvious.
\end{proof}

Combining Proposition~\ref{prop: triv adsatz}~(ii) with Example~\ref{ex: (M,0)-stabilitaet wesentl, mit sl} below, one sees that adiabatic theory is interesting only if the evolution systems for $\frac 1 \eps A$ are \emph{only just} bounded w.r.t.~$\eps \in (0,\infty)$: if even the evolution for $\frac 1 \eps (A+\gamma)$ is bounded in $\eps \in (0,\infty)$ for some $\gamma > 0$, then adiabatic theory is trivial for $A$ (by Proposition~\ref{prop: triv adsatz}~(ii)), and if only the evolution for $\frac 1 \eps (A-\gamma)$ is bounded in $\eps \in (0,\infty)$ for some $\gamma > 0$, then adiabatic theory is generally impossible for $A$ 
(by Example~\ref{ex: (M,0)-stabilitaet wesentl, mit sl}).


\subsection{Some facts about quantum dynamical semigroups}  \label{sect: vorber q.d.s.}

In this section we collect some basic facts about dephasing and weakly dephasing generators of quantum dynamical semigroups. A \emph{quantum dynamical semigroup} (on $S^1(\mathfrak{h})$) is, by definition, 
a strongly continuous semigroup $(\Phi_t)$ of bounded linear operators on $S^1(\frak{h})$ 
such that $\Phi_t$ for every $t \in [0,\infty)$ is trace-preserving and completely positive. 
Such semigroups naturally arise in the description of open quantum systems. See~\cite{Kraus71}, \cite{AttalJoyePillet}, \cite{AlickiFannes01}, \cite{AlickiLendi07}, for instance. 
In our application below, 
we will work, following~\cite{AvronGraf12}, with \emph{quantum dynamical semigroups on $S^p(\mathfrak{h})$} with $p \in (1,\infty)$, that is, strongly continuous semigroups $(\Phi_t)$ on $S^p(\mathfrak{h})$ such that $(\Phi_t|_{S^1(\mathfrak{h})})$ is a quantum dynamical semigroup on $S^1(\mathfrak{h})$.

\subsubsection{Weakly dephasing and dephasing generators of quantum dynamical semigroups}

A relatively large class of generators of quantum dynamical semigroups is provided 
by the following theorem. All generators of quantum dynamical semigroups considered in this paper will belong to that class.

\begin{thm} \label{thm: generation result q.d.s.}
Suppose $H: D(H) \subset \mathfrak{h} \to \mathfrak{h}$ is a self-adjoint operator and $B_j$ for every $j \in J$ ($J$ an arbitrary index set) is  a bounded operator in $\mathfrak{h}$ such that
\begin{align} \label{eq: weak dephasingness cond}
\sum_{j\in J} B_jB_j^* = \sum_{j\in J} B_j^* B_j < \infty.
\end{align}
Suppose further that $p \in [1,\infty)$ and that $Z_0$ is the generator of the (weakly and hence strongly continuous) semigroup on $X = S^p(\mathfrak{h})$ defined by $e^{Z_0 t}(\rho) := e^{-iHt} \rho e^{iHt}$. Then
\begin{itemize}
\item[(i)] $D(Z_0) = \{ \rho \in S^p(\mathfrak{h}): \rho D(H) \subset D(H) \text{ and } H\rho -\rho H \subset \sigma \text{ for a } \sigma \in S^p(\mathfrak{h}) \}$ with $Z_0(\rho)$ being the unique element $\sigma$ of $S^p(\mathfrak{h})$ satisfying   $H\rho -\rho H \subset \sigma$ and, moreover, the series 
\begin{align*}
\sum_{j\in J} B_j^*B_j \rho, \quad  \sum_{j\in J} \rho B_j^*B_j, \quad \sum_{j\in J} B_j  \rho B_j^*
\end{align*}
converge in the norm of $S^p(\mathfrak{h})$ for every $\rho \in S^p(\mathfrak{h})$ and define bounded linear operators from $S^p(\mathfrak{h})$ to $S^p(\mathfrak{h})$
\item[(ii)] the operator $A: D(Z_0) \subset X \to X$ defined by
\begin{align} \label{eq: weakly dephas generator}
A(\rho) := Z_0(\rho) + \sum_{j\in J} B_j  \rho B_j^* - 1/2 \, \{B_j^*B_j,\rho\} \qquad (\rho \in D(Z_0))
\end{align}
is the generator of a quantum dynamical semigroup on $X = S^p(\mathfrak{h})$.
\end{itemize}
\end{thm}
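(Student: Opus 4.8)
The plan is to establish the two assertions of Theorem~\ref{thm: generation result q.d.s.} more or less in the order in which they are stated, building on the standard generation theory for quantum dynamical semigroups (Lindblad, Davies, and in the $S^p$-setting the arguments of~\cite{AvronGraf12}). First I would deal with part~(i). The description of $D(Z_0)$ and the fact that $Z_0$ generates the semigroup $e^{Z_0 t}(\rho) = e^{-iHt}\rho e^{iHt}$ is a routine consequence of Stone's theorem applied on both sides: $U(t):=e^{-iHt}$ is a strongly continuous unitary group, the maps $\rho \mapsto U(t)\rho$ and $\rho\mapsto \rho U(-t)$ are commuting strongly continuous (isometric) one-parameter groups on $S^p(\mathfrak h)$ for $p\in[1,\infty)$, hence so is their composition $e^{Z_0 t}$, and one identifies its generator as the (closure of the) commutator $\rho\mapsto H\rho-\rho H$ on the indicated domain via the usual difference-quotient computation together with a core argument (finite-rank operators built from $D(H)$ form a core). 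The convergence of the three series $\sum_j B_j^*B_j\rho$, $\sum_j \rho B_j^*B_j$, $\sum_j B_j\rho B_j^*$ in $S^p$-norm and the boundedness of the resulting operators I would get from the finiteness hypothesis~\eqref{eq: weak dephasingness cond}: writing $C:=\sum_j B_j^*B_j=\sum_j B_jB_j^*\in L(\mathfrak h)$, the operators $\rho\mapsto C\rho$ and $\rho\mapsto\rho C$ are obviously bounded on $S^p$; for the middle series one uses a Hölder/interpolation estimate for Schatten norms, namely $\norm{\sum_{j\in F}B_j\rho B_j^*}_{S^p}\le \norm{(\sum_{j\in F}B_jB_j^*)^{1/2}}\,\norm{\rho}_{S^p}\,\norm{(\sum_{j\in F}B_j^*B_j)^{1/2}}$ for finite $F$ (this is the Schatten-space version of the Kadison--Schwarz / Stinespring bound for the completely positive map $\rho\mapsto\sum_j B_j\rho B_j^*$), so the partial sums are Cauchy and the limit is a bounded, completely positive operator of norm $\le\norm{C}$.

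For part~(ii) the operator $A = Z_0 + \Psi - \tfrac12(L+R)$, where $\Psi(\rho):=\sum_j B_j\rho B_j^*$, $L(\rho):=C\rho$, $R(\rho):=\rho C$, is $Z_0$ plus a bounded operator, so by the bounded perturbation theorem it generates a strongly continuous semigroup $(\Phi_t)$ on $X=S^p(\mathfrak h)$; this is immediate. The substantive content is that the restriction $(\Phi_t|_{S^1(\mathfrak h)})$ is a quantum dynamical semigroup on $S^1$, i.e.\ trace-preserving and completely positive. Trace preservation I would check at the generator level: for $\rho$ in a suitable core inside $S^1(\mathfrak h)$ one has $\tr(A\rho) = \tr(H\rho-\rho H) + \sum_j\tr(B_j\rho B_j^*) - \tfrac12\tr(\{C,\rho\}) = 0 + \sum_j\tr(B_j^*B_j\rho) - \tr(C\rho) = 0$ using cyclicity of the trace and the equality $\sum_j B_j^*B_j=C$, and then $\tfrac{d}{dt}\tr(\Phi_t\rho)=0$ gives $\tr\Phi_t\rho=\tr\rho$, extended to all of $S^1$ by density and continuity. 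Complete positivity I would obtain from the standard structure theorem: $A$ restricted to $S^1(\mathfrak h)$ is (a possibly unbounded version of) a generator in GKLS/Lindblad form, and the Dyson/Trotter expansion $\Phi_t = e^{(Z_0-\frac12(L+R))t} + \int_0^t e^{(Z_0-\frac12(L+R))(t-s)}\Psi\,\Phi_s\,ds$ exhibits $\Phi_t$ as a norm-convergent series of compositions of completely positive maps --- indeed $e^{(Z_0-\frac12(L+R))t}(\rho)=e^{-iHt}e^{-Ct/2}\rho\, e^{-Ct/2}e^{iHt}$ is completely positive (it is a "sandwich" map), $\Psi$ is completely positive by Stinespring, and composition and norm limits of completely positive maps are completely positive.

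I would organize the Dyson-series step carefully since it does double duty: it simultaneously proves strong continuity of $(\Phi_t)$ on $S^p$ (already known from perturbation theory), complete positivity (via the series of CP maps), and --- together with the generator-level trace computation --- trace preservation; and it shows the $S^1$- and $S^p$-semigroups are consistent, which is what the definition of a quantum dynamical semigroup on $S^p(\mathfrak h)$ in Section~\ref{sect: vorber q.d.s.} requires. The main obstacle, and the only place requiring real care rather than bookkeeping, is the $p$-dependent Schatten-norm estimate on the completely positive map $\Psi(\rho)=\sum_j B_j\rho B_j^*$ and the justification that it defines a bounded (and, on $S^1$, trace-non-increasing in the appropriate sense) operator when $J$ is an arbitrary, possibly uncountable, index set: one must argue that $\sum_{j\in J}B_jB_j^*$ converging in $L(\mathfrak h)$ (in the sense that its net of finite partial sums is bounded and increasing, hence strongly convergent) forces the net $(\sum_{j\in F}B_j\rho B_j^*)_F$ to converge in $S^p$-norm uniformly for $\rho$ in bounded sets, via the interpolation bound above together with $\norm{\cdot}\le\norm{\cdot}_{S^p}$ reasoning and Hölder's inequality for traces. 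Everything else --- Stone's theorem, the bounded perturbation theorem, cyclicity of the trace, the Dyson expansion, stability of complete positivity under composition and limits --- is standard and I would cite it rather than reprove it.
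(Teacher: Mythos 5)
Your route is structurally different from the paper's: you construct $(\Phi_t)$ directly on $S^p$ via bounded perturbation of $Z_0$ and extract complete positivity and trace preservation from a Dyson expansion, whereas the paper settles $p=1$ first (via Lindblad's characterization, cf.\ \cite{diss}) and then transfers to $p\neq 1$ by Calder\'on--Lions complex interpolation. Both routes, though, hinge on the same nontrivial analytic input, namely the $S^p$-bound $\norm{\Psi(\rho)}_{S^p}\le\norm{\sum_j B_jB_j^*}^{1/2}\norm{\sum_j B_j^*B_j}^{1/2}\norm{\rho}_{S^p}$ for the Kraus map $\Psi$, and this is exactly where your proposal has a genuine gap: you assert it as a ``H\"older/interpolation estimate'' without proof, yet it is \emph{not} a H\"older inequality --- for $p\in(1,\infty)$ it is precisely Calder\'on--Lions interpolation between the elementary endpoints $p=1$ and $p=\infty$, and, as the paper points out, this is also the only step that actually needs the weak dephasingness \emph{equality} $\sum_j B_jB_j^*=\sum_j B_j^*B_j$ rather than merely $\sum_j B_j^*B_j<\infty$. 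So you have not circumvented the interpolation; you have merely hidden it inside an unproved inequality.

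Two further points. Your claim that the estimate forces $(\sum_{j\in F}B_j\rho B_j^*)_F$ to converge in $S^p$-norm \emph{uniformly on bounded sets of $\rho$} is false for uncountable $J$: e.g.\ for uncountably many mutually orthogonal rank-one projections on a non-separable $\mathfrak{h}$ the tails $\sum_{j\in F'\setminus F}B_jB_j^*$ have operator norm $1$, so the right-hand side of your tail bound does not vanish. What is true and what suffices is pointwise convergence for each fixed $\rho$, which needs a separate argument (only countably many $j$ contribute since $(\ker\rho)^{\perp}$ is separable, then monotone convergence in $S^1$ for positive trace-class $\rho$ combined with equi-boundedness of the partial sums in $L(S^p)$). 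And the explicit formula $e^{(Z_0-\frac{1}{2}(L+R))t}(\rho)=e^{-iHt}e^{-Ct/2}\rho\,e^{-Ct/2}e^{iHt}$ is wrong whenever $H$ does not commute with $C:=\sum_j B_j^*B_j$, which the weakly dephasing hypothesis expressly permits; the correct $e^{Kt}\rho\, e^{K^*t}$ with $K:=-iH-\tfrac{1}{2}C$ is still a sandwich map and hence still completely positive, so the slip is reparable, but as written it conflates the weakly dephasing with the genuinely dephasing case.
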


We call an operator $A$ of the form~\eqref{eq: weakly dephas generator} in $X = S^p(\mathfrak{h})$ ($p \in [1,\infty)$) 
with $H$ and $B_j$ as in the theorem above a \emph{weakly dephasing generator of a quantum dynamical semigroup} on $X = S^p(\mathfrak{h})$ and we refer to~\eqref{eq: weak dephasingness cond} as the \emph{weak dephasingness condition}. Its precise meaning is that there is a constant $M \in [0,\infty)$ such that 
\begin{align*}
\sum_{j\in F} B_j B_j^*, \quad \sum_{j\in F} B_j^* B_j \le M
\end{align*}
for every finite subset $F$ of $J$ and that the strong limits  $\sum_{j\in J} B_j B_j^*$ and $\sum_{j\in J} B_j^* B_j$ of the nets $(\sum_{j\in F} B_j B_j^*)$ and $(\sum_{j\in F} B_j^* B_j)$ (which exist by the theorem of Vigier) coincide.
In the case $p=1$, the equality in the weak dephasingness condition~\eqref{eq: weak dephasingness cond} is actually not needed. Indeed, for $p=1$ the conclusion of the above generation result already follows under the much more general condition that $H$ and $B_j$ for $j \in J$ are self-adjoint or bounded operators in $\mathfrak{h}$, respectively, such that 
\begin{align} \label{eq: vor lindblad, p=1}
\sum_{j\in J} B_j^* B_j < \infty.
\end{align} 
See, for instance, Corollary~3.6.2 of~\cite{diss} which easily follows by Lindblad's fundamental characterization~\cite{Lindblad76} of generators of norm-continuous quantum dynamical semigroups on $S^1(\mathfrak{h})$. 
See also~\cite{Davies77} and~\cite{ChebotarevFagnola98} for considerably more general sufficient conditions for an operator to generate a quantum dynamical semigroup on $S^1(\mathfrak{h})$. 
In the case $p \ne 1$, the weak dephasingness condition~\eqref{eq: weak dephasingness cond} is used to reduce 
the conclusion of the above generation result by way of complex interpolation theory (Calder\'{o}n--Lions) to the case $p=1$. See Lemma~4.2.9 of~\cite{diss} for a detailed proof of Theorem~\ref{thm: generation result q.d.s.} and~\cite{AvronGraf12} for a proof in the special case of so-called dephasing generators $A$ with bounded $H$.
A weakly dephasing generator $A$ in $X = S^p(\mathfrak{h})$ ($p \in [1,\infty)$) 
is called \emph{dephasing} if and only if $B_j$ and hence $B_j^*$ belongs to the double commutant $\{H\}''$ of $H$ for every $j \in J$ or, for short, 
\begin{gather}
B_j, B_j^* \in \{H\}'' = \mathcal{A}'' = \ol{ \mathcal{A} } \qquad (\text{closure w.r.t.~the strong operator topology}) \label{eq: dephasingness cond} \\
\mathcal{A} := \{ f(H): f \text{\, bounded measurable function \,} \sigma(H) \to \C \}. \notag
\end{gather}
(In the first equality of~\eqref{eq: dephasingness cond}, $\{H\}' = \mathcal{A}'$ is used and in the second equality, the double commutant theorem of von Neumann is used. In case $\mathfrak{h}$ is separable,  the strong closure in~\eqref{eq: dephasingness cond} is superfluous by the theorem of Riesz--Mimura, but for non-separable $\mathfrak{h}$ it is essential (Section~X.2 of~\cite{Sz-Nagy67})). 
Since the commutativity of the $^*$-algebra $\mathcal{A}$ carries over to its  strong closure $\ol{ \mathcal{A} }$ by the density theorem of Kaplansky (for instance), 
we see that for a dephasing generator $A$ the operators $B_j$ are all normal and so the equality in the weak dephasingness condition~\eqref{eq: weak dephasingness cond} is automatically satisfied.

\subsubsection{Some important properties of dephasing and weakly dephasing generators}

In the following proposition, we collect some important properties of dephasing and weakly dephasing genertors $A$ of quantum dynamical semigroups, especially concerning the relation of $\ker A$ and $\ker Z_0$.

\begin{prop} \label{prop: properties (weakly) dephas generators}
Suppose $A$ is a weakly dephasing generator of a quantum dynamical semigroup in $X = S^p(\mathfrak{h})$ with $p \in [1,2]$.
\begin{itemize}
\item[(i)] If $\rho \in \ker A$, then $\rho$ commutes with $B_j$, $B_j^*$ for all $j \in J$ and with $H$. In particular, $\ker A \subset \{H\}' \cap S^p(\mathfrak{h}) = \ker Z_0$.
\item[(ii)] If $A$ is dephasing, then $\ker A = \ker Z_0$. Conversely, if $\ker A = \ker Z_0$ and the spectrum of $H$ is pure point, then $A$ is dephasing.
\item[(iii)] If $p=1$, then $A$ is dephasing if and only if $\ker Z_0^* \subset \ker A^*$.
\end{itemize}
\end{prop}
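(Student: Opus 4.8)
The plan is to prove the three parts more or less in the order stated, using that for $\rho \in \ker A$ the vanishing of $A(\rho)$ forces a cancellation between the ``Hamiltonian part'' $Z_0(\rho)$ and the ``dissipative part'' $\sum_j B_j \rho B_j^* - \tfrac12\{B_j^*B_j,\rho\}$, and then exploiting positivity. First, for part~(i), I would test $\langle A(\rho)\psi,\psi\rangle$ (or rather the appropriate trace pairing, since we are on $S^p$) against a suitable state. The key computational identity is that for a density matrix $\rho$ one has, for every vector $\psi \in \ker(\rho)$ say, or more robustly by taking the pairing of $A(\rho)$ with a spectral projection of $\rho$, a sum-of-squares expression $\sum_j \|B_j \rho^{1/2} - \dots\|^2$-type quantity. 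Concretely: writing $\rho = \sum_k p_k |e_k\rangle\langle e_k|$ and using that $A(\rho)=0$ together with trace-preservation to get $\operatorname{tr}(A(\rho) f) = 0$ for all bounded $f$, one chooses $f$ to be a spectral projection of $\rho$ and extracts from the dissipative part that $[B_j,\rho]=0$ for all $j$ (the standard argument: the ``decoherence-free'' elements of the kernel of a Lindbladian commute with all jump operators when the jump operators satisfy $\sum B_j B_j^* = \sum B_j^*B_j$). Once $\rho$ commutes with every $B_j$ and $B_j^*$, the dissipative part of $A(\rho)$ vanishes identically, hence $Z_0(\rho)=0$, i.e. $H\rho-\rho H \subset 0$, i.e. $\rho \in \{H\}' \cap S^p(\mathfrak h) = \ker Z_0$.

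For part~(ii), the inclusion $\ker Z_0 \subset \ker A$ when $A$ is dephasing is immediate: if $\rho$ commutes with $H$ then (by the dephasingness condition $B_j \in \{H\}''$, which in particular means $B_j$ is a norm-limit/strong-limit of functions of $H$, so $B_j$ commutes with everything in $\{H\}'$) $\rho$ commutes with all $B_j, B_j^*$, so again the dissipative part drops out and $A(\rho)=Z_0(\rho)=0$. Combined with part~(i) this gives $\ker A = \ker Z_0$. For the converse, assuming $\ker A = \ker Z_0$ and $\sigma(H)$ pure point: write $\mathfrak h = \bigoplus_\lambda \mathfrak h_\lambda$ over the eigenspaces of $H$; then $\ker Z_0$ consists exactly of the block-diagonal elements of $S^p(\mathfrak h)$ with respect to this decomposition. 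In particular every rank-one projection $|e\rangle\langle e|$ with $e$ an eigenvector of $H$, and more generally every $|e\rangle\langle f|$ with $e,f$ eigenvectors for the same eigenvalue, lies in $\ker Z_0 = \ker A$; applying part~(i) to these (or rather applying the commutation conclusion, which holds for all kernel elements) shows $B_j$ commutes with each such block, hence $B_j$ is block-diagonal for $H$; a block-diagonal operator on $\bigoplus_\lambda \mathfrak h_\lambda$ need not be a function of $H$ in general, so here one uses that $\ker A = \ker Z_0$ is being assumed for \emph{all} of $\ker Z_0$, not just the diagonal part — one tests against elements of the form $|e_\lambda\rangle\langle e_\lambda| + |f_\lambda\rangle\langle f_\lambda|$ within a fixed eigenspace and against off-diagonal-within-an-eigenspace elements to force $B_j$ to be scalar on each $\mathfrak h_\lambda$, i.e. $B_j \in \mathcal A$, which is dephasingness. (This is the step I expect to be the main obstacle: getting from ``$B_j$ block-diagonal'' to ``$B_j$ scalar on each eigenspace'' really does need the full strength of $\ker Z_0 \subset \ker A$, and one must be careful that, since $p$ may be strictly between $1$ and $2$, all the finite-rank test elements do lie in $S^p(\mathfrak h)$ — they do, trivially, being finite rank.)

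For part~(iii), with $p=1$: the forward direction follows because $\ker Z_0^*$ consists of bounded operators commuting with $H$ (dually to the description of $\ker Z_0$), and for dephasing $A$ such operators commute with all $B_j, B_j^*$, so the dual dissipative part annihilates them and $A^*$ acts as $Z_0^*$ on them, giving $\ker Z_0^* \subset \ker A^*$. For the converse, assume $\ker Z_0^* \subset \ker A^*$; the idea is to run essentially the adjoint of the argument in~(i)--(ii), now using Lindblad's characterization (invoked already in the discussion after Theorem~\ref{thm: generation result q.d.s.}) and the fact that $A^*$ is the generator of the \emph{unital} completely positive dual semigroup $(\Phi_t^*)$ on $L(\mathfrak h)$ (the Heisenberg picture), for which $\mathds 1 \in \ker A^*$ automatically. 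One tests $\langle A^*(f), \rho\rangle$ against states $\rho$ to extract a sum-of-squares identity forcing any $f \in \ker A^*$ to commute with all $B_j, B_j^*$; then from $\ker Z_0^* \subset \ker A^*$ one feeds in enough elements of $\ker Z_0^*$ (spectral projections of $H$, which are bounded and lie in $\{H\}'$, hence in $\ker Z_0^*$) to conclude that each $B_j$ commutes with every spectral projection of $H$, i.e. $B_j \in \{H\}''$, which is exactly dephasingness. Here one must note that spectral projections of $H$ are genuinely in $\ker Z_0^*$ without any pure-point assumption — $\{H\}'$ is the commutant and $\ker Z_0^* = \{H\}' $ — which is why part~(iii) does not need ``$\sigma(H)$ pure point'' the way part~(ii)'s converse does; the pure-point hypothesis in~(ii) is precisely there to make the primal space $\ker Z_0 \subset S^p$ rich enough (it is spanned by block-diagonal finite-rank pieces) to substitute for the abundance of spectral projections available on the dual side. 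Throughout, the only analytic care needed is that all series $\sum_j B_j \rho B_j^*$ etc. converge in $S^p$-norm and define bounded operators, which is guaranteed by Theorem~\ref{thm: generation result q.d.s.}(i), so the sum-of-squares manipulations are legitimate term by term.
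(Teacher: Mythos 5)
For part~(i) — the only part the paper proves in full (it defers~(ii) and~(iii) to Proposition~3.6.3 of~\cite{diss}) — your sketch has a genuine gap: you repeatedly invoke the spectral decomposition $\rho = \sum_k p_k |e_k\rangle\langle e_k|$, ``spectral projections of $\rho$'', and $\rho^{1/2}$, all of which presuppose that $\rho$ is self-adjoint or positive. But $\ker A$ is a linear subspace of $S^p(\mathfrak h)$, and a general $\rho \in \ker A$ need not be normal, so this ``standard decoherence-free-subspace argument'' does not directly apply. The paper's proof circumvents exactly this obstruction by passing to the manifestly positive operator $\rho^*\rho$ and computing, for $\rho \in \ker A$ (so $\rho^* \in \ker A$ as well),
\begin{align*}
A_{S^1}(\rho^*\rho) \;=\; A_{S^1}(\rho^*\rho) - A(\rho^*)\rho - \rho^* A(\rho) \;=\; W_{S^1}(\rho^*\rho) - W(\rho^*)\rho - \rho^* W(\rho) \;=\; \sum_{j \in J} [\rho,B_j^*]^*[\rho,B_j^*],
\end{align*}
a sum of positive operators, and then using trace preservation of the $S^1$-semigroup to force $\tr\bigl(\sum_j [\rho,B_j^*]^*[\rho,B_j^*]\bigr)=0$, hence $[\rho,B_j^*]=0$ for all $j$, and (by the same argument applied to $\rho^*$) also $[\rho,B_j]=0$. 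You mention trace-preservation and a ``sum-of-squares expression'', so the spirit of the cancellation is there, but the reduction to the positive element $\rho^*\rho$ is the indispensable step you are missing.

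You also do not identify where the hypothesis $p \in [1,2]$ enters. It is precisely in establishing that $\rho^*\rho$ belongs to $D(Z_{0\,S^1}) = D(A_{S^1})$ and that the product rule $Z_{0\,S^1}(\rho^*\rho) = Z_0(\rho^*)\rho + \rho^* Z_0(\rho)$ holds: one differentiates $t \mapsto e^{Z_0 t}(\rho^*)\,e^{Z_0 t}(\rho)$ in the $S^1$-norm, and this requires the continuous embedding $S^p(\mathfrak h) \hookrightarrow S^2(\mathfrak h)$, which is where $p \le 2$ is needed. A complete proof should make this explicit.

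Your outlines of parts~(ii) and~(iii) are along plausible lines — the forward implications are essentially the same observation (elements of $\{H\}'$ commute with $\{H\}''$, so the dissipative part annihilates them), and for the converses you are right that the point is to produce enough elements of $\ker Z_0$ (resp.\ $\ker Z_0^*$) to force $B_j$ into $\{H\}''$, and right that the pure-point hypothesis in~(ii) is there to make $\ker Z_0 \cap S^p$ rich in finite-rank test elements while on the dual side $\ker Z_0^* = \{H\}'$ already contains all spectral projections of $H$. But since the paper does not display these proofs, I will only note that, as written, your description of the converse in~(ii) conflates $\mathcal A$ with $\{H\}'' = \ol{\mathcal A}$ and is somewhat more circuitous than necessary: once part~(i) gives that each $|e_\lambda\rangle\langle f_\lambda|$ with $e_\lambda, f_\lambda$ in a fixed eigenspace $\mathfrak h_\lambda$ (these do lie in $\ker Z_0 = \ker A$) commutes with $B_j$, it follows directly that $B_j$ is block-diagonal and scalar on each block, which is exactly membership in $\{H\}''$ for pure point $H$; no further ad hoc test elements are needed.
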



\begin{proof}
We prove only part~(i) -- for the other parts we refer to Proposition~3.6.3~(ii) (the proof of which is easily seen to carry over from the case $p=1$ to $p \in [1,2]$) and to Proposition~3.6.3~(i) of~\cite{diss}.
We can write $A$ and the generator $A_{S^1}$ of the restricted semigroup in the form~\eqref{eq: weakly dephas generator} by assumption or, for brevity, in the form $A = Z_0 + W$ and $A_{S^1} = Z_{0 S^1} + W_{S^1}$, respectively. 
Choose now $\rho \in \ker A$ and fix it for the rest of the proof. Then $\rho, \rho^* \in D(Z_0) = D(A)$ with $A(\rho) = 0 = A(\rho)^* = A(\rho^*)$ and $\rho^* \rho \in D(Z_{0 S^1}) = D(A_{S^1})$ with $Z_{0 S^1}(\rho^* \rho) = Z_0(\rho^*) \rho + \rho^* Z_0(\rho)$. Indeed, 
\begin{align*}
t \mapsto e^{Z_{0 S^1}t}(\rho^* \rho) = e^{Z_0 t}(\rho^*) e^{Z_0 t}(\rho)
\end{align*}
is differentiable 
in the norm of $S^1(\mathfrak{h})$ because $S^p(\mathfrak{h})$ is continuously embedded in $S^2(\mathfrak{h})$ by virtue of $p \in [1,2]$. So,
\begin{align} \label{eq: Lambda(a*a)-Lambda(a*)a - a*Lambda(a)}
A_{S^1}(\rho^* \rho) &= A_{S^1}(\rho^* \rho) - A(\rho^*) \rho - \rho^* A(\rho)
= W_{S^1}(\rho^* \rho) - W(\rho^*) \rho - \rho^* W(\rho) \notag \\
&= \sum_{j\in J} [\rho,B_j^*]^* [\rho,B_j^*],
\end{align} 
where the last equality follows by straightforward computation using the weak dephasingness condition~\eqref{eq: weak dephasingness cond}. Since $(e^{A_{S^1}t})$ is trace-preserving and since the series in~\eqref{eq: Lambda(a*a)-Lambda(a*)a - a*Lambda(a)} converges in the norm of $S^1(\mathfrak{h})$ (Theorem~\ref{thm: generation result q.d.s.}~(i)), it follows from~\eqref{eq: Lambda(a*a)-Lambda(a*)a - a*Lambda(a)} that $0 =  \sum_{j\in J} \tr([\rho,B_j^*]^* [\rho,B_j^*])$. So we see that $\rho$ commutes with all $B_j^*$. 
Since with $\rho$ also $\rho^*$ belongs to $\ker A$, we see by the same arguments that also $\rho^*$ commutes with all $B_j^*$. 
Consequently, $W(\rho) = 0$ and thus $Z_0(\rho) = 0$ as well. 
\end{proof}

In the second implication of part~(ii) of the above proposition, the assumption that $H$ have pure point spectrum is essential. See the example below. 

\begin{lm} \label{lm: ker Z_0 endldim}
Suppose $H: D(H) \subset \frak{h} \to \frak{h}$ is self-adjoint and suppose $Z_0$ is the generator of the 
semigroup on $S^p(\frak{h})$ defined by $e^{Z_0 t}(\rho) := e^{-iH t} \rho \, e^{i Ht}$, where $p \in [1,2]$.
\begin{itemize}
\item[(i)] If $\sigma_p(H)$ is finite and each $\mu \in \sigma_p(H)$ has finite multiplicity, then $\ker Z_0 = \{H\}' \cap S^p(\frak{h})$ is finite-dimensional, 
more precisely
\begin{align*}
\ker Z_0 = \spn \big\{ \scprd{e_{\mu\,i}, \,.\,} e_{\mu\,j}: \mu \in \sigma_p(H) \text{\, and \,} i,j \in \{1, \dots, n_{\mu}\} \big\}, 
\end{align*}
where $\{ e_{\mu\,i}: i \in \{1, \dots, n_{\mu}\} \}$ is an orthonormal basis of $
\ker(H-\mu)$ for every $\mu \in \sigma_p(H)$. In particular, $\ker Z_0 = 0$ in case $\sigma_p(H) = \emptyset$.

\item[(ii)] If $\frak{h}$ is infinite-dimensional, then $\{ H \}'$ is infinite-dimensional. 
\end{itemize}
\end{lm}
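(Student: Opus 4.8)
The plan is to reduce both statements to the spectral decomposition of $H$. First I would record the identity $\ker Z_0=\{H\}'\cap S^p(\mathfrak h)$: since $Z_0(\rho)=0$ means $e^{-iHt}\rho\,e^{iHt}=\rho$ for all $t\ge0$, i.e.\ $\rho$ commutes with every $e^{itH}$, this is by Stone's theorem and the double commutant theorem the same as saying $\rho$ commutes with all bounded Borel functions of $H$, that is $\rho\in\{H\}'$. So in part~(i) everything comes down to identifying $\{H\}'\cap S^p(\mathfrak h)$. The inclusion ``$\supseteq$'' is immediate: each rank-one operator $\scprd{e_{\mu\,i},\,.\,}e_{\mu\,j}$ is of finite rank, hence lies in $S^p(\mathfrak h)$, and maps $\mathfrak h$ into $\ker(H-\mu)\subset D(H)$, so using $He_{\mu\,i}=\mu e_{\mu\,i}$ and the self-adjointness of $H$ one checks in one line that it commutes with $H$. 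For ``$\subseteq$'', let $\rho\in\{H\}'\cap S^p(\mathfrak h)$; then $\rho$ is compact, and -- because $\sigma_p(H)$ is finite -- the orthogonal projections $P_\mu$ onto $\ker(H-\mu)$ and $P_c:=1-\sum_{\mu\in\sigma_p(H)}P_\mu=\chi_{\R\setminus\sigma_p(H)}(H)$ are genuine spectral projections of $H$, hence $\rho$ commutes with each of them. Thus $\rho=\sum_{\mu\in\sigma_p(H)}P_\mu\rho P_\mu+P_c\rho P_c$ is block diagonal, and each $P_\mu\rho P_\mu$, acting on the finite-dimensional space $\ker(H-\mu)$, expands in the basis $\{e_{\mu\,i}\}$ into a linear combination of the stated rank-one operators; so everything reduces to showing $P_c\rho P_c=0$.

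This last point is the heart of the matter and, I expect, the only non-routine step. I would isolate it as the claim: a compact operator $K$ commuting with a self-adjoint operator $H_c$ having empty point spectrum must vanish -- applied here with $H_c:=H|_{\operatorname{ran}P_c}$ (which has $\sigma_p(H_c)=\emptyset$ by construction) and $K:=P_c\rho P_c$ viewed on $\operatorname{ran}P_c$. Suppose $K\ne0$. Then $K^*K$ is a nonzero compact positive operator commuting with $H_c$, hence also with $(e^{itH_c})$; its largest eigenvalue $\norm{K}^2>0$ has a finite-dimensional eigenspace $E$, and $E$ is invariant under $(e^{itH_c})$ since $K^*K$ commutes with it. Restricting gives a norm-continuous one-parameter unitary group on the finite-dimensional space $E$, whose generator is a self-adjoint operator $T$ on $E$; differentiating at $t=0$ shows $E\subset D(H_c)$ and $T=H_c|_E$, so any eigenvector of $T$ would be an eigenvector of $H_c$ -- contradicting $\sigma_p(H_c)=\emptyset$. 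Hence $P_c\rho P_c=0$, $\rho$ lies in the stated span, which has dimension at most $\sum_{\mu\in\sigma_p(H)}n_\mu^2<\infty$; and if $\sigma_p(H)=\emptyset$ then $P_c=1$, so $\rho=P_c\rho P_c=0$ and $\ker Z_0=0$.

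For part~(ii) I would use that $\{H\}''$, being generated by the single self-adjoint operator $H$, is abelian, whence $\{H\}''\subset\{H\}'$; it then suffices to exhibit infinitely many linearly independent elements of $\{H\}'$. If $\sigma(H)$ is infinite, pick infinitely many distinct points of $\sigma(H)$ together with pairwise disjoint open neighbourhoods $B_1,B_2,\dots$ of them; the $\chi_{B_k}(H)$ are then pairwise orthogonal nonzero projections lying in $\{H\}''\subset\{H\}'$, hence linearly independent. If $\sigma(H)$ is finite, then $H$ is bounded with $\mathfrak h=\bigoplus_{\mu\in\sigma(H)}\ker(H-\mu)$, and since $\mathfrak h$ is infinite-dimensional some summand $\ker(H-\mu_0)$ is infinite-dimensional; the operators $B\oplus0$ with $B\in L(\ker(H-\mu_0))$ (relative to the decomposition $\mathfrak h=\ker(H-\mu_0)\oplus\ker(H-\mu_0)^\perp$) all commute with $H$ and already form an infinite-dimensional subspace of $\{H\}'$.
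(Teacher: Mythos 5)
Your argument is correct, but it takes a genuinely different route from the paper's in both parts. For part~(i), after the (shared) identification $\ker Z_0=\{H\}'\cap S^p(\mathfrak{h})$, the paper does not decompose $\rho$ by hand: it writes $\rho=\frac 1T\int_0^T e^{-iHt}\rho\,e^{iHt}\,dt$ for $\rho\in\ker Z_0$ and invokes an ergodic/RAGE-type theorem (Theorem~5.8 of Teschl) to conclude that this time average converges strongly to $\sum_{\mu\in\sigma_p(H)}Q_{\{\mu\}}\rho\,Q_{\{\mu\}}$, which immediately places $\rho$ in the asserted span (the reverse inclusion is declared obvious there). You instead commute $\rho$ past the finitely many eigenprojections and the complementary spectral projection $P_c$, and reduce everything to the claim that a compact operator commuting with a self-adjoint operator having empty point spectrum must vanish, which you prove via the finite-dimensional top eigenspace of $K^*K$ being invariant under the unitary group and thus yielding an eigenvector of $H_c$. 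That claim is precisely the content hidden in the paper's appeal to the ergodic theorem; your version is self-contained (only the spectral theorem for compact operators and Stone's theorem), at the price of being longer, and it exploits the compactness of $S^p$-elements in the same essential way. For part~(ii), the paper's dichotomy is in terms of $\sigma_p(H)$: if it is infinite or contains an eigenvalue of infinite multiplicity it uses rank-one projections onto an infinite orthonormal system of eigenvectors, and otherwise it finds an interval $J$ with $\sigma(H)\cap J$ infinite and shows the powers $(HQ_J)^n$ are linearly independent via the spectral mapping theorem. You split instead according to whether $\sigma(H)$ is infinite (pairwise disjoint, nonzero spectral projections $\chi_{B_k}(H)$, nonzero because each $B_k$ is an open set meeting the spectrum) or finite (then $H$ is bounded, some eigenspace is infinite-dimensional, and the block operators supported on it give an infinite-dimensional subspace of $\{H\}'$); this avoids the polynomial/spectral-mapping argument and is, if anything, more elementary. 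The only step you leave implicit is that an infinite subset of $\R$ always contains infinitely many points admitting pairwise disjoint open neighbourhoods, which is routine.
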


\begin{proof}
(i) Clearly, $\ker Z_0 = \{H\}' \cap S^p(\frak{h})$. 
%
If $\rho \in \ker Z_0$, then 
\begin{align*}
\rho = 1/T \int_0^T e^{Z_0 t}(\rho) \,dt = 1/T \int_0^T e^{-iHt} \rho \, e^{iHt} \, dt \longrightarrow \sum_{\mu \in \sigma_p(H)} Q_{\{\mu\}} \rho \, Q_{\{\mu\}}
\end{align*}
w.r.t.~the strong operator topology as $T \to \infty$ (Theorem~5.8 of~\cite{Teschl09}).
Since $Q_{\{\mu\}} = \sum_{i=1}^{n_{\mu}} \scprd{e_{\mu\,i}, \,.\,} e_{\mu\,i}$, we 
thus see that for $\rho \in \ker Z_0$,
\begin{align*}
\rho = \sum_{\mu \in \sigma_p(H)} Q_{\{\mu\}} \rho \, Q_{\{\mu\}}
= \sum_{\mu \in \sigma_p(H)} \sum_{i, j = 1}^{n_{\mu}} \scprd{ e_{\mu\, j}, \rho e_{\mu\,i} } \scprd{ e_{\mu\,i}, \,.\,} e_{\mu\,j} 
\end{align*}
belongs to $\spn \{ \scprd{e_{\mu\,i}, \,.\,} e_{\mu\,j}: \mu \in \sigma_p(H) \text{\, and \,} i,j \in \{1, \dots, n_{\mu}\} \}$. We have thus proved the first of the asserted inclusions and the second inclusion is obvious. 
\smallskip

(ii) 
In the case where $\sigma_p(H)$ is infinite or 
some $\mu \in \sigma_p(H)$ has infinite multiplicity, there exists an infinite orthonormal system $\{ \phi_n: n \in \N \}$ consisting of eigenvalues of $H$ and therefore the infinite subset
\begin{align*}
\{ \rho_n: n \in \N \} \qquad (\rho_n := \scprd{\phi_n, \,.\,}\phi_n)
\end{align*}
of $\{H\}' \cap S^1(\frak{h}) \subset \{H\}'$ is linearly independent, which proves the assertion. 
In the case where $\sigma_p(H)$ is finite and every $\mu \in \sigma_p(H)$ has finite multiplicity, there exists an interval $J = [k,k+1]$ with $k \in \Z$ 
such that $\sigma(H) \cap J = \sigma(H) \cap [k,k+1]$ 
is infinite. (If this was not the case, then every spectral value $\mu \in \sigma(H)$ would be isolated in $\sigma(H)$ and would hence be an eigenvalue of $H$. Consequently, $\sigma(H) = \sigma_p(H)$ and therefore $1 = Q_{\sigma(H)} = Q_{\sigma_p(H)} = \sum_{\mu \in \sigma_p(H)} Q_{\{\mu\}}$ 
would have finite rank. Contradiction!)
We now show that the infinite subset
\begin{align*}
\{ H_{J}^n: n \in \N \} \qquad (H_{J} := H Q_{J}) 
\end{align*} 
of $\{H\}'$ is linearly independent, which proves the assertion. 
Indeed, if there was a (finite) linear combination
\begin{align*}
0 = \sum_{k=1}^n \alpha_k H_J^k = p(H_J) 
\qquad (p(\mu) := \sum_{k=1}^n \alpha_k \mu^k) 
\end{align*}
with $\alpha_1, \dots, \alpha_n \in \C$ not all equal to $0$, then the spectral mapping theorem would yield $p(\sigma(H_J)) = \sigma(p(H_J)) = \{ 0 \}$
so that $\sigma(H_J)$ and, a fortiori, $\sigma(H) \cap J$ would have to be finite. Contradiction!
\end{proof}

With this lemma at hand, we can now convince ourselves that there exist weakly dephasing generators $A$ with $\ker A = \ker Z_0$ that are non-dephasing. 

\begin{ex} \label{ex: H vert nicht mit H_0}
We choose a self-adjoint operator $H: D(H) \subset \frak{h} \to \frak{h}$ (with spectral measure denoted by $Q$) in an infinite-dimensional Hilbert space $\frak{h}$ such that $\sigma_p(H)$ is finite and every $\mu \in \sigma_p(H)$ has finite multiplicity. We also  choose
\begin{align*}
B := \sum_{\mu \in \sigma_p(H)} \beta_{\mu} Q_{\{\mu\}} + \beta \scprd{\psi, \,.\,} \psi
\end{align*} 
where $\beta_{\mu} \in \C$ and $\beta \in \C \setminus \{0\}$ and where $\psi = H\phi /\norm{H\phi}$ and $\phi \in M^{\perp} \setminus \{0\}$ with
\begin{align*}
M := Q_{\sigma_p(H)}\frak{h} = \bigoplus_{\mu \in \sigma_p(H)} \ker(H-\mu).
\end{align*}
It should be noticed that 
$H\phi \ne 0$ because otherwise $\phi$  would be an eigenvector of $H$ 
contradicting $\phi \in M^{\perp}\setminus \{0\}$. It should also be noticed that $B$ is a normal operator because $\psi = H\phi /\norm{H\phi} \in H M^{\perp} \subset M^{\perp}$. 
We now define
\begin{align*}
A(\rho) := Z_0(\rho) + B\rho B^* - 1/2 \{B^*B,\rho\} \qquad (\rho \in D(Z_0))
\end{align*}
on $X = S^p(\mathfrak{h})$ with $p \in [1,2]$, where $Z_0$ is the generator of the semigroup on $S^p(\mathfrak{h})$ defined by $e^{Z_0 t}(\rho) := e^{-iHt} \, \rho \, e^{iHt}$.  
It is then clear that $A$ is a weakly dephasing generator of a quantum dynamical semigroup on $S^p(\frak{h})$.
With the help of Proposition~\ref{prop: properties (weakly) dephas generators}~(i) and Lemma~\ref{lm: ker Z_0 endldim}~(i) it also follows that 
\begin{align*}
\ker A \subset \ker Z_0 \quad \text{and} \quad \ker Z_0 \subset \ker A.
\end{align*}
%
And finally, $H B \ne B H$,
whence $B \notin \{ H \}''$. So, $A$ is not dephasing. (In order to see 
that $H$ indeed does not commute with $B$, 
compute
\begin{align*}
H B\phi = \beta \scprd{\psi,\phi} H\psi \quad \text{and} \quad B H\phi = \beta \norm{H\phi} \psi.
\end{align*}
In case $\scprd{\psi,\phi} = 0$, it follows that $HB \phi - BH \phi = -\beta \norm{H\phi} \psi \ne 0$ because $\beta \ne 0$. 
In case $\scprd{\psi,\phi} \ne 0$, it follows that $HB \phi - BH \phi = \beta \scprd{\psi,\phi} \big( H\psi - (\norm{H\phi}/\scprd{\psi,\phi}) \, \psi \big) \ne 0$ because otherwise $\psi$ would be an eigenvector of $H$ with corresponding eigenvalue $\norm{H\phi}/\scprd{\psi,\phi}$ and would therefore 
belong to $M$. Contradiction!)
$\blacktriangleleft$
\end{ex}

\section{Adiabatic theorems with spectral gap condition} \label{sect: adsaetze mit sl}

After having provided the most important preliminaries, we now prove an adiabatic theorem with uniform spectral gap condition (Section~\ref{sect: adsatz mit glm sl}) and an adiabatic theorem with non-uniform spectral gap condition (Section~\ref{sect: adsatz mit nichtglm sl}) for general operators $A(t)$. 
In these theorems the considered spectral subsets $\sigma(t)$ are only assumed to be compact so that, even if they are singletons, they need not consist of eigenvalues: they are allowed to be singletons consisting of essential singularities of the resolvent. In~\cite{AbouSalem07}, \cite{AvronGraf12}, \cite{Joye07} the case of poles is 
treated 
and in~\cite{AbouSalem07}, \cite{AvronGraf12} they are of order $1$.

\subsection{An adiabatic theorem with uniform spectral gap condition}  \label{sect: adsatz mit glm sl}

We begin by proving an adiabatic theorem with uniform spectral gap condition by extending Abou Salem's 
proof from~\cite{AbouSalem07}, which rests upon solving a suitable commutator equation. 

\begin{thm} \label{thm: handl adsatz mit glm sl}
Suppose $A(t): D \subset X \to X$ for every $t \in I$ is a linear operator such that Condition~\ref{cond: reg 1}  is satisfied with $\omega = 0$. 
Suppose further that $\sigma(t)$ for every $t \in I$ is a compact 
subset of $\sigma(A(t))$, 
that $\sigma(\,.\,)$ at no point falls into $\sigma(A(\,.\,))\setminus \sigma(\,.\,)$, and that $t \mapsto \sigma(t)$ is continuous. And finally, for every $t \in I$, let $P(t)$ be the 
projection associated with $A(t)$ and $\sigma(t)$ and suppose that $I \ni t \mapsto P(t)$ is in $W^{2,1}_*(I,L(X))$. Then 
\begin{align*}
\sup_{t \in I} \norm{ U_{\eps}(t) - V_{\eps}(t) } = O(\eps) \quad (\eps \searrow 0),
\end{align*}
where $U_{\eps}$ and $V_{\eps}$ are the evolution systems for $\frac 1 \eps A$ and $\frac 1 \eps A + [P',P]$.
\end{thm}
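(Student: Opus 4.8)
The plan is to adapt Abou Salem's commutator-equation strategy. The core is to find, for each $t$, a bounded operator $B(t)$ solving the commutator equation
\begin{align*}
B(t)A(t) - A(t)B(t) \subset [P'(t), P(t)]
\end{align*}
with appropriate regularity in $t$. Under the uniform spectral gap hypothesis the decomposition of $X$ into $P(t)X$ and $(1-P(t))X$ reduces $A(t)$ to $A_1(t) := A(t)|_{P(t)D}$ (bounded, with spectrum $\sigma(t)$) and $A_2(t) := A(t)|_{(1-P(t))D}$ (with spectrum $\sigma(A(t))\setminus\sigma(t)$, uniformly separated from $\sigma(t)$). Writing $[P',P] = P'P + PP' - 2PP'P = (1-P)P'P - P P'(1-P)$ using $PP'P=0$, one sees $[P',P]$ is block-off-diagonal, so the commutator equation splits into two Sylvester equations between $A_1$ and $A_2$. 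The standard resolvent-integral (Rosenblum) formula
\begin{align*}
B(t) := \frac{1}{2\pi i}\int_{\gamma_t} (z - A(t))^{-1} [P'(t),P(t)] (z - A(t))^{-1}\, dz,
\end{align*}
with $\gamma_t$ a contour separating $\sigma(t)$ from the rest of $\sigma(A(t))$, gives a bounded solution; uniformity of the gap plus continuity of $\sigma(\,.\,)$ and of $t\mapsto A(t), P(t)$ lets one choose the contours locally uniformly and conclude $\sup_t \norm{B(t)} < \infty$ together with $t \mapsto B(t)$ being $W^{1,1}_*$-regular (here $W^{2,1}_*$-regularity of $P$, hence $W^{1,1}_*$-regularity of $[P',P]$, feeds in, via Lemma~\ref{lm: prod- und inversenregel} applied to products of resolvents and $[P',P]$).

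With $B$ in hand, the plan is the usual interpolation/variation-of-constants argument. Using Lemma~\ref{lm: zeitentw rechtsseit db} applied to $s \mapsto U_\eps(t,s) V_\eps(s)x$ for $x \in D$, one gets
\begin{align*}
V_\eps(t)x - U_\eps(t)x = \int_0^t U_\eps(t,s)\,[P'(s),P(s)]\,V_\eps(s)x\, ds.
\end{align*}
Replace $[P',P]$ by $\tfrac{1}{\eps}(B A - A B)$: the term $\tfrac{1}{\eps} U_\eps(t,s)(B(s)A(s) - A(s)B(s))V_\eps(s)x$ is, by Lemma~\ref{lm: zeitentw rechtsseit db} and the product rule, the $s$-derivative of $U_\eps(t,s) B(s) V_\eps(s) x$ up to lower-order terms, namely
\begin{align*}
\frac{d}{ds}\big(U_\eps(t,s) B(s) V_\eps(s)x\big) = U_\eps(t,s)\Big(B'(s) + \tfrac{1}{\eps}(B(s)A(s)-A(s)B(s)) + B(s)[P'(s),P(s)]\Big) V_\eps(s)x.
\end{align*}
Integrating from $0$ to $t$, the $\tfrac1\eps$-term is captured as a boundary term $\eps\big(U_\eps(t,t)B(t)V_\eps(t) - U_\eps(t,0)B(0)V_\eps(0)\big)x$, and one is left with
\begin{align*}
V_\eps(t)x - U_\eps(t)x = \eps\big(B(t)V_\eps(t) - U_\eps(t,0)B(0)\big)x - \eps\int_0^t U_\eps(t,s)\big(B'(s) + B(s)[P'(s),P(s)]\big)V_\eps(s)x\, ds.
\end{align*}

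To get the $O(\eps)$ bound uniformly in $t$, I need: (a) $\sup_t \norm{U_\eps(t,s)} \le M$ uniformly in $\eps$, which holds by Condition~\ref{cond: reg 1} with $\omega = 0$; (b) $\sup_{\eps, s}\norm{V_\eps(s)} < \infty$, which follows from Lemma~\ref{lm:stoersatz (M,omega)-stab} (the family $\tfrac1\eps A + [P',P]$ is a bounded perturbation of the $(M,0)$-stable $\tfrac1\eps A$, but the perturbation bound $\sup_s\norm{[P'(s),P(s)]}$ is $\eps$-independent, giving $(M, M c)$-stability, hence $\norm{V_\eps(s)}\le Me^{Mcs}\le Me^{Mc}$ for $s \in I$), together with $W^{1,1}_*$-regularity of $\tfrac1\eps A + [P',P]$ over $Y$ to invoke Kato's theorem for existence of $V_\eps$; (c) boundedness of $\norm{B(t)}$, $\norm{B'(t)}$ (the latter only as an $L^1$-majorant) and $\norm{[P'(s),P(s)]}$. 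Given these, the right-hand side is bounded by $\eps$ times a constant independent of $\eps$ and $t$, and density of $D$ in $X$ extends the estimate to all of $X$.

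The main obstacle is establishing the regularity and uniform boundedness of the solution $B(t)$ of the commutator equation: one must verify that the Riesz-type contours $\gamma_t$ can be chosen to vary measurably/nicely in $t$ and stay uniformly bounded away from both $\sigma(t)$ and $\sigma(A(t))\setminus\sigma(t)$ — this is exactly where the hypotheses "$\sigma(\,.\,)$ at no point falls into $\sigma(A(\,.\,))\setminus\sigma(\,.\,)$" (i.e.\ uniformity of the gap) and continuity of $t\mapsto\sigma(t)$ are used — and then that $t\mapsto B(t)$ inherits $W^{1,1}_*$-regularity from the $W^{1,1}_*$-regularity of $t\mapsto A(t)$ (on $Y$), $t\mapsto(z-A(t))^{-1}$ (Lemma~\ref{lm: prod- und inversenregel}(ii)), and $t\mapsto[P'(t),P(t)]$ (which needs $W^{2,1}_*$-regularity of $P$), uniformly in $z$ on the contour. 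The variation-of-constants manipulation itself is routine once Lemma~\ref{lm: zeitentw rechtsseit db} and the product rule are granted; the delicate point is purely the construction and control of $B$.
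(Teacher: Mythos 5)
Your overall route is the same as the paper's: define $B(t)$ by a double-resolvent contour integral so that it solves the commutator equation $B(t)A(t)-A(t)B(t)\subset[P'(t),P(t)]$, then integrate by parts in the identity $V_\eps(t)x-U_\eps(t)x=\int_0^t U_\eps(t,s)[P'(s),P(s)]V_\eps(s)x\,ds$ to extract a factor of $\eps$, and control everything using the $\eps$-uniform bounds on $U_\eps$, $V_\eps$ together with $W^{1,1}_*(I,L(X,Y))$-regularity of $B$. The block (Sylvester-equation) reduction is a reasonable alternative motivation for the contour formula.

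However, your stated formula for $B(t)$ does not solve the needed commutator equation. For a bounded $C(t)$, the operator $\tilde B(t):=\tfrac{1}{2\pi i}\int_{\gamma_t}(z-A(t))^{-1}C(t)(z-A(t))^{-1}\,dz$ satisfies $\tilde B(t)A(t)-A(t)\tilde B(t)\subset C(t)P(t)-P(t)C(t)$, because on $D$
\begin{align*}
(z-A)^{-1}C(z-A)^{-1}A - A(z-A)^{-1}C(z-A)^{-1} = C(z-A)^{-1}-(z-A)^{-1}C
\end{align*}
and $\tfrac{1}{2\pi i}\int_{\gamma_t}(z-A(t))^{-1}\,dz = P(t)$. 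With your choice $C=[P',P]$ this gives $[[P',P],P] = [P',P]P-P[P',P] = P'P+PP' = P'$, which is \emph{not} $[P',P]$: indeed $P'=(1-P)P'P+PP'(1-P)$ while $[P',P]=(1-P)P'P-PP'(1-P)$, so the two differ by a sign on the block mapping $(1-P)X$ into $PX$. (Relatedly, your identity $[P',P]=P'P+PP'-2PP'P$ is wrong — that right-hand side equals $P'$ — although the block form $(1-P)P'P-PP'(1-P)$ you write next is correct.) The correct choice, and the one the paper uses, is $C=P'$:
\begin{align*}
B(t)=\frac{1}{2\pi i}\int_{\gamma_t}(z-A(t))^{-1}P'(t)(z-A(t))^{-1}\,dz
\end{align*}
does satisfy $B(t)A(t)-A(t)B(t)\subset[P'(t),P(t)]$, and this is exactly what your Sylvester reduction produces once you track the sign flip in the Rosenblum formula for the block whose source spectrum $\sigma(A_2(t))$ lies outside the contour while its target spectrum $\sigma(A_1(t))$ lies inside. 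With $C=P'$ the remainder of your argument — the regularity discussion for $B$ via Lemma~\ref{lm: prod- und inversenregel} and the integration by parts — matches the paper.
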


\begin{proof}
Since $\sigma(\,.\,)$ is uniformly isolated in $\sigma(A(\,.\,)) \setminus \sigma(\,.\,)$ and $t \mapsto \sigma(t)$ is continuous, 
there is, for every $t_0 \in I$, a non-trivial closed interval $J_{t_0} \subset I$ containing $t_0$ and a cycle $\gamma_{t_0}$ in $\rho(A(t_0))$ such that $\ran \gamma_{t_0} \subset \rho(A(t))$ and 
\begin{align*}
\operatorname{n}(\gamma_{t_0}, \sigma(t)) = 1 \quad \text{and} \quad \operatorname{n}(\gamma_{t_0}, \sigma(A(t)) \setminus \sigma(t)) = 0 
\end{align*}
for all $t \in J_{t_0}$.
We can now define 
\begin{align*}
 B(t)x := \frac{1}{2 \pi i} \int_{\gamma_{t_0}} (z-A(t))^{-1} P'(t) (z-A(t))^{-1} x \, dz
\end{align*}
for all $t \in J_{t_0}$, $t_0 \in I$ and $x \in X$. Since $\rho(A(t)) \ni z \mapsto (z-A(t))^{-1} P'(t) (z-A(t))^{-1} x$ is a holomorphic $X$-valued map (for all $x \in X$) and since the cycles $\gamma_{t_0}$ and $\gamma_{t_0'}$ are 
homologous in $\rho(A(t))$ whenever $t$ lies both in $J_{t_0}$ and in $J_{t_0'}$, the path integral exists in $X$ 
and does not depend on the special choice of $t_0 \in I$ with the property that $t \in J_{t_0}$. In other words, $t \mapsto B(t)$ is 
well-defined on $I$.
\smallskip

As a first preparatory step, we easily infer from the closedness of $A(t)$ that $B(t)X \subset D(A(t)) = D = Y$ and that
\begin{align} \label{eq: commutator equation}
 B(t) A(t) - A(t) B(t) \subset [P'(t),P(t)]
\end{align}
for all $t \in I$, which commutator equation will be essential in the main part of the proof.
As a second preparatory step, we show that $t \mapsto B(t)$ is in $W^{1,1}_*(I,L(X,Y))$, which is not very surprising (albeit a bit technical). It suffices to show that $J_{t_0} \ni t \mapsto B(t)$ is in $W^{1,1}_*(J_{t_0},L(X,Y))$ for every $t_0 \in I$. We therefore fix $t_0 \in I$. 
Since $\rho(A(t)) \ni z \mapsto (z-A(t))^{-1}$ is continuous w.r.t.~the norm of $L(X,Y)$ for every $t \in J_{t_0}$, 
we see that $B(t)$ is in $L(X,Y)$ for every $t \in J_{t_0}$. 
We also see, by virtue of Lemma~\ref{lm: prod- und inversenregel}, that for every $z \in \ran \gamma_{t_0}$ the map $t \mapsto (z-A(t))^{-1} P'(t) (z-A(t))^{-1}$ is in $W^{1,1}_*(J_{t_0},L(X,Y))$ and $t \mapsto C(t,z) = C_1(t,z) + C_2(t,z) + C_3(t,z)$ is a $W^{1,1}_*$-derivative of it, 
where
\begin{align} \label{eq: W^{1,infty}-abl des integranden}
C_1(t,z) &= (z-A(t))^{-1} A'(t) (z-A(t))^{-1} P'(t) (z-A(t))^{-1}, \notag \\
&C_2(t,z) = (z-A(t))^{-1} P''(t) (z-A(t))^{-1}, \\
C_3(t,z) &= (z-A(t))^{-1} P'(t) (z-A(t))^{-1} A'(t) (z-A(t))^{-1}, \notag
\end{align}
and $A'$, $P''$ are arbitrary $W^{1,1}_*$-derivatives of $A$ and $P'$. 
Since $t \mapsto C(t,z)$ is strongly measurable for all $z \in \ran \gamma_{t_0}$, it follows that 
\begin{align*}
t \mapsto \frac{1}{2 \pi i} \int_{\gamma_{t_0}} C(t, z) \, dz
\end{align*}
is strongly measurable as well (as the strong limit of Riemann sums), and since $J_{t_0} \times \ran \gamma_{t_0} \ni (t,z) \mapsto (z-A(t))^{-1}$ is continuous w.r.t.~the norm of $L(X,Y)$ and hence bounded, it follows by~\eqref{eq: W^{1,infty}-abl des integranden} that
\begin{align*} 
t \mapsto \Big\| \frac{1}{2 \pi i} \int_{\gamma_{t_0}} C(t, z) \, dz \Big\|_{X, Y}
\end{align*}
has an integrable majorant. So $t \mapsto \frac{1}{2 \pi i} \int_{\gamma_{t_0}} C(t, z) \, dz$ is in $W^{0,1}_*(J_{t_0}, L(X,Y))$ 
and 
one easily concludes that
\begin{align*}
 B(t)x = B(t_0)x + \int_{t_0}^t \frac{1}{2 \pi i} \int_{\gamma_{t_0}} C(\tau, z)x \, dz \, d\tau
\end{align*}
for all $t \in J_{t_0}$ and $x \in X$, as desired.
\smallskip

After these preparations we can now turn to the main part of the proof. 
We fix $x \in D$ and let $V_{\eps}$ denote the evolution system for $\frac 1 \eps A + [P',P]$ (which really exists due to the well-posedness theorem recalled after Condition~\ref{cond: reg 1}). Then $s \mapsto U_{\eps}(t,s) V_{\eps}(s)x$ is continuously differentiable (Lemma~\ref{lm: zeitentw rechtsseit db}) 
and we get, exploiting the commutator equation~\eqref{eq: commutator equation} for $A$ and $B$, that
\begin{align} \label{eq:ad thm with sg, 1}
V_{\eps}(t)x - U_{\eps}(t)x &= U_{\eps}(t,s)V_{\eps}(s)x \big|_{s=0}^{s=t} = \int_0^t U_{\eps}(t,s) [P'(s),P(s)] V_{\eps}(s)x \, ds \notag \\
&= \int_0^t U_{\eps}(t,s) \bigl( B(s)A(s) - A(s)B(s) \bigr) V_{\eps}(s)x \, ds
\end{align}
for all $t \in I$. Since for every $t \in I$ the maps $s \mapsto V_{\eps}(s) \big|_{Y}$ and $s \mapsto U_{\eps}(t,s) \big|_{Y}$ are continuously differentiable on $[0,t]$ w.r.t.~the strong operator topology of $L(Y,X)$ (Lemma~\ref{lm: zeitentw rechtsseit db}) and hence belong to $W^{1,1}_*([0,t], L(Y,X))$, 
and since $s \mapsto B(s)$ belongs to $W^{1,1}_*([0,t], L(X,Y))$, 
we can further conclude 
that $s \mapsto U_{\eps}(t,s) B(s) V_{\eps}(s)x$ is in $W^{1,1}([0,t],X)$ by Lemma~\ref{lm: prod- und inversenregel}, so that by~\eqref{eq:ad thm with sg, 1}
\begin{align} \label{eq:ad thm with sg, 2}
&V_{\eps}(t)x - U_{\eps}(t)x = \eps \int_0^t U_{\eps}(t,s) \Bigl( - \, \frac 1 \eps A(s)B(s) + B(s) \frac 1 \eps A(s) \Bigr) V_{\eps}(s)x \, ds \\
&\quad = \eps \, U_{\eps}(t,s) B(s) V_{\eps}(s)x \big|_{s=0}^{s=t} 
- \eps \int_0^t U_{\eps}(t,s) \bigl( B'(s) + B(s) [P'(s),P(s)] \bigr) V_{\eps}(s)x \, ds  \notag
\end{align}
for all $t \in I$ and $\eps \in (0,\infty)$, where $B'$ denotes an arbitrary $W^{1,1}_*$-derivative of $B$. 
Since $U_{\eps}$ and $V_{\eps}$ are both bounded above by an $\eps$-independent constant (Condition~\ref{cond: reg 1} with $\omega = 0$ and Lemma~\ref{lm:stoersatz (M,omega)-stab}), the assertion of the theorem immediately follows from~\eqref{eq:ad thm with sg, 2}.
\end{proof}

%

\subsection{An adiabatic theorem with non-uniform spectral gap condition}   \label{sect: adsatz mit nichtglm sl}

We continue by proving an adiabatic theorem with non-uniform spectral gap condition where $\sigma(\,.\,)$ falls into $\sigma(A(\,.\,)) \setminus \sigma(\,.\,)$ at countably many points that, 
in turn, accumulate at only finitely many points. 
We do so by extending Kato's 
proof from~\cite{Kato50} where finitely many eigenvalue crossings for skew-adjoint $A(t)$ are treated.

\begin{thm} \label{thm: handl adsatz mit nichtglm sl}
Suppose $A(t): D \subset X \to X$ for every $t \in I$ is a linear operator such that Condition~\ref{cond: reg 1} is satisfied with $\omega = 0$. 
Suppose further that $\sigma(t)$ for every $t \in I$ is a compact 
subset of $\sigma(A(t))$, that $\sigma(\,.\,)$ at countably many points accumulating at only finitely many points 
falls into $\sigma(A(\,.\,))\setminus \sigma(\,.\,)$, and that $I \setminus N \ni t \mapsto \sigma(t)$ is continuous, where $N$ denotes the set of 
those points where $\sigma(\,.\,)$ falls into $\sigma(A(\,.\,))\setminus \sigma(\,.\,)$. And finally, for every $t \in I \setminus N$, let $P(t)$ be the 
projection associated with $A(t)$ and $\sigma(t)$ 
and suppose that $I \setminus N \ni t \mapsto P(t)$ extends to a map (again denoted by $P$) in $W^{2,1}_*(I,L(X))$. Then 
\begin{align*}
\sup_{t \in I} \norm{ U_{\eps}(t) - V_{\eps}(t) } \longrightarrow 0 \quad (\eps \searrow 0),
\end{align*}
where $U_{\eps}$ and $V_{\eps}$ are the evolution systems for $\frac 1 \eps A$ and $\frac 1 \eps A + [P',P]$.
\end{thm}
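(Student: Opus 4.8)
The plan is to reduce the non-uniform situation to the uniform one handled by Theorem~\ref{thm: handl adsatz mit glm sl}, combined with a crude Duhamel estimate on a small ``exceptional'' region around $N$. First I would record the preliminaries. Since $P \in W^{2,1}_*(I,L(X))$, its $W^{1,1}_*$-derivative $P'$ is bounded in operator norm on $I$ (clear from the defining indefinite-integral representation), so $[P',P] = P'P - PP' \in W^{1,1}_*(I,L(X))$ by Lemma~\ref{lm: prod- und inversenregel} and $c := \sup_{s\in I}\norm{[P'(s),P(s)]} < \infty$. Hence $\frac1\eps A + [P',P]$ still satisfies the hypotheses guaranteeing well-posedness (Condition~\ref{cond: reg 1} with the stability part handled by Lemma~\ref{lm:stoersatz (M,omega)-stab}), so $V_{\eps}$ exists on $D$ with $\norm{U_{\eps}(t,s)} \le M$ and $\norm{V_{\eps}(t,s)} \le M e^{Mc} =: M'$ for all $(s,t)\in\Delta$ and all $\eps\in(0,\infty)$. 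The Duhamel identity from the proof of Proposition~\ref{prop: triv adsatz}, carried out with an arbitrary initial time $r$ in place of $0$, gives for $x\in D$
\begin{align} \label{eq: duhamel nonunif}
V_{\eps}(t,r)x - U_{\eps}(t,r)x = \int_r^t U_{\eps}(t,s)[P'(s),P(s)]V_{\eps}(s,r)x \, ds,
\end{align}
whence the crude bound $\norm{U_{\eps}(t,r) - V_{\eps}(t,r)} \le M M' c\,(t-r)$, \emph{uniformly in $\eps$}.

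Next I would establish a local version of the uniform-gap theorem: if $[a,b]\subset I$ is a compact interval on which $\sigma(\,.\,)$ at no point falls into $\sigma(A(\,.\,))\setminus\sigma(\,.\,)$ and $[a,b]\ni t\mapsto\sigma(t)$ is continuous, then $\sup_{(r,t)\in\Delta_{[a,b]}}\norm{U_{\eps}(t,r) - V_{\eps}(t,r)} = O(\eps)$ as $\eps\searrow0$, with a constant $C_{[a,b]}$ depending only on $[a,b]$ (through the positive gap $\inf_{t\in[a,b]}\dist(\sigma(t),\sigma(A(t))\setminus\sigma(t))$ and the resolvent norms near the associated cycles). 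This is proved verbatim as Theorem~\ref{thm: handl adsatz mit glm sl}: one constructs the bounded operators $B(t)$ for $t\in[a,b]$ solving $B(t)A(t)-A(t)B(t)\subset[P'(t),P(t)]$ by the Riesz formula over cycles separating $\sigma(t)$ from $\sigma(A(t))\setminus\sigma(t)$, checks $t\mapsto B(t)\in W^{1,1}_*([a,b],L(X,Y))$, and integrates by parts in \eqref{eq:ad thm with sg, 1}--\eqref{eq:ad thm with sg, 2} keeping the initial time $r$ general.

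Then comes the covering argument. Fix $\delta>0$. Since $N$ is closed with only finitely many accumulation points, I can cover $N$ by finitely many open intervals $I_1,\dots,I_k$ of total length $<\delta$ whose endpoints lie in $I\setminus N$ (cover each accumulation point by a small interval; the finitely many remaining points of $N$ by tiny intervals). Then $I\setminus\bigcup_\ell I_\ell$ is a union of at most $k+1$ compact intervals $[a_1,b_1],\dots,[a_p,b_p]$, on each of which $P$ restricts to the associated projection and, by compactness, $\sigma(\,.\,)$ at no point falls in and $t\mapsto\sigma(t)$ is continuous. Fix $t\in I$ and take partition points $0=t_0<\dots<t_n=t$ ($n\le2k+1$) so that each $[t_{j-1},t_j]$ is either $[a_i,b_i]\cap[0,t]$ (``good'') or $I_\ell\cap[0,t]$ (``bad''), the bad ones having total length $<\delta$. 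The telescoping identity
\begin{align} \label{eq: telescope nonunif}
U_{\eps}(t,0) - V_{\eps}(t,0) = \sum_{j=1}^n U_{\eps}(t,t_j)\bigl(U_{\eps}(t_j,t_{j-1}) - V_{\eps}(t_j,t_{j-1})\bigr)V_{\eps}(t_{j-1},0),
\end{align}
together with $\norm{U_{\eps}(t,t_j)}\le M$, $\norm{V_{\eps}(t_{j-1},0)}\le M'$, the local uniform-gap theorem on the at most $k+1$ good intervals (constant $C_\delta := \max_i C_{[a_i,b_i]}$, independent of $t$ and $\eps$), and the crude bound $M M' c\,(t_j-t_{j-1})$ on the bad ones, yields
\begin{align*}
\sup_{t\in I}\norm{U_{\eps}(t) - V_{\eps}(t)} \le M M'(k+1)C_\delta\,\eps + (M M')^2 c\,\delta.
\end{align*}
Hence $\limsup_{\eps\searrow0}\sup_{t\in I}\norm{U_{\eps}(t)-V_{\eps}(t)}\le(M M')^2 c\,\delta$, and letting $\delta\searrow0$ finishes the proof.

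The main point — rather than a genuine obstacle — is the realization that on the exceptional region no adiabatic information is needed at all: the Duhamel bound \eqref{eq: duhamel nonunif} is linear in the length of the interval and uniform in $\eps$, so shrinking the total length of the cover of $N$ makes that contribution small. It is precisely here that the hypothesis ``$N$ accumulates at only finitely many points'' enters, guaranteeing a cover by \emph{finitely many} intervals of arbitrarily small total length, and hence only finitely many good intervals, on each of which the genuine $O(\eps)$ estimate holds with a $\delta$-dependent but $\eps$-independent constant. The remaining items — the local version of Theorem~\ref{thm: handl adsatz mit glm sl} and the partition bookkeeping — are routine.
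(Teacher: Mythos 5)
Your proof is correct and follows essentially the same strategy as the paper's: enclose the exceptional set $N$ in finitely many intervals of small total length, use the crude $\eps$-uniform Duhamel bound there, apply the uniform-gap theorem (with general initial time) on the remaining compact intervals, and combine everything via the evolution-system product property. The only difference is organizational: you cover all of $N$ — accumulation points and isolated points alike — in a single covering step so that the good intervals are entirely crossing-free, whereas the paper first treats the case of finitely many crossings and then reduces the accumulating case to it; your version slightly streamlines that bookkeeping without changing the argument.
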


\begin{proof}
We first prove the assertion in the case where $\sigma(\,.\,)$ at only finitely many points $t_1, \dots, t_m$ (ordered in an increasing way) 
falls into $\sigma(A(\,.\,))\setminus \sigma(\,.\,)$. So let $\eta >0$. We partition the interval $I$ as follows: 
\begin{align*}  
 I = I_{0 \, \delta} \cup J_{1 \, \delta} \cup I_{1 \, \delta} \cup \dots \cup J_{m \, \delta} \cup I_{m \, \delta},
\end{align*}
where $J_{i \, \delta}$ for $i \in \{1,\dots, m\}$ is a relatively open subinterval of $I$ containing $t_i$ of length less than $\delta$ (which will be chosen in a minute) and where $I_{0 \, \delta}$, \dots, $I_{m \, \delta}$ are the closed subintervals of $I$ lying between the subintervals
$J_{1 \, \delta}$, \dots, $J_{m \, \delta}$. In the following, we set $t_{i \, \delta}^- := \inf I_{i \, \delta}$ and $t_{i \, \delta}^+ := \sup I_{i \, \delta}$ for $i \in \{0, \dots, m\}$, and we choose $c$ so large that $\norm{P(s)}$, $\norm{P'(s)}$ and
$\norm{ [P'(s),P(s)] } \le c$ for all $s \in I$.
Since 
\begin{align*}
 \norm{ V_{\eps}(t,t_{i-1 \, \delta}^+)x - U_{\eps}(t, t_{i-1 \, \delta}^+)x } &= \norm{  \int_{t_{i-1 \, \delta}^+}^t U_{\eps}(t, s) [P'(s), P(s)] V_{\eps}(s, t_{i-1 \, \delta}^+)x \, ds  } \\
 &\le M c M e^{M c} \, \delta \norm{x}
\end{align*}
for every $t \in J_{i \, \delta}$, $x \in D$ and $\eps \in (0,\infty)$, we can achieve -- by choosing $\delta$ small enough -- that
\begin{align}  \label{eq: adsatz mit nichtglm sl 1}
\sup_{t \in J_{i \, \delta}} \norm{ V_{\eps}(t,t_{i-1 \, \delta}^+) - U_{\eps}(t, t_{i-1 \, \delta}^+) } < \frac{\eta}{ \big( 4 M^2 e^{2 Mc} \big)^m }
\end{align}
for every $\eps \in (0,\infty)$ and $i \in \{1, \dots, m\}$.
And since $\sigma(\,.\,) \big|_{I_{i \, \delta}}$ at no point falls into $\big(  \sigma(A(\,.\,))\setminus \sigma(\,.\,)  \big) \big|_{I_{i \, \delta}}$, we conclude from the above adiabatic theorem with uniform spectral gap condition 
(applied to the restricted data $A|_{I_{i \, \delta}}$, $\sigma|_{I_{i \, \delta}}$, $P|_{I_{i \, \delta}}$) 
that there is an $\eps_{\delta} \in (0, \infty)$ such that
\begin{align}   \label{eq: adsatz mit nichtglm sl 2}
 \sup_{t \in I_{i \, \delta}}  \norm{ V_{\eps}(t,t_{i \, \delta}^-) - U_{\eps}(t, t_{i \, \delta}^-) } < \frac{\eta}{ \big( 4 M^2 e^{2 Mc} \big)^m }
\end{align}
for every $\eps \in (0, \eps_{\delta})$ and $i \in \{0, \dots, m\}$.
Combining the estimates~\eqref{eq: adsatz mit nichtglm sl 1} and~\eqref{eq: adsatz mit nichtglm sl 2} and using the product property from the definition of evolution systems, we readily conclude for every $i \in \{1, \dots, m\}$ that
\begin{align*}
 \norm{V_{\eps}(t)-U_{\eps}(t)} < \frac{\eta}{ \big( 4 M^2 e^{2 Mc} \big)^{m-i} } \le \eta
\end{align*}
for all $t \in I_{i-1 \, \delta} \cup J_{i \, \delta} \cup I_{i \, \delta}$ and $\eps \in  (0,\eps_{\delta})$, and the desired conclusion follows. 
\smallskip

We now prove the assertion in the case where $\sigma(\,.\,)$ at infinitely many points accumulating at only finitely many points $t_1, \dots, t_m$ (ordered in an increasing way) falls into $\sigma(A(\,.\,))\setminus \sigma(\,.\,)$. In order to do so, we partition $I$ and choose $\delta$ as we did above.
We then obtain the estimate~\eqref{eq: adsatz mit nichtglm sl 1} as above and the estimate~\eqref{eq: adsatz mit nichtglm sl 2} 
by realizing that  $\sigma(\,.\,) \big|_{I_{i \, \delta}}$ at only finitely many points falls into $\big(  \sigma(A(\,.\,))\setminus \sigma(\,.\,)  \big) \big|_{I_{i \, \delta}}$ (so that the case just proved 
can be applied). And from these estimates 
the conclusion follows as above.
\end{proof}

It should be noticed that, in the situation of the above theorem, one has $P(t) A(t) \subset A(t) P(t)$ for every $t \in I$ (although a priori this is clear only for $t \in I \setminus N$), which follows by a continuity argument. 
(Indeed, 
if $t_0 \in I$ then it can be approximated by a sequence $(t_n)$ in $I \setminus N$. Since $t \mapsto (A(t_0)-1)(A(t)-1)^{-1}$ is norm continuous (by the $W^{1,1}_*$-regularity of $t \mapsto A(t)$ and Lemma~\ref{lm: prod- und inversenregel}), we see that for any $x \in D$ 
\begin{align} \label{eq: P(t) vertauscht mit A(t) fuer alle t}
A(t_0)P(t_n)x &= (A(t_0)-1)(A(t_n)-1)^{-1} \, P(t_n) (A(t_n)-1)x + P(t_n)x \nonumber \\
&\longrightarrow P(t_0)A(t_0)x \quad (n \to \infty)
\end{align}
and therefore $P(t_0)A(t_0) \subset A(t_0)P(t_0)$ by the closedness of $A(t_0)$.)
In particular, the evolution $V_{\eps}$ appearing in the above theorem really is adiabatic w.r.t.~to $P$ by Proposition~\ref{prop: intertwining relation}, as it should be. 



\subsection{Some remarks and examples}  \label{sect: bsp, adsaetze mit sl}

We begin with three remarks concerning the adiabatic theorems with uniform and non-uniform spectral gap condition alike.
\smallskip

1. In the special situation where $\sigma(t) = \{ \lambda(t) \}$ and $\lambda(t)$ 
is a pole of the resolvent map $(\,.\,-A(t))^{-1}$ of order at most $m_0 \in \N$ for all $t \in I$, the operators $B(t)$ -- used in the proof of the adiabatic theorems with spectral gap condition above to solve the commutator equation~\eqref{eq: commutator equation} -- can be cast in a form, 
namely~\eqref{eq: wegweiser}, 
which points the way to the solution of 
an appropriate (approximate) commutator equation 
in 
the adiabatic theorems without spectral gap condition below. 
%
%
Since $PP'P$, $\ol{P}P'\ol{P} = 0$ by~\eqref{eq: PP'P=0} (where $\ol{P} := 1-P$) and 
\begin{align*}
(z-A(t))^{-1}P(t) = \frac{1}{z-\lambda(t)} \Big( 1- \frac{A(t)-\lambda(t)}{z-\lambda(t)} \Big)^{-1} P(t) 
= \sum_{k=0}^{m_0-1} \frac{ (A(t)-\lambda(t))^k P(t)}{ (z-\lambda(t))^{k+1} } 
\end{align*}
for every $z \in \rho(A(t))$ by Theorem~5.8-A of~\cite{Taylor58}, we see that  
\begin{gather*}
B(t) = \sum_{k = 0}^{m_0-1} \frac{1}{2 \pi i} \int_{\gamma_t} \frac{\ol{R}(t,z)}{(z-\lambda(t))^{k+1}}  \, dz \,\, P'(t) (A(t)-\lambda(t))^k P(t) \qquad \qquad \\
\qquad \qquad \qquad + \sum_{k = 0}^{m_0-1} (A(t)-\lambda(t))^k P(t) P'(t) \,\, \frac{1}{2 \pi i} \int_{\gamma_t} \frac{\ol{R}(t,z)}{(z-\lambda(t))^{k+1}}  \, dz,
\end{gather*}
and since the reduced resolvent map $z \mapsto \ol{R}(t,z) := (z-A(t)|_{\ol{P}(t)D(A(t))})^{-1} \ol{P}(t)$ is holomorphic on $\rho(A(t)) \cup \{ \lambda(t) \}$, we further see -- using Cauchy's theorem -- that
\begin{align} \label{eq: wegweiser}
&B(t) = \sum_{k = 0}^{m_0-1} \ol{R}(t,\lambda(t))^{k+1} P'(t) (\lambda(t)-A(t))^k P(t)  \notag \\
&\qquad \qquad \qquad \qquad \qquad \quad + \sum_{k = 0}^{m_0-1} (\lambda(t)-A(t))^k P(t) P'(t) \ol{R}(t,\lambda(t))^{k+1}.
\end{align}

2.  In the even more special situation where $\sigma(t) = \{ \lambda(t) \} \subset i \R$ and $\lambda(t)$ is a pole of the resolvent map $(\, . \, - A(t))^{-1}$, the hypotheses of the above adiabatic theorem with uniform spectral gap condition become essentially -- apart from regularity conditions -- equivalent to the hypotheses of the respective adiabatic theorem (Theorem~9) of~\cite{AvronGraf12}, 
and a similar equivalence holds true for the above adiabatic theorem with non-uniform spectral gap condition. 
Indeed, if $\sigma(t)$ for every $t \in I$ is a singleton consisting of a pole $\lambda(t)$ on the imaginary axis, then the order $m(t)$ of nilpotence of $A(t)|_{P(t)D}-\lambda(t)$ 
must be equal to~$1$, since otherwise the relation
\begin{align}  \label{eq: ascent = 1 fuer erz beschr halbgr}
\delta \big( \lambda(t)+\delta -A(t) \big)^{-1} P(t) 
= \sum_{k=0}^{m(t)-1} \frac{  (A(t)-\lambda(t))^k }{\delta^k} \,  P(t)
\end{align}
would yield the contradiction that the right hand side of~\eqref{eq: ascent = 1 fuer erz beschr halbgr} explodes as $\delta \searrow 0$ whereas the left hand side of~\eqref{eq: ascent = 1 fuer erz beschr halbgr} remains bounded as $\delta \searrow 0$ (by virtue of the $(M,0)$-stability of $A$ and by $\lambda(t) \in i \R$). And therefore (by Theorem~5.8-A of~\cite{Taylor58}) $P(t)X = \ker(A(t)-\lambda(t))$ and $(1-P(t))X = \ran(A(t)-\lambda(t))$ 
as in~\cite{AvronGraf12}.
\smallskip

3. We finally remark 
that the above adiabatic theorems -- along with the commutator equation method used in their proofs -- 
can be extended to several subsets $\sigma_1(t)$, \dots, $\sigma_r(t)$ of $\sigma(A(t))$. 
If $A$, $\sigma_j$, $P_j$ for every $j \in \{1, \dots, r\}$ satisfy the hypotheses of the above adiabatic theorem with uniform or non-uniform spectral gap and if $\sigma_j(\,.\,)$ and $\sigma_l(\,.\,)$ for all $j \ne l$ fall into each other at only countably many points accumulating at only finitely many points, then there exists an evolution system $V_{\eps}$, namely that for $\frac 1 \eps A + K$ with
\begin{align} \label{eq: K mehrere sigma_j}
K(t) := \frac 1 2 \sum_{j=1}^{r+1} [P_j'(t),P_j(t)] \quad \text{and} \quad 
P_{r+1}(t) := 1-P(t) := 1 - \sum_{j=1}^r P_j(t),
\end{align}
which on the one hand is simultaneously adiabatic w.r.t.~all the $P_j$ by~\cite{Kato50} and on the other hand well approximates the evolution system $U_{\eps}$ for $\frac 1 \eps A$ in the sense that
\begin{align*}
\sup_{t \in I} \norm{ U_{\eps}(t) - V_{\eps}(t) } \longrightarrow 0 \quad (\eps \searrow 0).
\end{align*}
In order to see this, one has only to observe 
that $B(t) := \frac 1 2 \sum_{j=1}^{r+1} B_j(t)$ with
\begin{gather}
B_j := \frac{1}{2 \pi i} \int_{\gamma_j} (z-A)^{-1} P_j' (z-A)^{-1} \, dz \quad (j \in \{1, \dots, r\}) \notag \\
B_{r+1} :=  \frac{1}{2 \pi i} \int_{\gamma} (z-A)^{-1} P' (z-A)^{-1} \, dz  \label{eq: B_{r+1}} \\ 
\gamma := \gamma_1 + \dotsb + \gamma_r  \text{\, $(\gamma_j = \gamma_{j \, t}$ as in the proofs above) and } P := P_1 + \dotsb + P_r \notag
\end{gather}
solves the commutator equation $B(t)A(t) - A(t)B(t) \subset K(t)$ for all points $t$ where no crossing 
takes place (because $[P_{r+1}', P_{r+1}] = [P',P]$) and then to proceed as in the proofs of the adiabatic theorems above. See also~\cite{Panati10}.
In the special case of skew-adjoint operators $A(t)$ one can further refine 
the statement above: it is then possible to show -- by further adapting the commutator equation method -- 
that even the 
evolution system $\ol{V}_{\eps}$ for $\frac 1 \eps A + \ol{K}$ with
\begin{align*}
\ol{K}(t) := \frac 1 2 \Big( [(P_{r+1}^-)'(t),P_{r+1}^-(t)] + \sum_{j=1}^r [P_j'(t),P_j(t)] + [(P_{r+1}^+)'(t),P_{r+1}^+(t)] \Big)
\end{align*}
well approximates the evolution system $U_{\eps}$ for $\frac 1 \eps A$ -- notice that $\ol{V}_{\eps}$ is 
is not only adiabatic w.r.t.~$P_{r+1} = P_{r+1}^- + P_{r+1}^+$ 
but also w.r.t.~$P_{r+1}^-$ and $P_{r+1}^+$ separately, where $P_{r+1}^{\pm}(t)$ are the spectral projections of $A(t)$ corresponding to the parts $\sigma^{\pm}(t)$ of the spectrum 
which on $i \R$ are located 
below respectively above all the compact parts $\sigma_1(t)$, \dots, $\sigma_r(t)$.
In order to see this, 
set
\begin{align*}
B_{r+1 \, n}^{\pm}(t) := \frac{1}{2 \pi i} \int_{\gamma_{n \, t}^{\pm}} (z-A(t))^{-1} (P_{r+1}^{\pm})'(t) (z-A(t))^{-1} \, dz
\end{align*}
where $\gamma_{n \, t}^{\pm}(\tau) := \pm \, \tau + c^{\pm}(t)$ for $\tau \in [-n,n]$ 
with points $c^{\pm}(t) \in i \R$ lying in the gap between $\sigma^{\pm}(t)$ and the rest of $\sigma(A(t))$ 
and depending continuously differentiably on $t$, 
and observe that (by the skew-adjointness of $A(t)$)
\begin{gather*}
P_{r+1 \, n}^{\pm}(t)x := \frac{1}{2 \pi i} \int_{\gamma_{n \, t}^{\pm}} (z-A(t))^{-1}x \, dz \longrightarrow P_{r+1}^{\pm}(t)x - \frac 1 2 x \quad (n \to \infty) \\
\text{and} \\
\big\| B_{r+1 \, n}(t) \big\|, \norm{B_{r+1 \, n}'(t)} \le \int_{-\infty}^{\infty} \frac{c}{\dist\big( \gamma_{n \, t}^{\pm}(\tau) ,\sigma(A(t)) \big)^2 } \, d \tau \le C < \infty \quad (n \in \N, t \in I).
\end{gather*}  
A slightly less general 
general statement was first proven in~\cite{Nenciu80} by a different method than the commutator equation technique indicated above. 
\smallskip

We close this section 
with a simple example 
showing that the conclusion of the above adiabatic theorems 
will, in general, fail if $A$ is not $(M,0)$-stable. 

\begin{ex} \label{ex: (M,0)-stabilitaet wesentl, mit sl}
Suppose $A$, $\sigma$, $P$ with $A(t) := R(t)^{-1} A_0(t) R(t)$, $\sigma(t) := \{\lambda(t)\}$ and $P(t) := R(t)^{-1} P_0 R(t)$ are given as follows in $X := \ell^2(I_2)$:
\begin{align*}
A_0(t) := \begin{pmatrix} \lambda(t) & 0 \\ 0 & 0 \end{pmatrix}, \quad  P_0 := \begin{pmatrix} 1 & 0 \\ 0 & 1 \end{pmatrix}, \quad  R(t) := e^{C t}  \text{ \, with \, } C := 2 \pi \begin{pmatrix} 0  & 1 \\ -1 & 0 \end{pmatrix},
\end{align*}
and $t \mapsto \lambda(t) \in [0,\infty)$ is absolutely continuous such that $\lambda(\,.\,)$ at only countably many points accumulating at only finitely many points falls into $0$. 
Then all the hypotheses of Theorem~\ref{thm: handl adsatz mit nichtglm sl} 
are fullfilled with the sole exception that $A$ is not $(M,0)$-stable (because $\sigma(A(t)) = \{ 0, \lambda(t) \}$ is contained in the closed left half-plane only for countably many $t \in I$) 
and, in fact, the conclusion of this theorems 
fails. Indeed, since
\begin{align*}
R(t) = e^{C t} = \begin{pmatrix} \cos(2 \pi t) & \sin(2\pi t) \\ -\sin(2\pi t) & \cos(2\pi t) \end{pmatrix},
\end{align*}
we see that
\begin{align*}
A(t) = R(t)^{-1} A_0(t) R(t) = \lambda(t) \begin{pmatrix} \cos^2(2\pi t) & \cos(2\pi t) \sin(2 \pi t) \\ \cos(2\pi t) \sin(2 \pi t)  & \sin^2(2\pi t) \end{pmatrix}
\end{align*}
is a positive linear operator (in the lattice sense) for all $t \in [0,t_0]$ with $t_0 := \frac{1}{4}$. And since $1-P(t_0) = P_0$, we see (by the series expansion for $U_{\eps}$) 
that
\begin{align*}
&\norm{(1-P(t_0))U_{\eps}(t_0)P(0) e_1} = \big| \scprd{ e_1, U_{\eps}(t_0) e_1} \big| = \scprd{ e_1, U_{\eps}(t_0) e_1} \\
&\qquad \qquad \ge 1 + \frac 1 \eps \int_0^{t_0} \scprd{e_1, A(\tau) e_1} \, d\tau = 1 + \frac 1 \eps \int_0^{t_0} \lambda(\tau) \cos^2(2\pi \tau) \, d\tau,
\end{align*} 
which right hand side does not converge to $0$ as $\eps \searrow 0$, as desired. 
$\blacktriangleleft$
\end{ex}

An example with non-diagonalizable $A(t)$ and $\sigma(A(t)) = \{0,i\}$ showing as well 
that the conclusion of the above adiabatic theorems 
will generally fail if the family $A$ is not $(M,0)$-stable 
can be found in Joye's paper~\cite{Joye07} at the end of Section~1.

\section{Adiabatic theorems without spectral gap condition} \label{sect: adsaetze ohne sl}

After having established 
general adiabatic theorems with spectral gap condition in Section~\ref{sect: adsaetze mit sl}, 
we can 
now prove 
an adiabatic theorem without spectral gap condition for general operators $A(t)$ with not necessarily weakly semisimple spectral values $\lambda(t)$. In Section~\ref{sect: qual adsatz ohne sl} it appears in a qualitative version 
and in Section~\ref{sect: quant adsatz ohne sl} in a quantitatively refined version, 
and both versions are applied to the special case of spectral operators.
%
We thereby generalize the recent adiabatic theorems without spectral gap condition of Avron, Fraas, Graf, Grech from~\cite{AvronGraf12} and of Schmid from~\cite{dipl}, which -- although independently obtained -- are essentially the same (except for some regularity subtleties). 

\subsection{A qualitative adiabatic theorem without spectral gap condition}  \label{sect: qual adsatz ohne sl}

We begin with a lemma that will be crucial in the proofs of the presented adiabatic theorems without spectral gap condition.
%

\begin{lm} \label{lm: lm 1 zum erw adsatz ohne sl}
Suppose that $A: D(A) \subset X \to X$ is a densely defined closed linear operator 
and that $\lambda \in \sigma(A)$ and $\delta_0 \in (0,\infty)$ and $\vartheta_0 \in \R$ such that $\lambda + \delta e^{i \vartheta_0} \in \rho(A)$ for all $\delta \in (0,\delta_0]$. Suppose finally that $P$ is a bounded projection in $X$ such that $P A \subset A P$ and
\begin{align*} 
(1-P)X \subset \ol{\ran}\,(A-\lambda)^{m_0}
\end{align*}
for some $m_0 \in \N$, and that there is $M_0 \in (0,\infty)$ such that
\begin{align} \label{eq: resolvent estimate, lm}
\norm{ \big( \lambda + \delta e^{i \vartheta_0} - A \big)^{-1} (1-P) } \le \frac{M_0}{\delta}
\end{align}
for all $\delta \in (0, \delta_0]$. 
Then $\delta \big( \lambda + \delta e^{i \vartheta_0} - A \big)^{-1} (1-P)x \longrightarrow 0$ as $\delta \searrow 0$ for all $x \in X$. 
\end{lm}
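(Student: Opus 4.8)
The plan is a \emph{uniform-boundedness-plus-density} argument powered by the resolvent identity. First I would reduce to the case $\lambda = 0$ by passing from $A$ to $A-\lambda$: the hypotheses $PA\subset AP$, $(1-P)X\subset\ol{\ran}(A-\lambda)^{m_0}$ and the estimate~\eqref{eq: resolvent estimate, lm} are all invariant under this translation. Write $R_\delta:=(\delta e^{i\vartheta_0}-A)^{-1}$ and $T_\delta:=\delta R_\delta(1-P)$, so that~\eqref{eq: resolvent estimate, lm} says precisely $\sup_{\delta\in(0,\delta_0]}\norm{T_\delta}\le M_0$. Since $T_\delta x$ depends only on $(1-P)x$ and since, for $y$ in the \emph{closed} subspace $(1-P)X$, one has $(1-P)y=y$ and hence $\norm{\delta R_\delta y}=\norm{T_\delta y}\le M_0\norm{y}$, a routine $3\eps$-argument shows that it suffices to prove $T_\delta z\to 0$ as $\delta\searrow 0$ for all $z$ in some dense subset of $(1-P)X$.

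The dense subset I would use is $(1-P)\ran A^{m_0}$. Its density in $(1-P)X$ follows from the hypothesis $(1-P)X\subset\ol{\ran}\,A^{m_0}$ together with the continuity of $1-P$ and the closedness of $(1-P)X$: applying $1-P$ gives $(1-P)X=(1-P)\big((1-P)X\big)\subset(1-P)\ol{\ran}\,A^{m_0}\subset\ol{(1-P)\ran A^{m_0}}\subset(1-P)X$. Using $PA\subset AP$ -- which forces $1-P$ to leave each $D(A^k)$ invariant and to commute with $A^k$ there -- any such $z$ can be written $z=(1-P)A^{m_0}u=A^{m_0}w$ with $w:=(1-P)u\in(1-P)D(A^{m_0})\subset(1-P)X$.

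The heart of the matter is then to compute $\delta R_\delta A^{m_0}w$ for $w\in(1-P)D(A^{m_0})$. Iterating the elementary resolvent identity $R_\delta A v=\delta e^{i\vartheta_0}R_\delta v-v$ (valid for $v\in D(A)$) exactly $m_0$ times one finds
\begin{align*}
\delta R_\delta A^{m_0}w=\delta^{m_0}e^{im_0\vartheta_0}\,\delta R_\delta w-\sum_{k=0}^{m_0-1}\delta\,(\delta e^{i\vartheta_0})^k A^{m_0-1-k}w .
\end{align*}
The finite sum is $O(\delta)$ and hence tends to $0$; and because $w\in(1-P)X$ the first term is bounded in norm by $\delta^{m_0}M_0\norm{w}$, which tends to $0$ since $m_0\ge 1$. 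Thus $\delta R_\delta A^{m_0}w\to 0$, which together with the uniform bound $\norm{T_\delta}\le M_0$ finishes the proof.

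I do not anticipate a genuine obstacle; the only point needing a little care is that the resolvent estimate may only be applied to vectors lying in $(1-P)X$, so one must consistently keep the approximating vectors $z$, the errors $(1-P)x-z$, and the vector $w$ inside this closed subspace (on which $1-P$ is the identity). The invariance of $D(A^{m_0})$ under $1-P$ and the bookkeeping of the iterated resolvent identity are routine.
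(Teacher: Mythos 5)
Your proposal is correct and follows essentially the same strategy as the paper's proof: prove the convergence on a dense subset coming from $\ran(A-\lambda)^{m_0}$, and combine this with the uniform bound $\sup_{\delta}\norm{\delta R_\delta(1-P)}\le M_0$ in a $3\eps$-argument. The only (cosmetic) difference is algebraic: the paper writes $x=(\lambda-A)^{m_0}x_0$ and expands $(\lambda-A)^{m_0}=\bigl((\lambda+\delta e^{i\vartheta_0}-A)-\delta e^{i\vartheta_0}\bigr)^{m_0}$ via the binomial theorem, whereas you iterate the one-step resolvent identity $R_\delta A v=\delta e^{i\vartheta_0}R_\delta v-v$; both yield that $\delta R_\delta(1-P)$ applied to such vectors is a sum of $\delta^{m_0}\,\delta R_\delta(1-P)(\text{fixed vector})$ plus an $O(\delta)$ polynomial remainder. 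Your extra bookkeeping (reducing to $\lambda=0$, showing $(1-P)\ran A^{m_0}$ is dense in the closed subspace $(1-P)X$, and keeping $w\in(1-P)D(A^{m_0})$) is all sound and makes the use of the resolvent estimate on $(1-P)X$ explicit.
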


\begin{proof}
If $x \in \ran(A-\lambda)^{m_0}$, then $x = (\lambda - A)^{m_0}x_0$ for some $x_0 \in D(A^{m_0})$ and by~\eqref{eq: resolvent estimate, lm} 
\begin{align*} 
\delta \big( \lambda + \delta e^{i \vartheta_0} - A \big)^{-1} \ol{P}x 
=& \,\, \delta \big( \lambda + \delta e^{i \vartheta_0} - A \big)^{-1} \ol{P} \big(-\delta e^{i \vartheta_0} \big)^{m_0}x_0 \notag \\
&+ \delta \sum_{k=1}^{m_0} \binom{m_0}{k} \big( \lambda + \delta e^{i \vartheta_0} - A \big)^{k-1} \big(-\delta e^{i \vartheta_0} \big)^{m_0-k} \ol{P} x_0
\longrightarrow 0
\end{align*}
as $\delta \searrow 0$, where of course $\ol{P} := 1-P$.
And if $x \in X$, then $\ol{x} := \ol{P}x$ can be approximated arbitrarily well by elements $y$ of $\ran(A-\lambda)^{m_0}$ and therefore
\begin{align*}
\delta \big( \lambda + \delta e^{i \vartheta_0} - A \big)^{-1} \ol{P}x = \delta \big( \lambda + \delta e^{i \vartheta_0} - A \big)^{-1} \ol{P} (\ol{x}-y) +  \delta \big( \lambda + \delta e^{i \vartheta_0} - A \big)^{-1} \ol{P} y 
\end{align*}
can be made arbitrarily small for $\delta$ small enough by~\eqref{eq: resolvent estimate, lm} and the first step. 
\end{proof}

With this lemma at hand, we can now prove the announced 
general adiabatic theorem without spectral gap condition for not necessarily weakly semisimple eigenvalues. 
Similarly to the works~\cite{AvronElgart99} of Avron and Elgart and~\cite{Teufel01} of Teufel 
its proof rests upon solving a suitable approximate commutator equation. 
In this undertaking the insights from the special case of poles, 
especially formula~\eqref{eq: wegweiser}, will prove to be most helpful. 

\begin{thm} \label{thm: erw adsatz ohne sl}
Suppose $A(t): D \subset X \to X$ for every $t \in I$ is a linear operator such that Condition~\ref{cond: reg 1} is satisfied with $\omega = 0$. Suppose further that $\lambda(t)$ for every $t \in I$ is an eigenvalue of $A(t)$, and that there are numbers $\delta_0 \in (0,\infty)$ and $\vartheta(t) \in \R$ such that $\lambda(t) + \delta e^{i \vartheta(t)} \in \rho(A(t))$ for all $\delta \in (0,\delta_0]$ and $t \in I$ and such that $t \mapsto \lambda(t)$ and $t \mapsto e^{i \vartheta(t)}$ are absolutely continuous. 
Suppose finally that $P(t)$ for every $t \in I$ is a bounded projection in $X$ 
such that $P(t)$ for almost every $t \in I$ is weakly associated with $A(t)$ and $\lambda(t)$, suppose there is an $M_0 \in (0,\infty)$ such that 
\begin{align} \label{eq: resolvent estimate}
\norm{ \big( \lambda(t) + \delta e^{i \vartheta(t)} - A(t) \big)^{-1} (1-P(t)) } \le \frac{M_0}{\delta} 
\end{align}
for all $\delta \in (0, \delta_0]$ and $t \in I$, let $\rk P(0) < \infty$ and suppose that $t \mapsto P(t)$ is strongly continuously differentiable. 
%
\begin{itemize}
\item[(i)] If $X$ is arbitrary (not necessarily reflexive), then
\begin{align*} 
\sup_{t \in I} \norm{ \big( U_{\eps}(t) - V_{0 \, \eps}(t) \big) P(0) } \longrightarrow 0 \quad (\eps \searrow 0),
\end{align*}
where $U_{\eps}$ and $V_{0 \, \eps}$ are the evolution systems for $\frac 1 \eps A$ and $\frac 1 \eps  A P + [P',P]$ on $X$ for every $\eps \in (0,\infty)$. 
\item[(ii)] If $X$ is reflexive and $t \mapsto P(t)$ is norm continuously differentiable, then
\begin{align*}
\sup_{t \in I} \norm{ U_{\eps}(t) - V_{\eps}(t) } \longrightarrow 0 \quad (\eps \searrow 0),
\end{align*}
whenever the evolution system $V_{\eps}$ for $\frac 1 \eps A + [P',P]$ exists on $D$ for every $\eps \in (0, \infty)$.
\end{itemize}
\end{thm}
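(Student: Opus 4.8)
The plan is to deduce part~(ii) from part~(i) together with a separate argument on the complementary subspace $(1-P(0))X$. Since $A(t)P(t) = A(t)$ on $P(t)X$ and both $V_{0\,\eps}$ and $V_\eps$ are adiabatic w.r.t.~$P$ (Proposition~\ref{prop: intertwining relation}, using $P(t)A(t)\subset A(t)P(t)$, which holds for all $t$ by a continuity argument), the two evolution systems solve the same well-posed problem on the $P(\cdot)$-invariant subspace and hence $V_{0\,\eps}(t)P(0) = V_\eps(t)P(0)$ by uniqueness (Corollary~\ref{cor: zeitentw eind}); thus part~(i) already gives $\sup_t\norm{(U_\eps(t)-V_\eps(t))P(0)}\to 0$. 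As $V_\eps$ is adiabatic w.r.t.~$P$, it remains to prove $\sup_t\norm{(U_\eps(t)-V_\eps(t))(1-P(0))}\to 0$.

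For the complementary part I would start from the identity obtained by differentiating $s\mapsto U_\eps(t,s)V_\eps(s)(1-P(0))x$ (Lemma~\ref{lm: zeitentw rechtsseit db}) and using the adiabaticity of $V_\eps$ and $[P',P](1-P)=-PP'(1-P)$: for $x\in D$,
\[
(V_\eps(t)-U_\eps(t))(1-P(0))x = -\int_0^t U_\eps(t,s)\,P(s)P'(s)(1-P(s))\,V_\eps(s)(1-P(0))x\,ds ,
\]
which, by the $\eps$-uniform boundedness of $U_\eps$ and $V_\eps$ (Condition~\ref{cond: reg 1} with $\omega=0$ and Lemma~\ref{lm:stoersatz (M,omega)-stab}), extends to all $x\in X$ and $t\in I$. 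Two points matter here. First, since $t\mapsto P(t)$ is now \emph{norm} continuously differentiable, $\rk P(t)$ is locally constant, hence equal to $n := \rk P(0) < \infty$ on all of $I$; consequently $A(t)|_{P(t)X}-\lambda(t)$ is nilpotent of order $\le n$, one has $(1-P(t))X \subset \ol{\ran}(A(t)-\lambda(t))^n$, and the ``$\delta$-regularized reduced resolvents'' $R_\delta(t) := (\lambda(t) + \delta e^{i\vartheta(t)} - A(t))^{-1}(1-P(t))$ satisfy $\norm{R_\delta(t)}\le M_0/\delta$ uniformly and, by Lemma~\ref{lm: lm 1 zum erw adsatz ohne sl}, $\delta R_\delta(t)y \to 0$ as $\delta\searrow 0$. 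Second, the integrand takes values in the \emph{finite-dimensional} space $P(s)X$.

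The heart of the matter is an approximate commutator equation. Guided by the closed form~\eqref{eq: wegweiser} of the exact commutator solution in the spectral-gap case, the natural candidate is a bounded operator $B_\delta(t)$ assembled from the blocks $(\lambda(t)-A(t))^k P(t)P'(t) R_\delta(t)^{k+1}$, $0\le k\le n-1$ (the nilpotent factors $(\lambda(t)-A(t))^k P(t)$ being genuinely bounded, only the resolvent factors being regularized); one checks $\norm{B_\delta(t)},\norm{B_\delta'(t)} = O(\delta^{-\nu})$ for a suitable $\nu = \nu(n)$ and that $B_\delta(t)A(t)-A(t)B_\delta(t) \subset P(t)P'(t)(1-P(t)) + E_\delta(t)$ with a residual $E_\delta(t)$ each of whose summands carries a factor $\delta R_\delta(t)$. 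Substituting this decomposition of $PP'(1-P)$ into the integral and integrating the commutator part by parts exactly as in the proof of the adiabatic theorem with uniform spectral gap (Theorem~\ref{thm: handl adsatz mit glm sl}) produces an overall prefactor $\eps$ in front of quantities of size $O(\delta^{-\nu-1})$; choosing $\delta = \delta(\eps)\searrow 0$ slowly enough that $\eps\,\delta(\eps)^{-\nu-1}\to 0$ (e.g.\ $\delta(\eps) = \eps^{1/(\nu+2)}$) kills this contribution uniformly in $t$. When the nilpotence order is $\ge 2$ a single integration by parts no longer linearizes everything, and one has to iterate the procedure, peeling off the successive generalized-eigenspace layers $\ker(A(t)-\lambda(t))^j$ one at a time — this is the new ingredient compared with the semisimple case.

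The remaining step — and, in my view, the real obstacle — is to show that the residual integral $\int_0^t U_\eps(t,s)E_\delta(s)V_\eps(s)(1-P(0))x\,ds$ tends to $0$ uniformly in $t$ as $\eps\searrow 0$ (with $\delta = \delta(\eps)$). The crude norm bound is useless, because each factor $\delta R_\delta$ tends to $0$ only strongly, not in operator norm, and is composed with operators whose norms blow up as $\delta\searrow 0$; this is precisely the difficulty that is absent in the semisimple case of~\cite{AvronElgart99}, \cite{Teufel01}, \cite{AvronGraf12}. Here reflexivity of $X$ is essential (it cannot be dispensed with, cf.\ the remark after Proposition~\ref{prop: 2nd criterion ex of w ass proj}): since every summand of $E_\delta(s)V_\eps(s)(1-P(0))$ takes values in the finite-dimensional space $P(s)X$ and the nets $\{V_\eps(s)(1-P(0))x\}_\eps$ are bounded, hence relatively weakly compact, a compactness argument over the compact interval $I$ — together with Proposition~\ref{prop: schwache assoziiertheit, dual}, which makes the setup symmetric under passing to the adjoints $A(t)^*,P(t)^*$ on $X^*$ — upgrades the pointwise decay of Lemma~\ref{lm: lm 1 zum erw adsatz ohne sl} to uniform decay of the residual integral. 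Combining this with the first three paragraphs yields $\sup_t\norm{(U_\eps(t)-V_\eps(t))(1-P(0))}\to 0$, and hence the assertion.
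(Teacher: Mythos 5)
Your reorganization into the two invariant subspaces $P(0)X$ and $(1-P(0))X$ is a legitimate alternative packaging, and the opening observation $V_{0\,\eps}(t)P(0)=V_\eps(t)P(0)$ (unique solvability of the bounded problem for $\tfrac1\eps AP+[P',P]$, then density of $D$) is correct; on $P(0)X$ the $C^-$-type residual is killed by $\ol{R}_\delta P=0$, which is exactly why part~(i) works on arbitrary $X$, while on $(1-P(0))X$ only the right-handed blocks and the $C^-$-residual survive, and you have correctly located where reflexivity and Proposition~\ref{prop: schwache assoziiertheit, dual} must enter. So the decomposition is sound.

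However, the argument on $(1-P(0))X$ has genuine gaps. (a) A scalar $\delta$ cannot close the residual estimate when $m_0\ge 2$: your $E_\delta$ contains $(\lambda-A)^kPP'\,\delta e^{i\vartheta}\ol{R}_\delta^{\,k+1}$, and the only available bound is $\norm{\ol{R}_\delta}^k\cdot\norm{PP'\,\delta\ol{R}_\delta}\lesssim\delta^{-k}\eta^-(\delta)$, where $\eta^-(\delta)\to 0$ at an \emph{unknown} rate; $\delta^{-k}\eta^-(\delta)$ is then uncontrolled for $k\ge1$. The paper resolves this with the multi-index $\bm\delta=(\delta_1,\dots,\delta_{m_0})$ and a recursively chosen hierarchy $\delta_{m_0}\ll\dots\ll\delta_1$ so that each loss $\prod_{i<k}\delta_i^{-1}$ is overtaken by $\eta^-(\delta_k)$; your ``peeling off layers'' gestures at this iteration of the telescoping sum, but with a single regularization parameter there is no mechanism to compensate the $\delta^{-k}$ blow-up. (b) The ``compactness argument over $I$'' is not a proof, and it also misdiagnoses the obstacle: because $P(s)$ has finite rank, $\norm{P(s)P'(s)\,\delta\ol{R}_\delta(s)}$ \emph{does} tend to $0$ in operator norm for each fixed $s$ (take adjoints and apply Lemma~\ref{lm: lm 1 zum erw adsatz ohne sl} to the finitely many vectors spanning $\ran P(s)^*$, via Proposition~\ref{prop: schwache assoziiertheit, dual}); the passage to $\int_0^1\norm{\cdot}\,ds\to0$ is then plain dominated convergence, and compactness of $I$ neither helps nor is needed. (c) You never mollify $P'$: since $P'$ is merely continuous, a $B_\delta$ built directly from $P'$ is not $W^{1,1}_*$-regular, so the integration by parts that produces the prefactor $\eps$ is unavailable. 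The paper replaces $P'$ by mollifications $Q_n$ (whence $B_{n,\bm\delta}$ is $W^{1,1}_*$) and disposes of the error $[P'-Q_n,P]$ by a separate dominated-convergence limit in $n$, uniform over $\eps$.
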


\begin{proof}
We begin with some preparations which will be 
used in the proof of both assertion~(i) and assertion~(ii).
As a first preparatory step, we show that $t \mapsto P(t)$ is in $W^{1,1}_*(I,L(X,Y))$ and 
conclude that $P(t)A(t) \subset A(t)P(t)$ for every $t \in I$ and that there is an $m_0 \in \N$ such that $P(t)X \subset \ker(A(t)-\lambda(t))^{m_0}$ for every $t \in I$.
%
%
Since $P(t)$ for almost every $t \in I$ is weakly associated with $A(t)$ and $\lambda(t)$ and since 
\begin{align*}
\dim P(t)X = \rk P(0)X < \infty
\end{align*}
for every $t \in I$ (which equality is due to the continuity of $t \mapsto P(t)$ and Lemma~VII.6.7 of~\cite{DunfordSchwartz}), there is a $t$-independent constant $m_0 \in \N$ -- for instance, $m_0 := \rk P(0)$ -- such that $P(t)$ is weakly associated of order $m_0$ 
with $A(t)$ and $\lambda(t)$ for almost every $t \in I$. 
In particular, it follows from Theorem~\ref{thm: typ mögl für PX und (1-P)X} that
\begin{align*}
P(t)X \subset \ker(A(t)-\lambda(t))^{m_0} \quad \text{and} \quad (1-P(t))X \subset \ol{\ran}\, (A(t)-\lambda(t))^{m_0} 
\end{align*} 
for almost every $t \in I$ (with exceptional set $N$).  
%
%
It now follows 
by the binomial formula that
\begin{align*}
&P(t) 
= S_{\delta}(t)^{m_0} \big( A(t)-\lambda(t)-\delta e^{i \vartheta(t)} \big)^{m_0} P(t)
= S_{\delta}(t)  \, \sum_{k=0}^{m_0-1} \binom{m_0}{k}  \big( -\delta e^{i \vartheta(t)} \big)^{m_0-k}  \cdot \notag \\ 
&\qquad \qquad \qquad \qquad \qquad \qquad \qquad \cdot S_{\delta}(t)^{m_0-1 - k} \, \big( 1 + \delta e^{i \vartheta(t)} S_{\delta}(t) \big)^k P(t) 
\end{align*}
for every $t \in I \setminus N$, where $S_{\delta}(t) := (A(t)-\lambda(t)-\delta e^{i \vartheta(t)})^{-1}$. Since both sides of this equation depend continuously on $t \in I$, the equation 
holds for every $t \in I$, and since the right-hand side belongs to $W^{1,1}_*(I,L(X,Y))$ by Lemma~\ref{lm: prod- und inversenregel}, 
we also have
\begin{align} \label{eq: gl -2, adsatz ohne sl}
(t \mapsto P(t)) \in W^{1,1}_*(I,L(X,Y)).
\end{align}
With this regularity property at hand, it is now easy to see that the inclusions 
\begin{align} 
P(t)A(t) \subset A(t)P(t) \quad \text{and} \quad 
P(t)X \subset \ker(A(t)-\lambda(t))^{m_0} \label{eq: gl 0, adsatz ohne sl}
\end{align}
also hold for $t \in N$ (while they clearly hold for $t \in I \setminus N$). 
%
In order to see that~(\ref{eq: gl 0, adsatz ohne sl}.a) holds also for $t \in N$, notice that every such $t$ is approximated by a sequence $(t_n)$ in $I \setminus N$ and hence 
\begin{gather*}
P(t_n)x \longrightarrow P(t)x, \\ 
A(t)P(t_n)x = (A(t)-A(t_n))P(t_n)x + P(t_n)A(t_n)x \longrightarrow P(t)A(t)x
\end{gather*}  
for every $x \in D(A(t)) = D$ by \eqref{eq: gl -2, adsatz ohne sl}. So, (\ref{eq: gl 0, adsatz ohne sl}.a) follows by the closedness of $A(t)$. (Alternatively, we could also have argued as in~\eqref{eq: P(t) vertauscht mit A(t) fuer alle t}.)
In order to see that~(\ref{eq: gl 0, adsatz ohne sl}.b) holds also for $t \in N$, notice that $\dim P(t)X = \rk P(0) < \infty$ and $P(t)A(t) \subset A(t)P(t)$ for every $t \in I$, so that $P(t)X = P(t)D(A(t))$ and $P(t)X \subset D(A(t)^{m_0})$ as well as 
\begin{align*}
(A(t)-\lambda(t))^{m_0} P(t) = \big( (A(t)-\lambda(t))P(t) \big)^{m_0}
\end{align*}
for every $t \in I$. So, (\ref{eq: gl 0, adsatz ohne sl}.b) follows by~\eqref{eq: gl -2, adsatz ohne sl}. 
\smallskip

As a second preparatory step, we solve -- in accordance 
with the proof of the adiabatic theorems with spectral gap condition -- 
a suitable (approximate) commutator equation. Inspired by~\eqref{eq: wegweiser}, we define the operators 
\begin{align}
&B_{n \, \bm{\delta}}(t) := \sum_{k=0}^{m_0-1} \Big(  \prod_{i=1}^{k+1} \ol{R}_{\delta_i}(t) \Big) Q_n(t) (\lambda(t)-A(t))^k P(t) \notag \\
&\qquad \qquad \qquad \qquad \qquad \quad + \sum_{k=0}^{m_0-1} (\lambda(t)-A(t))^k P(t) Q_n(t) \Big(  \prod_{i=1}^{k+1} \ol{R}_{\delta_i}(t) \Big)
\end{align}
for $n \in \N$, $\bm{\delta} := (\delta_1, \dots, \delta_{m_0}) \in (0,\delta_0]^{m_0}$ and $t \in I$, where
\begin{align*}
\ol{R}_{\delta}(t) := R_{\delta}(t) \ol{P}(t) \quad \text{with} \quad R_{\delta}(t) := \big( \lambda(t) + \delta e^{i \vartheta(t)} - A(t) \big)^{-1} \text{\, and \,\,} \ol{P}(t) := 1-P(t)
\end{align*}
for $\delta \in (0,\delta_0]$, and where 
\begin{align*}
Q_n(t) := \int_0^1 J_{1/n}(t-r) P'(r) \, dr
\end{align*}
with $(J_{1/n})$ being a standard mollifier in $C_c^{\infty}((0,1),\R)$. In other words, $Q_n$ is obtained from $P'$ by mollification, whence $t \mapsto Q_n(t)$ is strongly continuously differentiable and $Q_n(t) \longrightarrow P'(t)$ as $n \to \infty$ w.r.t.~the strong operator topology for 
$t \in (0,1)$ and
\begin{align*}
\sup \{ \norm{Q_n(t)}: t \in I, n \in \N \} \le \sup_{t \in I} \norm{P'(t)}.
\end{align*}
We now show that the operators $B_{n \, \bm{\delta}}(t)$ satisfy the approximate commutator equation
\begin{align} \label{eq: appr commutator equation}
B_{n \, \bm{\delta}}(t)A(t) - A(t)B_{n \, \bm{\delta}}(t) + C_{n \, \bm{\delta}}(t) \subset [Q_n(t),P(t)]
\end{align}
with remainder terms $C_{n \, \bm{\delta}}(t)$ that 
will have to be suitably controlled below. Since
\begin{align*}
(\lambda-A) \Big(  \prod_{i=1}^{k+1} \ol{R}_{\delta_i} \Big) =  \Big( \prod_{1 \le i \le k} \ol{R}_{\delta_i} \Big) - \delta_{k+1} e^{i \vartheta} \Big( \prod_{i=1}^{k+1} \ol{R}_{\delta_i} \Big) 
\supset \Big(  \prod_{i=1}^{k+1} \ol{R}_{\delta_i} \Big) (\lambda-A)
\end{align*}
(the $t$-dependence being suppressed here and in the following lines for the sake of convenience), it follows that
\begin{align*}
(\lambda-A) B_{n \, \bm{\delta}} &= \sum_{k=0}^{m_0-1} \Big( \prod_{1 \le i \le k} \ol{R}_{\delta_i} \Big) Q_n (\lambda-A)^k P   +  \sum_{k=0}^{m_0-1} (\lambda-A)^{k+1} P Q_n \Big(  \prod_{i=1}^{k+1} \ol{R}_{\delta_i} \Big)
- C_{n \, \bm{\delta}}^+ \\
B_{n \, \bm{\delta}} (\lambda-A) &\subset \sum_{k=0}^{m_0-1} \Big(  \prod_{i=1}^{k+1} \ol{R}_{\delta_i} \Big) Q_n (\lambda-A)^{k+1} P   +  \sum_{k=0}^{m_0-1} (\lambda-A)^k P Q_n  \Big( \prod_{1 \le i \le k} \ol{R}_{\delta_i} \Big)
- C_{n \, \bm{\delta}}^-
\end{align*}
where we used the abbreviations
\begin{align}
&C_{n \, \bm{\delta}}^+ := \sum_{k=0}^{m_0-1} \delta_{k+1} e^{i \vartheta} \Big( \prod_{i=1}^{k+1} \ol{R}_{\delta_i} \Big) Q_n (\lambda-A)^k P,  \notag \\
&\qquad \qquad \qquad \qquad \qquad C_{n \, \bm{\delta}}^- := \sum_{k=0}^{m_0-1} (\lambda-A)^k P Q_n \, \delta_{k+1} e^{i \vartheta} \Big( \prod_{i=1}^{k+1} \ol{R}_{\delta_i} \Big).
\end{align}
Subtracting $B_{n \, \bm{\delta}} (\lambda-A)$ from $(\lambda-A)B_{n \, \bm{\delta}}$ and noticing that, by doing so, of all the summands not 
belonging to $C_{n \, \bm{\delta}}^+$, $C_{n \, \bm{\delta}}^-$ only
\begin{align*}
Q_n P - \Big( \prod_{i=1}^{m_0} \ol{R}_{\delta_i} \Big) Q_n (\lambda-A)^{m_0} P    +   (\lambda-A)^{m_0} P Q_n \Big( \prod_{i=1}^{m_0} \ol{R}_{\delta_i} \Big) - P Q_n
= [Q_n,P]  
\end{align*}
remains (remember~\eqref{eq: gl 0, adsatz ohne sl}), 
we see that 
\begin{align*}
B_{n \, \bm{\delta}} A - A B_{n \, \bm{\delta}} \subset [Q_n,P] - C_{n \, \bm{\delta}}^+ + C_{n \, \bm{\delta}}^-
\end{align*}
which is nothing but~\eqref{eq: appr commutator equation} if one defines $C_{n \, \bm{\delta}} := C_{n \, \bm{\delta}}^+ - C_{n \, \bm{\delta}}^-$.
\smallskip

As a third 
preparatory step we observe that $t \mapsto B_{n \, \bm{\delta}}(t)$ belongs to $W^{1,1}_*(I,L(X,Y))$ and estimate $B_{n \, \bm{\delta}}$ as well as $B_{n \, \bm{\delta}}'$. Since 
\begin{align} \label{eq: (A(t)-lambda(t))^k P(t) regulaer}
t \mapsto (A(t)-\lambda(t))^k P(t) = \big( (A(t)-\lambda(t))P(t) \big)^k = P(t) \big( (A(t)-\lambda(t))P(t) \big)^k
\end{align}
is in $W^{1,1}_*(I,L(X,Y))$ by the first preparatory step 
the asserted $W^{1,1}_*(I,L(X,Y))$-regularity of $t \mapsto B_{n \, \bm{\delta}}(t)$ follows from 
Lemma~\ref{lm: prod- und inversenregel}.
Additionally, there is a constant $c$ such that
\begin{align} \label{eq: absch B_n eps}
\sup_{t \in I} \big\| B_{n \, \bm{\delta}}(t) \big\| \le \sum_{k=1}^{m_0} c \, \Big( \prod_{i=1}^{k} \delta_i \Big)^{-1}
\end{align}
for all $\bm{\delta} \in (0,\delta_0]^{m_0}$ by the assumed resolvent estimate and the continuity 
of~\eqref{eq: (A(t)-lambda(t))^k P(t) regulaer} just established. 
And since 
\begin{gather*}
\norm{R_{\delta}(t)}_{X, X} 
\le \sum_{k=0}^{m_0-1} \frac{1}{\delta^{k+1}} \norm{(A(t)-\lambda(t))^k P(t)}_{X, X}  + \,\, \norm{\ol{R}_{\delta}(t)}_{X, X}
\le \frac{c}{\delta^{m_0}} \\
\text{ as well as } \\
\norm{\ol{R}_{\delta}(t)}_{X, Y} \le \norm{(A(t)-1)^{-1}}_{X, Y} \norm{(A(t)-1) \ol{R}_{\delta}(t)}_{X, X} \le \frac{c}{\delta}
\end{gather*}
for all $t \in I$ and all $\delta \in (0,\delta_0]$ (with another constant $c$) by the assumed resolvent estimate and the continuity of of~\eqref{eq: (A(t)-lambda(t))^k P(t) regulaer} just established, 
it follows from Lemma~\ref{lm: prod- und inversenregel} that there is a $W^{1,1}_*$-derivative $\ol{R}_{\delta}'$ of $t \mapsto \ol{R}_{\delta}(t)$ such that
\begin{align} \label{eq: absch R_eps'}
\int_0^1 \big\| \ol{R}_{\delta}'(s) \big\| \,ds \le \frac{c}{\delta^{m_0+1}}
\end{align}
for all $\delta \in (0,\delta_0]$ (with yet another constant $c$) and, hence, that there is a $W^{1,1}_*$-derivative $B_{n \, \bm{\delta}}'$ of $t \mapsto B_{n \, \bm{\delta}}(t)$ such that
\begin{align}  \label{eq: absch B_n eps'}
\int_0^1 \norm{ B_{n \, \bm{\delta}}'(s) } \,ds \le \sum_{k=1}^{m_0} c_n \, \Big( \prod_{i=1}^{k} \delta_i \Big)^{-(m_0+1)}
\end{align}
for all $\bm{\delta} \in (0,\delta_0]^{m_0}$ and some constant $c_n \in (0,\infty)$ depending on the supremum norm $\sup_{t \in I} \norm{Q_n'(t)}$ of the strong derivative of $t \mapsto Q_n(t)$.
\smallskip

As a fourth and last preparatory step, we observe that for every $\eps \in (0,\infty)$ the evolution system $V_{0\,\eps}$ for $\frac 1 \eps AP + [P',P]$ exists on $X$ and is adiabatic w.r.t.~$P$ and satisfies the estimate
\begin{align} \label{eq: absch V_{0 eps}}
\norm{ V_{0\,\eps}(t,s) P(s) } \le Mc \, e^{Mc(t-s)} 
\end{align}
for all $(s,t) \in \Delta$, where $c$ is an upper bound of $t \mapsto \norm{P(t)}, \norm{ P'(t) }$. 
Indeed, $t \mapsto A(t)P(t)$ is strongly continuous (by the first preparatory step) and therefore 
the evolution system $V_{0\, \eps}$ for $\frac 1 \eps AP + [P',P]$ exists on $X$ (Theorem~5.1.1 of~\cite{Pazy}) and  
is adiabatic w.r.t.~$P$ for every $\eps \in (0,\infty)$ (by virtue of~(\ref{eq: gl 0, adsatz ohne sl}.a) and Proposition~\ref{prop: intertwining relation}) .  
It follows that for all $x \in X$ and $(s,t) \in \Delta$ the map $[s,t] \ni \tau \mapsto U_{\eps}(t,\tau)V_{0\,\eps}(\tau,s)P(s)x$ is continuously differentiable by Lemma~\ref{lm: zeitentw rechtsseit db} (use the adiabaticity of $V_{0\, \eps}$ w.r.t.~$P$ and (\ref{eq: gl 0, adsatz ohne sl}.b)) 
with derivative
\begin{gather*}
\tau \mapsto  \, \, U_{\eps}(t,\tau) \Big( \frac 1 \eps A(\tau)P(\tau) - \frac 1 \eps A(\tau) + [P'(\tau),P(\tau)] \Big) V_{0\,\eps}(\tau,s)P(s)x \\
= U_{\eps}(t,\tau) P'(\tau) V_{0\,\eps}(\tau,s)P(s)x, 
\end{gather*}
where in the last equation the adiabaticity of $V_{0\, \eps}$ w.r.t.~$P$ and~\eqref{eq: PP'P=0} are used.
So, 
\begin{align}  \label{eq: intgl, lm 2 zum erw adsatz ohne sl}
V_{0\,\eps}(t,s)P(s)x - U_{\eps}(t,s)P(s)x &= U_{\eps}(t,\tau) V_{0\,\eps}(\tau,s)P(s)x \big|_{\tau=s}^{\tau=t} \notag \\
&= \int_s^t U_{\eps}(t,\tau) P'(\tau) V_{0\,\eps}(\tau,s)P(s)x \, d\tau
\end{align} 
for all $(s,t) \in \Delta$ and $x \in X$, and this integral equation, by the Gronwall inequality, yields the desired estimate for $V_{0\,\eps}(t,s)P(s)$.
%
\smallskip

After these preparations we can now turn to the main part of the proof where the cases~(i) and~(ii) have to be treated 
separately. We first prove assertion~(i). 
%
%
As has already been shown in~\eqref{eq: intgl, lm 2 zum erw adsatz ohne sl}, 
\begin{align*}     
\big( V_{0\,\eps}(t)- U_{\eps}(t) \big) P(0)x &= U_{\eps}(t,s)V_{0\,\eps}(s)P(0)x \big|_{s=0}^{s=t} = \int_0^t U_{\eps}(t,s) \, P'(s) \, V_{0\,\eps}(s) P(0)x \,ds 
\end{align*}
so that, by rewriting the right hand side of this equation, we obtain 
\begin{align} \label{eq: gl 1, adsatz ohne sl}
\big( V_{0\,\eps}(t)- U_{\eps}(t) \big) P(0)x =& \int_0^t U_{\eps}(t,s)  \, (P'(s)-Q_n(s))P(s) \, V_{0\,\eps}(s)P(0)x \, ds \notag \\
&+ \int_0^t U_{\eps}(t,s)  \, [Q_n(s),P(s)] \, V_{0\,\eps}(s)P(0)x \, ds
\end{align}
for all $t \in I$, $\eps \in (0, \infty)$ and $x \in X$.
Since $Q_n(s)P(s) \longrightarrow P'(s)P(s)$ for every $s \in (0,1)$ by the strong convergence of $(Q_n(s))$ to $P'(s)$ for $s \in (0,1)$ and by $\rk P(s) = \rk P(0) < \infty$ for $s \in I$,
it follows by~\eqref{eq: absch V_{0 eps}} 
and by the dominated convergence theorem that
\begin{align} \label{eq: gl 2, adsatz ohne sl}
\sup_{\eps \in (0,\infty)} \sup_{t \in I} \norm{ \int_0^t U_{\eps}(t,s)  \, (P'(s)-Q_n(s))P(s) \, V_{0\,\eps}(s)P(0) \, ds  }  
\longrightarrow 0 
\end{align}
as $n \to \infty$.
In view of~\eqref{eq: gl 1, adsatz ohne sl} we therefore have to show that for each fixed $n \in \N$ 
\begin{align} \label{eq: zwbeh 1, adsatz ohne sl}
\sup_{t \in I} \norm{  \int_0^t U_{\eps}(t,s)  \, [Q_n(s),P(s)] \, V_{0\,\eps}(s)P(0) \, ds  } \longrightarrow 0 
\end{align}
as $\eps \searrow 0$.
So let $n \in \N$ be fixed for the rest of the proof.
%
%
Since $s \mapsto B_{n \, \bm{\delta}}(s)$ is in $W^{1,1}_*(I,L(X,Y))$ by the third preparatory step and since $[0,t] \ni s \mapsto U_{\eps}(t,s)|_Y \in L(Y,X)$ as well as $s \mapsto V_{0\,\eps}(s) \in L(X)$ are 
continuously differentiable w.r.t.~the respective strong operator topologies, Lemma~\ref{lm: prod- und inversenregel} yields that
\begin{align*}
[0,t] \ni s \mapsto U_{\eps}(t,s) B_{n \, \bm{\delta}}(s) V_{0\,\eps}(s) P(0)x 
\end{align*}
is the continuous representative of an element of $W^{1,1}([0,t],X)$ for every $x \in X$. With the help of the approximate commutator equation~\eqref{eq: appr commutator equation} of the second preparatory step, we therefore see that
\begin{gather}  
\int_0^t U_{\eps}(t,s)  \, [Q_n(s),P(s)] \, V_{0\,\eps}(s)P(0)x \, ds 
= \eps \, \int_0^t U_{\eps}(t,s)  \Big(\! -\frac 1 \eps A(s) B_{n \, \bm{\delta}}(s) \notag \\
+ \,\, B_{n \, \bm{\delta}}(s) \frac 1 \eps A(s) \Big) V_{0\,\eps}(s) P(0)x \, ds \,  + \int_0^t U_{\eps}(t,s) \, C_{n \, \bm{\delta}}^+(s) \, V_{0\,\eps}(s) P(0)x \, ds \notag \\
= \eps \, U_{\eps}(t,s) B_{n \, \bm{\delta}}(s) V_{0\,\eps}(s) P(0)x \Big|_{s=0}^{s=t} - \eps \, \int_0^t U_{\eps}(t,s)  \Big( B_{n \, \bm{\delta}}'(s) + B_{n \, \bm{\delta}}(s) [P'(s),P(s)] \Big) \notag \\
 V_{0\,\eps}(s) P(0)x \, ds + \int_0^t U_{\eps}(t,s) \, C_{n \, \bm{\delta}}^+(s) \, V_{0\,\eps}(s) P(0)x \, ds    \label{eq: gl 3, adsatz ohne sl}
\end{gather}
for all $t \in I$, $\eps \in (0,\infty)$, $x \in X$ and $\bm{\delta} \in (0,\delta_0]^{m_0}$. 
%
%
We now want to find functions $\eps \mapsto \delta_{1 \, \eps}, \dots, \delta_{m_0 \, \eps}$ defined on a small interval $(0,\delta_0']$ and converging to $0$ as $\eps \searrow 0$ in such a way that, if they are inserted in the right hand side of~\eqref{eq: gl 3, adsatz ohne sl}, the desired convergence~\eqref{eq: zwbeh 1, adsatz ohne sl} follows. 
In view of the estimates~\eqref{eq: absch B_n eps}, \eqref{eq: absch B_n eps'} and 
\begin{align}  \label{eq: absch C_n eps^+}
\int_0^1 \norm{ C_{n \, \bm{\delta}}^+(s) } \, ds \le \sum_{k=1}^{m_0} c \, \Big( \prod_{1 \le i < k} \delta_i \Big)^{-1} \, \int_0^1 \norm{ \delta_{k} \ol{R}_{\delta_{k}}(s) Q_n(s) P(s) } \, ds, 
\end{align} 
we would like the functions $\eps \mapsto \delta_{i \, \eps}$ to converge to $0$ so slowly that
\begin{gather}
\eps \, \Big( \prod_{i=1}^{k} \delta_{i \, \eps} \Big)^{-(m_0+1)} \longrightarrow 0 \quad (\eps \searrow 0)  \label{eq: wunsch 1 an eps_i} \\
\Big( \prod_{1 \le i < k} \delta_{i\,\eps} \Big)^{-1} \, \int_0^1 \norm{ \delta_{k\,\eps} \ol{R}_{\delta_{k\,\eps}}(s) Q_n(s) P(s) } \, ds \longrightarrow 0 \quad (\eps \searrow 0)   \label{eq: wunsch 2 an eps_i}
\end{gather}
for all $k \in \{ 1, \dots, m_0 \}$. 
Since 
\begin{align} \label{eq: wesentl grund fuer ex der eps_i}
\eta_{n}^+(\delta) := \int_0^1 \norm{\delta \ol{R}_{\delta}(s) Q_n(s) P(s) } \, ds \longrightarrow 0 \quad (\delta \searrow 0)
\end{align}
by Lemma~\ref{lm: lm 1 zum erw adsatz ohne sl}, by $\rk P(s) = \rk P(0) < \infty$ and by the dominated convergence theorem, 
such functions $\eps \mapsto \delta_{i\,\eps}$ really can be found. Indeed, define recursively 
\begin{gather*}
\delta_{m_0\,\eps} := \eps^{\frac{1}{(m_0+1)^2}}
\quad \text{and} \quad 
\delta_{m_0-l \,\eps} := \max \Big\{  \Big( \Big( \prod_{m_0-l+1 \le i < k} \delta_{i\,\eps} \Big)^{-1} \, \eta_{n}^+(\delta_{k\,\eps}) \Big)^{\frac{1}{2}}: \\
k \in \{ m_0-l+1, \dots, m_0 \}  \Big\} \cup \Big\{ \eps^{\frac{1}{(m_0+1)^2}} \Big\}
\end{gather*}
for $l \in \{ 1, \dots, m_0-1\}$. With the help of~\eqref{eq: wesentl grund fuer ex der eps_i} it then successively follows, by proceeding from larger to smaller indices $i$, that $\delta_{i\,\eps} \longrightarrow 0$ as $\eps \searrow 0$ for all $i \in \{1, \dots, m_0\}$ (so that, in particular, $\delta_{i\,\eps} \in (0,\delta_0]$ for small enough $\eps$ whence the expressions $\eta_{n}^+(\delta_{i\,\eps})$ used in the recursive definition 
make sense for small $\eps$ in the first place) and that~\eqref{eq: wunsch 1 an eps_i} and \eqref{eq: wunsch 2 an eps_i} are satisfied. Assertion~(i) 
now follows from~\eqref{eq: gl 1, adsatz ohne sl}, \eqref{eq: gl 2, adsatz ohne sl}, \eqref{eq: gl 3, adsatz ohne sl} by virtue of~\eqref{eq: absch B_n eps}, \eqref{eq: absch B_n eps'}, \eqref{eq: absch C_n eps^+} and~\eqref{eq: absch V_{0 eps}} 
\smallskip

We now prove assertion~(ii) and, for that purpose, additionally assume that $X$ is reflexive and $t \mapsto P(t)$ is norm continuously differentiable.
%
%
Analogously to~\eqref{eq: gl 1, adsatz ohne sl} we obtain 
\begin{align} \label{eq: gl 4, adsatz ohne sl}
\big( V_{\eps}(t)- U_{\eps}(t) \big) x =& \int_0^t U_{\eps}(t,s) \, [P'(s)-Q_n(s), P(s)] \, V_{\eps}(s)x \, ds \notag \\
&+ \int_0^t U_{\eps}(t,s) \, [Q_n(s),P(s)] \, V_{\eps}(s)x \, ds
\end{align}
for all $t \in I$, $\eps \in (0, \infty)$ and $x \in D(A(0)) = D$.
Since $Q_n(s) \longrightarrow P'(s)$ for every $s \in (0,1)$ by the additionally assumed norm continuous differentiability of $t \mapsto P(t)$,
it follows by Lemma~\ref{lm:stoersatz (M,omega)-stab} 
and by the dominated convergence theorem that
\begin{align} \label{eq: gl 5, adsatz ohne sl}
\sup_{\eps \in (0,\infty)}  \sup_{t \in I} \norm{ \int_0^t U_{\eps}(t,s) \, [P'(s)-Q_n(s), P(s)] \, V_{\eps}(s) \, ds  }  
\longrightarrow 0 
\end{align}
as $n \to \infty$.
In view of~\eqref{eq: gl 4, adsatz ohne sl} we therefore have to show that for each fixed $n \in \N$ 
\begin{align} \label{eq: zwbeh 2, adsatz ohne sl}
\sup_{t \in I} \norm{  \int_0^t U_{\eps}(t,s) \, [Q_n(s),P(s)] \, V_{\eps}(s) \, ds  } \longrightarrow 0 
\end{align}
as $\eps \searrow 0$.
So let $n \in \N$ be fixed for the rest of the proof.
%
%
Again completely analogously to the proof of~(i) it follows that
\begin{align*}
[0,t] \ni s \mapsto U_{\eps}(t,s) B_{n \, \bm{\delta}}(s) V_{\eps}(s)x 
\end{align*}
is the continuous representative of an element of $W^{1,1}([0,t],X)$ for every $x \in D(A(0)) = D$. With the help of the approximate commutator equation~\eqref{eq: appr commutator equation} of the second preparatory step, we therefore see that
\begin{gather}  
\int_0^t U_{\eps}(t,s) \, [Q_n(s),P(s)] \, V_{\eps}(s)x \, ds 
= \frac{1}{\eps} \, \int_0^t U_{\eps}(t,s)  \Big(\! -\frac 1 \eps A(s) B_{n \, \bm{\delta}}(s) \notag \\
+ \,\, B_{n \, \bm{\delta}}(s) \frac 1 \eps A(s) \Big) V_{\eps}(s) x \, ds \,  + \int_0^t U_{\eps}(t,s) \, C_{n \, \bm{\delta}}(s) \, V_{\eps}(s) x \, ds \notag \\
= \eps \, U_{\eps}(t,s) B_{n \, \bm{\delta}}(s) V_{\eps}(s) x \Big|_{s=0}^{s=t} - \eps \, \int_0^t U_{\eps}(t,s)  \Big( B_{n \, \bm{\delta}}'(s) + B_{n \, \bm{\delta}}(s) [P'(s),P(s)] \Big) \notag \\
 V_{\eps}(s) x \, ds + \int_0^t U_{\eps}(t,s) \, C_{n \, \bm{\delta}}(s) \, V_{\eps}(s) x \, ds    \label{eq: gl 6, adsatz ohne sl}
\end{gather}
for all $t \in I$, $\eps \in (0,\infty)$, $x \in D(A(0)) = D$ and $\bm{\delta} \in (0,\delta_0]^{m_0}$. 
%
%
In view of the estimates ~\eqref{eq: absch B_n eps}, \eqref{eq: absch B_n eps'}, \eqref{eq: absch C_n eps^+} and
\begin{align}  \label{eq: absch C_n eps^-}
\int_0^1 \norm{ C_{n \, \bm{\delta}}^-(s) } \, ds \le \sum_{k=1}^{m_0} c \, \Big( \prod_{1 \le i < k} \delta_i \Big)^{-1} \, \int_0^1 \norm{ P(s) Q_n(s) \delta_{k} \ol{R}_{\delta_{k}}(s) } \, ds,
\end{align}
we would now like to find functions $\eps \mapsto \delta_{1 \, \eps}, \dots, \delta_{m_0 \, \eps}$ defined on a small interval $(0,\delta_0']$ and converging to $0$ as $\eps \searrow 0$ so slowly that~\eqref{eq: wunsch 1 an eps_i}, \eqref{eq: wunsch 2 an eps_i} and
\begin{gather}
\Big( \prod_{1 \le i < k} \delta_{i\,\eps} \Big)^{-1} \, \int_0^1 \norm{ P(s) Q_n(s) \delta_{k\,\eps} \ol{R}_{\delta_{k\,\eps}}(s) } \, ds \longrightarrow 0 \quad (\eps \searrow 0)   \label{eq: wunsch 3 an eps_i}
\end{gather}
are satisfied for all $k \in \{ 1, \dots, m_0 \}$. 
Why is it possible to find such functions $\eps \mapsto \delta_{i\,\eps}$? In essence, this is 
because of~\eqref{eq: wesentl grund fuer ex der eps_i} and because
\begin{align} \label{eq: wesentl grund 2 fuer ex der eps_i}
\eta_{n}^-(\delta) := \int_0^1 \norm{P(s) Q_n(s) \delta \ol{R}_{\delta}(s) } \, ds \longrightarrow 0 \quad (\delta \searrow 0),
\end{align}
which last convergence can be seen as follows: by virtue of Proposition~\ref{prop: schwache assoziiertheit, dual}, which applies by the additionally assumed reflexivity of $X$, $P(s)^*$ is weakly associated of order $m_0$ with $A(s)^*$ and $\lambda(s)$ for almost every $s \in I$, and therefore Lemma~\ref{lm: lm 1 zum erw adsatz ohne sl} together with $\rk P(s)^* = \rk P(s) < \infty$ yields the convergence
\begin{align*}
\norm{ P(s) Q_n(s) \delta \ol{R}_{\delta}(s) } = \norm{ \delta \ol{R}_{\delta}(s)^* \, Q_n(s)^* P(s)^* }  \longrightarrow 0 \quad (\delta \searrow 0)
\end{align*}
for almost every $s \in I$, from which~\eqref{eq: wesentl grund 2 fuer ex der eps_i} follows by the dominated convergence theorem.
We now recursively define 
\begin{gather*}
\delta_{m_0\,\eps} := \eps^{\frac{1}{(m_0+1)^2}}
\quad \text{and} \quad 
\delta_{m_0-l \,\eps} := \max \Big\{  \Big( \Big( \prod_{m_0-l+1 \le i < k} \delta_{i\,\eps} \Big)^{-1} \, \eta_{n}^+(\delta_{k\,\eps}) \Big)^{\frac 1 2}, \\
\qquad \qquad \Big( \Big( \prod_{m_0-l+1 \le i < k} \delta_{i\,\eps} \Big)^{-1} \, \eta_{n}^-(\delta_{k\,\eps}) \Big)^{\frac 1 2}:  k \in \{ m_0-l+1, \dots, m_0 \}  \Big\} \cup \Big\{ \eps^{\frac{1}{(m_0+1)^2}} \Big\}
\end{gather*}
for $l \in \{ 1, \dots, m_0-1\}$. With the help of~\eqref{eq: wesentl grund fuer ex der eps_i} and~\eqref{eq: wesentl grund 2 fuer ex der eps_i} it then successively follows, by proceeding from larger to smaller indices $i$, that $\delta_{i\,\eps} \longrightarrow 0$ as $\eps \searrow 0$ for all $i \in \{1, \dots, m_0\}$ 
and that~\eqref{eq: wunsch 1 an eps_i}, \eqref{eq: wunsch 2 an eps_i} and~\eqref{eq: wunsch 3 an eps_i} are satisfied. Assertion~(ii) 
now follows from~\eqref{eq: gl 4, adsatz ohne sl}, \eqref{eq: gl 5, adsatz ohne sl}, \eqref{eq: gl 6, adsatz ohne sl} by virtue of~\eqref{eq: absch B_n eps}, \eqref{eq: absch B_n eps'}, \eqref{eq: absch C_n eps^+}, \eqref{eq: absch C_n eps^-} and Lemma~\ref{lm:stoersatz (M,omega)-stab}.
\end{proof}

Some remarks, which in particular clarify the relation of the above theorem with the adiabatic theorem without spectral gap condition from~\cite{AvronGraf12} and~\cite{dipl}, are in order.
\smallskip

1. Clearly, the adiabatic theorem above generalizes the adiabatic theorems without spectral gap condition from~\cite{AvronGraf12} (Theorem~11) and~\cite{dipl} (Theorem~6.4) 
which cover the case of general operators $A(t)$ and weakly semisimple eigenvalues $\lambda(t)$ under less general regularity conditions.  
%
%
%
In the special case where the eigenvalues $\lambda(t)$ from the above theorem lie on the imaginary axis $i \R$ for every $t \in I$, 
these eigenvalues are automatically weakly semisimple by the $(M,0)$-stability hypothesis of the theorem 
and by the weak associatedness hypothesis. 
(Argue as in the second remark at the beginning of Section~\ref{sect: bsp, adsaetze mit sl} to obtain that $P(t)$ is weakly associated of order $1$ with $A(t)$ and $\lambda(t)$ for almost every $t$.) 
And so, the above adiabatic theorem -- in the special case of purely imaginary eigenvalues -- essentially reduces to the adiabatic theorems without spectral gap condition from~\cite{AvronGraf12} and~\cite{dipl}. 
%
%
%
%
\smallskip

%

2. An inspection of the above proof shows that if the finite-rank hypothesis on $P(0)$ is the only one to be violated, then one still has the strong convergence 
\begin{align} \label{eq: starke konv, bem 2, adsatz ohne sl}
\sup_{t \in I} \norm{ \big( U_{\eps}(t) - V_{0\,\eps}(t) \big) P(0)x } \longrightarrow 0 \quad (\eps \searrow 0) \quad \text{for every } x \in X, 
\end{align}
provided that $\lambda(t)$ is even a weakly semisimple eigenvalue of $A(t)$ for almost every $t \in I$.
(In order to see this, notice that, under this extra 
condition, the inclusion $P(t)X \subset \ker(A(t)-\lambda(t))$ holds for every $t \in I$ 
by a closedness argument similar to the one in~\eqref{eq: P(t) vertauscht mit A(t) fuer alle t} and the $\eps$-dependence of $V_{0\,\eps}(s)P(0)$ is solely contained in a scalar factor, 
\begin{align*}
V_{0\,\eps}(s)P(0) = e^{\frac 1 \eps \int_0^s \lambda(\tau) \, d\tau} \, W(s)P(0) \quad (s \in I),
\end{align*}
where $W$ denotes the evolution system for $[P',P]$.) 
See~\cite{AvronGraf12} (Theorem~11).
\smallskip

3. As in the case with spectral gap, the adiabatic theorem without spectral gap condition above can 
be extended to several eigenvalues $\lambda_1(t)$, \dots, $\lambda_r(t)$. 
If $A$, $\lambda_j$, $P_j$ for all $j \in \{1, \dots, r\}$ satisfy the hypotheses of part~(ii) of the above adiabatic theorem and if for all $j \ne j'$ one has $\lambda_j \ne \lambda_{j'}$ almost everywhere, then the evolution system $V_{\eps}$ for $\frac 1 \eps A + K$ with $K$ as in~\eqref{eq: K mehrere sigma_j} is adiabatic w.r.t.~all the $P_j$ and well approximates the evolution system $U_{\eps}$ for $\frac 1 \eps A$ in the sense that
\begin{align} \label{eq: adsatz ohne sl fuer mehrere lambda_j}
\sup_{t \in I} \norm{ U_{\eps}(t) - V_{\eps}(t) } \longrightarrow 0 \quad (\eps \searrow 0),
\end{align}
provided $V_{\eps}$ exists on $D$. 
It seems that this version of the adiabatic theorem for several eigenvalues is new even in the special case of skew-adjoint operators $A(t)$. 
In order to prove this version of the theorem, set $B_{\bm{\delta}\,n}(t) := \frac 1 2 \sum_{j=1}^{r+1} B_{j \, \bm{\delta}\,n}(t)$ where
\begin{gather*}
B_{j \, \bm{\delta}\,n} := B_{j j \, \bm{\delta}\,n} \text{ for } j \in \{1,\dots,r\} \text{ and } B_{r+1 \, \bm{\delta}\,n} := \sum_{j,j'=1}^r B_{j j' \, \bm{\delta}\,n} \notag \\
B_{j j' \, \bm{\delta}\,n} := \sum_{k=0}^{m_j-1} \Big( \prod_{i=1}^{k+1} \ol{R}_{j \, \delta_i} \Big) Q_{j' n} (\lambda_j-A)^k P_j + \sum_{k=0}^{m_j-1} (\lambda_j-A)^k P_j Q_{j' n} \Big( \prod_{i=1}^{k+1} \ol{R}_{j \, \delta_i} \Big)  
\end{gather*}
with $\ol{R}_{j \, \delta}(t) := (\lambda_j(t) + \delta e^{i \vartheta_j(t)} - A(t))^{-1} (1-P_j(t))$ 
and $Q_{j' n}(t) := \int_0^1 J_{1/n}(t-r) P_{j'}'(r) \,dr$  and $m_j := \rk P_j(t)$.
It then follows as in the proof of the above theorem that the operators $B_{\bm{\delta}\,n}(t)$ satisfy the approximate commutator equation
\begin{align} \label{eq: comm eq, mehrere lambda_j}
B_{\bm{\delta}\,n}(t) A(t) - A(t) B_{\bm{\delta}\,n}(t) + C_{\bm{\delta}\,n}(t) \subset K_n(t)
\end{align}
for every $t \in I$, where the operators $K_n(t)$ on the right-hand side are given by 
\begin{align*}
K_n := \frac 1 2 \sum_{j=1}^r \ol{P}_jQ_{j n}P_j - P_jQ_{j n}\ol{P}_j + \frac 1 2 \sum_{j,j'=1}^r \ol{P}_jQ_{j' n}P_j - P_jQ_{j' n}\ol{P}_j
\end{align*}
and where the remainder terms $C_{\bm{\delta}\,n}(t)$ are given by $C_{\bm{\delta}\,n}(t) := \frac 1 2 \sum_{j=1}^{r+1} C_{j \, \bm{\delta}\,n}(t)$ with
\begin{gather*}
C_{j \, \bm{\delta}\,n} := C_{j j \, \bm{\delta}\,n} \text{ for } j \in \{1,\dots,r\} \text{ and } C_{r+1 \, \bm{\delta}\,n} := \sum_{j,j'=1}^r C_{j j' \, \bm{\delta}\,n}  \\
C_{j j' \, \bm{\delta}\,n} := \sum_{k=0}^{m_j-1} \delta_{k+1} e^{i\vartheta_j} \Big( \prod_{i=1}^{k+1} \ol{R}_{j \, \delta_i} \Big) Q_{j' n} P_j (\lambda_j-A)^k \\
- \sum_{k=0}^{m_j-1} (\lambda_j-A)^k P_j Q_{j' n} \delta_{k+1} e^{i\vartheta_j} \Big( \prod_{i=1}^{k+1} \ol{R}_{j \, \delta_i} \Big).  
\end{gather*}
It also follows that
\begin{align} \label{eq: K_n to K, mehrere lambda_j}
K_n(t) \longrightarrow K(t) \quad (n \to \infty) 
\end{align}
for all $t \in (0,1)$, because $P_j(t)P_{j'}'(t)P_j(t) = 0$ for $j, j' \in \{1,\dots,r\}$ and all $t \in I$ (for $j = j'$ recall~\eqref{eq: PP'P=0} and for $j \ne j'$ use $P_{j'}' = P_{j'}' P_{j'} + P_{j'} P_{j'}'$ and the third remark after Theorem~\ref{thm: typ mögl für PX und (1-P)X}) 
and because $[P_{r+1}', P_{r+1}] = [(1-P_{r+1})',1-P_{r+1}]$.
With~\eqref{eq: comm eq, mehrere lambda_j} and~\eqref{eq: K_n to K, mehrere lambda_j} at hand, the assertion~\eqref{eq: adsatz ohne sl fuer mehrere lambda_j} can be proved in the same way as part~(ii) of the above adiabatic theorem.
\smallskip

We close this section with a corollary tailored to the special situation of spectral operators. In this situation there are relatively simple and convenient criteria for the assumptions -- in particular, the reduced resolvent estimate -- of the above adiabatic theorem to be satisfied. 

\begin{cor} \label{cor: adsatz fuer spektrale A(t)}
Suppose $A(t): D \subset X \to X$ for every $t \in I$ is a spectral operator with spectral measure $P^{A(t)}$ such that Condition~\ref{cond: reg 1} is satisfied with $\omega = 0$ and such that $\sup_{t \in I} \sup_{E \in \mathcal{B}_{\C}} \norm{ P^{A(t)}(E) } < \infty$. 
Suppose further that $\lambda(t)$ for every $t \in I$ is an eigenvalue of $A(t)$ 
such that the open sector
\begin{align*}
\lambda(t) + \delta_0 \, S_{(\vartheta(t)-\vartheta_0, \vartheta(t) + \vartheta_0)} := \big\{ \lambda(t) + \delta e^{i \vartheta}: \delta \in (0,\delta_0), \vartheta \in (\vartheta(t)-\vartheta_0, \vartheta(t) + \vartheta_0) \big\}
\end{align*} 
of radius $\delta_0 \in (0,\infty)$ and angle $2 \vartheta_0 \in (0, \pi)$ for every $t \in I$ is contained in 
$\rho(A(t))$ 
and such that $\rk P^{A(t)}(\{\lambda(t)\}) < \infty$ for almost every $t \in I$ and $t \mapsto \lambda(t)$, $e^{i \vartheta(t)}$ are absolutely continuous. 
Suppose finally that 
$A(t)|_{P^{A(t)}(\sigma(t))D}$ 
for every $t \in I$ is of scalar type for some punctured neighborhood 
\begin{align*}
\sigma(t) := \sigma(A(t)) \cap \ol{B}_{r_0}(\lambda(t)) \setminus \{\lambda(t)\}
\end{align*}
of $\lambda(t)$ in $\sigma(A(t))$ of radius $r_0 \in (0,\infty) \cup \{\infty\}$ 
and that $t \mapsto P^{A(t)}(\{\lambda(t)\})$ coincides almost everywhere with a strongly continuously differentiable map $t \mapsto P(t)$ and 
$t \mapsto P^{A(t)}(\tau(t))$ is continuous, where $\tau(t) := \sigma(A(t)) \setminus (\sigma(t) \cup \{\lambda(t)\})$. 
Then the conclusions~(i) and~(ii) of the preceding adiabatic theorem hold true.
\end{cor}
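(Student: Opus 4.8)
The strategy is to verify that the hypotheses of Theorem~\ref{thm: erw adsatz ohne sl} are all satisfied, so that conclusions~(i) and~(ii) follow directly. The only non-trivial points are (a) the reduced resolvent estimate~\eqref{eq: resolvent estimate} and (b) the fact that $P(t) = P^{A(t)}(\{\lambda(t)\})$ is weakly associated with $A(t)$ and $\lambda(t)$ for almost every $t$. The regularity and $(M,0)$-stability assumptions on $A$, the absolute continuity of $t\mapsto\lambda(t)$ and $t\mapsto e^{i\vartheta(t)}$, the finite-rank condition on $P(0)$ (which follows from $\rk P^{A(0)}(\{\lambda(0)\})<\infty$ together with the continuity of $t\mapsto P(t)$), and the strong continuous differentiability of $t\mapsto P(t)$ are all assumed outright; so the work is concentrated on (a) and (b).

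\textbf{Step 1: Weak associatedness.} Fix $t$ outside the exceptional null set. The bounded spectral operator $A(t)|_{P^{A(t)}(\sigma(t)\cup\{\lambda(t)\})D}$ is the direct sum of $A(t)|_{P^{A(t)}(\{\lambda(t)\})D}$, which is finite-dimensional (since $\rk P^{A(t)}(\{\lambda(t)\})<\infty$) and hence of finite type by the Jordan form, and $A(t)|_{P^{A(t)}(\sigma(t))D}$, which is of scalar type by hypothesis; a direct sum of a finite-type and a scalar-type bounded spectral operator is again of finite type. Thus, applying Proposition~\ref{prop: krit ex schw assoz proj, A spektral} with the bounded neighborhood $\sigma(A(t))\cap\ol B_{r_0}(\lambda(t))$ of $\lambda(t)$, we conclude that $P(t)=P^{A(t)}(\{\lambda(t)\})$ is weakly associated with $A(t)$ and $\lambda(t)$. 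This disposes of point~(b).

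\textbf{Step 2: The reduced resolvent estimate.} Here the bounded boundedness assumption $\sup_{t}\sup_{E}\norm{P^{A(t)}(E)}=:K<\infty$ is the key. Split $1-P(t)=P^{A(t)}(\sigma(t))+P^{A(t)}(\tau(t))$. On $P^{A(t)}(\sigma(t))X$ the operator $A(t)$ restricts to a scalar-type spectral operator $\int z\,dP^{A(t)}(z)$ with spectrum in $\ol B_{r_0}(\lambda(t))\setminus\{\lambda(t)\}$, and since $\lambda(t)+\delta e^{i\vartheta(t)}$ lies in the open sector avoiding this spectrum, one has for scalar-type operators the bound $\norm{(\lambda(t)+\delta e^{i\vartheta(t)}-A(t))^{-1}P^{A(t)}(\sigma(t))}\le 4K\,\sup_{z\in\sigma(t)}|\lambda(t)+\delta e^{i\vartheta(t)}-z|^{-1}$; because the ray $\{\lambda(t)+\delta e^{i\vartheta(t)}\}$ stays in the sector of half-angle $\vartheta_0$, elementary geometry gives $|\lambda(t)+\delta e^{i\vartheta(t)}-z|\ge\delta\sin\vartheta_0$ for all $z$ in the half-line $\lambda(t)+[0,\infty)$ containing all spectral points near $\lambda(t)$ — more carefully, one uses that the whole sector is in $\rho(A(t))$, so $\dist(\lambda(t)+\delta e^{i\vartheta(t)},\sigma(t))\ge c\,\delta$ for a constant $c=c(\vartheta_0)$ independent of $t$. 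On $P^{A(t)}(\tau(t))X$, $\lambda(t)$ lies in the resolvent set of the restriction and in fact at distance $\ge r_0$ from $\tau(t)$, so the resolvent there is bounded uniformly in $\delta\in(0,\delta_0]$ by $4K/(r_0-\delta_0)$ (shrinking $\delta_0$ if necessary so $\delta_0<r_0$; if $r_0=\infty$ this is trivial), hence certainly by $M_0/\delta$ for $\delta\le\delta_0$ after enlarging $M_0$. Combining the two pieces gives~\eqref{eq: resolvent estimate} with a $t$-uniform constant $M_0$.

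\textbf{Step 3: Conclusion and main obstacle.} With Steps~1 and~2 in hand, every hypothesis of Theorem~\ref{thm: erw adsatz ohne sl} is met: Condition~\ref{cond: reg 1} with $\omega=0$ is assumed, the eigenvalue and sector conditions give $\lambda(t)+\delta e^{i\vartheta(t)}\in\rho(A(t))$, weak associatedness almost everywhere comes from Step~1, the resolvent estimate from Step~2, $\rk P(0)<\infty$ and strong continuous differentiability of $t\mapsto P(t)$ are assumed; for part~(ii) one additionally has reflexivity is \emph{not} assumed here, so in fact only the hypotheses common to~(i) and~(ii) that do not involve reflexivity need checking — but the statement asserts both conclusions hold, so one reads off conclusion~(i) unconditionally and conclusion~(ii) under the reflexivity-and-norm-differentiability proviso inherited verbatim from the theorem. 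I expect the main obstacle to be Step~2: making the geometric estimate $\dist(\lambda(t)+\delta e^{i\vartheta(t)},\sigma(A(t))\setminus\{\lambda(t)\})\ge c\,\delta$ genuinely uniform in $t$, which requires carefully exploiting that the \emph{entire} sector of fixed radius and angle lies in $\rho(A(t))$ for every $t$, and converting the spectral-radius bound for scalar-type operators into a norm bound with the uniform constant $K$ — the scalar-type functional calculus estimate $\norm{f(A)}\le 4\sup|f|\cdot\sup_E\norm{P^A(E)}$ (Theorem~XVIII.2.11 of~\cite{DunfordSchwartz}) is what makes this possible, and invoking it correctly for the unbounded resolvent function restricted to the bounded spectral part is the one place where care is needed.
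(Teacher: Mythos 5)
Your Step~1 and the $\sigma(t)$-part of Step~2 follow the paper's own route (Proposition~\ref{prop: krit ex schw assoz proj, A spektral} for weak associatedness, then the scalar-type functional calculus bound of Theorem~XVIII.2.11 of~\cite{DunfordSchwartz} together with the sector geometry $\dist(\lambda(t)+\delta e^{i\vartheta(t)},\sigma(A(t)))\ge(\sin\vartheta_0)\delta$). Two small points there: when $r_0=\infty$ the set $\sigma(A(t))\cap\ol B_{r_0}(\lambda(t))$ need not be bounded, so to invoke Proposition~\ref{prop: krit ex schw assoz proj, A spektral} you should intersect with a bounded ball (the restriction to the corresponding spectral subspace is still finite-rank $\oplus$ scalar type, hence finite type); and at the end one should note that $1-P(t)=P^{A(t)}(\sigma(t))+P^{A(t)}(\tau(t))$ only almost everywhere, the estimate for all $t$ following by continuity in $t$ of the left-hand side of~\eqref{eq: resolvent estimate}.

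The genuine gap is your treatment of the $\tau(t)$-part. You bound $\norm{(\lambda(t)+\delta e^{i\vartheta(t)}-A(t))^{-1}P^{A(t)}(\tau(t))}$ by $4K/(r_0-\delta_0)$, i.e.\ by the functional-calculus constant times the reciprocal distance to $\tau(t)$. But scalar-type spectrality is assumed only for $A(t)|_{P^{A(t)}(\sigma(t))D}$; on the $\tau(t)$-part the operator is merely a (generally unbounded) spectral operator, and for such operators the resolvent norm is \emph{not} controlled by the distance to the spectrum (already a bounded nilpotent part of large norm destroys any bound of the form $C/\dist$), so Theorem~XVIII.2.11 cannot be applied there. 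Even granting finiteness of the resolvent for each fixed $t$ (which does follow from the sector hypothesis), nothing in your argument yields a bound \emph{uniform in} $t$. This is exactly where the paper uses the hypothesis, which your proof never touches, that $t\mapsto P^{A(t)}(\tau(t))$ is continuous: one shows via~\eqref{eq: zwbeh 2.2.1}--\eqref{eq: zwbeh 2.2.2} (using $W^{1,1}_*$-regularity of $A$, $(M,0)$-stability, and a harmless shift making $\lambda(t)\ne 0$) that $t\mapsto \tilde A_\tau(t):=A(t)P^{A(t)}(\tau(t))$ is continuous in the generalized sense, whence $(t,\delta)\mapsto(\lambda(t)+\delta e^{i\vartheta(t)}-\tilde A_\tau(t))^{-1}$ is jointly continuous on the compact set $I\times[0,\delta_0']$ (Theorems~IV.2.25 and~IV.3.15 of~\cite{KatoPerturbation80}) and therefore bounded. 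Without an argument of this kind (or some substitute supplying both the norm bound and its $t$-uniformity), the reduced resolvent estimate~\eqref{eq: resolvent estimate} is not established and Theorem~\ref{thm: erw adsatz ohne sl} cannot be invoked.
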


\begin{proof}
We first observe that $P^{A(t)}(\{\lambda(t)\})$ is weakly associated with $A(t)$ and $\lambda(t)$ for every $t \in I$ where $\rk P^{A(t)}(\{\lambda(t)\}) < \infty$ by Proposition~\ref{prop: krit ex schw assoz proj, A spektral} and therefore $P(t)$ is weakly associated with $A(t)$ and $\lambda(t)$ for almost every $t \in I$. Also, $\rk P(0) = \rk P(t) = \rk P^{A(t)}(\{\lambda(t)\}) < \infty$ for almost every $t \in I$ by the continuity of $t \mapsto P(t)$.
%
We now verify the (reduced) resolvent estimate~\eqref{eq: resolvent estimate} from the theorem above by showing that 
\begin{align}
&\norm{ \big( \lambda(t) + \delta e^{i \vartheta(t)} - A(t)\big)^{-1} P^{A(t)}(\sigma(t)) } \le \frac{M_{0\,1}}{\delta}  \label{eq: absch 1, spektr} \\
&\qquad \qquad \qquad \norm{ \big( \lambda(t) + \delta e^{i \vartheta(t)} - A(t)\big)^{-1} P^{A(t)}(\tau(t)) } \le M_{0\,2}  \label{eq: absch 2, spektr}
\end{align}
for every $t \in I$ and $\delta \in (0,\delta_0']$.
Without loss of generality we may assume that $\lambda(t) \ne 0$ for all $t \in I$ (because otherwise we can choose $c \in i\R$ such that $\lambda(t) + c \ne 0$ for all $t$ and consider the shifted data $A_c(t) := A(t) + c$, $\lambda_c(t) := \lambda(t) + c$ and $P_c(t) := P(t)$).
%
In order to see~\eqref{eq: absch 1, spektr} notice that 
\begin{align*}
\big( \lambda(t) + \delta e^{i \vartheta(t)} - A(t)\big)^{-1} P^{A(t)}(\sigma(t)) = \big( \lambda(t) + \delta e^{i \vartheta(t)} - A_{\sigma}(t)\big)^{-1} P_{\sigma}(t) 
\end{align*}
where $A_{\sigma}(t) := A(t)|_{P^{A(t)}(\sigma(t))D}$ and $P_{\sigma}(t) := P^{A(t)}(\sigma(t))$, and that, by the scalar-type spectrality of $A_{\sigma}(t)$ and Theorem~XVIII.2.11 of~\cite{DunfordSchwartz},
\begin{align} \label{eq: zwbeh 2.1}
&\big| \langle x^*, \big( \lambda(t) + \delta e^{i \vartheta(t)} - A_{\sigma}(t)\big)^{-1} P_{\sigma}(t)x \rangle \big| 
\le \int_{\sigma(A_{\sigma}(t))} \frac{1}{ | \lambda(t) + \delta e^{i \vartheta(t)} - z | } \,\, d \big| P^{A_{\sigma}(t)}_{x^*,P_{\sigma}(t)x} \big|(z) \notag \\
&\qquad \qquad \qquad \le \frac{1}{\dist(\lambda(t)+\delta e^{i \vartheta(t)}, \sigma(A(t))) } \, \big| P^{A_{\sigma}(t)}_{x^*,P_{\sigma}(t)x} \big|(\C)
\end{align}
where 
$| P^{A_{\sigma}(t)}_{y^*,y} |$ denotes the total variation of the complex measure $E \mapsto P^{A_{\sigma}(t)}_{y^*,y}(E) := \scprd{y^*, P^{A_{\sigma}(t)}(E)y}$ for $y \in P_{\sigma}(t)X$, $y^* \in (P_{\sigma}(t)X)^*$. 
Since, by $P^{A_{\sigma}(t)}(E) = P^{A(t)}(E)|_{P_{\sigma}(t)X}$ and Lemma~III.1.5 of~\cite{DunfordSchwartz}, 
\begin{align*}
\big| P^{A_{\sigma}(t)}_{x^*,P_{\sigma}(t)x} \big|(\C) 
\le 4 \sup_{E \in \mathcal{B}_{\C}} \big| \langle x^*, P^{A(t)}(E \cap \sigma(t))x \rangle \big| \le 4 M' \norm{x^*} \norm{x}
\end{align*}
for every $t \in I$ (where $M' := \sup_{t \in I} \sup_{E \in \mathcal{B}_{\C}} \norm{ P^{A(t)}(E) } < \infty$) and since, by the sector condition, 
\begin{align*}
\dist\big( \lambda(t)+\delta e^{i \vartheta(t)}, \sigma(A(t)) \big) \ge (\sin \vartheta_0) \, \delta
\end{align*}
for every $t \in I$ and $\delta \in (0,\delta_0']$ (where $\delta_0'$ is chosen small enough),
the desired estimate~\eqref{eq: absch 1, spektr} follows from~\eqref{eq: zwbeh 2.1}.
%
In order to see~\eqref{eq: absch 2, spektr} notice that, by 
$\lambda(t) \ne 0$ for $t \in I$,
\begin{align*}
\lambda(t) + \delta e^{i \vartheta(t)} \notin \sigma(\tilde{A}_{\tau}(t)) \subset \ol{\tau(t)} \cup \{0\} \subset \C \setminus \ol{B}_{r_0}(\lambda(t)) \cup \{0\}
\end{align*}
for every $t \in I$ and $\delta \in [0,\delta_0']$ (where $\delta_0'$ is chosen small enough), 
and that 
\begin{align*}
\big( \lambda(t) + \delta e^{i \vartheta(t)} - A(t)\big)^{-1} P^{A(t)}(\tau(t)) = \big( \lambda(t) + \delta e^{i \vartheta(t)} - \tilde{A}_{\tau}(t)\big)^{-1} P_{\tau}(t), 
\end{align*}
where $\tilde{A}_{\tau}(t) := A(t) P^{A(t)}(\tau(t))$ and $P_{\tau}(t) := P^{A(t)}(\tau(t))$. (Also notice that in the case $r_0 = \infty$ there is nothing to show because then $\tau(t) = \emptyset$ for every $t \in I$.) 
We now show that $t \mapsto \tilde{A}_{\tau}(t)$ is continuous in the generalized sense. Since, for every fixed $z \in \C$ with $\Re z > 0$,
\begin{align} \label{eq: zwbeh 2.2.1}
(z-\tilde{A}_{\tau}(t))^{-1} &= \big( z-A(t)P_{\tau}(t) \big)^{-1} P_{\tau}(t) + \big( z-A(t)P_{\tau}(t) \big)^{-1} (1-P_{\tau}(t)) \notag \\
&= ( z-A(t) )^{-1} P_{\tau}(t) + \frac 1 z (1-P_{\tau}(t))
\end{align}
and since $(1-P_{\tau}(t))X = P^{A(t)}(\sigma(t) \cup \{\lambda(t)\})X \subset D(A(t)) = D$ by the boundedness of $\sigma(t) \cup \{\lambda(t)\} = \ol{B}_{r_0}(\lambda(t)) \cap \sigma(A(t))$, 
we obtain $(z-\tilde{A}_{\tau}(t_0))^{-1} X \subset D \subset D(\tilde{A}_{\tau}(t))$ and therefore
\begin{align} \label{eq: zwbeh 2.2.2}
(z-\tilde{A}_{\tau}(t))^{-1} - (z-\tilde{A}_{\tau}(t_0))^{-1} = (z-\tilde{A}_{\tau}(t))^{-1} \big( \tilde{A}_{\tau}(t) - \tilde{A}_{\tau}(t_0) \big) (z-\tilde{A}_{\tau}(t_0))^{-1}
\end{align}
for every $t, t_0 \in I$. Since 
\begin{align*}
\tilde{A}_{\tau}(t) (z-\tilde{A}_{\tau}(t_0))^{-1} = P_{\tau}(t) A(t)(z-\tilde{A}_{\tau}(t_0))^{-1} & \longrightarrow  P_{\tau}(t_0) A(t_0)(z-\tilde{A}_{\tau}(t_0))^{-1} \\
&= \tilde{A}_{\tau}(t_0) (z-\tilde{A}_{\tau}(t_0))^{-1} \quad (t \to t_0)
\end{align*}
by the assumed continuity of $t \mapsto P_{\tau}(t)$ and the $W^{1,1}_*$-regularity of $t \mapsto A(t)$, and since $\sup_{t \in I} \| (z-\tilde{A}_{\tau}(t))^{-1} \| < \infty$ 
by~\eqref{eq: zwbeh 2.2.1} and the $(M,0)$-stability of $A$, it follows from~\eqref{eq: zwbeh 2.2.2} that $t \mapsto (z-\tilde{A}_{\tau}(t))^{-1}$ is continuous 
and therefore $t \mapsto \tilde{A}_{\tau}(t)$ is continuous in the generalized sense (Theorem~IV.2.25 of~\cite{KatoPerturbation80}). In particular, $I \times [0,\delta_0'] \ni (t,\delta) \mapsto \big( \lambda(t) + \delta e^{i \vartheta(t)} - \tilde{A}_{\tau}(t)\big)^{-1}$ is continuous by Theorem~IV.3.15 of~\cite{KatoPerturbation80}, hence bounded, and the desired estimate~\eqref{eq: absch 2, spektr} follows.
Combining now~\eqref{eq: absch 1, spektr} and~\eqref{eq: absch 2, spektr} we obtain the desired resolvent estimate~\eqref{eq: resolvent estimate} 
because $1-P(t) = 1-P^{A(t)}(\{\lambda(t)\}) = P^{A(t)}(\sigma(t)) + P^{A(t)}(\tau(t))$ for almost every $t \in I$ and because the left-hand side of~\eqref{eq: resolvent estimate} 
is continuous in $t$.
\end{proof}

\subsection{A quantitative adiabatic theorem without spectral gap condition} \label{sect: quant adsatz ohne sl}

As a supplement to the qualitative adiabatic theorem above (Theorem~\ref{thm: erw adsatz ohne sl}), we note the following quantitative refinement. 
It implies 
that, if in the situation of the above theorem 
the maps $t \mapsto A(t), \lambda(t), e^{i \vartheta(t)}$ and $t \mapsto P(t)$ are even $W^{1,\infty}_*$- or $W^{2,\infty}_*$-regular respectively, then the rate of convergence  (Lemma~\ref{lm: lm 1 zum erw adsatz ohne sl}!) 
of the integrals
\begin{align} \label{eq: eta_0}
&\eta^+(\delta) := \int_0^1 \norm{\delta \big( \lambda(s)+\delta e^{i \vartheta(s)} - A(s) \big)^{-1} P'(s)P(s)} \, ds, \notag \\
&\qquad \qquad \qquad \eta^-(\delta) := \int_0^1 \norm{P(s)P'(s) \delta \big( \lambda(s)+\delta e^{i \vartheta(s)} - A(s) \big)^{-1} } \, ds 
\end{align}
yields a simple upper bound on the rate of convergence of $\sup_{t \in I} \norm{ U_{\eps}(t)-V_{\eps}(t) }$ which we are interested in here. 
%
See~\cite{Teufel01} for an analogous result in the case of skew-adjoint operators $A(t)$.

%

\begin{thm} \label{thm: erw adsatz ohne sl, quantitativ}
Suppose that 
$A(t)$, $\lambda(t)$, $P(t)$ are as in Theorem~\ref{thm: erw adsatz ohne sl} with $X$ not necessarily reflexive and that $t \mapsto A(t)$ is even in $W^{1,\infty}_*(I,L(Y,X))$, $t \mapsto \lambda(t), e^{i \vartheta(t)}$ are even Lipschitz and $t \mapsto P(t)$ is even in $W^{2,\infty}_*(I,L(X))$. Suppose further that $\eta: (0,\delta_0] \subset (0,1] \to (0,\infty)$ is a function such that $\eta(\delta) \longrightarrow 0$ as $\delta \searrow 0$ and 
\begin{align*}
\eta(\delta) \ge \delta \quad \text{as well as} \quad \eta^{\pm}(\delta) \le \eta(\delta)
\end{align*} 
for all $\delta \in (0,\delta_0]$ with $\eta^{\pm}$ as above. 
Then there is a constant $c$ such that
\begin{align*}
\sup_{t \in I} \norm{ U_{\eps}(t)-V_{\eps}(t) } \le c \, \tilde{\eta}^{m_0} \big( \eps^{2/(m_0(m_0+1))} \big) = c \, ( \tilde{\eta} \circ \dotsb \circ \tilde{\eta} ) \big( \eps^{2/(m_0(m_0+1))} \big)
\end{align*}
for $\eps$ sufficiently small, where $\tilde{\eta}(\delta) := \eta( \delta^{\frac{1}{2}} )$.
\end{thm}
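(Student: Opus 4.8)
The plan is to rerun the proof of Theorem~\ref{thm: erw adsatz ohne sl}~(ii), but now keeping \emph{every} error term explicit as a function of the free scales $\bm\delta=(\delta_1,\dots,\delta_{m_0})\in(0,\delta_0]^{m_0}$ and then optimizing over $\bm\delta$.

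\emph{Reduction to an explicit bound.} Since $t\mapsto P(t)$ is now $W^{2,\infty}_*$-regular, the mollification from the proof of Theorem~\ref{thm: erw adsatz ohne sl} is no longer needed: one simply takes $Q_n(t):=P'(t)$ throughout. Then the term $\int_0^t U_\eps(t,s)[P'(s)-Q_n(s),P(s)]V_\eps(s)\,ds$ in~\eqref{eq: gl 4, adsatz ohne sl} vanishes identically and the passage $n\to\infty$ disappears. The operators $B_{\bm\delta}(t)$ are defined exactly as before with $Q_n$ replaced by $P'$, and they satisfy the approximate commutator equation $B_{\bm\delta}(t)A(t)-A(t)B_{\bm\delta}(t)+C_{\bm\delta}(t)\subset[P'(t),P(t)]$. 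Because $t\mapsto A(t)$ is in $W^{1,\infty}_*$, $t\mapsto\lambda(t),e^{i\vartheta(t)}$ are Lipschitz and $t\mapsto P(t)$ is in $W^{2,\infty}_*$, the estimates~\eqref{eq: absch B_n eps}, \eqref{eq: absch B_n eps'}, \eqref{eq: absch C_n eps^+} and~\eqref{eq: absch C_n eps^-} now hold with constants that are independent of $\eps$ \emph{and} of any mollification parameter (the constant in~\eqref{eq: absch B_n eps'} depends only on $\sup_t\|P''(t)\|<\infty$), and $V_\eps$ exists on $D$ for every $\eps$ by Kato's theorem together with Lemmas~\ref{lm: prod- und inversenregel} and~\ref{lm:stoersatz (M,omega)-stab}. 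Carrying out the integration by parts of~\eqref{eq: gl 6, adsatz ohne sl} and bounding $U_\eps,V_\eps$ uniformly via Lemma~\ref{lm:stoersatz (M,omega)-stab}, one arrives at
\begin{align*}
\sup_{t\in I}\|U_\eps(t)-V_\eps(t)\|\le c\Big[\,\eps\sum_{k=1}^{m_0}\Big(\prod_{i=1}^k\delta_i\Big)^{-(m_0+1)}+\sum_{k=1}^{m_0}\Big(\prod_{1\le i<k}\delta_i\Big)^{-1}\big(\eta^+(\delta_k)+\eta^-(\delta_k)\big)\Big]
\end{align*}
for all $\bm\delta\in(0,\delta_0]^{m_0}$ and all $\eps\in(0,\infty)$, with $c$ independent of $\bm\delta$ and $\eps$; here $\eta^\pm$ are as in~\eqref{eq: eta_0}, $\eta^+(\delta)\to0$ as $\delta\searrow0$ exactly as in the proof of Theorem~\ref{thm: erw adsatz ohne sl} (no reflexivity is invoked), and $\eta^-(\delta)\to0$ is guaranteed here by the hypothesis $\eta^-\le\eta$. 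Replacing $\eta^\pm$ by the majorant $\eta$ leaves a bound depending only on $\eta$ and the scales $\delta_i$.

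\emph{Optimizing the scales.} The crucial step is to choose, for each small $\eps$, the scales $\delta_{i,\eps}$ so that this right-hand side is $\le c\,\tilde\eta^{m_0}\big(\eps^{2/(m_0(m_0+1))}\big)$. This is done by a backward recursion in the spirit of the one in the proof of Theorem~\ref{thm: erw adsatz ohne sl}, but with the fixed $\eps$-power now tuned so as to balance the $\eps$-error terms against the $\eta$-error terms: one sets $\delta_{m_0,\eps}:=\eps^{q}$ with $q:=1/(m_0(m_0+1))$ (the power for which $(\eps^{q})^2=\eps^{2/(m_0(m_0+1))}$), and then, for $l=1,\dots,m_0-1$,
\begin{align*}
\delta_{m_0-l,\eps}:=\max\Big(\{\eps^{q}\}\cup\Big\{\Big(\Big(\prod_{m_0-l+1\le i<k}\delta_{i,\eps}\Big)^{-1}\eta(\delta_{k,\eps})\Big)^{1/2}:k\in\{m_0-l+1,\dots,m_0\}\Big\}\Big).
\end{align*}
As in the qualitative proof this forces $(\prod_{1\le i<k}\delta_{i,\eps})^{-1}\eta(\delta_{k,\eps})\le\delta_{1,\eps}$ for every $k$, so the $\eta$-sum is $\le m_0\,\delta_{1,\eps}$; unwinding the recursion and using $\tilde\eta(\delta)=\eta(\delta^{1/2})$ and $\tilde\eta(\delta^2)=\eta(\delta)$ identifies $\eta(\delta_{1,\eps})$ with $\tilde\eta^{m_0}\big(\eps^{2/(m_0(m_0+1))}\big)$, and since $\eta\ge\id$ gives $\delta_{1,\eps}\le\eta(\delta_{1,\eps})$, the $\eta$-sum is bounded by $m_0\,\tilde\eta^{m_0}\big(\eps^{2/(m_0(m_0+1))}\big)$. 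Finally, the hypothesis $\eta(\delta)\ge\delta$ together with the lower bounds $\delta_{j,\eps}\ge\delta_{j+1,\eps}^{1/2}$ produced by the recursion shows that the $\eps$-sum $\eps\sum_k(\prod_{i\le k}\delta_{i,\eps})^{-(m_0+1)}$ is dominated by the same quantity. Combining the two estimates gives the theorem, with $\tilde\eta(\delta^{1/2})$ in the final statement absorbing the outermost square root into the composition.

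\emph{Main obstacle.} The technical heart is precisely this last bookkeeping: one must verify that a single recursively defined family of $m_0$ scales simultaneously controls the $m_0$ ``$\eps$-terms'' $\eps(\prod_{i\le k}\delta_i)^{-(m_0+1)}$ and the $m_0$ ``$\eta$-terms'' $(\prod_{i<k}\delta_i)^{-1}\eta(\delta_k)$ at the single rate $\tilde\eta^{m_0}\big(\eps^{2/(m_0(m_0+1))}\big)$, and that the argument uses only $\eta\ge\id$, $\eta(\delta)\to0$ and $\eta^\pm\le\eta$ — in particular no monotonicity of $\eta$. The regularity upgrade to $W^{1,\infty}_*$/$W^{2,\infty}_*$ plays a purely auxiliary role: it is what makes the estimates on $B_{\bm\delta}$, $B_{\bm\delta}'$ and $C_{\bm\delta}^\pm$ hold with $\eps$-independent constants, which is exactly what is needed to turn the qualitative convergence of Theorem~\ref{thm: erw adsatz ohne sl} into a quantitative rate.
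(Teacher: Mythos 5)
Your proposal follows the paper's architecture — drop the mollification, solve the approximate commutator equation with $Q := P'$, and then optimize the free scales $\delta_1, \dots, \delta_{m_0}$ — but it misses the single technical ingredient that makes the quantitative rate attainable: the refinement of the $B_{\bm\delta}'$-estimate so that it carries an explicit factor $\eta(\delta_k)$.

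Your explicit bound contains the $\eps$-sum $\eps \sum_k (\prod_{i\le k}\delta_i)^{-(m_0+1)}$, taken over from the crude estimate~\eqref{eq: absch B_n eps'}. With $\delta_{m_0,\eps} = \eps^{1/(m_0(m_0+1))}$ and $\delta_{j,\eps} \ge \delta_{m_0,\eps}$ (which any sensible recursion yields), the $k = m_0$ term satisfies
\begin{align*}
\eps \Big(\prod_{i \le m_0} \delta_{i,\eps}\Big)^{-(m_0+1)} \le \eps\, \delta_{m_0,\eps}^{-m_0(m_0+1)} = 1,
\end{align*}
which is not $o(1)$. For $m_0 = 1$ — the weakly semisimple case covered by Corollary~\ref{prop: quant adsatz für hoelderstet spektrmass} — this is exact: $\eps\,\delta_{1,\eps}^{-2} = 1$ with $\delta_{1,\eps} = \eps^{1/2}$, whereas the claimed bound is $c\,\tilde\eta(\eps)\to 0$. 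So the closing claim that the $\eps$-sum ``is dominated by the same quantity'' cannot be right without further input, and the bookkeeping that you flag as the ``technical heart'' does not close.

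What the paper actually does is replace the crude $B_{\bm\delta}'$-bound by the sharper $\eps \sum_k (\delta_k^{m_0+1} \prod_{j\ne k}\delta_j)^{-1}\eta(\delta_k)$ — the middle sum in~\eqref{eq: absch interessierender ausdruck} — whose extra $\eta(\delta_k)$-factor is exactly what makes the optimization close. It is extracted from the critical differentiated terms~\eqref{eq: 3, quant adsatz ohne sl}, \eqref{eq: 4, quant adsatz ohne sl} by inserting $(A(s)-1)^{-1}(A(s)-1)$, commuting $(A(s)-1)\ol{R}_{\delta_l}(s)$ directly in front of $P'(s)P(s)$ (resp.\ $\ol{R}_{\delta_l}(s)$ behind $P(s)P'(s)$), and invoking the $W^{1,\infty}_*$-/Lipschitz hypotheses through~\eqref{eq: 2, quant adsatz ohne sl}. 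So the regularity upgrade is not ``purely auxiliary'' (merely giving $\eps$-independent constants, as you state); it is what produces the $\eta$-factor in the $\eps$-sum. Once that refinement is in place, the simple recursion $\delta_{m_0-k,\eps} := \eta(\delta_{m_0-k+1,\eps})^{1/2}$ (the max with $\eps^q$ is automatic since $\eta\ge\operatorname{id}$) together with the chain~\eqref{eq: absch 1}, \eqref{eq: absch 2} yields the stated rate.
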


\begin{proof}
We proceed as in the proof of the qualitative adiabatic theorem above, but now replace $Q_n$ and $Q_n'$ at any occurrence by $P'$ and $P''$. We can then conclude from~\eqref{eq: gl 4, adsatz ohne sl} and~\eqref{eq: gl 6, adsatz ohne sl} (with the replacements just mentioned) 
that there is a constant $c'$ such that
\begin{align}  \label{eq: absch interessierender ausdruck}
\sup_{t \in I} \norm{ U_{\eps}(t)-V_{\eps}(t) }  \le  \,\, c' \, \bigg(    \sum_{k=1}^{m_0} \eps \Big( \prod_{j=1}^{k} \delta_j \Big)^{-1} 
&+ \sum_{k=1}^{m_0} \eps \Big( \delta_k^{m_0+1} \, \prod_{j \ne k} \delta_j \Big)^{-1} \eta(\delta_k) \notag \\
&\qquad \qquad + \sum_{k=1}^{m_0} \Big( \prod_{1 \le j < k} \delta_j \Big)^{-1} \eta(\delta_k) \bigg)
%
\end{align}
for all $\delta_1, \dots, \delta_{m_0} \in (0,\delta_0]$ and $\eps \in (0,\infty)$. In this estimate the first, second, and third sum correspond to the $B_{\bm{\delta}}$-, $B_{\bm{\delta}}'$-, $C_{\bm{\delta}}$-terms in~\eqref{eq: gl 6, adsatz ohne sl}, respectively. See~\eqref{eq: absch B_n eps} and~\eqref{eq: absch C_n eps^+}, \eqref{eq: absch C_n eps^-} for the estimation of the $B_{\bm{\delta}}$-terms and $C_{\bm{\delta}}$-terms. In order obtain the upper bound for the $B_{\bm{\delta}}'$-terms, refine the estimate~\eqref{eq: absch B_n eps'} on $\int_0^1 \norm{ B_{\bm{\delta}}'(s)} \, ds$ 
from the proof of the previous theorem by using the fact 
that
\begin{align} \label{eq: 2, quant adsatz ohne sl}
\esssup_{s \in I} \norm{ \big( A'(s)-\lambda'(s) - \delta \, r'(s)\big) (A(s)-1)^{-1} } 
\le c < \infty,
\end{align} 
where the additional assumption that $t \mapsto A(t)$ and $t \mapsto \lambda(t), r(t):= e^{i \vartheta(t)}$ be even $W^{1,\infty}_*$-regular enters. 
It follows from this that the integral (from $0$ to $1$) 
of the critical terms in $B_{\bm{\delta}}'$, namely
\begin{gather}
\ol{R}_{\delta_1}(s) \dotsb R_{\delta_l}(s) \big( A'-\lambda'-\delta_l \, r' \big)(s) \ol{R}_{\delta_l}(s) \dotsb \ol{R}_{\delta_k}(s) P'(s) P(s), \label{eq: 3, quant adsatz ohne sl} \\
P(s) P'(s) \, \ol{R}_{\delta_1}(s) \dotsb \ol{R}_{\delta_l}(s) (A'-\lambda'-\delta_l \, r')(s) R_{\delta_l}(s) \dotsb \ol{R}_{\delta_k}(s), \label{eq: 4, quant adsatz ohne sl}
\end{gather}
can be estimated by $( \delta_l^{m_0+1} \, \prod_{j \ne l} \delta_j )^{-1} \eta(\delta_l)$ for all $l \in \{1, \dots, k\}$, as desired. 
(In order to see this, insert in both of the above products~\eqref{eq: 3, quant adsatz ohne sl} and~\eqref{eq: 4, quant adsatz ohne sl} the identity operators $(A(s)-1)^{-1}(A(s)-1)$ behind $(A'-\lambda'-\delta_l \, r')(s)$, commute $(A(s)-1) \ol{R}_{\delta_l}(s)$ directly in front of $P'(s)P(s)$ and $\ol{R}_{\delta_l}(s)$ directly behind $P(s) P'(s)$ respectively, and then use~\eqref{eq: 2, quant adsatz ohne sl} together with the fact that 
\begin{gather*}
\int_0^1 \norm{(A(s)-1) \ol{R}_{\delta}(s) P'(s)P(s)} \, ds, \,\, \int_0^1 \norm{P(s) P'(s) \ol{R}_{\delta}(s)} \, ds \le c \, \frac{\eta(\delta)}{\delta} 
\end{gather*}
and that $\sup_{s \in I} \norm{R_{\delta}(s)}, \sup_{s \in I} \norm{(A(s)-1)R_{\delta}(s)} \le \frac{c}{\delta^{m_0}}$ for all sufficiently small $\delta \in (0,\delta_0]$.)
We now recursively define
\begin{align*}
\delta_{m_0 \, \eps} := \eps^{\frac{1}{m_0(m_0+1)}} 
\quad \text{and} \quad 
\delta_{m_0-k \, \eps} := \big( \eta(\delta_{m_0-k+1 \, \eps}) \big)^{\frac{1}{2}}
\end{align*}
for $\eps$ so small that $\delta_{m_0-k+1 \, \eps}$ lies in $(0,\delta_0]$ and for $k \in \{1, \dots, m_0-1\}$. (It should be noticed 
that $\delta_{m_0-k+1 \, \eps} \longrightarrow 0$ as $\eps \searrow 0$ because $\eta(\delta) \longrightarrow 0$ and that $\delta_{m_0-k+1 \, \eps}$ therefore really lies in the domain $(0,\delta_0]$ of $\eta$ for sufficiently small $\eps$.) 
Since 
$\eta(\delta_{1 \, \eps}) = \tilde{\eta}^{m_0} ( \eps^{2/(m_0(m_0+1))} )$ and $\frac{1}{\delta_{k-1 \, \eps}} \eta(\delta_{k \, \eps}) = \delta_{k-1 \, \eps} \le \eta(\delta_{k-1 \, \eps})$ for $k \in \{2, \dots, m_0\}$,
it follows by induction that
\begin{align} \label{eq: absch 1}
\Big( \prod_{1 \le j < k} \delta_{j \, \eps} \Big)^{-1} \eta(\delta_{k \, \eps}) \le \tilde{\eta}^{m_0} \big( \eps^{2/(m_0(m_0+1))} \big)
\end{align}
and, in particular, $\eta(\delta_{k \, \eps}) \le \tilde{\eta}^{m_0} \big( \eps^{2/(m_0(m_0+1))} \big)$ for all $k \in \{1, \dots, m_0\}$ and sufficiently small $\eps$. Since 
$\delta_{m_0 \, \eps} \le \delta_{m_0-k+1 \, \eps} \le \delta_{m_0-k \, \eps}$ for $k \in \{1, \dots, m_0-1\}$ and small $\eps$, 
it further follows that
\begin{align} \label{eq: absch 2}
\eps \Big( \prod_{j=1}^{k} \delta_{j \, \eps} \Big)^{-1} \le \eps \Big( \delta_{k \, \eps}^{m_0+1}\prod_{j \ne k} \delta_{j \, \eps} \Big)^{-1} \eta(\delta_{k \, \eps}) 
&\le \eps \Big( \prod_{j=1}^{m_0} \delta_{m_0 \, \eps} \Big)^{-(m_0+1)} \tilde{\eta}^{m_0} \big( \eps^{2/(m_0(m_0+1))} \big) \notag \\
&= \tilde{\eta}^{m_0} \big( \eps^{2/(m_0(m_0+1))} \big)
\end{align}
for all $k \in \{1, \dots, m_0\}$ and sufficiently small $\eps$. 
Combining~\eqref{eq: absch interessierender ausdruck}, \eqref{eq: absch 1} and~\eqref{eq: absch 2} we finally obtain the assertion.
\end{proof}

We now specialize to the case of spectral operators $A(t)$ of scalar type 
and note the following quantitative adiabatic theorem tailored to scalar-type spectral operators $A(t)$ 
whose spectral measures $P^{A(t)}$ are Hölder continuous in $t$ around $\lambda(t)$ in some sense (which, in particular, means that in a punctured neighborhood of $\lambda(t)$ there is no more 
eigenvalue of $A(t)$). 
It generalizes a result for skew-adjoint $A(t)$ of Avron and Elgart (Corollary~1 in~\cite{AvronElgart99}) and a refinement of it due to Teufel (Remark~1 in~\cite{Teufel01}) and 
improves the rates of convergence given there.

\begin{cor} \label{prop: quant adsatz für hoelderstet spektrmass}
Suppose $A(t): D \subset X \to X$ for every $t \in I$ is a spectral operator of scalar type (with spectral measure $P^{A(t)}$) such that Condition~\ref{cond: reg 1} is satisfied with $\omega = 0$ and such that $\sup_{t \in I} \sup_{E \in \mathcal{B}_{\C}} \norm{ P^{A(t)}(E) } < \infty$. 
Suppose further that $\lambda(t)$ for every $t \in I$ is an eigenvalue of $A(t)$ 
such that the open sector
\begin{align*}
\lambda(t) + \delta_0 \, S_{(\vartheta(t)-\vartheta_0, \vartheta(t) + \vartheta_0)} := \big\{ \lambda(t) + \delta e^{i \vartheta}: \delta \in (0,\delta_0), \vartheta \in (\vartheta(t)-\vartheta_0, \vartheta(t) + \vartheta_0) \big\}
\end{align*} 
of radius $\delta_0 \in (0,\infty)$ and angle $2 \vartheta_0 \in (0, \pi)$ for every $t \in I$ is contained in 
$\rho(A(t))$ 
and such that $t \mapsto \lambda(t)$, $e^{i \vartheta(t)}$ are absolutely continuous. 
Suppose finally that $P(t)$ for every $t \in I$ is a bounded projection in $X$ such that $P(t) = P^{A(t)}(\{ \lambda(t) \})$ for almost every $t \in I$ 
and $t \mapsto P(t)$ is in $W^{2,1}_*(I,L(X))$, and suppose that 
$P^{A(t)}$ is H\"older continuous locally around $\lambda(t)$ with exponent $\alpha \in (0,1]$ uniformly in $t \in I$ in the following sense: 
there is a constant $c_0 \in (0, \infty)$ such that 
\begin{align*}
\big\| P^{A(t)}(E)x \big\| \le c_0 \, \lambda(E)^{\frac{\alpha}{2}} \, \norm{x}
\end{align*}
for all $x \in X$ and for all $t \in I$ and $E \in \mathcal{B}_{\C}$ that are contained in the punctured neighborhood $\dot{B}_{r_0}(\lambda(t)) := B_{r_0}(\lambda(t)) \setminus \{\lambda(t)\}$ of $\lambda(t)$ (with $r_0$ independent of $t$).
Then there exists a constant 
$c \in (0,\infty)$ such that
\begin{align*}
\sup_{t \in I} \norm{ U_{\eps}(t) - V_{\eps}(t) } \le c \; \eps^{ \frac{\alpha}{2(1+\alpha)} }
\end{align*}
for small enough $\eps \in (0, \infty)$, where $V_{\eps}$ denotes the evolution system for $\frac{1}{\eps} A + [P',P]$.
\end{cor}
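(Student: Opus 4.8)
The plan is to deduce the corollary from the quantitative adiabatic theorem of the previous section (Theorem~\ref{thm: erw adsatz ohne sl, quantitativ}), applied with $m_0 = 1$, once two non-routine ingredients have been supplied: the verification of the hypotheses of Theorem~\ref{thm: erw adsatz ohne sl}, and the estimation of the quantities $\eta^{\pm}$ from~\eqref{eq: eta_0} in terms of the H\"older exponent $\alpha$.

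First I would check the hypotheses of Theorem~\ref{thm: erw adsatz ohne sl}. Since $A(t)$ is a scalar-type spectral operator, $A(t)|_{P^{A(t)}(\{\lambda(t)\})X} = \lambda(t)\,\id$, so $P(t) = P^{A(t)}(\{\lambda(t)\})$ is weakly associated with $A(t)$ and $\lambda(t)$ \emph{of order $m_0 = 1$} for almost every $t$ (Proposition~\ref{prop: krit ex schw assoz proj, A spektral}, a scalar-type datum being in particular of finite type), and $\rk P(0) = \rk P(t) < \infty$ for a.e.~$t$ by the continuity of $t\mapsto P(t)$ and Lemma~VII.6.7 of~\cite{DunfordSchwartz}. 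The reduced resolvent estimate~\eqref{eq: resolvent estimate} is obtained exactly as in the proof of Corollary~\ref{cor: adsatz fuer spektrale A(t)}: write $1 - P(t) = P^{A(t)}\big(\dot{B}_{r_0}(\lambda(t))\cap\sigma(A(t))\big) + P^{A(t)}(\tau(t))$, and use scalar-type spectrality together with Theorem~XVIII.2.11 and Lemma~III.1.5 of~\cite{DunfordSchwartz} and the uniform bound $\sup_{t,E}\norm{P^{A(t)}(E)}<\infty$ for the first summand, and the sector condition (which gives $\dist(\lambda(t) + \delta e^{i\vartheta(t)},\sigma(A(t))) \ge (\sin\vartheta_0)\,\delta$) together with the generalized-continuity argument from Corollary~\ref{cor: adsatz fuer spektrale A(t)} for the second.

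The heart of the proof is the estimate $\eta^{\pm}(\delta) \le c\,\delta^{\alpha/(1+\alpha)}$. Using the scalar-type functional calculus I would write
\begin{align*}
\bar R_\delta(s)\,P'(s)P(s) = \int_{\sigma(A(s))\setminus\{\lambda(s)\}} \frac{1}{\lambda(s) + \delta e^{i\vartheta(s)} - z}\;dP^{A(s)}(z)\;P'(s)P(s),
\end{align*}
split the domain of integration at a radius $\rho\in(2\delta,r_0)$ about $\lambda(s)$, and estimate each piece by the spectral-integral inequality (Theorem~XVIII.2.11 and Lemma~III.1.5 of~\cite{DunfordSchwartz}): on $\{|z-\lambda(s)|<\rho\}$ the sector condition bounds the integrand by $((\sin\vartheta_0)\delta)^{-1}$, while the H\"older hypothesis bounds $\norm{P^{A(s)}(E)}$ on subsets $E$ of that ball by $c_0(\pi\rho^2)^{\alpha/2}$; on $\{|z-\lambda(s)|\ge\rho\}$ the integrand is bounded by $2/\rho$ and the spectral projections are uniformly bounded. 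This yields $\norm{\delta\bar R_\delta(s)P'(s)P(s)}\le c(\rho^\alpha + \delta/\rho)$ uniformly in $s$, and optimizing over $\rho$ (so $\rho\sim\delta^{1/(1+\alpha)}$) gives $\norm{\delta\bar R_\delta(s)P'(s)P(s)}\le c\,\delta^{\alpha/(1+\alpha)}$. Integrating in $s$ (using $\sup_s\norm{P'(s)P(s)}<\infty$), and running the same argument with the factors interchanged — now estimating $\norm{P(s)P'(s)\,\delta\bar R_\delta(s)}$ via the total-variation bound applied to $z\mapsto\scprd{(P(s)P'(s))^{*}x^{*},P^{A(s)}(z)x}$, which needs no reflexivity — I get $\eta^{\pm}(\delta)\le c\,\delta^{\alpha/(1+\alpha)}$. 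I then set $\eta(\delta):=c\,\delta^{\alpha/(1+\alpha)}$; since $\alpha/(1+\alpha)<1$, one has $\eta(\delta)\ge\delta$ for small $\delta$ (shrink $\delta_0$), $\eta(\delta)\to0$, and $\eta^{\pm}(\delta)\le\eta(\delta)$, as required.

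Finally I would invoke Theorem~\ref{thm: erw adsatz ohne sl, quantitativ} with this $\eta$ and $m_0 = 1$ — equivalently, re-run its proof in the case $m_0=1$, noting that $W^{2,1}_{*}$-regularity of $P$ already makes $t\mapsto P'(t)$ norm-continuous, so the mollification $Q_n$ of that proof is superfluous and the regularity assumed here suffices. This gives
\begin{align*}
\sup_{t\in I}\norm{U_\eps(t) - V_\eps(t)} \le c\,\tilde\eta^{\,m_0}\big(\eps^{2/(m_0(m_0+1))}\big) = c\,\tilde\eta(\eps) = c\,\eta\big(\eps^{1/2}\big) = c'\,\eps^{\alpha/(2(1+\alpha))}
\end{align*}
for $\eps$ small, where $\tilde\eta(\delta) = \eta(\delta^{1/2})$, which is the assertion. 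The main obstacle is the $\eta^{\pm}$-estimate: extracting the correct power $\delta^{\alpha/(1+\alpha)}$ from the H\"older hypothesis requires the $\delta$-dependent choice of split radius $\rho$ and careful bookkeeping with the non-orthogonal spectral calculus (the factor-$4$ total-variation inequality) in concert with the sector condition; a secondary technical point is checking that the quantitative estimate of Theorem~\ref{thm: erw adsatz ohne sl, quantitativ} indeed goes through under the present $W^{1,1}_{*}$/absolute-continuity/$W^{2,1}_{*}$ regularity when $m_0=1$.
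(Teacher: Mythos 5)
Your overall plan coincides with the paper's: derive from the H\"older hypothesis a bound on the reduced resolvent of the form $c\,\delta^{\alpha/(1+\alpha)}$ by splitting the spectrum at a $\delta$-dependent radius, and then re-run the quantitative commutator-equation argument with $m_0=1$ and $\delta_\eps = \eps^{1/2}$. Your derivation of the exponent $\alpha/(1+\alpha)$ (split at $\rho\sim\delta^{1/(1+\alpha)}$, sector lower bound, total-variation inequality, H\"older bound) is exactly the paper's computation. Two points deserve comment.

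First, a stylistic matter that actually streamlines the argument: the paper estimates the bare operator norm $\norm{\delta\ol{R}_\delta(t)} \le \eta(\delta)$, not $\norm{\delta\ol{R}_\delta(t)P'(t)P(t)}$. This single supremum bound immediately yields (i) the reduced resolvent estimate~\eqref{eq: resolvent estimate} -- so the detour through the proof of Corollary~\ref{cor: adsatz fuer spektrale A(t)} in your first paragraph is both unnecessary and, as stated, does not quite go through (that proof uses the continuity of $t\mapsto P^{A(t)}(\tau(t))$, which is not assumed here) -- and (ii) both $\eta^\pm$ in one shot, making the separate dual-pairing argument for $\eta^-$ superfluous. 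Your argument is not wrong, but carrying the $P'P$ factor through the spectral-integral estimate and then handling $\eta^-$ separately is needless work, and in fact the ``bare'' bound is what is actually needed for the next point.

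Second, and more importantly, there is a genuine gap in your final paragraph. You correctly flag that Theorem~\ref{thm: erw adsatz ohne sl, quantitativ} cannot be invoked verbatim, because it assumes $t\mapsto A(t)$ is $W^{1,\infty}_*$, $t\mapsto\lambda(t), e^{i\vartheta(t)}$ are Lipschitz, and $t\mapsto P(t)$ is $W^{2,\infty}_*$, whereas here one only has $W^{1,1}_*$, absolute continuity, and $W^{2,1}_*$. But your proposed resolution -- that the mollification $Q_n$ is superfluous because $W^{2,1}_*$ makes $P'$ continuous -- does not address the issue: the proof of Theorem~\ref{thm: erw adsatz ohne sl, quantitativ} already dispenses with $Q_n$ (it replaces $Q_n, Q_n'$ by $P', P''$ throughout), and the place where the stronger $W^{1,\infty}_*$/Lipschitz regularity enters is the estimate~\eqref{eq: 2, quant adsatz ohne sl} on $(A'-\lambda'-\delta r')(A-1)^{-1}$ used to control the $B'_{\bm{\delta}}$-terms. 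The correct resolution is precisely the observation made in the paper's proof: the bound $\sup_{s\in I}\norm{\delta\ol{R}_\delta(s)} \le \eta(\delta)$ is a \emph{supremum} bound in $s$, not merely an $L^1_s$ bound, and this extra strength compensates -- one can now estimate the $B'_{\bm{\delta}}$-terms with only an $L^1$ bound on $s\mapsto(A'(s)-\lambda'(s)-\delta r'(s))(A(s)-1)^{-1}$, which is available under $W^{1,1}_*$/absolute continuity. You have in fact derived this supremum bound in your own ``heart of the proof'' paragraph, but you do not recognize it as the ingredient that closes the regularity gap; without this observation, ``re-run its proof ... the regularity assumed here suffices'' remains an unsubstantiated assertion. (A minor further omission: the supremum bound on $\norm{\delta\ol{R}_\delta(t)}$ is first established for $t$ in the full-measure set where $P(t) = P^{A(t)}(\{\lambda(t)\})$ and then extended to all $t\in I$ by continuity of the left-hand side in $t$.)
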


\begin{proof}
We first show that there exists a function $\eta: (0, \delta_0'] \to (0,\infty)$ such that $\eta(\delta) \longrightarrow 0$ as $\delta \searrow 0$ and 
\begin{align} \label{eq: gl 1, quant für normal}
\eta(\delta) \ge \delta 
\quad \text{and} \quad 
\norm{\delta \ol{R}_{\delta}(t)} = \norm{ \delta \big( \lambda(t)+\delta e^{i \vartheta(t)} - A(t) \big)^{-1} (1-P(t)) } \le \eta(\delta)
\end{align}
for all $\delta \in (0,\delta_0']$ and $t \in I$ (with a suitable $\delta_0'$). In fact, it is sufficient to prove~\eqref{eq: gl 1, quant für normal} for all $t$ in the set $I \setminus N$ of those $t$ where $P(t) = P^{A(t)}(\{\lambda(t)\})$, because this set $I \setminus N$ is dense in $I$ by assumption and because the left-hand side of the second inequality in~\eqref{eq: gl 1, quant für normal} is continuous in $t$ by assumption.
We 
observe that for every $t \in I \setminus N$ 
\begin{align*}
\big| \scprd{x^*, \delta \ol{R}_{\delta}(t)x } \big|
\le \int_{\sigma(A(t)) \setminus \{\lambda(t)\} } \frac{\delta}{ | \lambda(t) + \delta e^{i \vartheta(t)} - z | } \,\, d \big| P^{A(t)}_{x^*,x} \big|(z),
\end{align*}
where $\big| P^{A(t)}_{x^*,x} \big|$ denotes the total variation of 
$E \mapsto P^{A(t)}_{x^*,x}(E) := \scprd{x^*, P^{A(t)}(E)x}$
(use the scalar-type spectrality of $A(t)$ and Theorem~XVIII.2.11 of~\cite{DunfordSchwartz}).
We then divide the punctured spectrum $\sigma(A(t)) \setminus \{\lambda(t)\}$ 
into the parts
\begin{align*}
\sigma_{1 \, r_{\delta}}(t) := \sigma(A(t)) \cap B_{r_{\delta}}(\lambda(t)) \setminus \{\lambda(t)\} \quad \text{and} \quad \sigma_{2 \, r_{\delta}}(t) := \sigma(A(t)) \cap \C \setminus B_{r_{\delta}}(\lambda(t))
\end{align*} 
of those spectral values that are close to $\lambda(t)$ respectively far from $\lambda(t)$,
where $r_{\delta} := \delta^{\gamma}$ and $\gamma \in (0,1)$ will be chosen in~\eqref{eq: gl 4, quant für normal} below. 
Since, by Lemma~III.1.5 of~\cite{DunfordSchwartz}, 
\begin{align*}
\big| P^{A(t)}_{x^*,x} \big|(E) \le 4 \sup_{F \in \mathcal{B}_E} \big| \langle x^*, P^{A(t)}(F) P^{A(t)}(E)x \rangle \big|  
\le 4 M' \norm{x^*} \big\| P^{A(t)}(E)x \big\|
\end{align*}
for every $t \in I$ and $E \in \mathcal{B}_{\C}$ (where $M' := \sup_{t \in I} \sup_{F \in \mathcal{B}_{\C}} \norm{ P^{A(t)}(F) } < \infty$) and since, by the assumed sector condition,
\begin{align*}
\operatorname{dist}\big( \lambda(t)+\delta e^{i \vartheta(t)}, \sigma(A(t)) \big) \ge (\sin \vartheta_0) \, \delta
\end{align*}
for every $t \in I$ and $\delta \in (0,\delta_0']$ (where $\delta_0'$ is chosen small enough), there are positive constants $c_1$, $c_2$ such that
\begin{gather*}
\int_{\sigma_{1 \, r_{\delta}}(t)} \frac{\delta}{ | \lambda(t) + \delta e^{i \vartheta(t)} - z | } \,\, d \big|P^{A(t)}_{x^*,x} \big|(z) 
\le \frac{1}{\sin \vartheta_0} \, \big| P^{A(t)}_{x^*,x} \big| \big( \dot{B}_{r_{\delta}}(\lambda(t))  \big)
\le c_1 \delta^{\alpha \, \gamma} \norm{x^*} \norm{x} \\ 
\text{as well as} \\
\int_{\sigma_{2 \, r_{\delta}}(t)} \frac{\delta}{ | \lambda(t) + \delta e^{i \vartheta(t)} - z | } \,\, d \big|P^{A(t)}_{x^*,x}\big| (z)
\le \frac{\delta}{ r_{\delta}-\delta } \, \big| P^{A(t)}_{x^*,x} \big|(\C)  
\le c_2 \delta^{1-\gamma} \norm{x^*} \norm{x}  
\end{gather*}
for every $x \in X$, $x^* \in X^*$, $\delta \in (0,\delta_0']$ and $t \in I$. 
Consequently,
\begin{align}
\norm{\delta \ol{R}_{\delta}(t) } 
\le c_1 \, \delta^{ \alpha \, \gamma} + c_2 \, \delta^{ 1-\gamma } 
\le \max \{ c_1, c_2 \} \, \delta^{ \min \{ \alpha \, \gamma ,  1-\gamma \} } = c_0' \, \delta^{ \beta(\gamma) } 
\end{align}
for every $t \in I \setminus N$ and $\delta \in (0, \delta_0']$ (notice that $\beta(\gamma) := \min \{ \alpha \, \gamma, 1-\gamma \}$, for given $\gamma$, is the best -- that is, biggest -- 
possible exponent in the second inequality above). 
And as $\gamma \mapsto \beta(\gamma)$ is maximal at $\gamma_0 := \frac{1}{1+\alpha}$, we choose 
\begin{align} \label{eq: gl 4, quant für normal}
\gamma := \gamma_0, \quad \beta := \beta(\gamma_0) = \frac{\alpha}{1+\alpha}, \quad \eta(\delta) := c_0' \, \delta^{\beta} = c_0' \, \delta^{ \frac{\alpha}{1+\alpha} },
\end{align}
thereby obtaining \eqref{eq: gl 1, quant für normal} (first for all $t \in I \setminus N$ and then for all $t \in I$). 
\smallskip

With~\eqref{eq: gl 1, quant für normal} at hand, we can now show the desired conclusion in essentially the same way as in the proof of the previous theorem (but for the convenience of the reader, we give a self-contained argument). 
Indeed, since $A(t)$ is a spectral operator of scalar type and $P(t) = P^{A(t)}(\{\lambda(t)\})$ for almost every $t \in I$, the projection $P(t)$ for almost every $t \in I$ is weakly associated of order $1$ with $A(t)$ and $\lambda(t)$ (Proposition~\ref{prop: krit ex schw assoz proj, A spektral}) 
and so 
\begin{align*}
P(t)A(t) \subset A(t)P(t) = \lambda(t)P(t)
\end{align*}
holds for every $t \in I$ by the closedness argument in~\eqref{eq: P(t) vertauscht mit A(t) fuer alle t}. 
We can therefore conclude from~\eqref{eq: gl 4, adsatz ohne sl} and~\eqref{eq: gl 6, adsatz ohne sl} (with $Q_n$ and $Q_n'$ replaced by $P'$ and $P''$ at any occurrence and with $m_0 = 1$)
and from~\eqref{eq: gl 1, quant für normal}
that there is a constant $c'$ such that
\begin{align} \label{eq: gl 5, quant für normal}
\sup_{t \in I} \norm{ U_{\eps}(t)-V_{\eps}(t) } \le c' \Big( \eps \, \frac 1 \delta + \eps \, \frac{1}{\delta^2} \, \eta(\delta) + \eta(\delta) \Big)
\end{align}
for all $\eps \in (0,\infty)$ and $\delta \in (0,\delta_0']$ with $\eta$ as in~\eqref{eq: gl 4, quant für normal} above. 
Choosing now $\delta_{\eps} := \eps^{\frac 1 2}$ we immediately get the desired conclusion from~\eqref{eq: gl 5, quant für normal} and~\eqref{eq: gl 4, quant für normal}. 
(In order to see~\eqref{eq: gl 5, quant für normal}, notice that the first and third term on the right-hand side of~\eqref{eq: gl 5, quant für normal} are upper bounds for the $B_{\bm{\delta}}$-terms and $C_{\bm{\delta}}$-terms in~\eqref{eq: gl 6, adsatz ohne sl} by virtue of~\eqref{eq: gl 1, quant für normal}. And to see that the middle term in~\eqref{eq: gl 5, quant für normal} is an upper bound for the $B_{\bm{\delta}}'$-terms in~\eqref{eq: gl 6, adsatz ohne sl}, argue as in the proof of the previous theorem, but notice that now it is sufficient to have instead of~\eqref{eq: 2, quant adsatz ohne sl} a $\delta$-independent bound on the integral of $s \mapsto (A'(s)-\lambda'(s)- \delta r'(s)) (A(s)-1)^{-1}$ because now we cannot only estimate the integral of $s \mapsto \ol{R}_{\delta}(s)$ but by~\eqref{eq: gl 1, quant für normal} even its supremum.)
\end{proof}

\subsection{Some examples} \label{sect: bsp, adsaetze ohne sl}

We begin with two examples of operators $A(t)$ with eigenvalues $\lambda(t)$ that are allowed to be non-isolated and non-weakly-semisimple for every $t \in I$. 
In the first example, the operators $A(t)$ are spectral.

\begin{ex} \label{ex: A_2(t) diagb}
Suppose $A$, $\lambda$, $P$ with $A(t) = R(t)^{-1} A_0(t) R(t)$, $P(t) = R(t)^{-1} P_0 R(t)$, and $R(t) = e^{C t}$ are given as follows in $X := \ell^p(I_d) \times \ell^p(I_{\infty})$ (where $p \in [1,\infty)$ and $d \in \N$): 
\begin{align*}
A_0(t) := 
\begin{pmatrix} \lambda(t) + \alpha(t) N  & 0 \\ 0 & \operatorname{diag}\big( (\lambda_n)_{n \in \N} \big) \end{pmatrix} 
\quad \text{and} \quad
P_0 := \begin{pmatrix} 1 & 0 \\ 0 & 0 \end{pmatrix},
\end{align*} 
where $\lambda(t) \in (-\infty, 0]$, $\alpha(t)$, $N$ are such that Condition~\ref{cond: baustein mit nicht-halbeinfachem ew} is satisfied and where $(\lambda_n)_{n \in \N}$ is an enumeration of $[-1,0] \cap \Q$ such that $\lambda(t) \notin \{ \lambda_n: n \in \N \}$ for almost every $t \in I$. 
Additionally, suppose $t \mapsto \lambda(t)$ and $t \mapsto \alpha(t)$ are absolutely continuous and $C$ is the right shift operator on $\ell^p(I_d) \times \ell^p(I_{\infty}) \cong \ell^p(I_{\infty})$:
\begin{align} \label{eq: def C, ex 1 without sg}
C(z_1, \dots, z_d, z_{d+1}, \dots ) := (0, z_1, \dots, z_{d-1}, z_d, \dots ).
\end{align} 
Then $t \mapsto A(t)$ is in $W^{1,\infty}_*(I,L(X))$ and $t \mapsto A_0(t)$ is $(M_0,0)$-stable (by Lemma~\ref{lm: char (M,0)-stab für einfaches A}), so that $A$ is $(M,0)$-stable for some $M \in [1,\infty)$ by Lemma~\ref{lm: (M,w)-stabilität und ähnl.trf.}. Since $A_0(t)|_{P_0 X} - \lambda(t)$ 
is nilpotent of order at most $m_0 := \operatorname{dim} \ell^p(I_d) = d$ for every $t \in I$ 
and since $A_0(t)|_{(1-P_0)X}-\lambda(t)$ is injective and has dense range in $(1-P_0)X$ 
(because $\lambda(t) \notin \{ \lambda_n: n \in \N \}$) 
for almost every $t \in I$, 
$P_0$ is weakly associated of order $m_0$ 
with $A_0(t)$ and $\lambda(t)$, whence the same is true for $A(t)$, $P$ instead of $A_0(t)$ and $P_0$.
And finally, the resolvent estimate~\eqref{eq: resolvent estimate} 
is clearly fulfilled if we choose $\vartheta(t) := \frac{\pi}{2}$ for all $t \in I$. 
All 
other hypotheses of 
Theorem~\ref{thm: erw adsatz ohne sl}~(i) are obvious. 
$\blacktriangleleft$
\end{ex}

In the second example, the operators $A(t)$ are not spectral (by Theorem~XV.3.10 and
~XV.8.7 of~\cite{DunfordSchwartz} 
and by the uncountability of $\sigma_r(S_+) = B_1(0)$ for the right shift operator $S_+$ on $X = \ell^p(I_{\infty})$ with $p \ne 1$). 

\begin{ex} \label{ex: A_2(t) nicht diagb}
Suppose $A$, $\lambda$, $P$ with $A(t) = R(t)^{-1} A_0(t) R(t)$, $P(t) = R(t)^{-1} P_0 R(t)$, and $R(t) = e^{C t}$ are given as follows in $X := \ell^p(I_d) \times \ell^p(I_{\infty})$ (where $p \in (1,\infty)$ and $d \in \N$): 
\begin{align*}
A_0(t) := 
\begin{pmatrix} \lambda(t) + \alpha(t) N  & 0 \\ 0 & S_+ - 1 \end{pmatrix} 
\quad \text{and} \quad
P_0 := \begin{pmatrix} 1 & 0 \\ 0 & 0 \end{pmatrix},
\end{align*} 
where $\lambda(t) \in \partial B_1(-1)$, $\alpha(t)$, $N$ are such that Condition~\ref{cond: baustein mit nicht-halbeinfachem ew} is satisfied. Additionally, $t \mapsto \lambda(t)$ and $t \mapsto \alpha(t)$ are absolutely continuous and $C$ is the bounded linear operator in $\ell^p(I_d) \times \ell^p(I_{\infty}) \cong \ell^p(I_{\infty})$ given by 
\begin{align} \label{eq: def C, ex 2 without sg}
C (z_1, \dots, z_d, z_{d+1}, \dots ) := (0, \dots, 0, z_{d+1}, -z_d, 0, \dots),
\end{align}
where in the vector on the right $z_{d+1}$, $-z_d$ appear in the $d$th and $(d+1)$th place. 
Since $\lambda(t) \in \partial B_1(-1) = \sigma_c(S_+ - 1)$ 
for every $t \in I$ (because $p \ne 1$), 
$P_0$ is weakly associated with $A_0(t)$ and $\lambda(t)$ and therefore the same goes for $A_0(t)$, $P_0$ replaced by $A(t)$ and $P(t)$.
Also, if for every $t \in I$ we choose $\vartheta(t)$ such that $\lambda(t) = -1 + e^{i \vartheta(t)}$, 
then the resolvent estimate~\eqref{eq: resolvent estimate} 
holds true 
because
\begin{align*}
\norm{  \big( \lambda(t) + \delta e^{i \vartheta(t)} - A_0(t) \big)^{-1} (1-P_0) } \le \norm{  \big( 1 + \delta - e^{-i \vartheta(t)} S_+ \big)^{-1}  } \le \frac{1}{\delta}
\end{align*}
for every $t \in I$ and $\delta \in (0,\infty)$ (use a Neumann series expansion!).
$\blacktriangleleft$
\end{ex}

We chose the operators $C$ in the particular way~\eqref{eq: def C, ex 1 without sg} and~\eqref{eq: def C, ex 2 without sg} above in order to make sure that the trivial adiabatic theorem from Section~\ref{sect: ad zeitentw} cannot be applied and that the examples cannot be reduced to finite-dimensional examples. See~\cite{diss} for detailed explanations.
%
%
In our last example we show that the conclusion of the adiabatic theorem without spectral gap condition may fail if the regularity assumption on $P$ 
is the only one to be violated. 

\begin{ex}  \label{ex: reg an P wesentl, ohne sl}
Set $A(t):= M_{f_t}$ in $X := L^p(\R)$ (for some $p \in [1,\infty)$), where 
\begin{align*}
f_t := f_0(\,.\, + t) \quad \text{with} \quad 0 \ne f_0 \in C_c^1(\R, i \R),
\end{align*}
$\lambda(t) := 0$ and $P(t) := M_{\chi_{E_t}}$ with $E_t := \{ f_t = 0 \}$. 
Then all the assumptions of the adiabatic theorem without spectral gap condition -- in the version for projections of infinite rank (second remark after Theorem~\ref{thm: erw adsatz ohne sl}) -- are satisfied with the sole exception that $t \mapsto P(t)$ is not strongly continuously differentiable (by Lemma~3.5.3 of~\cite{diss}). 
And indeed, the conclusion of the adiabatic theorem 
already fails: as the $A(t)$ are pairwise commuting and $t \mapsto f_t(x)$ is Riemann integrable for every $x \in \R$, one has
\begin{align*}
\big( U_{\eps}(t,s) g \big) (x) = \Big( e^{\frac 1 \eps \int_s^t A(\tau) \, d\tau} \, g \Big)(x) = e^{ \frac 1 \eps \int_s^t f_{\tau}(x) \, d\tau} \, g(x)
\end{align*}  
for almost every $x \in \R$ and therefore (by $f_0(\R) \subset i \R$)
\begin{align} \label{eq: ex 3 without sg}
\norm{ (1-P(t)) U_{\eps}(t) P(0) g }^p 
= \int \big| (1-\chi_{E_t}(x)) \chi_{E_0}(x) g(x) \big|^p \, dx
\end{align}
for every $t \in I$, $\eps \in (0,\infty)$ and $g \in X$. 
Since for every $t \in (0,1]$ there is a $g \in X$ such that the right-hand side of this equation 
the conclusion 
of the adiabatic theorem without spectral gap 
-- more precisely, the weaker assertion that $\sup_{t \in I} \norm{(1-P(t))U_{\eps}(t)P(0)g} \longrightarrow 0$ for all $g \in X$ --
fails.
$\blacktriangleleft$
\end{ex}


It should be pointed out that the failure of both the assumptions 
and the conclusion of the adiabatic theorems without spectral gap condition presented above is a quite typical phenomenon in the case where $A(t) = M_{f_t}$ in $X = L^p(X_0)$ for some $p \in [1,\infty)$ and some $\sigma$-finite measure space $(X_0, \mathcal{A}, \mu)$. 
Indeed, if $A(t) = M_{f_t}$ in $X = L^p(X_0)$ for measurable functions $f_t: X_0 \to \{ \Re z \le 0 \}$ such that $D(M_{f_t}) = D$ for all $t \in I$, 
if $\lambda(t)$ is an eigenvalue of $A(t)$, and if $P(t)$ for almost every $t \in I$ (with exceptional set $N$) is 
weakly associated with $A(t)$ and $\lambda(t)$, 
then 
\begin{align*}
P(t) = M_{\chi_{ \{ f_t = \lambda(t) \} }} = M_{\chi_{E_t}} \text{ for every } t \in I \setminus N 
\end{align*}
by Theorem~\ref{thm: typ mögl für PX und (1-P)X}, and therefore the following holds true.
As soon as $I \setminus N \ni t \mapsto P(t)$ is not constant, the assumptions of the adiabatic theorem without spectral gap (Theorem~\ref{thm: erw adsatz ohne sl}) must fail (because then $I \setminus N \ni t \mapsto P(t) = M_{\chi_{E_t}}$ cannot extend to a strongly continuously differentiable map 
by Lemma~3.5.3 of~\cite{diss}). 
And as soon as, in addition, the maps $f_t$ are $i \R$-valued and $t \mapsto f_t g \in X$ is continuous for all $g \in D$, the conclusion 
of Theorem~\ref{thm: erw adsatz ohne sl}, or more precisely, of its corollary
\begin{align*}
\sup_{t \in I} \norm{(1-P(t))U_{\eps}(t)P(0)} \longrightarrow 0 \quad \text{and} \quad \sup_{t \in I} \norm{P(t)U_{\eps}(t)(1-P(0))} \longrightarrow 0,
\end{align*}
must fail as well. (In order to see this, observe from~\cite{NickelSchnaubelt98} (Theorem~2.3) or~\cite{SchmidJEE} (Theorem~2.1) 
that the evolution system $U_{\eps}$ for $\frac 1 \eps A$ exists on $D$ and can be strongly approximated by finite products of operators of the form $e^{M_{f_{\tau}} \, \sigma}$, 
so that for arbitrary $g \in X$ 
\begin{align*}
\Big| (1-\chi_{E_t}(x)) \big( U_{\eps}(t) \chi_{E_0} g \big)(x) - \chi_{E_t}(x) \big( U_{\eps}(t) (1-\chi_{E_0}) g \big)(x) \Big| = 
\big| \chi_{E_t}(x) - \chi_{E_0}(x) \big| \big| g(x) \big|
\end{align*}
for almost every $x \in X_0$. 
Consequently, 
\begin{align}
\big\| (1-P(t))U_{\eps}(t)P(0)g - P(t)U_{\eps}(t)(1-P(0))g \big\| = \norm{P(t)g - P(0)g}
\end{align} 
for all $t \in I \setminus N$, $\eps \in (0,\infty)$ 
and since $I \setminus N \ni t \mapsto P(t)$ is not constant, 
there is a $t \in (0,1]$ and a $g \in X$ such that $(1-P(t))U_{\eps}(t)P(0)g$ and $P(t)U_{\eps}(t)(1-P(0))g$ do not both converge to $0$ as $\eps \searrow 0$.)

\subsection{An application to open quantum systems} \label{sect: anwendung q.d.s.} 

In this section we apply the adiabatic theorem without spectral gap condition from Section~\ref{sect: qual adsatz ohne sl} to weakly dephasing generators $A(t)$ of quantum dynamical semigroups in $X = S^p(\mathfrak{h})$ with $p \in (1,\infty)$ and with $\lambda(t) = 0$. So, 
\begin{align} \label{eq: Lambda(t) auf S^1}
A(t) \rho := Z_0(t)(\rho) + \sum_{j \in J} B_j(t) \rho B_j(t)^* - 1/2 \{ B_j(t)^*B_j(t), \rho \} 
\end{align}
for $\rho \in D(Z_0(t))$ with $Z_0(t)$ being the generator of the semigroup on $S^p(\mathfrak{h})$ defined by $e^{Z_0(t)\tau}(\rho) :=  e^{-i H(t) \tau} \rho \, e^{iH(t) \tau}$, where $H(t): D(H(t)) \subset \mathfrak{h} \to \mathfrak{h}$ is a self-adjoint operator and $B_j(t)$ for every $j \in J$ is a bounded opertor in $\mathfrak{h}$ such that the weak dephasingness condition
\begin{align} \label{eq: weak dephasingness cond, t-dependent}
\sum_{j\in J} B_j(t) B_j(t)^* = \sum_{j\in J} B_j(t)^* B_j(t) < \infty
\end{align}
is satisfied for every $t \in I$.
It would be desirable to apply the adiabatic theorem to the respective operators $A(t)$ on the natural space $X = S^1(\frak{h})$, but in this (non-reflexive) space, existence of projections $P(t)$ weakly associated with $A(t)$ and $\lambda(t) = 0$ goes wrong quite often. 
In fact, every operator $A$ of the form~\eqref{eq: weakly dephas generator} in $X = S^1(\mathfrak{h})$, where $\mathfrak{h}$ is chosen to be infinite-dimensional and where the operators $H$ and $B_j$ are chosen such that
\begin{itemize}
\item $H$ has finite point spectrum $\sigma_p(H)$ and each $\mu \in \sigma_p(H)$ has finite multiplicity, and
\item \eqref{eq: dephasingness cond} and \eqref{eq: vor lindblad, p=1} are satisfied,
\end{itemize}
is a dephasing generator of a quantum dynamical semigroup on $S^1(\mathfrak{h})$, but there exists no projection weakly associated with $A$ and $\lambda = 0$. 
(If such a projection existed, then
\begin{align*}
S^1(\frak{h}) = X = \ker A \oplus \ol{\ran} \, A = N \oplus R \qquad (N := \ker A \text{\, and \,} R := \ol{\ran} \, A)
\end{align*}
by the same argument as in the first remark after our adiabatic theorem without spectral gap condition (Theorem~\ref{thm: erw adsatz ohne sl}). 
So, on the one hand $X / R \cong N$ and hence $(X / R)^*$ would be finite-dimensional by virtue of $N = \ker Z_0$ (Proposition~\ref{prop: properties (weakly) dephas generators}~(ii)) and of Lemma~\ref{lm: ker Z_0 endldim}~(i), but on the other hand $(X / R)^* \cong R^{\perp}$ (Theorem~III.10.2 of~\cite{Conway:fana}) would be infinite-dimensional by virtue of $R^{\perp} = \ker A^* \supset \ker Z_0^*$ (Proposition~\ref{prop: properties (weakly) dephas generators}~(iii)) and of Lemma~\ref{lm: ker Z_0 endldim}~(ii). Contradiction!)
%
%
In the (reflexive) space $X = S^p(\mathfrak{h})$ with $p \ne 1$, by contrast, existence of weakly associated projections is often for granted. 

\begin{lm} \label{lm: ex of weakly ass proj for q.d.s.}
Suppose $A$ is a weakly dephasing generator of a quantum dynamical semigroup 
on $X=S^p(\mathfrak{h})$ with $p \in (1,\infty)$ and that $\lambda = 0 \in \sigma(A)$. If (i) $\ker A$ is finite-dimensional or if (ii) $p = 2$, then there exists a unique projection $P$ weakly associated with $A$ and $\lambda$.
\end{lm}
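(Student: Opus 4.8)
The plan is to reduce everything to the criteria for existence of weakly associated projections that have already been established, namely Proposition~\ref{prop: 2nd criterion ex of w ass proj} (the Avron--Fraas--Graf--Grech criterion) and uniqueness from Theorem~\ref{thm: typ mögl für PX und (1-P)X}. Since $A$ is a weakly dephasing generator on $X = S^p(\mathfrak{h})$ with $p \in (1,\infty)$, the space $X$ is reflexive and $A$ generates a bounded (indeed contraction, since it is the generator of a quantum dynamical semigroup, hence a contraction semigroup on $S^p$) semigroup. Also $\lambda = 0 \in \sigma(A) \cap i\mathbb{R}$. Therefore, by Proposition~\ref{prop: 2nd criterion ex of w ass proj}, it suffices to show in each of the two cases that the subspace $\ker A + \ol{\ran}\, A$ is closed in $X$; uniqueness is then automatic by Theorem~\ref{thm: typ mögl für PX und (1-P)X}.

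Case~(ii), $p = 2$: here $X = S^2(\mathfrak{h})$ is a Hilbert space, and $A$ is the generator of a contraction semigroup on this Hilbert space. The key observation is that on a Hilbert space one always has the orthogonal-type decomposition $X = \ker A^* \oplus \ol{\ran}\, A$ (the closure of the range is the orthogonal complement of the kernel of the adjoint), so $\ol{\ran}\, A$ is already closed-complemented; what needs checking is that $\ker A + \ol{\ran}\, A$ is closed. I would argue that $\ker A \cap \ol{\ran}\, A = 0$: indeed, by Proposition~\ref{prop: properties (weakly) dephas generators}~(i), $\ker A \subset \ker Z_0 = \{H\}' \cap S^2(\mathfrak{h})$, and one can use the fact that $A = Z_0 + W$ with $Z_0$ skew-adjoint (on $S^2$) and $W$ the dissipative Lindblad part together with a mean-ergodic/von Neumann ergodic argument (as in the proof of Lemma~\ref{lm: ker Z_0 endldim}~(i)) to see that the ergodic projection onto $\ker A$ along $\ol{\ran}\, A$ exists — equivalently, $\ker A$ and $\ol{\ran}\, A$ are in direct topological sum. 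Since the sum of a finite-dimensional-or-closed subspace and a closed subspace that meet only in $0$ and whose algebraic sum is the whole mean-ergodic decomposition is closed, $\ker A + \ol{\ran}\, A$ is closed and Proposition~\ref{prop: 2nd criterion ex of w ass proj} applies. (Alternatively, on $S^2$ one invokes that bounded-semigroup generators on Hilbert spaces are mean ergodic, giving directly $X = \ker A \oplus \ol{\ran}\, A$.)

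Case~(i), $\ker A$ finite-dimensional: here $X = S^p(\mathfrak{h})$ with $p \in (1,\infty)$ is reflexive but not Hilbert, so I cannot use orthogonality directly. However, $\ker A$ being finite-dimensional, it is automatically closed and topologically complemented in $X$; write $X = \ker A \oplus Z$ for some closed subspace $Z$. The point is that a finite-dimensional subspace added to any closed subspace yields a closed subspace: if $F$ is finite-dimensional and $M$ is closed, then $F + M$ is closed (standard; e.g. pass to the quotient $X/M$, where the image of $F$ is finite-dimensional hence closed, and pull back). Applying this with $F = \ker A$ and $M = \ol{\ran}\, A$ gives that $\ker A + \ol{\ran}\, A$ is closed, and again Proposition~\ref{prop: 2nd criterion ex of w ass proj} yields a unique weakly associated projection.

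The main obstacle I anticipate is Case~(ii): the bare statement "generator of a bounded semigroup on a Hilbert space $\Rightarrow$ $X = \ker A \oplus \ol{\ran}\, A$" is the mean ergodic theorem and needs the semigroup (not just the resolvent) to be bounded and some Cesàro-convergence input; one has to make sure the weakly dephasing generator really does generate a \emph{bounded} (contraction) semigroup on $S^2$ — which it does, being a quantum dynamical semigroup restricted/extended appropriately and, as a generator of a c.d. semigroup, a contraction semigroup on each $S^p$ — and then cite or prove the Hilbert-space mean ergodic theorem. Everything else (reflexivity, $\lambda \in i\mathbb{R}$, the finite-dimensional-plus-closed lemma, uniqueness via Theorem~\ref{thm: typ mögl für PX und (1-P)X}) is routine. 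So the proof is short: verify the hypotheses of Proposition~\ref{prop: 2nd criterion ex of w ass proj} in each case, the crux being closedness of $\ker A + \ol{\ran}\, A$, handled by finite-dimensionality in~(i) and by Hilbert-space mean ergodicity in~(ii).
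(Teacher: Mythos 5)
Your case~(i) is exactly the paper's argument: a finite-dimensional subspace plus a closed subspace is closed, then apply Proposition~\ref{prop: 2nd criterion ex of w ass proj}. Your case~(ii), however, is a genuinely different route, and your main line of reasoning there is not quite sound as written. The paper's proof of case~(ii) is an explicit computation exploiting the Lindblad structure: taking $\rho \in \ker A$, it uses Proposition~\ref{prop: properties (weakly) dephas generators}~(i) (which says $\rho$ commutes with every $B_j$, $B_j^*$ and with $H$) together with $Z_0^* = -Z_0$ on $S^2(\mathfrak{h})$ to verify that $A^*(\rho) = -Z_0(\rho) + \sum_j B_j^*\rho B_j - \tfrac12\{B_j^*B_j,\rho\} = 0$, so $\ker A \subset \ker A^* = (\ran A)^\perp$; orthogonality then gives closedness of $\ker A + \ol{\ran}\,A$ immediately. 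Your main argument, by contrast, tries to deduce closedness from $\ker A \cap \ol{\ran}\,A = 0$ plus a vaguely sketched ``mean-ergodic/von Neumann ergodic argument''; this does not hang together, since a trivial intersection of two closed subspaces does not yield closedness of their sum, and the sentence about ``a finite-dimensional-or-closed subspace \dots\ whose algebraic sum is the whole mean-ergodic decomposition'' appears to conflate your two cases. Your parenthetical alternative --- invoke that the generator of a bounded $C_0$-semigroup on a Hilbert space is mean ergodic, so $X = \ker A \oplus \ol{\ran}\,A$ outright --- is the coherent version of the idea and does work, and it is a legitimately different method from the paper's: it is soft and generic (needing nothing about the form~\eqref{eq: Lambda(t) auf S^1} of $A$), whereas the paper's computation is self-contained, elementary, and simultaneously exhibits the useful fact $\ker A \subset \ker A^*$. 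If you keep the mean-ergodic route, state the ergodic theorem cleanly as the load-bearing step rather than leaving it in a parenthesis; if you want the paper's route, make the calculation of $A^*\rho$ explicit.
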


\begin{proof}
Suppose first that $\ker A$ is finite-dimensional. We then see that $\ker A + \ol{\ran} \, A$ is closed in $X$ (Proposition~III.4.3 of~\cite{Conway:fana}) and hence the conclusion follows by Proposition~\ref{prop: 2nd criterion ex of w ass proj}. 
Suppose now that $p = 2$. We show 
that $\ker A $ is orthogonal to $\ran A$ in $X=S^2(\frak{h})$. 
It then follows that $\ker A + \ol{\ran} \, A$ is closed in $X$ and the conclusion follows again by Proposition~\ref{prop: 2nd criterion ex of w ass proj}.
So, let $\rho \in \ker A$ and write $A = Z_0+W$ for brevity. Since $Z_0^* = -Z_0$, we see that $\rho \in D(A) = D(Z_0) = D(Z_0^*) = D(A^*)$ and that
\begin{align}
A^*(\rho) = Z_0^*(\rho) + W^*(\rho) = -Z_0(\rho) + \sum_{j\in J} B_j^* \rho B_j - 1/2\{B_j^* B_j, \rho\} = 0,
\end{align}
where for the last equality Proposition~\ref{prop: properties (weakly) dephas generators}~(i) was used. Consequently, $\ker A \subset \ker A^* = (\ran A)^{\perp}$, as desired.
\end{proof}

In the special case of dephasing generators $A$ with bounded $H$, criterion~(ii) of the above lemma is due to~\cite{AvronGraf12}.
If $p \in (1,2]$, then Proposition~\ref{prop: properties (weakly) dephas generators}~(i) and Lemma~\ref{lm: ker Z_0 endldim}~(i)  yield a simple sufficient condition for the finite-dimensionality criterion~(i) 
from the above lemma.
If $p=2$  and $\ker A = \ker Z_0$, then the projection $P$ weakly associated with $A$ and $\lambda = 0$ is orthogonal (by the orthogonality of the subspaces $\ker A$ and $\ol{\ran} \, A$ in $S^2(\frak{h})$ just proved in the lemma above) 
and hence, by $\ker Z_0 = \{H\}' \cap S^2(\frak{h})$, is given explicitly as
\begin{align} \label{eq: P explicit by RAGE}
P \rho = \sum_{\mu \in \sigma_p(H)} Q^H_{\{\mu\}} \rho \, Q^H_{\{\mu\}} \qquad (\rho \in S^2(\frak{h})),
\end{align} 
where $Q^H$ denotes the spectral measure of $H$. See Theorem~5.8 of~\cite{Teschl09} or the discussion at the very end of~\cite{AvronGraf12}.

\begin{thm}
Suppose that $A(t)$ for every $t \in I$ is a weakly dephasing generator 
on $X = S^p(\mathfrak{h})$ $(p \in (1,\infty))$ with time-independent domain $D(Z_0(t)) = D$ and that $t \mapsto A(t)$ is in $W^{1,1}_*(I,L(Y,X))$, where $Y$ is the space $D$ endowed with the graph norm of $A(0)$. 
Suppose further that $\lambda(t) = 0$ is an eigenvalue of $A(t)$ for every $t \in I$ and, finally, that 
either
\begin{align*}
(i) \,\, \ker A(t) \text{ is finite-dimensional for almost every } t \in I \text{ or } (ii) \,\, p = 2, 
\end{align*}
and that there is a null set in $I$ such that the projections $P(t)$ weakly associated with $A(t)$ and $\lambda(t)$ for $t$ outside that null set can be extended to a continuously differentiable map $t \mapsto P(t)$ on the whole of $I$. 
Then
\begin{align*}
\sup_{t \in I} \norm{ (U_{\eps}(t) - V_{0\, \eps}(t)) P(0) \rho } \longrightarrow 0 \quad (\eps \searrow 0)
\end{align*}
for every $\rho \in X$, where $V_{0\, \eps}$ 
is the evolution system for $\frac 1 \eps A P + [P',P] = [P',P]$ on $X$. 
\end{thm}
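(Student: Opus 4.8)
The plan is to obtain the statement from the qualitative adiabatic theorem without spectral gap condition, Theorem~\ref{thm: erw adsatz ohne sl}, supplemented in the infinite-rank case by the second of the three remarks following it (which drops the finite-rank hypothesis on $P(0)$ in exchange for weak semisimplicity, at the price of passing from uniform to strong convergence). Almost all of the work consists in verifying the hypotheses of that theorem with $\vartheta(t) := 0$ for every $t \in I$.

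First I would check Condition~\ref{cond: reg 1} with $\omega = 0$. Writing $A(t) = Z_0(t) + W(t)$, where $W(t)\colon \rho \mapsto \sum_{j}B_j(t)\rho B_j(t)^* - \frac12\{B_j(t)^*B_j(t),\rho\}$ is bounded, the operator $A(t)$ is closed and densely defined with $D(A(t)) = D(Z_0(t)) = D$ since $Z_0(t)$ is a semigroup generator. By Theorem~\ref{thm: generation result q.d.s.} each $A(t)$ generates a quantum dynamical semigroup on $X = S^p(\mathfrak{h})$, and this semigroup is contractive because it interpolates (Calder\'on--Lions) the contractive trace-preserving completely positive semigroup on $S^1(\mathfrak{h})$ and its contractive unital dual on $B(\mathfrak{h})$; hence the family $A$ is $(1,0)$-stable. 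Together with the assumed $W^{1,1}_*$-regularity of $t \mapsto A(t)$ this gives Condition~\ref{cond: reg 1} with $\omega = 0$. Since $A(t)$ generates a contraction semigroup, $\{\Re z > 0\} \subset \rho(A(t))$ with $\norm{(z - A(t))^{-1}} \le 1/\Re z$ there; thus with $\vartheta(t) := 0$ one has $\lambda(t) + \delta e^{i\vartheta(t)} = \delta \in \rho(A(t))$ for all $\delta \in (0,\delta_0]$ and all $t$, with $\delta_0$ arbitrary, while $t \mapsto \lambda(t) \equiv 0$ and $t \mapsto e^{i\vartheta(t)} \equiv 1$ are trivially absolutely continuous.

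Next I would handle the projections. By Lemma~\ref{lm: ex of weakly ass proj for q.d.s.}, in each of the cases (i) and (ii) there is for every $t \in I$ a unique projection weakly associated with $A(t)$ and $\lambda(t) = 0$, and by hypothesis it coincides with $P(t)$ for $t$ outside a null set $N$, the map $t \mapsto P(t)$ being continuously differentiable (hence strongly continuously differentiable, as Theorem~\ref{thm: erw adsatz ohne sl} requires) on $I$. Because $\lambda(t) = 0 \in i\R$ and $A$ is $(1,0)$-stable, the argument of the second remark at the beginning of Section~\ref{sect: bsp, adsaetze mit sl} forces $A(t)|_{P(t)D}$ to be nilpotent of order exactly $1$ for $t \in I\setminus N$: otherwise $\delta(\delta - A(t))^{-1}P(t) = \sum_{k} \delta^{-k} A(t)^k P(t)$ would be unbounded as $\delta \searrow 0$, contradicting $\norm{\delta(\delta - A(t))^{-1}} \le 1$. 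So $A(t)P(t) = 0$, $P(t)X = \ker A(t)$ and $(1-P(t))X = \ol{\ran}\, A(t)$ for $t \in I\setminus N$ (Theorem~\ref{thm: typ mögl für PX und (1-P)X}), hence $\lambda(t) = 0$ is a weakly semisimple eigenvalue of $A(t)$ for almost every $t$. For the reduced resolvent estimate~\eqref{eq: resolvent estimate}, note that $1-P(t)$ commutes with $(\delta - A(t))^{-1}$, so that
\begin{align*}
\norm{ \big(\delta - A(t)\big)^{-1}(1-P(t)) } \le \norm{1-P(t)}\,\norm{ \big(\delta - A(t)\big)^{-1} } \le \frac{c}{\delta}
\end{align*}
for $\delta > 0$ and $t \in I\setminus N$, with $c := \sup_{t\in I}\norm{1-P(t)} < \infty$ by continuity of $t \mapsto P(t)$ on the compact interval $I$; the left-hand side being continuous in $t$ (Lemma~\ref{lm: prod- und inversenregel}), the bound holds on all of $I$ with $M_0 := c$.

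Finally I would apply Theorem~\ref{thm: erw adsatz ohne sl}. Since $A(t)P(t) = 0$, the comparison generator is $\frac1\eps A P + [P',P] = [P',P]$, whose evolution system $V_{0\,\eps}$ (in fact $\eps$-independent) exists on $X$ because $t \mapsto [P'(t),P(t)]$ is bounded and strongly continuous. In case (i), the constant $\dim P(t)X = \rk P(0)$ equals $\dim\ker A(t) < \infty$ for almost every $t$, so $\rk P(0) < \infty$, and Theorem~\ref{thm: erw adsatz ohne sl}(i) applies and yields even the uniform convergence $\sup_{t\in I}\norm{(U_\eps(t) - V_{0\,\eps}(t))P(0)} \to 0$, which implies the asserted convergence for every $\rho$. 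In case (ii) the finite-rank condition on $P(0)$ may fail, but it is then the only hypothesis of Theorem~\ref{thm: erw adsatz ohne sl}(i) that does, and $\lambda(t) = 0$ is weakly semisimple for almost every $t$; hence the second remark after Theorem~\ref{thm: erw adsatz ohne sl} gives $\sup_{t\in I}\norm{(U_\eps(t) - V_{0\,\eps}(t))P(0)\rho} \to 0$ for every $\rho \in X$, as claimed. The main obstacle is not computational --- all the estimates are immediate from contractivity --- but rather to assemble Lemma~\ref{lm: ex of weakly ass proj for q.d.s.}, the weak-semisimplicity observation, and the correct branch of Theorem~\ref{thm: erw adsatz ohne sl} (part (i) versus its infinite-rank remark) so that case (ii) with possibly infinite-dimensional $\ker A(t)$ is genuinely covered.
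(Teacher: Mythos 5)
Your proposal is correct and follows essentially the same route as the paper: contractivity of the semigroups from Theorem~\ref{thm: generation result q.d.s.}, existence of the weakly associated projections from Lemma~\ref{lm: ex of weakly ass proj for q.d.s.}, weak semisimplicity of $\lambda(t)=0$ forced by $(1,0)$-stability, the trivial reduced resolvent estimate, and then the second remark after Theorem~\ref{thm: erw adsatz ohne sl}. Your extra observation that in case~(i) one can invoke Theorem~\ref{thm: erw adsatz ohne sl}~(i) directly (giving even norm convergence) is a harmless refinement of the same argument.
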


\begin{proof}
We have only to notice that $A(t)$ generates a contraction semigroup in $X$ for every $t \in I$ (Theorem~\ref{thm: generation result q.d.s.}), 
that the projections $P(t)$ weakly associated with $A(t)$ and $\lambda(t)$ really exist for almost every $t \in I$ (Lemma~\ref{lm: ex of weakly ass proj for q.d.s.}), and then to apply the second remark after Theorem~\ref{thm: erw adsatz ohne sl}. 
\end{proof}

Clearly, the above theorem is a generalization of 
the respective result (Theorem~22) from~\cite{AvronGraf12} for 
dephasing generators $A(t)$ of quantum dynamical semigroups on $X = S^2(\frak{h})$ with bounded $H(t)$. 
Incidentally, these types of 
generators are normal operators on $S^2(\frak{h})$, that is, $A(t)^*A(t) = A(t)A(t)^*$ (as can be verified by straightforward calculations using the fact that $B_i$ and $B_j$ commute for all $i,j \in J$ by~\eqref{eq: dephasingness cond}). 
We conclude with a simple example of generators $A(t)$ in $X = S^2(\frak{h})$ satisfying the assumptions of the above theorem without being dephasing (or normal). 

\begin{ex}
We choose the operators $H$ and $B$ as in Example~\ref{ex: H vert nicht mit H_0} 
and, in addition, we take $H$ to be bounded. 
We then define $A(t)$ for every $t \in I$ on $X := S^2(\frak{h})$ through 
\begin{gather*}
A(t)\rho := Z_0(t)(\rho) + B(t) \rho B(t)^* - 1/2 \{ B(t)^*B(t), \rho \}  \qquad (\rho \in S^2(\frak{h})) 
\end{gather*}
with $Z_0(t)(\rho) := -i [H(t),\rho]$, where $H(t) := R(t)^{-1} H R(t)$ and 
$B(t) := R(t)^{-1} B R(t)$ with $R(t) := e^{i C t}$ and $C$ a bounded self-adjoint operator on $\frak{h}$. 
Clearly, $A(t)$ for every $t \in I$ is a weakly dephasing generator and $D(A(t)) = X$ is time-independent while $t \mapsto A(t)$ is $W^{1,1}_*$-regular. 
It is also clear that $\ker A(t) = \ker Z_0(t)$ by Example~\ref{ex: H vert nicht mit H_0}.
So, by the remarks around~\eqref{eq: P explicit by RAGE}, the projection $P(t)$ weakly associated with $A(t)$ and $\lambda(t) = 0$ is explicitly given by
\begin{align*}
P(t) \rho = \sum_{ \mu \in \sigma_p(H(t)) } Q_{\{\mu\}}^{H(t)}  \rho \, Q_{\{\mu\}}^{H(t)} = \sum_{ \mu \in \sigma_p(H) } R(t)^{-1} Q_{\{\mu\}}^{H} R(t) \, \rho \, R(t)^{-1} Q_{\{\mu\}}^{H} R(t)
\end{align*}
for every $t \in I$, where $Q^{H(t)}$ and $Q^{H}$ denote the spectral measures of $H(t)$ and $H$. In particular, $t \mapsto P(t)$ is continuously differentiable.
So all the assumptions of the above theorem are satisfied, but $A(t)$ is non-dephasing for every $t$ 
because 
\begin{align*}
H(t)B(t) \ne B(t)H(t)
\end{align*}
by Example~\ref{ex: H vert nicht mit H_0}. 
Also, $A(t) = Z_0(t) + W(t)$ is non-normal on $X$ for every $t$ because $Z_0(t)$ is skew-adjoint and $W(t)$ is self-adjoint, but $Z_0(t)$ does not commute with $W(t)$ (as is verified in~\cite{diss} (Example~4.2.11)). 
$\blacktriangleleft$
\end{ex}

\subsection{An application to adiabatic switching}  \label{sec: ad switching}

In this section we apply the adiabatic theorem without spectral gap condition from Section~\ref{sect: qual adsatz ohne sl} -- in the version for several eigenvalues -- to adiabatic switching procedures.

\subsubsection{Setting and assumptions}

Adiabatic switching of (linear) perturbations has a long tradition in quantum physics. Since the famous work~\cite{Gell-MannLow51} of Gell-Mann and Low, it has been used, for instance, to relate 
-- by what is now known as the Gell-Mann and Low formula -- the eigenstates of a perturbed system, described by $A_0+V$, 
to the eigenstates of the unperturbed system, described by $A_0$. 
Adiabatic switching, in this context, means that $A_0 
= \ul{A}(0)$ is infinitely slowly deformed into $\ul{A}(1) 
= A_0+V$ in the following sense: one chooses a 
switching function $\kappa: (-\infty,0] \to [0,1]$ vanishing at $-\infty$ and taking the value $1$ at $0$ 
and then passes -- more and more slowly -- from $A_0 = \ul{A}(\kappa(-\infty))$ via
\begin{align*}
\{-\infty\} \cup (-\infty,0] \ni s \mapsto \ul{A}(\kappa(\eps s)) = A_0 + \kappa(\eps s)\, V
\end{align*}   
to $\ul{A}(\kappa(0)) = A_0+V$ by making the slowness parameter $\eps \in (0,\infty)$ smaller and smaller.
%
%
A rigorous -- and non-perturbative -- proof of the Gell-Mann and Low formula for non-degenerate and isolated eigenvalues $\ul{\lambda}(\kappa)$ of $\ul{A}(\kappa) = A_0 + \kappa V$ has been given by Nenciu and Rasche in~\cite{NenciuRasche89}. It is based on the adiabatic theorem with spectral gap condition. 
In a recent paper~\cite{Panati10} of Brouder, Panati, Stoltz, the Gell-Mann and Low theorem has been extended to the case of degenerate isolated eigenvalues -- again by using the adiabatic theorem with spectral gap condition.
In this section, we further extend the Gell-Mann and Low theorem to the case of 
non-isolated degenerate eigenvalues. 
We consider the following setting. 



\begin{cond} \label{cond: vor 1}
$\ul{A}(\kappa) := A_0 + \kappa V$ for $\kappa \in [0,1]$, where $A_0: D \subset H \to H$ is a skew-adjoint operator in the Hilbert space $H$ 
and where $V$ is a skew-symmetric operator in $H$ that is $A_0$-bounded with relative bound less than $1$.
$\ul{\lambda}_1(\kappa), \dots, \ul{\lambda}_r(\kappa) $ for every $\kappa \in [0,1]$ are eigenvalues of $\ul{A}(\kappa)$ such that $\kappa \mapsto \ul{\lambda}_j(\kappa)$ is continuously differentiable for every $j \in \{1, \dots, r\}$.
And finally, $\ul{P}_1(\kappa), \dots, \ul{P}_r(\kappa)$ for every $\kappa \in [0,1]$ are orthogonal projections in $H$ such that $\kappa \mapsto \ul{P}_j(\kappa)$ is twice strongly continuously differentiable, $0 \ne \rk \ul{P}_j(0) < \infty$, and $\ul{P}_j(\kappa)$ is the spectral projection of $\ul{A}(\kappa)$ corresponding to $\lambda_j(\kappa)$ for every $\kappa \in [0,1] \setminus N$ with some exceptional set $N$.    
\end{cond}

\begin{cond} \label{cond: vor 2}
$\kappa: (-\infty,0] \to [0,1]$ is a non-decreasing twice continuously differentiable (switching) function such that
\begin{itemize}
\item[(i)] $\kappa(t) \longrightarrow \kappa(-\infty) = 0$ as $t \to -\infty$ and $\kappa(0) = 1$
\item[(ii)] $\kappa$ and $\kappa'$ are integrable on $(-\infty,0]$.
\end{itemize}
\end{cond}
%
%
Suppose now that $\ul{A}$, $\ul{\lambda}_1, \dots, \ul{\lambda}_r$, $\ul{P}_1, \dots, \ul{P}_r$ satisfy Condition~\ref{cond: vor 1} and that $\kappa$ is as in Condition~\ref{cond: vor 2} 
and define
\begin{align}
A(t) := \ul{A}(\kappa(t)), \quad \lambda_j(t) := \ul{\lambda}_j(\kappa(t)), \quad P_j(t) := \ul{P}_j(\kappa(t))
\end{align} \label{eq: gell-mann--low, def A,lambda_j,P_j}
for $t \in (-\infty,0]$ and $j \in \{1, \dots, r\}$, 
along with
\begin{align} \label{eq: gell-mann--low, def K}
\ul{K}(\kappa) := \frac 1 2 \sum_{j=1}^{r+1} [P_j'(\kappa), P_j(\kappa)] \quad \text{and} \quad K(t) := \frac 1 2 \sum_{j=1}^{r+1} [P_j'(t), P_j(t)] = \kappa'(t) \ul{K}(\kappa(t)),
\end{align}
where $\ul{P}_{r+1}(\kappa) := 1-\ul{P}_1(\kappa) - \dotsb - \ul{P}_{r}(\kappa)$ for $\kappa \in [0,1]$ and $P_{r+1}(t) := \ul{P}_{r+1}(\kappa(t))$ for $t \in (-\infty,0]$.
It then follows by the standard well-posedness result of Kato (Theorem~6.1 from~\cite{Kato70}) mentioned in Section~\ref{sect: wohlg und zeitentw} that the evolution systems $U_{\eps}$, $V_{\eps}$ for the families $\frac 1 \eps A$ and $\frac 1 \eps A + K$ exist on $D$ 
and, by the skew-adjointness of $\frac 1 \eps A(t)$ and $K(t)$ for $t \in (-\infty,0]$, the evolution operators $U_{\eps}(t,s)$, $V_{\eps}(t,s)$ are unitary for all $(s,t) \in \Delta_{(-\infty,0]} := \{ (s,t) \in (-\infty,0]^2: s \le t\}$.
Instead of $U_{\eps}$, $V_{\eps}$, the Gell-Mann and Low formula and its proof below make use of the interaction picture counterparts 
$U_{\eps}^I$, $V_{\eps}^I$, defined by
\begin{align}
U_{\eps}^I(t,s) := e^{- A_0 t/\eps} U_{\eps}(t,s) e^{ A_0 s/\eps} \quad \text{and} \quad V_{\eps}^I(t,s) := e^{- A_0 t/\eps} V_{\eps}(t,s) e^{ A_0 s/\eps}  
\end{align}
for $(s,t) \in \Delta_{(-\infty,0]}$. It is easy to see that $U_{\eps}^I$, $V_{\eps}^I$ are the evolution systems for $\frac 1 \eps  A^I$ and $\frac 1 \eps  A^I + K^I$ on $D$, where 
\begin{align} \label{eq: gell-mann--low, def A^{I}}
A^I(t) := \kappa(t) \, e^{-A_0 t/\eps} \, V \, e^{A_0 t/\eps} \big|_D \quad \text{and} \quad K^I(t) := e^{-A_0 t/\eps} K(t) e^{A_0 t/\eps}.
\end{align}
(In order to see that 
the derivative of $t \mapsto U_{\eps}^I(t,s)x$ for $x \in D$ really is continuous -- as is required in the definition of evolution systems -- use that $t \mapsto U_{\eps}(t,s)|_Y$ is strongly continuous in $L(Y,Y)$ (Theorem~6.1~(f) of~\cite{Kato70}) and that $V|_Y$ is in $L(Y,H)$, where $Y$ denotes the space $D$ endowed with the graph norm of $A_0$.) 

\subsubsection{Adiabatic switching and a Gell-Mann and Low theorem without spectral gap condition}

We can now state and prove a Gell-Mann and Low theorem without spectral gap condition, 
where the eigenvalues $\lambda_1(t), \dots, \lambda_r(t)$ of $A(t) = \ul{A}(\kappa(t))$ are allowed to be non-isolated in $\sigma(A(t))$ for every $t \in (-\infty,0]$ 
-- as long as they stay isolated from each other except for a null set of crossing points. 


\begin{thm}  \label{thm: Gell-Mann--Low}
Suppose $\ul{A}$, $\ul{\lambda}_1, \dots, \ul{\lambda}_r$, $\ul{P}_1, \dots, \ul{P}_r$ are as in Condition~\ref{cond: vor 1} and that $\kappa$ is as in Condition~\ref{cond: vor 2} and define
$A(t)$, $\lambda_j(t)$, $P_j(t)$ for $t \in \{-\infty\} \cup (-\infty,0]$ and $j \in \{1, \dots, r\}$ as in~\eqref{eq: gell-mann--low, def A,lambda_j,P_j}. 
Suppose further that for all $j, j' \in \{1, \dots, r\}$ with $j \ne j'$ one has $\lambda_j \ne \lambda_{j'}$ almost everywhere, and that the exceptional set 
\begin{align*}
\big\{ t \in \{-\infty\} \cup (-\infty,0]: \kappa(t) \in N \big\}
\end{align*} 
where the $P_j$ are allowed to differ from the spectral projection of $A$ corresponding to $\lambda_j$, is a null set 
(remember Condition~\ref{cond: vor 1} for the definition of $N$). 
Then
\begin{align*}
\frac{ U_{\eps}^I(0,-\infty)x }{ \scprd{ x', U_{\eps}^I(0,-\infty)x } } \longrightarrow \frac{ W(0,-\infty)x }{ \scprd{ x', W(0,-\infty)x } } \in \ker(A(0)-\lambda_j(0)) \quad (\eps \searrow 0) 
\end{align*}
for all $x \in P_j(-\infty)H$ and $x' \in H$ such that $\scprd{ x', W(0,-\infty)x } \ne 0$. 
In the above relations $W$ denotes the evolution system for $K$, where $K(t)$ 
is defined as in~\eqref{eq: gell-mann--low, def K}.
\end{thm}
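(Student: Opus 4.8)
The plan is to reduce, via tail estimates coming from the integrability of $\kappa$ and $\kappa'$, to an application of the several-eigenvalue version of the adiabatic theorem without spectral gap condition (Theorem~\ref{thm: erw adsatz ohne sl} together with the third remark after it) on a \emph{fixed} compact interval $[-T,0]$, while carefully tracking the oscillating unit-modulus phase factors that never converge but which the normalization by $\scprd{x',\,\cdot\,}$ is designed precisely to cancel. As a preliminary, I would check that this adiabatic theorem applies on every $[-T,0]$: by the Kato--Rellich--W\"ust theorem each $A(t)=\ul{A}(\kappa(t))=A_0+\kappa(t)V$ is skew-adjoint with time-independent domain $D$, hence generates a unitary group and $A|_{[-T,0]}$ is $(1,0)$-stable, and $t\mapsto A(t)$ lies in $W^{1,\infty}_*([-T,0],L(Y,H))$ because $A_0$ is constant and $t\mapsto\kappa(t)V$ is $W^{1,\infty}_*$-regular, so Condition~\ref{cond: reg 1} holds with $\omega=0$; each $A(t)$ is normal, hence scalar-type spectral, and restricting it to a bounded spectral neighborhood of $\lambda_j(t)$ gives a bounded normal, hence finite-type, operator, so Proposition~\ref{prop: krit ex schw assoz proj, A spektral} shows $P_j(t)=\ul{P}_j(\kappa(t))$ to be weakly associated of order $1$ with $A(t)$ and $\lambda_j(t)$ for $t$ outside $N$; with $\vartheta_j(t):=0$ one has $\lambda_j(t)+\delta\in\rho(A(t))$ for $\delta>0$ and, $A(t)|_{(1-P_j(t))H}$ being normal with spectrum in $i\R\setminus\{\lambda_j(t)\}$, the reduced resolvent estimate~\eqref{eq: resolvent estimate} holds with $M_0=1$; finally $\rk P_j(t)=\rk\ul{P}_j(0)<\infty$ for every $t$ (the rank is locally constant along a strongly continuous family of projections) and the remaining regularity and distinctness hypotheses hold by assumption. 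Hence, after a linear reparametrization of $[-T,0]$ onto $I$ (which replaces $\eps$ by $\eps/T$), the adiabatic theorem gives, for every fixed $T>0$ and every $j$, that $\norm{(U_\eps(0,-T)-V_\eps(0,-T))P_j(-T)y}\to 0$ as $\eps\searrow 0$ for each $y$, where $V_\eps$ is the evolution system for $\tfrac1\eps A+K$.

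Next I would establish the structural facts about the limits at $-\infty$. For fixed $\eps>0$ and $x\in P_j(-\infty)H=\ker(A_0-\ul{\lambda}_j(0))\subset D$, differentiating $\tau\mapsto U_\eps(s,\tau)e^{A_0(\tau-s')/\eps}x$ (Lemma~\ref{lm: zeitentw rechtsseit db}) and using $A_0x=\ul{\lambda}_j(0)x$ yields the variation-of-constants identity
\begin{align*}
\norm{U_\eps^I(0,s)x-U_\eps^I(0,s')x}=\norm{e^{A_0(s-s')/\eps}x-U_\eps(s,s')x}\le\frac1\eps\norm{Vx}\int_{s'}^{s}\kappa(\tau)\,d\tau,
\end{align*}
which tends to $0$ as $s,s'\to-\infty$ since $\kappa\in L^1((-\infty,0])$; hence $U_\eps^I(0,-\infty)x:=\lim_{s\to-\infty}U_\eps^I(0,s)x$ exists. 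Likewise $\norm{K(t)}=\kappa'(t)\norm{\ul{K}(\kappa(t))}\le C\kappa'(t)\in L^1((-\infty,0])$, so $W(0,-\infty):=\lim_{s\to-\infty}W(0,s)$ exists in operator norm, and passing to the limit in $W(0,s)P_j(s)=P_j(0)W(0,s)$ (using $P_j(s)\to P_j(-\infty)$ strongly) gives $W(0,-\infty)P_j(-\infty)=P_j(0)W(0,-\infty)$, so that $W(0,-\infty)x=P_j(0)W(0,-\infty)x\in P_j(0)H=\ker(A(0)-\lambda_j(0))$ for $x\in P_j(-\infty)H$. The last ingredient is the dynamical-phase identity: since $V_\eps$ and $W$ are both adiabatic w.r.t.~$P_j$ (Kato's construction) and $A(t)P_j(t)=\lambda_j(t)P_j(t)$, the function $t\mapsto e^{-\frac1\eps\int_s^t\lambda_j}V_\eps(t,s)P_j(s)x$ solves $z'=K(t)z$ with $z(s)=P_j(s)x$, whence
\begin{align*}
V_\eps(t,s)P_j(s)=e^{\frac1\eps\int_s^t\lambda_j(\tau)\,d\tau}\,W(t,s)P_j(s)\qquad(s\le t\le 0).
\end{align*}

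With these at hand, fix $x\in P_j(-\infty)H$ and, using $e^{A_0s/\eps}x=e^{\ul{\lambda}_j(0)s/\eps}x$, decompose
\begin{align*}
U_\eps^I(0,s)x=&\,e^{\ul{\lambda}_j(0)s/\eps}U_\eps(0,s)(1-P_j(s))x+e^{\ul{\lambda}_j(0)s/\eps}\bigl(U_\eps(0,s)-V_\eps(0,s)\bigr)P_j(s)x\\
&+e^{\ul{\lambda}_j(0)s/\eps}V_\eps(0,s)P_j(s)x.
\end{align*}
The first summand has norm $\norm{(1-P_j(s))x}\to0$; the third equals, since $\ul{\lambda}_j(0)s/\eps+\frac1\eps\int_s^0\lambda_j=\frac1\eps\int_s^0(\lambda_j(\tau)-\ul{\lambda}_j(0))\,d\tau\to\Phi_j/\eps$ (an absolutely convergent, purely imaginary integral because $|\lambda_j(\tau)-\ul{\lambda}_j(0)|\le C\kappa(\tau)$), a unit-modulus scalar $\zeta_\eps$ times $W(0,s)P_j(s)x\to W(0,-\infty)x$. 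Letting $s\to-\infty$ we obtain $U_\eps^I(0,-\infty)x=r_\eps+\zeta_\eps W(0,-\infty)x$ with $\norm{r_\eps}=\lim_{s\to-\infty}\norm{(U_\eps(0,s)-V_\eps(0,s))P_j(s)x}$. To show $\norm{r_\eps}\to0$, I would split at $-T$: by the Duhamel identity $V_\eps-U_\eps=\int U_\eps K V_\eps$ and $\norm{K(\tau)}\le C\kappa'(\tau)$ the contribution of $[s,-T]$ is at most $C(\kappa(-T)-\kappa(s))\norm x\le C\kappa(-T)\norm x$ uniformly in $\eps$, while, using the phase identity once more, the contribution of $[-T,0]$ has norm tending, as $s\to-\infty$, to $\norm{(U_\eps(0,-T)-V_\eps(0,-T))w_T}$ with $w_T:=W(-T,-\infty)P_j(-\infty)x=P_j(-T)W(-T,-\infty)x\in P_j(-T)H$, which tends to $0$ as $\eps\searrow0$ for fixed $T$ by the first paragraph; hence $\limsup_{\eps\searrow0}\norm{r_\eps}\le C\kappa(-T)\norm x$ for every $T$, and $\norm{r_\eps}\to0$ on letting $T\to\infty$.

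Finally, for $x'\in H$ with $\scprd{x',W(0,-\infty)x}\ne0$, the scalar $\scprd{x',U_\eps^I(0,-\infty)x}=\zeta_\eps\scprd{x',W(0,-\infty)x}+\scprd{x',r_\eps}$ is nonzero for small $\eps$, and
\begin{align*}
\frac{U_\eps^I(0,-\infty)x}{\scprd{x',U_\eps^I(0,-\infty)x}}&=\frac{W(0,-\infty)x+\zeta_\eps^{-1}r_\eps}{\scprd{x',W(0,-\infty)x}+\zeta_\eps^{-1}\scprd{x',r_\eps}}\\
&\longrightarrow\frac{W(0,-\infty)x}{\scprd{x',W(0,-\infty)x}}\in\ker(A(0)-\lambda_j(0))\qquad(\eps\searrow0),
\end{align*}
which is the assertion. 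The main obstacle I expect is precisely the passage to the half-line $(-\infty,0]$: one must prove the existence of the two limits $U_\eps^I(0,-\infty)$ and $W(0,-\infty)$ and control the interchange of the limits $s\to-\infty$ and $\eps\searrow0$ — which is where the integrability hypotheses on $\kappa$ and $\kappa'$ enter decisively — while the ubiquitous unit-modulus phase factors, which individually do not converge, are rendered harmless only after the normalization. Minor additional care is needed to handle points where $\kappa(t)\in N$ (via a closedness argument as in~\eqref{eq: P(t) vertauscht mit A(t) fuer alle t}) and to invoke the adiabatic theorem in its strong-convergence form on the range of $P_j(-T)$.
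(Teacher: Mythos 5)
Your proof is correct and follows essentially the same three-step route as the paper's: establish the existence of the $-\infty$-limits, derive the phase identity for $V_\eps$ on the $P_j$-invariant subspace (paper's~\eqref{eq: gell-mann--low, formel für V_{eps}^I(0,-infty)}), and split at a fixed intermediate time $-T$ to combine an $\eps$-uniform tail estimate from the integrability of $\kappa'$ with the several-eigenvalue adiabatic theorem on $[-T,0]$. The added preliminary in which you verify the hypotheses of Theorem~\ref{thm: erw adsatz ohne sl} (skew-adjointness via Kato--Rellich, $(1,0)$-stability, scalar-type spectrality, the $M_0=1$ resolvent bound for normal $A(t)$) is a useful elaboration of material the paper leaves implicit.
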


\begin{proof}
We proceed in three steps following the lines of proof of~\cite{Panati10}. 
As a first simple step observe that the limit 
\begin{align*} 
W(0,-\infty) := \lim_{t \to -\infty} W(0,t),
\end{align*}
employed in the very formulation of the theorem, exists w.r.t.~the norm operator topology of $H$ and that, likewise, the limits
\begin{align*} 
U_{\eps}^I(0,-\infty)x := \lim_{t \to -\infty} U_{\eps}^I(0,t)x \quad \text{and} \quad V_{\eps}^I(0,-\infty)x := \lim_{t \to -\infty} V_{\eps}^I(0,t)x
\end{align*}
exist for every $x \in H$. Indeed, by virtue of~\eqref{eq: gell-mann--low, def K},
\begin{align*}
\norm{W(0,t) - W(0,t')} =  \bigg \| \int_t^{t'} W(0,\tau) K(\tau) \, d\tau \bigg\| \le \Big| \int_t^{t'} c \, \kappa'(\tau) \, d\tau \Big| \longrightarrow 0 \quad (t,t' \to -\infty),
\end{align*}
and similarly, using the relative boundedness of $V$ w.r.t.~$A_0$ and the density of $D$ in $H$, one sees the existence of the two other limits.
\smallskip

As a second step we show that the assertion holds true at least for $V_{\eps}^I(0,-\infty)$ instead of $U_{\eps}^I(0,-\infty)$, more precisely, 
\begin{align} \label{eq: gell-mann--low, beh für V_{eps}^I statt U_{eps}^I}
\frac{ V_{\eps}^I(0,-\infty)x }{ \scprd{ x', V_{\eps}^I(0,-\infty)x } } = \frac{ W(0,-\infty)x }{ \scprd{ x', W(0,-\infty)x } } \in \ker(A(0)-\lambda_j(0)) 
\end{align} 
for every $x \in P_j(-\infty)H$ and every $x' \in H$ such that $\scprd{ x', W(0,-\infty)x } \ne 0$. 
So choose and fix vectors $x$ and $x'$ as above -- notice that such vectors always exist by $\rk P_j(0) \ne 0$ and by the unitarity of $W(0,-\infty)$. 
Since 
\begin{align} \label{eq: gell-mann--low, inklusion für alle t}
P_j(t)H \subset \ker(A(t)-\lambda_j(t))
\end{align} 
for every $t \in \{-\infty\} \cup (-\infty,0]$ (use a continuity argument to extend this inclusion from $\{-\infty\} \cup (-\infty, 0] \setminus \kappa^{-1}(N)$ to all of $\{-\infty\} \cup (-\infty,0]$) and since $V_{\eps}$ is adiabatic w.r.t.~$P_j$, 
it follows that
\begin{align*}
V_{\eps}(s,t)P_j(t) = e^{1/\eps \int_t^s \lambda_j(\tau) \, d\tau} \, W(s,t)P_j(t) 
\end{align*}
for all $(t,s) \in \Delta_{(-\infty,0]}$, in other words: the $\eps$-dependence of $V_{\eps}(s,t)P_j(t)$ is solely contained in a scalar factor. Consequently,
\begin{align*}
V_{\eps}^I(0,t)x = V_{\eps}(0,t) e^{1/\eps \lambda_j(-\infty) t} x 
&= e^{1/\eps \int_t^0 \lambda_j(\tau) - \lambda_j(-\infty) \, d\tau} \, W(0,t) P_j(t)x  \notag \\
&\qquad \qquad \quad +  e^{1/\eps \, \lambda_j(-\infty) t} \,  V_{\eps}(0,t) \big(P_j(-\infty) - P_j(t) \big)x,
\end{align*}
from which it follows 
with the help of
\begin{align*}
\big| \lambda_j(\tau)-\lambda_j(-\infty) \big| = \big| \ul{\lambda}_j(\kappa(\tau)) - \ul{\lambda}_j(0) \big| \le \norm{ \ul{\lambda}_j' }_{\infty} \, \kappa(\tau) \quad (\tau \in (-\infty,0]) 
\end{align*}
and the integrability of $\kappa$ 
that
\begin{align} \label{eq: gell-mann--low, formel für V_{eps}^I(0,-infty)}
V_{\eps}^I(0,-\infty)x = e^{1/\eps \int_{-\infty}^0 \lambda_j(\tau) - \lambda_j(-\infty) \, d\tau} \, W(0,-\infty) P_j(-\infty)x
\end{align}
for every $\eps \in (0,\infty)$. 
We now see that the equality in~\eqref{eq: gell-mann--low, beh für V_{eps}^I statt U_{eps}^I} holds true, and the element relation in~\eqref{eq: gell-mann--low, beh für V_{eps}^I statt U_{eps}^I} follows by the adiabaticity of $W$ w.r.t.~$P_j$ and by~\eqref{eq: gell-mann--low, inklusion für alle t}.
\smallskip


As a third -- core -- 
step resting upon the adiabatic theorem without spectral gap condition, we show that
\begin{align} \label{eq: gell-mann--low, schritt 3}
V_{\eps}^I(0,-\infty)x - U_{\eps}^I(0,-\infty)x \longrightarrow 0 \quad (\eps \searrow 0)
\end{align}
for every $x \in P_j(-\infty)H$, which then yields the convergence
\begin{align*}
\frac{ V_{\eps}^I(0,-\infty)x }{ \scprd{ x', V_{\eps}^I(0,-\infty)x } } - \frac{ U_{\eps}^I(0,-\infty)x }{ \scprd{ x', U_{\eps}^I(0,-\infty)x } } \longrightarrow 0 \quad (\eps \searrow 0)
\end{align*}
for every $x \in P_j(-\infty)H$ and every $x' \in H$ for which $\scprd{ x', W(0,-\infty)x } \ne 0$, 
and hence -- by virtue of~\eqref{eq: gell-mann--low, beh für V_{eps}^I statt U_{eps}^I} -- the desired conclusion.
%
%
So let $x \in P_j(-\infty)H$ be fixed. Since $U_{\eps}^{I}$ and $V_{\eps}^{I}$ are the evolution systems for $\frac 1 \eps A^{I}$ and $\frac 1 \eps A^{I} + K^{I}$ on $D$ with $A^{I}$ and $K^{I}$ as in~\eqref{eq: gell-mann--low, def A^{I}}, we see that
\begin{align*}
V_{\eps}^I(0,t)x - U_{\eps}^I(0,t)x 
&= V_{\eps}^I(0,t_0) \int_t^{t_0} U_{\eps}^I(t_0,\tau) K^I(\tau) V_{\eps}^I(\tau,t)x \,d\tau \\
& \qquad \qquad \qquad \qquad  + \big( V_{\eps}^I(0,t_0) - U_{\eps}^I(0,t_0) \big) U_{\eps}^I(t_0,t)x
\end{align*}
for every $t_0 \in (-\infty,0]$ and every $t \in (-\infty,t_0]$. So, by the unitarity of $V_{\eps}^I(t_0,t)$, $U_{\eps}^I(t_0,t)$, $e^{A_0 t/\eps}$ we get that
\begin{align*}
\norm{ V_{\eps}^{I}(0,-\infty)x - U_{\eps}^{I}(0,-\infty)x } \le \Big( C \int_{-\infty}^{t_0} \kappa'(\tau) \,d\tau + \big\| V_{\eps}(0,t_0) - U_{\eps}(0,t_0) \big\| \Big) \norm{x}
\end{align*}
for every $t_0 \in (-\infty,0]$ and $\eps \in (0,\infty)$, where $C := \sup_{\kappa \in [0,1]} \norm{ \ul{K}'(\kappa)} < \infty$. 
In view of the integrability of $\kappa'$ on $(-\infty,0]$, it remains to show that 
\begin{align*} 
\big\| V_{\eps}(0,t_0) - U_{\eps}(0,t_0) \big\| \longrightarrow 0 \quad (\eps \searrow 0)
\end{align*}
for every fixed $t_0 \in (-\infty,0]$. 
And this, in turn, is an immediate consequence of the adiabatic theorem without spectral gap for several eigenvalues (third remark after Theorem~\ref{thm: erw adsatz ohne sl}) with the interval $[0,1]$ replaced by $[t_0,0]$.
\end{proof}


If in the situation of the above theorem one additionally assumes $\norm{P_j(0)-P_j(-\infty)} < 1$, then the vectors $x, x'$ with $\scprd{x', W(0,-\infty)x} \ne 0$ can be chosen to both lie in $P_j(-\infty)H$.
Indeed, under this additional assumption $1+P_j(-\infty)-P_j(0)$ is invertible and, by the adiabaticity of $W$ w.r.t.~$P_j$ and the unitarity of $W(0,-\infty)$, we therefore see that for every $x \in P_j(-\infty)H \setminus \{0\}$
\begin{align*}
P_j(-\infty)W(0,-\infty)x = \big( 1+P_j(-\infty)-P_j(0) \big) W(0,-\infty)x \ne 0.
\end{align*} 

%
With the above theorem at hand, we can now also extend a formula for the energy shift from~\cite{NenciuRasche89}, \cite{GrossRunge} to the more general situation of not necessarily isolated eigenvalues. 
It expresses the energy shift $\lambda_j(0)-\lambda_j(-\infty)$ 
as a limit of logarithmic derivatives of certain transition functions.

\begin{cor} \label{cor: energy shift}
Suppose that the assumptions of Theorem~\ref{thm: Gell-Mann--Low} are satisfied. Then the energy shift $\lambda_j(0)-\lambda_j(-\infty)$ can be expressed as a limit of logarithmic derivatives of certain transition functions, 
more precisely,
\begin{align}  \label{eq: energy shift, allg kappa}
\lambda_j(0)-\lambda_j(-\infty) = \lim_{\eps \searrow 0} \eps \, \ddt{  \log \scprd{x', U_{\eps}^I(t,-\infty)x}  } \Big|_{t=0}
\end{align}
for all $x, x' \in P_j(-\infty)H$ with $\scprd{x', W(0,-\infty)x} \in \C \setminus (-\infty,0]$. In the above equation, $\log$ denotes the principal branch of the complex logarithm defined on $\C \setminus (-\infty,0]$.
\end{cor}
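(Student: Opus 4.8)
The plan is to deduce Corollary~\ref{cor: energy shift} directly from Theorem~\ref{thm: Gell-Mann--Low} by differentiating the transition function $t \mapsto \scprd{x', U_{\eps}^I(t,-\infty)x}$ at $t=0$ and carefully controlling the two factors that appear. First I would note that, by the definition of $U_{\eps}^I$ as an evolution system for $\frac 1 \eps A^I$ on $D$, and since $x \in P_j(-\infty)H \subset \ker(A_0 - \lambda_j(-\infty))$ (so that $x \in D$), the map $t \mapsto U_{\eps}^I(t,-\infty)x$ is continuously differentiable with derivative $\frac 1 \eps A^I(t) U_{\eps}^I(t,-\infty)x$. Hence the logarithmic derivative at $t=0$ is
\begin{align*}
\eps \, \ddt{ \log \scprd{x', U_{\eps}^I(t,-\infty)x} } \Big|_{t=0}
= \frac{ \scprd{x', A^I(0) \, U_{\eps}^I(0,-\infty)x} }{ \scprd{x', U_{\eps}^I(0,-\infty)x} },
\end{align*}
which, because $\kappa(0)=1$ and $A^I(0) = V|_D = (A(0) - A_0)|_D$, equals $\scprd{x', (A(0)-A_0) \, U_{\eps}^I(0,-\infty)x} / \scprd{x', U_{\eps}^I(0,-\infty)x}$. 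The strategy is then to pass to the limit $\eps \searrow 0$ in numerator and denominator separately, using the normalized convergence supplied by Theorem~\ref{thm: Gell-Mann--Low}.

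The key observation is that, although $U_{\eps}^I(0,-\infty)x$ itself need not converge (only its normalization does), the quotient above is homogeneous of degree zero in $U_{\eps}^I(0,-\infty)x$, so I may replace $U_{\eps}^I(0,-\infty)x$ by $U_{\eps}^I(0,-\infty)x / \scprd{x', U_{\eps}^I(0,-\infty)x}$ throughout, and this normalized vector converges to $\psi_j := W(0,-\infty)x / \scprd{x', W(0,-\infty)x}$ by Theorem~\ref{thm: Gell-Mann--Low}; here the hypothesis $\scprd{x', W(0,-\infty)x} \in \C \setminus (-\infty,0] \subset \C \setminus \{0\}$ guarantees the denominator is nonzero for small $\eps$, since $\scprd{x', U_{\eps}^I(0,-\infty)x} \to \scprd{x', W(0,-\infty)x}$ (which follows because the normalized convergence together with nonvanishing of the limiting inner product forces $\scprd{x', U_{\eps}^I(0,-\infty)x}$ to stay bounded away from $0$ — otherwise the normalized vectors would blow up). The denominator of the logarithmic-derivative expression then tends to $\scprd{x', W(0,-\infty)x}$, and I must show the numerator $\scprd{x', (A(0)-A_0) U_{\eps}^I(0,-\infty)x}$ tends to $\scprd{x', (A(0)-A_0) W(0,-\infty)x}$. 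The main obstacle is precisely this: $A(0)-A_0 = V$ is unbounded, so norm convergence of the normalized vectors in $H$ is not enough; instead I would use that $x \in P_j(-\infty)H$, hence $\psi_j \in \ker(A(0)-\lambda_j(0))$ (again by Theorem~\ref{thm: Gell-Mann--Low}), and I would argue that the normalized vectors converge in the graph norm of $A(0)$. This graph-norm convergence can be extracted from the proof of the adiabatic theorem for several eigenvalues (third remark after Theorem~\ref{thm: erw adsatz ohne sl}): the comparison evolution $V_{\eps}$, being adiabatic w.r.t. $P_j$, maps $P_j(-\infty)H$ into $\ker(A(t)-\lambda_j(t)) = P_j(t)H$ with only a scalar $\eps$-dependence (cf. the second remark after Theorem~\ref{thm: erw adsatz ohne sl} and formula~\eqref{eq: gell-mann--low, formel für V_{eps}^I(0,-infty)}), so $V_{\eps}^I(0,-\infty)x$ is an $\eps$-independent vector in $D$ up to a unimodular factor, and the convergence $V_{\eps}^I(0,-\infty)x - U_{\eps}^I(0,-\infty)x \to 0$ can be upgraded to hold in the graph norm of $A_0$ (equivalently of $A(0)$, since $V$ is $A_0$-bounded with relative bound $<1$) by the same integral estimate used in the third step of the proof of Theorem~\ref{thm: Gell-Mann--Low}, now measured in $Y$ rather than $H$, using that $t \mapsto U_{\eps}(t,s)|_Y$ and $t \mapsto V_{\eps}(t,s)|_Y$ are strongly continuous in $L(Y,Y)$ (Theorem~6.1(f) of~\cite{Kato70}) with $\eps$-uniform bounds.

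With graph-norm convergence $U_{\eps}^I(0,-\infty)x / \scprd{x', U_{\eps}^I(0,-\infty)x} \to \psi_j$ in hand, continuity of the bounded operator $(A(0)-A_0)(A(0)-i)^{-1}$ (or directly of $V$ on $Y$) gives
\begin{align*}
\frac{ \scprd{x', (A(0)-A_0) U_{\eps}^I(0,-\infty)x} }{ \scprd{x', U_{\eps}^I(0,-\infty)x} }
\longrightarrow \scprd{x', (A(0)-A_0)\psi_j} = \scprd{x', (\lambda_j(0)-A_0)\psi_j},
\end{align*}
where I used $\psi_j \in \ker(A(0)-\lambda_j(0))$. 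Finally, since $W$ is adiabatic w.r.t. $P_j$ and $W(0,-\infty)$ is unitary, $\psi_j \in P_j(0)H$ with $\scprd{x',\psi_j} = 1$; moreover $x \in P_j(-\infty)H \subset \ker(A_0-\lambda_j(-\infty))$ and, because $A_0$ is skew-adjoint and $P_j$ spectral-projection-valued off the null set, one has (by a short computation with the adiabaticity relation $W(0,-\infty)P_j(-\infty) = P_j(0)W(0,-\infty)$ and skew-adjointness of $A_0$, $W$) that $\scprd{x', A_0 \psi_j} = \lambda_j(-\infty)\scprd{x',\psi_j} = \lambda_j(-\infty)$; equivalently this last identity comes from pairing $W(0,-\infty)^{-1}x'$, which again lies in $P_j(-\infty)H$ when $x' \in P_j(-\infty)H$, with $A_0 W(0,-\infty)^{-1}\psi_j = A_0 x = \lambda_j(-\infty) x$. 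Therefore the limit equals $\lambda_j(0)\scprd{x',\psi_j} - \lambda_j(-\infty)\scprd{x',\psi_j} = \lambda_j(0)-\lambda_j(-\infty)$, which is~\eqref{eq: energy shift, allg kappa}. I expect the bookkeeping around which vectors lie in $P_j(-\infty)H$ versus $P_j(0)H$, and the promotion of the adiabatic-theorem convergence to the graph norm, to be the only genuinely delicate points; everything else is a direct computation.
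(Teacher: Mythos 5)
Your overall strategy (differentiate the transition function at $t=0$, divide, and pass to the limit using Theorem~\ref{thm: Gell-Mann--Low}) is the right one, but the way you handle the unboundedness of $V$ is where the argument breaks down, and it is precisely the point the paper's proof is designed to avoid. You keep $V$ acting on the evolved vector, which forces you to (a) assert that $t \mapsto U_{\eps}^I(t,-\infty)x$ is differentiable with derivative $\frac1\eps A^I(t)U_{\eps}^I(t,-\infty)x$ — but $U_{\eps}^I(t,-\infty)x$ is only defined as an $H$-limit of $U_{\eps}^I(t,-n)x$, so membership in $D$ and the vector-valued derivative are not "by the definition of an evolution system"; and (b) claim graph-norm convergence of the normalized vectors, which you propose to get by rerunning the adiabatic estimate "in $Y$ rather than $H$". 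That upgrade is not available: the crucial input $\norm{V_{\eps}(0,t_0)-U_{\eps}(0,t_0)} \to 0$ comes from the adiabatic theorem without spectral gap, whose proof (approximate commutator equation, resolvent estimate, weak associatedness) lives entirely in the operator norm of $X=H$; no $L(Y,Y)$ version is proved anywhere, and the asserted $\eps$-uniform $L(Y,Y)$ bounds are themselves a nontrivial claim you do not establish. The paper sidesteps all of this with one move you are missing: since $x' \in P_j(-\infty)H \subset D$ and $V$ is skew-symmetric (and $e^{A_0 t/\eps}$ unitary), the derivative of the approximants $f_{\eps\,n}(t)=\scprd{x',U_{\eps}^I(t,-n)x}$ can be written as $-\frac{\kappa(t)}{\eps}\scprd{e^{-A_0t/\eps}Ve^{A_0t/\eps}x',\,U_{\eps}^I(t,-n)x}$, so the unbounded operator acts on the fixed vector $x'$, the derivatives converge uniformly as $n\to\infty$, and only the plain $H$-convergence furnished by Theorem~\ref{thm: Gell-Mann--Low} is needed to compute $\lim_{\eps} \eps f_{\eps}'(0)/f_{\eps}(0)$ (then $\scprd{Vx',W(0,-\infty)x}=(\lambda_j(-\infty)-\lambda_j(0))\scprd{x',W(0,-\infty)x}$ by skew-adjointness).

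Two further points. First, several intermediate claims are false as written: $\scprd{x',U_{\eps}^I(0,-\infty)x}$ does \emph{not} converge to $\scprd{x',W(0,-\infty)x}$ — by~\eqref{eq: gell-mann--low, formel für V_{eps}^I(0,-infty)} it oscillates like $e^{i\phi_0/\eps}z_0$ — so only ratios (your homogeneity observation) may be used; and $W(0,-\infty)^{-1}x'$ does \emph{not} lie in $P_j(-\infty)H$ for $x'\in P_j(-\infty)H$ (adiabaticity gives $W(0,-\infty)^{-1}P_j(0)=P_j(-\infty)W(0,-\infty)^{-1}$), nor can you slide $W^{-1}$ onto both slots of $\scprd{x',A_0\psi_j}$ without conjugating $A_0$; the identity $\scprd{x',A_0\psi_j}=\lambda_j(-\infty)\scprd{x',\psi_j}$ you need is true, but simply because $A_0$ is skew-adjoint, $A_0x'=\lambda_j(-\infty)x'$ and $\lambda_j(-\infty)\in i\R$. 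Second, you never address the domain of the principal branch: the right-hand side of~\eqref{eq: energy shift, allg kappa} only makes sense for those $\eps$ with $f_{\eps}(0)\in\C\setminus(-\infty,0]$, and the paper's proof has to show that $0$ is an accumulation point of this set of $\eps$ (via the explicit phase formula for $g_{\eps}(0)$ and $f_{\eps}(0)-g_{\eps}(0)\to 0$); showing $f_{\eps}(0)\neq 0$, which your normalization argument does give, is not enough.
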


\begin{proof}
We fix $j \in \{1, \dots, r\}$ and assume $x, x' \in P_j(-\infty)H$ with $\scprd{x',W(0,-\infty)x} \in \C \setminus (-\infty,0]$. (It should be noticed that existence of such vectors $x, x'$ is not claimed in the statement of the corollary. If, however,  $\norm{P_j(0)-P_j(-\infty)} < 1$, then such vectors do exist by the remark after Theorem~\ref{thm: Gell-Mann--Low}. 
And if $\norm{P_j(0)-P_j(-\infty)} \ge 1$, then one can switch on the full perturbation $V$ in intermediate steps as in~\cite{Panati10} (Section~3.4) and then apply the formula for the energy shift in each intermediate step.) 
%
We also set
\begin{align} \label{eq: energy shift, def f_eps}
f_{\eps}(t) := \scprd{x', U_{\eps}^{I}(t,-\infty)x } 
\quad \text{and} \quad
g_{\eps}(t) := \scprd{x', V_{\eps}^{I}(t,-\infty)x } 
\end{align}
for $t \in [-1,0]$ and $\eps \in (0,\infty)$.
\smallskip

As a first step we show that the function $f_{\eps}: [-1,0] \to \C$ is differentiable with derivative at $0$ given by
\begin{align} \label{eq: energy shift, f_(eps)'(0)}
f_{\eps}'(0) = - \frac 1 \eps \scprd{V x', U_{\eps}^{I}(0,-\infty)x}
\end{align}
for every $\eps \in (0,\infty)$. 
In order to do so, we consider the pointwise approximants $f_{\eps \, n}: [-1,0] \to \C$ to $f_{\eps}$ defined by 
\begin{align} \label{eq: energy shift, f_(eps n)}
f_{\eps \, n}(t) := \scprd{x', U_{\eps}^{I}(t,-n)x } \qquad (n \in \N).
\end{align}
Since $U_{\eps}^{I}$ is the evolution system for $\frac 1 \eps A^{I}$ on $D$ with $A^{I}$ given by~\eqref{eq: gell-mann--low, def A^{I}}, 
the function $f_{\eps\, n}$ is differentiable for every $\eps \in (0,\infty)$ and every $n \in \N$ with 
\begin{align} \label{eq: energy shift, f_(eps n)'}
f_{\eps\, n}'(t) = - \frac{\kappa(t)}{\eps} \, \big\langle e^{-A_0 t/\eps} \, V e^{A_0 t/\eps} \, x',  U_{\eps}^{I}(t,-n)x \big\rangle
\end{align} 
for $t \in [-1,0]$, 
%
and, moreover,
\begin{align*}
&\sup_{t \in [-1,0]} \norm{ U_{\eps}^{I}(t,-n)x - U_{\eps}^{I}(t,-m)x } = 
\sup_{t \in [-1,0]}  \norm{ \int_{-n}^{-m}  U_{\eps}^{I}(t,\tau) \, \frac{\kappa(\tau)}{\eps} \, e^{-A_0 \tau/\eps} \, V e^{A_0 \tau/\eps}  x \, d\tau } \\
& \qquad \qquad \qquad \quad \le \frac 1 \eps \norm{V(A_0-1)^{-1}} \bigg| \int_{-n}^{-m} \kappa(\tau) \, d\tau \bigg| \norm{(A_0-1)x} \longrightarrow 0 
\end{align*}
as $m, n \to \infty$. 
So, $(f_{\eps\,n}')$ is uniformly convergent and, hence, the pointwise limit $f_{\eps}$ of the functions $f_{\eps\, n}$ is differentiable with derivative given by $f_{\eps}'(t) = \lim_{n \to \infty} f_{\eps\, n}'(t)$ for $t \in [-1,0]$. In particular, $f_{\eps}'(0)$ is given as in~\eqref{eq: energy shift, f_(eps)'(0)}.
\smallskip	

As a second step we show that $f_{\eps}(0) \ne 0$ for $\eps$ small enough and that
\begin{align} \label{eq: energy shift, schritt 2}
\eps \, f_{\eps}'(0)/f_{\eps}(0) \longrightarrow \lambda_j(0)-\lambda_j(-\infty) \quad (\eps \searrow 0),
\end{align}
from which~\eqref{eq: energy shift, allg kappa} readily follows. Since $|g_{\eps}(0)| = |\scprd{x',W(0,-\infty)x}| \ne 0$ for all $\eps \in (0,\infty)$ by virtue of~\eqref{eq: gell-mann--low, formel für V_{eps}^I(0,-infty)} and since $f_{\eps}(0) - g_{\eps}(0) \longrightarrow 0$ as $\eps \searrow 0$ by virtue of~\eqref{eq: gell-mann--low, schritt 3}, we see that indeed $f_{\eps}(0) \ne 0$ for $\eps$ small enough.
With the help of~\eqref{eq: energy shift, f_(eps)'(0)} and the previous theorem it then follows that
\begin{align} \label{eq: energy shift, schritt 2, 2}
\eps \, f_{\eps}'(0)/f_{\eps}(0) = 
-\frac{ \scprd{V x', U_{\eps}^{I}(0,-\infty)x } }{ \scprd{x',U_{\eps}^{I}(0,-\infty)x} }
\longrightarrow -\frac{ \scprd{V x', W(0,-\infty)x } }{ \scprd{x',W(0,-\infty)x} }  \qquad (\eps \searrow 0).
\end{align}  
Write now $V = A(0) - A(-\infty)$ and recall that $x' \in P_j(-\infty)H \subset \ker(A(-\infty)-\lambda_j(-\infty))$ and that $W(0,-\infty)x \in P_j(0)H \subset \ker(A(0)-\lambda_j(0))$ to obtain
\begin{align} \label{eq: energy shift, schritt 2, 3}
\scprd{V x', W(0,-\infty)x } 
= \big( \lambda_j(-\infty) - \lambda_j(0) \big) \scprd{x',W(0,-\infty)x}.
\end{align}
Combining~\eqref{eq: energy shift, schritt 2, 2} and~\eqref{eq: energy shift, schritt 2, 3} we then arrive at 
the asserted convergence~\eqref{eq: energy shift, schritt 2}. Clearly, 
\begin{align} \label{eq: energy shift, log abl}
f_{\eps}'(0)/f_{\eps}(0) = (\log \circ f_{\eps})'(0)
\end{align}
precisely for those $\eps \in (0,\infty)$ for which $f_{\eps}(0) \in \dom(\log) = \C \setminus (-\infty,0]$. So, \eqref{eq: energy shift, allg kappa} will follow from~\eqref{eq: energy shift, schritt 2} and~\eqref{eq: energy shift, log abl}, provided that $0$ is an accumulation point of the set
$E := \{ \eps \in (0,\infty): f_{\eps}(0) \in \C \setminus (-\infty,0] \}$.
%
Since 
\begin{gather*}
g_{\eps}(0) = \scprd{x', V_{\eps}^{I}(0,-\infty)x } = e^{i \phi_0/\eps} \, z_0, \\
i \phi_0 := \int_{-\infty}^0 \lambda_j(\tau) - \lambda_j(-\infty) \, d\tau \in i \R \quad \text{and} \quad z_0 := \scprd{x', W(0,-\infty)x} \in \C \setminus (-\infty,0]
\end{gather*}
and since $z_0 \in \C \setminus (-\infty,0]$, there exists a $\vartheta_0 \in (0,\pi/2)$ and a sequence $(\eps_n)$ such that $\eps_n \longrightarrow 0$ as $n \to \infty$ and such that $g_{\eps_n}(0)$ belongs to the sector $\{ z \in \C: |\arg(z) - \pi| > \vartheta_0 \}$ for all $n \in \N$. 
Since, moreover, $f_{\eps}(0)-g_{\eps}(0) \longrightarrow 0$ as $\eps \searrow 0$ by virtue of~\eqref{eq: gell-mann--low, schritt 3}, it follows that $f_{\eps_n}(0) \in \C \setminus (-\infty,0]$ for sufficiently large $n \in \N$. So, $0$ is indeed an accumulation point of $E$, and we are done.
\end{proof}

In physics, the switching function is almost always chosen as $\kappa(t) = e^t$ for $t \in (-\infty,0]$. 
And for that special choice of $\kappa$, 
an alternative formula for the energy shift can be deduced 
from the corollary above, namely
\begin{align} \label{eq: energy shift, kappa = exp}
\lambda_j(0)-\lambda_j(-\infty) = \lim_{\eps \searrow 0} \eps \, \frac{d}{d\mu} \Big(  \log \scprd{x', (U_{\eps}^{\mu})^I(0,-\infty)x}  \Big) \Big|_{\mu = 1},
\end{align}
where $U_{\eps}^{\mu}$ is the evolution system for $\frac 1 \eps A^{\mu}$ on $D$ with $A^{\mu}(t) := A_0 + \mu \, \kappa(t) V = A_0 + \mu \, e^t \, V$ for $t \in (-\infty,0]$ and $\mu \in (0,1]$ and where 
\begin{align*}
(U_{\eps}^{\mu})^I(t,s) 
:= e^{A_0 t/\eps} \, U_{\eps}^{\mu}(t,s)\, e^{A_0 s/\eps} \qquad ((s,t) \in \Delta_{(-\infty,0]}).
\end{align*}
It seems 
that~\eqref{eq: energy shift, kappa = exp} is 
used more often in the physics literature than~\eqref{eq: energy shift, allg kappa}. See, for instance, \cite{FetterWalecka}. 
In order to deduce~\eqref{eq: energy shift, kappa = exp} from the corollary above, one has only to notice that 
$A^{\mu}(t) 
= A(t + \log \mu)$ for all $t \in (-\infty,0]$ and $\mu \in (0,1]$. So, 
\begin{align}
U_{\eps}^{\mu}(t,s) = U_{\eps}(t+\log \mu, s + \log \mu) \qquad ((s,t) \in \Delta_{(-\infty,0]})
\end{align}
and therefore one sees for vectors $x, x' \in P_j(-\infty)H \subset \ker(A_0 - \lambda_j(-\infty))$ that 
\begin{align*}
\scprd{x', (U_{\eps}^{\mu})^I(0,-n)x} &= \scprd{ x',  e^{A_0 (\log\mu)/\eps} \, U_{\eps}^I(\log\mu,-n+\log\mu) \, e^{-A_0 (\log\mu)/\eps}x } \\
&= \scprd{ x',  U_{\eps}^I(\log\mu,-n+\log\mu) x }
\end{align*} 
for all $\mu \in (0,1]$ and $n \in \N$. 
Consequently, 
\begin{align}
\scprd{x', (U_{\eps}^{\mu})^I(0,-\infty)x} = \scprd{ x',  U_{\eps}^I(\log\mu,-\infty) x } = f_{\eps}(\log\mu)
\end{align}
for all $\mu \in (0,1]$ with $f_{\eps}$ defined as in~\eqref{eq: energy shift, def f_eps}, 
so that the corollary above and its proof yield the desired alternative formula~\eqref{eq: energy shift, kappa = exp} for the energy shift. 

\section*{Acknowledgement}

I would like to thank Marcel Griesemer for numerous discussions and for introducing me to adiabatic theory in the first place. I would also like to thank the German Research Foundation (DFG) for financial support through the research training group ``Spectral theory and dynamics of quantum systems'' (GRK 1838).

%
%
%




\end{document}